\documentclass{article}

\RequirePackage[utf8]{inputenc}
\RequirePackage{amsfonts, amsmath, amssymb, mathtools, bm, amsthm}
\RequirePackage[colorlinks,citecolor=blue,urlcolor=blue]{hyperref}
\RequirePackage{graphicx}
\RequirePackage[numbers]{natbib}
\RequirePackage{multirow, makecell}
\RequirePackage{geometry}
\RequirePackage[perpage]{footmisc}
\RequirePackage{authblk}
\RequirePackage{subcaption}
\geometry{a4paper}

\theoremstyle{plain}

\newtheorem{theorem}{Theorem}[section]
\newtheorem{proposition}[theorem]{Proposition}
\newtheorem{lemma}[theorem]{Lemma}
\newtheorem{corollary}[theorem]{Corollary}
\theoremstyle{definition}

\theoremstyle{remark}


\newcommand{\E}{\mathbb{E}}
\newcommand{\Prob}{\mathbb{P}}
\newcommand{\R}{\mathbb{R}}
\newcommand{\Var}{\mathrm{Var}}
\newcommand{\Cov}{\mathrm{Cov}}


\title{Parametric Bootstrap for Fixed Edge-Probability Network Models}
\author[1]{Zhixuan Shao\thanks{zhxshao@ucdavis.edu}}
\author[1]{Can M. Le\thanks{canle@ucdavis.edu}}
\affil[1]{Department of Statistics, University of California, Davis}
\date{}

\begin{document}

\maketitle

\begin{abstract}
This paper studies parametric bootstrap methods for network data, with the goal
of quantifying the uncertainty of network statistics of interest. While existing
network resampling methods primarily focus on count statistics under
node-exchangeable graphon models, we consider more general network statistics,
including local statistics, under the Chung--Lu model without assuming node
exchangeability.
We show that the natural network parametric bootstrap, which first estimates the
network-generating model and then draws bootstrap samples from the estimated
model, generally suffers from bootstrap bias. As a general remedy, we show that
a two-level bootstrap procedure provably reduces this bias. This extends the
classical idea of the iterative bootstrap to the network setting, where
the number of parameters grows with the network size.
Moreover, for many network statistics, the second-level bootstrap provides a way
to construct confidence intervals with higher accuracy. As a by-product of this
analysis, we also obtain a central limit theorem for subgraph counts under the
inhomogeneous Erd\H{o}s--R\'enyi model, which may be of independent interest.
\end{abstract}

\section{Introduction}
Random network models (also known as random graph models) have received continuous attention because of their wide-ranging applications, including social networks (friendship between Facebook users, LinkedIn following, etc.), biological networks (gene network, gene-protein network), information networks (email network, World Wide Web) and many others.
In these application areas, it is often of interest to summarize a network using network statistics \cite{newman2018networks}. Examples include clustering coefficients \cite{wasserman1994social}, subgraph/motif counts \cite{milo2002network, alon2007network}, and node centrality measures \cite{borgatti2005centrality, bonacich1987power}. In particular, these statistics characterize either global properties of the entire network or local properties of individual nodes. For example, the local clustering coefficient of a node measures the probability that two neighbors of the node are connected, while global transitivity gives an indication of the clustering effect in the whole network.

However, comparatively little attention has been paid to assessing the variability of these statistics, with a few exceptions that we will discuss shortly. Quantifying their uncertainty is of utmost importance. Consider the problem of comparing two networks, which is a key question in many biological applications and in social network analysis. A natural approach is to compare the summary statistics of the two networks and ask whether they are significantly different. However, such inference is impossible without knowledge of the underlying variability of the data-generating process.

Among all network statistics, subgraph counts---in particular, the number of occurrences of small patterns such as triangles---play an important role in characterizing random networks; see, e.g., \cite{milo2002network, przytycka2006important, maugis2020testing}. The study of the distribution of subgraph counts in random graphs dates back to \cite{erdos1960evolution}.
Under the Erd\H{o}s--R\'{e}nyi model, separate works \cite{nowicki1988subgraph, rucinski1988small} established the asymptotic normality and Poisson convergence of subgraph counts. The two major techniques they employed---the projection method for U-statistics and the method of moments---have since been frequently used for the asymptotic analysis of subgraph counts in the more general setting of node-exchangeable random graphs, also known as graphon models \cite{bickel2009nonparametric}.
A pioneering work in this direction is \cite{bickel2011method}, which not only established the asymptotic normality of subgraph counts but also provided a powerful tool for frequentist nonparametric network inference \cite{wolfe2013nonparametric, maugis2017statistical}. At a more theoretical level, count functionals may be viewed as network analogues of the moments of a random variable, and knowledge of all population moments can uniquely determine the network model up to weak isomorphism \cite{borgs2010moments}. Recent developments in the asymptotic theory of subgraph counts include CLTs for rooted subgraph counts \cite{maugis2020central} and Edgeworth expansions of network moments \cite{zhang2022edgeworth}.

However, most existing asymptotic distribution theory applies only to subgraph counts, despite the many other network quantities of interest. Even for count statistics alone, important questions remain unresolved. In general graphon settings, the exact asymptotic variance is either unavailable in closed form or infeasible to compute for large-scale networks, and thus must be estimated. To this end, various network resampling methods have been developed to estimate the distributions of network statistics, particularly count statistics.

\subsection{Related Work on Network Resampling Methods}

The first theoretical result for resampling network data was established by the authors of \cite{bhattacharyya2015subsampling}, who show the validity of subsampling for estimating the empirical distribution of count functionals. They prove theoretical properties of their bootstrap variance estimates for count features.
Along similar lines, the work of 
\cite{green2022bootstrapping}
proposes two bootstrap procedures for conducting inference on count functionals: one based on an empirical graphon, which amounts to resampling nodes with replacement, and another based on resampling from a stochastic block model fitted to the observed network. The latter is motivated by the intuition that such a model provides a good approximation to any graphon, provided that the number of communities is chosen to be sufficiently large.
The study in 
\cite{lunde2023subsampling}
considers a procedure that involves subsampling nodes and computing functionals on the induced subgraphs. This procedure is shown to be asymptotically valid under conditions analogous to those in the i.i.d. setting; that is, the subsample size must be $o(n)$, and the functional of interest must converge to a nondegenerate limit distribution.
The authors of \cite{lin2020theoretical} propose a leave-node-out jackknife procedure for networks. Under the sparse graphon model, they show the validity of the network jackknife in the sense that it yields conservative variance estimates for any network functional that is invariant under node permutations. They also establish the consistency of the network jackknife for count functionals.
Motivated by bootstrap methods for U-statistics \cite{bose2018u} and their close connection to Edgeworth expansions \cite{zhang2022edgeworth}, subsequent work \cite{lin2020trading} proposes a new class of multiplier bootstraps for count functionals that achieves higher-order correctness for appropriately sparse graphs.

The work most closely related to ours is \cite{levin2019bootstrapping}, which proposes a two-step bootstrap procedure under the random dot product graph model \cite{young2007random}. The procedure first estimates the latent positions using adjacency spectral embedding \cite{sussman2012consistent}, and then resamples the estimated positions to generate either bootstrap replicates of U-statistics of the latent positions or bootstrap replicates of entire networks. Thus, their method can also bootstrap network statistics that are not expressible as U-statistics. The authors establish consistency for bootstrapping U-statistics of the latent positions.

While U-statistics, especially subgraph counts, form an important class of functionals in network analysis, there are many other network quantities of interest. In particular, many local statistics associated with individual nodes do not take the form of U-statistics. Moreover, most of the aforementioned resampling methods are developed within the framework of graphon models, which assume that the distribution of a random graph is unchanged under permutations of node labels.
However, in practice, the assumption of node exchangeability may not always be desirable, especially when studying local statistics associated with specific nodes. In a network with heterogeneous node properties, we may be particularly interested in certain nodes and want to ensure that each node preserves its own properties in the resampled networks. For example, in a social network, we may naturally expect an influential person, represented by a node with a high centrality measure, to remain influential in the resampled networks.

To preserve node-specific information in the resampled networks, we consider a different setting in which networks are generated from a fixed edge-generating model. Although this may appear to be a simpler setting, most of the aforementioned methods, which assume node exchangeability, are not directly applicable because they also model uncertainty in the latent positions.
Under graphon models, it is typically assumed that uncertainty in the latent positions dominates the Bernoulli randomness of the edges, so that many statistics can be viewed as U-statistics perturbed by asymptotically negligible Bernoulli noise. In contrast, we focus on a setting that is essentially equivalent to assuming that the latent positions are fixed, so that the only randomness comes from the independent Bernoulli edges. This is a key difference between our setting and that of \cite{levin2019bootstrapping}. As a result, we generate bootstrap network replicates without resampling the estimated latent positions.
Figure~\ref{fig:bias_triangle_counts} illustrates the difference between sampling from fixed latent positions, corresponding to a fixed edge-probability matrix $P$, and sampling from random latent positions, corresponding to a random $P$. If nodes are in fact not exchangeable, then imposing node exchangeability introduces unnecessary variability in the sampling distribution.
That said, a potential
downside of viewing networks as having a fixed edge-probability matrix, rather
than as instances generated from a graphon model, is that network statistics
generally depend on the network size. Therefore, additional normalization or
size adjustment may be needed when comparing networks of different sizes.

\section{Preliminaries}
\subsection{Problem Set-Up}
\label{sec:setup}

Consider an undirected, unweighted network $G=(V,E)$, where $V=\{1,2,\ldots,n\}$ is the set of nodes and $E\subset V\times V$ is the set of edges. The network can be represented by a symmetric adjacency matrix $A \in {0,1}^{n\times n}$, with $A_{ij}=1$ if there is an edge between $i$ and $j$, and $A_{ij}=0$ otherwise.
We assume that the observed network $G$ is sampled from a fixed symmetric edge-probability matrix $P \in \R^{n\times n}$ with $P_{ij} \in [0,1]$, so that the entries of $A$ above the diagonal are independent Bernoulli random variables with $\E[A_{ij}] = P_{ij}$. For simplicity, we consider only networks without self-loops, so that $A_{ii}=0$ for all $i\in V$.
This independent-edge generating mechanism is also referred to as the inhomogeneous Erd\H{o}s--R\'{e}nyi random graph model \cite{bollobas2007phase}. 
We emphasize that $P$ is fixed and the only randomness comes from the Bernoulli realizations of $A$ given $P$. For models with random $P$, our method is applicable conditionally on $P$.

We focus on network statistics such as subgraph counts, clustering coefficients, and centrality measures \cite{newman2018networks}. A network statistic can be viewed as a map $T: \{0,1\}^{n\times n} \to \R$ that takes $A$ as input and outputs $T(A)\in\R$. To conduct inference on the statistic $T$, one of our goals is to construct confidence intervals for $\mu=\E[T(A)]$. For clarity, we often write $\mu = \mu(P)$ to indicate the dependence of $\mu$ on $P=\E A$. The population mean $\mu(P)$ is often of greater interest than the observed statistic $T(A_{\mathrm{obs}})$ itself, as it can be regarded as the signal underlying a noisy observation.

Since $P$ is unknown, a natural estimator of $\mu$ is the plug-in estimator $\mu(\widehat{P})$, where $\widehat{P}$ is an estimator of $P$. If $\widehat{P}$ is sufficiently accurate, then we can approximate the distribution of $T(A)$ by that of $T(\widehat{A})$, where $\widehat{A}$ is a network sampled according to $\widehat{P}$.
The procedure is described in Figure~\ref{fig:direct_parametric_bootstrap} and can be viewed as a straightforward network analogue of the classical parametric bootstrap. This generic framework, which we refer to as the network bootstrap, is applicable in principle to any network model together with its corresponding estimation method.

To understand whether and when the network bootstrap procedure shown in Figure~\ref{fig:direct_parametric_bootstrap} works well, a critical question is:
\textit{For a given statistic $T$, how accurate must $\widehat{P}$ be for the bootstrap distribution under $\widehat{P}$ to provide a good approximation to the true distribution under $P$?}

\begin{figure}
    \centering
     \begin{subfigure}[b]{0.4\textwidth}
         \centering
         \includegraphics[width=0.99\textwidth]{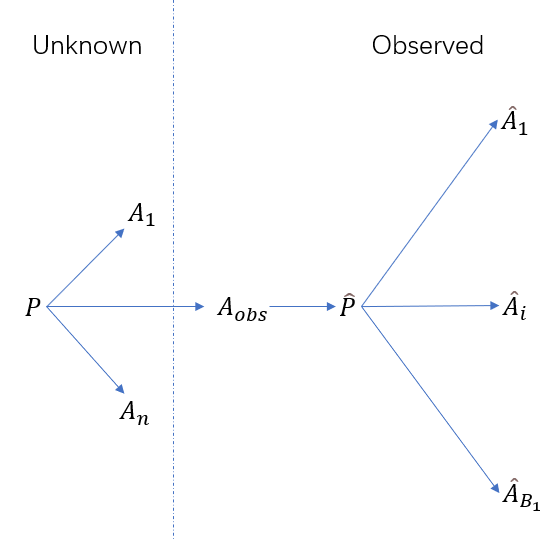}
        \caption{Direct parametric bootstrap}
        \label{fig:direct_parametric_bootstrap}
     \end{subfigure}
     \hfill
     \begin{subfigure}[b]{0.55\textwidth}
         \centering
         \includegraphics[width=0.99\textwidth]{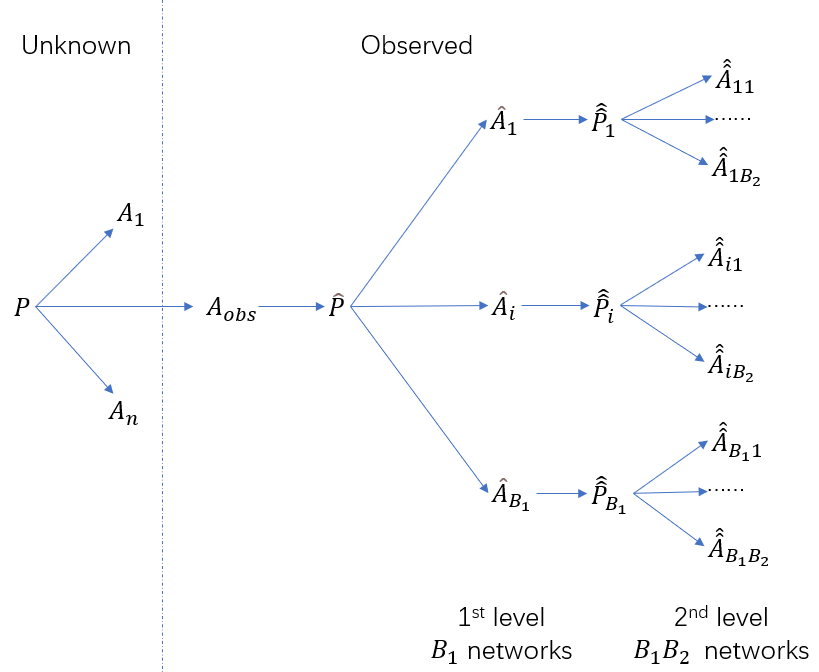}
        \caption{Two-level bootstrap}
        \label{fig:procedure}
     \end{subfigure}
        \caption{The direct parametric bootstrap provides first order approximation: $\mu(\widehat{P}) \approx \mu(P)$. The two-level bootstrap targets second order approximation: $\E_{\widehat{P}}[\mu(\widehat{\widehat{P}})] - \mu(\widehat{P}) \approx \E_P[\mu(\widehat{P})] - \mu(P)$.}
\end{figure}

Providing a general answer to this question is challenging, as it depends not only on the estimation error of $\widehat{P}$, which in turn depends on the structural assumptions imposed on $P$, but also on the specific statistic $T$ being considered. In practice, possible model misspecification, for example due to the choice of the number of blocks or the embedding dimension $K$, is a common cause of inaccurate estimates of $P$. It is reasonable to expect that the bootstrap distribution may fail to provide a good approximation under model misspecification.
However, it turns out that, for some common network models, even when the model is correctly specified, some of the most natural estimators $\widehat{P}$ may still not be accurate enough for this intuitive approach to work well. In fact, even moderate estimation error in $\widehat{P}$ can lead to a substantial discrepancy between the bootstrap distribution and the true distribution.

An example is illustrated in Figure~\ref{fig:bias_triangle_counts} using the triangle count statistic. The figure compares the true distribution of the statistic with bootstrap distributions obtained using different estimates $\widehat{P}$. The observed network is generated from a stochastic block model (SBM) \cite{holland1983stochastic} with $K=3$ communities.
The left panel of Figure~\ref{fig:bias_triangle_counts} shows how model misspecification can affect the bootstrap distribution. As expected, overestimating $K$ generally leads to a relatively milder deviation from the true distribution than underestimating $K$, as it does not necessarily amount to model misspecification. However, the deviation can become more substantial as the overestimation becomes more severe.

In the right panel, we estimate $P$ using three different estimators, $\widehat{P}_{\mathrm{SBM}}$, $\widehat{P}_{\mathrm{DCSBM}}$, and $\widehat{P}_{\mathrm{SVD}}$, with $K=3$; see Section~\ref{section:numerical_expriment}. While all three estimators can be regarded as based on correctly specified models, they have different levels of accuracy.
The bootstrap distribution based on $\widehat{P}_{\mathrm{SBM}}$ recovers the unknown true distribution almost perfectly. However, when using the estimated model $\widehat{P}_{\mathrm{DCSBM}}$ or $\widehat{P}_{\mathrm{SVD}}$, we observe a substantial location shift between the bootstrap distribution and the true distribution. We also report the bootstrap procedure of \cite{levin2019bootstrapping} to
illustrate the difference between our fixed-\(P\) setting and graphon-type
methods in which latent positions are random. Their procedure resamples the
estimated latent positions \(\{\widehat X_i\}_{i=1}^n\) and therefore targets a
distribution that is marginal over latent positions. In contrast, our bootstrap
conditions on the node-specific latent quantities, or equivalently on a fixed
edge-probability matrix. Thus the wider spread of the latent-position bootstrap
reflects a different inferential target, rather than a failure of the method.

This dramatic shift between $\mu(\widehat{P})$ and $\mu(P)$ when bootstrapping
from $\widehat{P}_{\mathrm{DCSBM}}$ or $\widehat{P}_{\mathrm{SVD}}$ is not
merely a consequence of the particular random realization of the observed
network. For subgraph-count statistics, the mean of the bootstrap distribution
generated from $\widehat{P}_{\mathrm{DCSBM}}$ or
$\widehat{P}_{\mathrm{SVD}}$ tends to exhibit positive bias relative to the mean
of the true distribution.
The issue of bootstrap bias, as well as how to address it, is the main focus of
this study. Before turning to this issue, we introduce the working model for
$P$ used throughout much of the paper.

\begin{figure}[tbh]
    \centering
    \begin{minipage}{0.49\textwidth}
    \centering
    \includegraphics[width=1.0\textwidth]{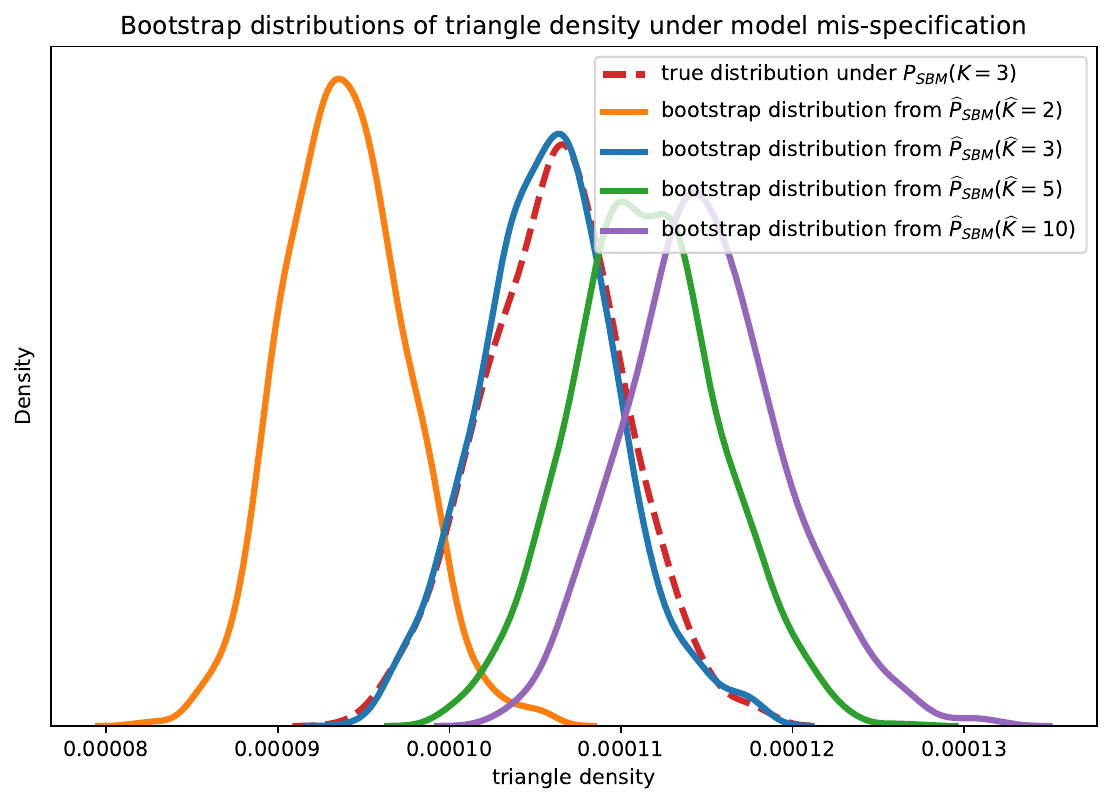}
    \end{minipage}
    \begin{minipage}{0.49\textwidth}
    \centering    \includegraphics[width=1\textwidth]{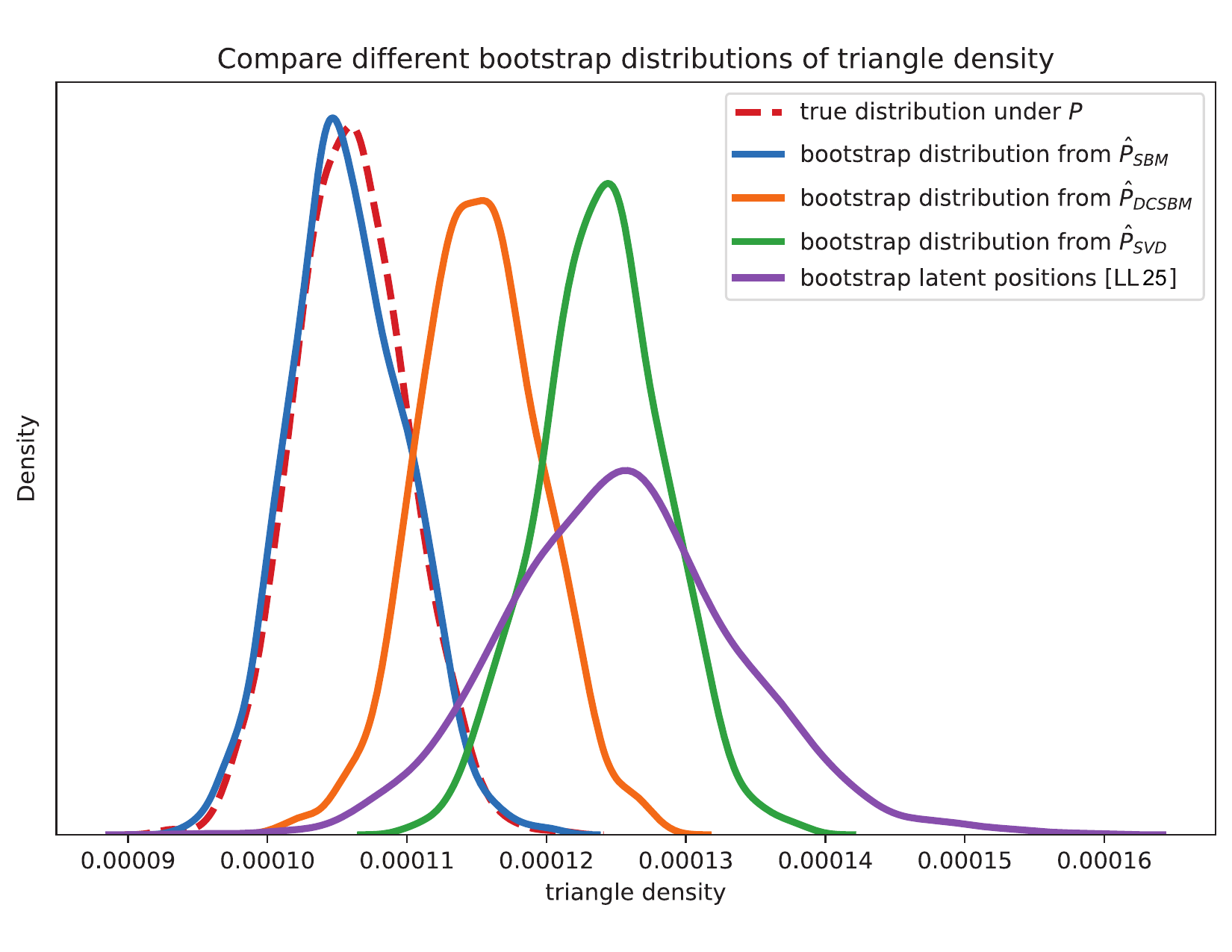}
    \end{minipage}
    \caption{Bootstrap distributions of the triangle density. The red dashed line represents
the true distribution \(F_{P,T}\), and the solid lines represent the bootstrap
distributions \(F_{\widehat P,T}\) based on different estimators
\(\widehat P\). Throughout the paper, dashed lines denote unknown distributions
or quantities, whereas solid lines denote observable or estimable ones. The
observed network is generated from an SBM with \(K=3\) communities. The graphon
bootstrap of \cite{levin2019bootstrapping} is included only to illustrate the
distinction between our fixed edge-probability setting and graphon-type settings; it is
designed for a different inferential target.}     \label{fig:bias_triangle_counts}
\end{figure}

\subsection{Chung-Lu Model}
\label{section:Chung-Lu model}

To introduce the core bias problem without unnecessary complications, we assume in this study that edges are formed independently with probabilities 
\begin{equation*}
    P_{ij} = p \theta_i \theta_j, \quad i \neq j,
\end{equation*}
where $p=p(n)$ is a known parameter controlling the overall edge density, and $\{\theta_i\}_{i=1}^n$ are unknown degree-correction parameters that adjust for individual node degrees. For identifiability, we assume $\sum_{i=1}^n \theta_i = n$. Furthermore, we consider a relatively sparse setting in which $p = o(1)$ and the degree parameters are of constant order as $n\to \infty$. In particular, we assume $n^{-1} \prec p \prec 1$ 
and $c < \min_i \theta_i \le \max_i \theta_i < C$ for some constants $c,C>0$ independent of $n$.
In this context, the notation $x \prec y$ means that $x/y = o(1)$, and we use $x \preccurlyeq y$ to indicate that $x/y = O(1)$.

This model is widely known as the Chung--Lu random graph model \cite{aiello2001random} in the probability literature, where it serves as a fundamental ``null model'' for detecting community structure \cite{newman2006finding}. It generalizes the Erd\H{o}s--R\'{e}nyi model \cite{erdos1960evolution} by allowing degree heterogeneity. It can also be viewed as a special case of the DCSBM \cite{karrer2011stochastic} with only one community.
Finally, we point out that this model is also a special case of the random dot product graph model \cite{young2007random}, with a fixed latent position for each node. The assumption of fixed, rather than random, latent positions distinguishes our work from most other network resampling methods, especially the bootstrap procedure of \cite{levin2019bootstrapping}.

The one-community assumption is made primarily for analytical convenience. It
simplifies the notation and allows us to isolate the main source of bootstrap
bias, namely the effect of estimating the edge-probability matrix. The same
bias phenomenon is expected to arise in degree-corrected stochastic block
models with multiple communities. In particular, when the community labels are
known, or when exact community recovery is possible with probability tending to
one \cite{abbe2018community}, the analysis can be carried out conditional on
the community labels. On this event, the model is a fixed edge-probability
model with node-specific degree parameters and block-level parameters, and the
bootstrap analysis is analogous to the one-community Chung--Lu case.
When community labels are estimated with nonzero error, an additional error
term appears. We expect the resulting approximation to depend on the accuracy
of the label estimator, but controlling this error at the inferential scale
requires a separate analysis. This is beyond the scope of the present work.
Since the Chung--Lu model can be viewed as a one-community DCSBM and serves as
a building block for many degree-heterogeneous network models, the presence of
non-negligible bootstrap bias in this simpler setting suggests that similar
bias phenomena may also arise in richer models and should not
be ignored.

Following the approach of \cite{karrer2011stochastic,zhao2012consistency}, we estimate $P$ by replacing the Bernoulli likelihood with a Poisson likelihood and then using the resulting MLE. This gives
\begin{equation}
\label{eq:Phat_MLE_DCSBM}
    \widehat{P}_{ij} = \widehat{p} \widehat{\theta}_i \widehat{\theta}_j, \quad
\widehat{p} = \frac{S}{n(n-1)}, \quad \widehat{\theta}_i = \frac{n D_i}{S}, \quad D_i=\sum_{j\neq i} A_{ij}, \quad S=\sum_{i=1}^n D_i.    
\end{equation}
Finally, we reiterate that the network bootstrap discussed here is a generic framework applicable, in principle, to any network model together with its corresponding estimation method. In the simulation studies in Section~\ref{section:numerical_expriment}, we consider SBM and DCSBM as underlying models and use several different methods to estimate $P$, yielding different estimators $\widehat{P}$.

\section{Bootstrap Bias of Subgraph Counts}
\label{section:bootstrap_bias}

Given a statistic $T$, our main goal is to construct a confidence interval for its
expectation
\[
\mu := \mu(P) := \E_P[T] := \E_{A\sim P}[T(A)].
\]
For an estimator $\widehat{P}$ of $P$ computed from a single observation $A$,
consider the natural plug-in estimator of $\mu$ given by
\begin{equation*}
\widehat{\mu}\coloneqq\mu(\widehat{P})
  \coloneqq \E_{\widehat{P}}[T]
  \coloneqq \E_{\widehat{A}\sim \widehat{P}}[T(\widehat{A})],
\end{equation*}
which can be approximated with arbitrary  accuracy by bootstrap
sampling $\widehat{A}$ from $\widehat{P}$.
The bootstrap bias of $\widehat{\mu}$ induced by the estimator $\widehat{P}$ is given by
\begin{equation}
    \mathrm{Bias}_P\left(\widehat{\mu}\right)
    \coloneqq \E_P [\widehat{\mu}]- \mu(P).
    \label{eq:true_bias}
\end{equation}

We say that the bias of \(\widehat{\mu}\) is negligible if
\[
   \left|
   \frac{\mathrm{Bias}_P(\widehat{\mu})}
   {\sqrt{\Var_P(\widehat{\mu})}}
   \right|
\]
vanishes asymptotically. This is the relevant criterion for constructing
a valid confidence interval for \(\mu\) based on the distribution of
\(\widehat{\mu}\). A general closeness condition between \(\widehat P\) and
\(P\) is not, by itself, sufficient to guarantee that the bias of
\(\mu(\widehat P)\) is negligible, or that plug-in inference based on
\(\mu(\widehat P)\) is valid. In addition to the closeness of \(\widehat P\) to
\(P\), one must also account for the covariance structure of \(\widehat P\),
which determines the variance scale relevant for inference.

It turns out that, for many statistics, even when the model parameters are
consistently estimated, the bias of $\widehat{\mu}$ can still be non-negligible.
For example, we will see later in
\eqref{eq:Bias_Varmu_ratio_general_subgraph_count} that, for triangle counts,
the bias-to-standard-deviation ratio of $\widehat{\mu}$ is of order
$\Theta(p^{-1/2})$ under the Chung--Lu model.
Figure~\ref{fig:bootstrap_bias} further illustrates the non-negligible biases
of two related statistics: global transitivity and the local clustering
coefficient. We use these two statistics as running examples and revisit them
in Figures~\ref{fig:bias_reduction_plot}, \ref{fig:bootstrap_CI_example},
\ref{fig:CI_global_transitivity}, and~\ref{fig:CI_local_transitivity_high}.
Definitions of these statistics and details of the settings used in these
figures are provided in
Section~\ref{section:simulation_settings_for_example_figures}.


\begin{figure}[h]
\centering
\begin{minipage}{0.49\textwidth}
\centering
\includegraphics[width=1.0\textwidth]{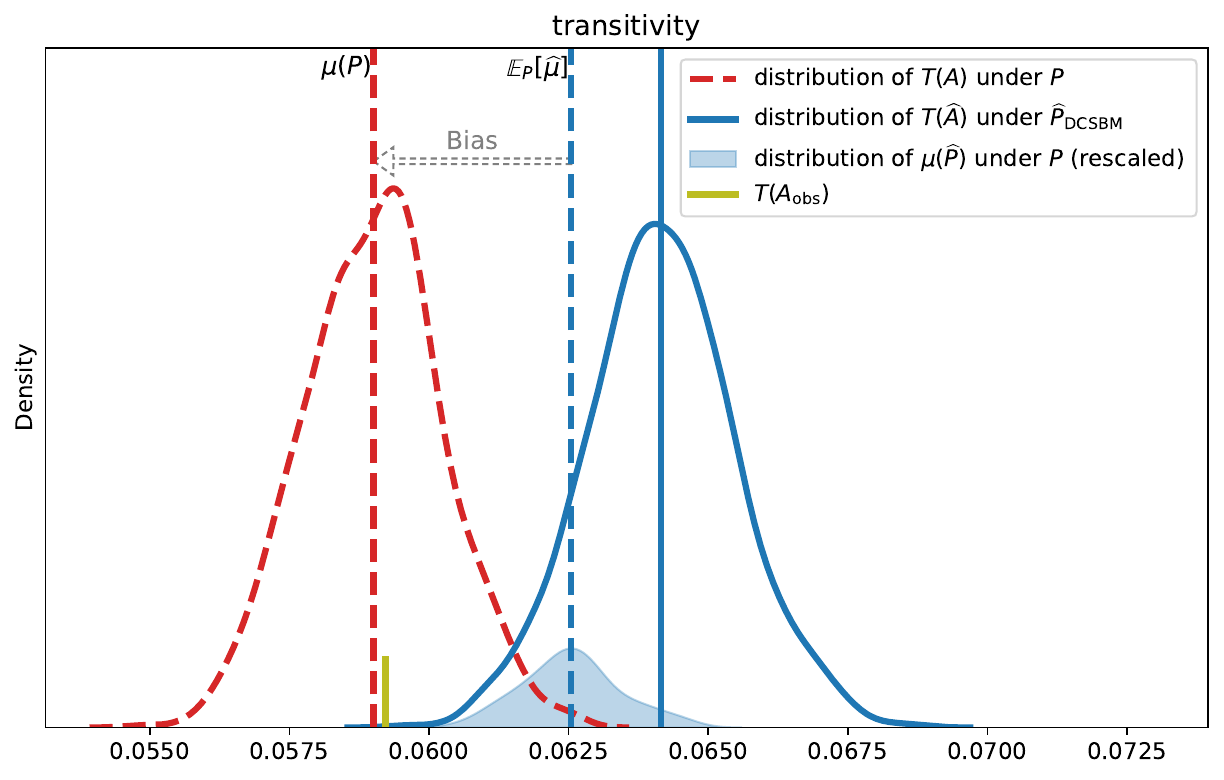}
\end{minipage}
\begin{minipage}{0.49\textwidth}
\centering
\includegraphics[width=1.0\textwidth]{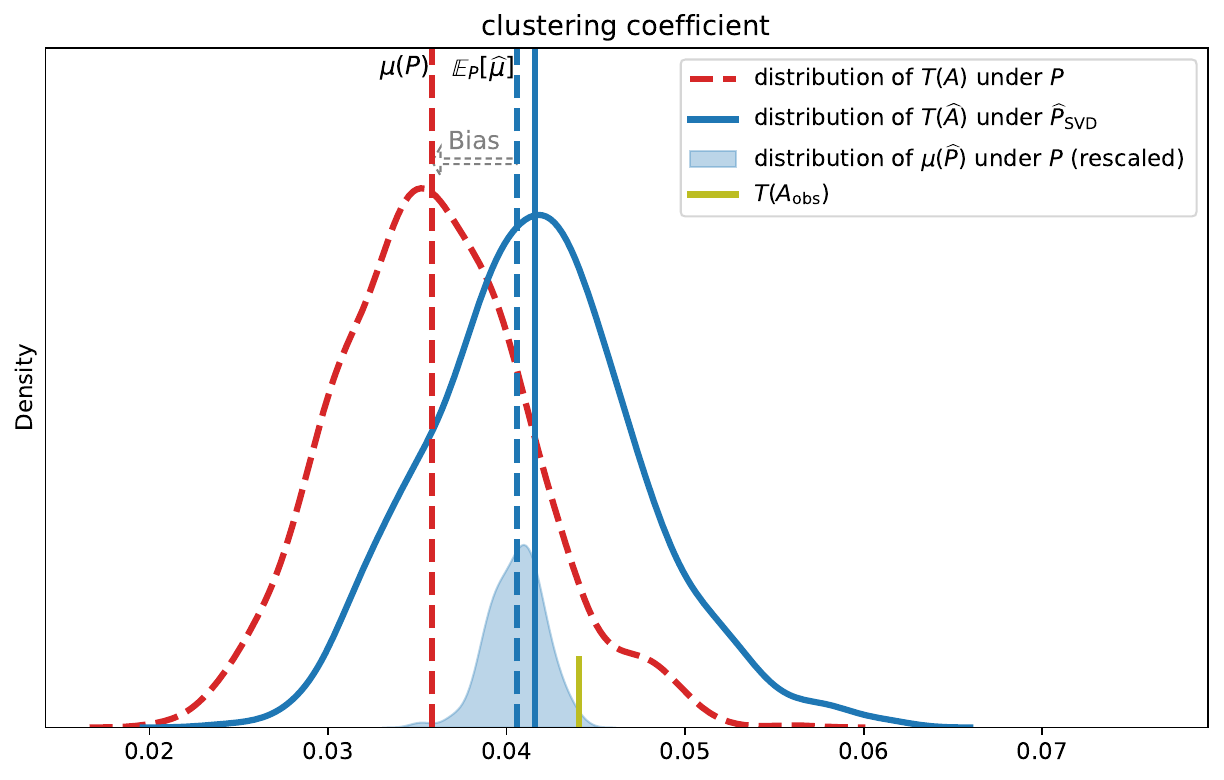}
\end{minipage}
\caption{Bootstrap bias of global transitivity (left) and the local clustering
coefficient of a given node (right). The network is generated from a DCSBM; see
Section~\ref{section:simulation_settings_for_example_figures} for details. The
model is estimated using $\widehat{P}_{\textrm{DCSBM}}$ on the left and
$\widehat{P}_{\textrm{SVD}}$ on the right.
The unknown true distribution and its mean $\mu(P)$ are shown by the red dashed
curves and vertical lines, while the bootstrap distribution and its mean
$\mu(\widehat{P})$ are shown by the blue solid curves and vertical lines. The
blue shaded area at the bottom shows the unknown distribution of
$\mu(\widehat{P})$, rescaled for illustration, and the blue dashed vertical line
represents the unknown value $\E_P[\widehat{\mu}]$. The distance between the two
dashed vertical lines corresponding to $\E_P[\widehat{\mu}]$ and $\mu(P)$ is the bootstrap bias.}
\label{fig:bootstrap_bias}
\end{figure}

To gain further insight into this issue, we derive the asymptotic behavior of
the bootstrap bias for general subgraph-count statistics. These statistics form
a canonical and rich class of network summaries, analogous to moments of random
variables in the classical setting \cite{bickel2011method}. This makes them a
natural class of statistics for studying how the network distribution under the
true model $P$ differs from the distribution under the estimated model
$\widehat P$.
We show that the bootstrap bias is of non-negligible order, arising from
the covariance structure of the estimated entries of $\widehat P$. For this analysis,
let $R$ be a fixed connected graph, meaning that any two vertices of $R$ are
joined by a path. Suppose that $R$ does not depend on $n$, and let $V(R)$ and
$E(R)$ denote its vertex set and edge set, respectively. Denote
\[
v=|V(R)|,
\qquad
e=|E(R)|.
\]
Let \(A\) be the adjacency matrix of the observed graph on vertex set
\([n]\), and let \(T_R(A)\) denote the number of copies of \(R\) in \(A\),
counted modulo automorphisms of \(R\). It can be written as
\begin{equation}
\label{eq:global count}
T_R(A)
=
\frac{1}{|\operatorname{Aut}(R)|}
\sum_{\phi:V(R)\hookrightarrow[n]}
\prod_{\{a,b\}\in E(R)}
A_{\phi(a)\phi(b)},    
\end{equation}
where the sum is over injective maps \(\phi:V(R)\hookrightarrow[n]\), and
\(\operatorname{Aut}(R)\) is the automorphism group of \(R\). The factor
\(|\operatorname{Aut}(R)|^{-1}\) removes the over-counting caused by relabeling the
vertices of \(R\). 
For example, when $R=\Delta$ is the complete graph with three vertices and
three edges, the triangle count $T_{\Delta}(A)$ can be simplified as
\begin{equation}
    T_{\Delta}(A) = \sum_{i<j<k} A_{ij} A_{ik} A_{jk},
    \label{eq:triangle_count_T}
\end{equation}
and its expectation is
\begin{equation}
    \mu _{\Delta}(P) = \E_P [T_{\Delta}(A)]  =  \sum_{i<j<k} P_{ij} P_{ik} P_{jk}.   \label{eq:triangle_count_expectation_general}
\end{equation}

We also study the bias of the rooted subgraph counts considered by \cite{maugis2020central}. 
For a rooted graph $R$
with root vertex $r$, the number of non-automorphic copies of $R$ rooted at a
fixed node $i$ can be written as 
\begin{equation}
T_{R,r}^{(i)}(A) =
\frac{1}{|\operatorname{Aut}(R,r)|}
\sum_{\substack{\phi:V(R)\hookrightarrow[n] \\ \phi(r)=i}} \ 
\prod_{\{a,b\}\in E(R)}
A_{\phi(a)\phi(b)},    
\label{eq:rooted count}
\end{equation}
where $\mathrm{Aut}(R,r):=\{\varphi\in\mathrm{Aut}(R):\varphi(r)=r\}$ is the root-preserving automorphism group. 
For example, the count of triangles rooted at node
$i$ is simply the number of triangles containing node $i$, given by
\begin{equation*}
    T_{\Delta}^{(i)}(A)
    =
    \sum_{\substack{j<k\\ j\neq i,\ k\neq i}}
    A_{ij}A_{ik}A_{jk}.
\end{equation*}
Besides being interesting objects to study in their own right, rooted subgraph
counts also play an important role in variance calculations: the variance of
estimators of global subgraph counts depends on expectations of these rooted
subgraph counts; for details, see Appendix~\ref{sec:variance}. Analogously to
the global subgraph counts, we denote
\begin{equation}
\label{eq:rooted_estimator_and_bias}
\mu_{R,r}^{(i)}
\coloneqq
\mu_{R,r}^{(i)}(P)
=
\E_P\left[T_{R,r}^{(i)}(A)\right],
\quad
\widehat{\mu}_{R,r}^{(i)}
\coloneqq
\mu_{R,r}^{(i)}(\widehat{P}),
\quad
\mathrm{Bias}\left(\widehat{\mu}_{R,r}^{(i)}\right)
\coloneqq
\E_P\left[\widehat{\mu}_{R,r}^{(i)}\right]
-
\mu_{R,r}^{(i)}.
\end{equation}

The following proposition gives the order of the bootstrap bias under the Chung--Lu model, using the maximum likelihood estimator in
\eqref{eq:Phat_MLE_DCSBM}. Its proof is given in Appendix~\ref{sec:proofs of main results}.

\begin{proposition}[Bootstrap bias of subgraph counts]
\label{prop:bias_general_subgraph_count_DCSBM}
Assume that the network is sampled from the Chung--Lu model described in
Section~\ref{section:Chung-Lu model}, with $n^{-1}\prec p\prec 1$. For any fixed
connected subgraph $R$ with $v\ge 3$ vertices and $e\ge 2$ edges, suppose that
$P$ is estimated by $\widehat{P}_{\textrm{MLE}}$ in
\eqref{eq:Phat_MLE_DCSBM}. Then, for the rooted count of $(R,r)$ at node $i$ defined
in \eqref{eq:rooted count},
\begin{equation}   \mathrm{Bias}_P\left(\widehat{\mu}_{R,r}^{(i)}\right)
\asymp n^{v-2}p^{e-1},
\label{eq:bias_order_general_rooted_subgraph_count}
\end{equation}
whereas for the global count of $R$ defined in
\eqref{eq:global count},
\begin{equation}    \mathrm{Bias}_P(\widehat{\mu}_R)
\asymp n^{v-1}p^{e-1}.   \label{eq:bias_order_general_subgraph_count}
\end{equation}
The constants implicit in the $\Theta(\cdot)$ notation in
\eqref{eq:bias_order_general_rooted_subgraph_count} and
\eqref{eq:bias_order_general_subgraph_count} only depend on $R$ and
$\max_{1\le i\le n} \theta_i$.
\end{proposition}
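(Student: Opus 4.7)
The plan is to Taylor-expand $\mu_R(\widehat{P})$ around $\mu_R(P)$ in the deviations $\epsilon_i \coloneqq \widehat{\theta}_i - \theta_i = \frac{1}{(n-1)p}\sum_{j\ne i}(A_{ij}-P_{ij})$ and identify the leading non-vanishing term of its expectation. First I would record the relevant moments of $\{\epsilon_i\}$: since $\{A_{ij}-P_{ij}\}_{i<j}$ are independent, mean-zero Bernoulli increments with variance $\Theta(p)$, direct calculation gives $\E[\epsilon_i]=0$, $\E[\epsilon_i^2]=\Theta(1/(np))$, and $\E[\epsilon_i\epsilon_j]=\Theta(1/(n^2 p))$ for $i\ne j$ (the two sums share only the single term $A_{ij}-P_{ij}$), while standard moment inequalities for sums of bounded independent variables yield $\E[|\epsilon_i|^k]=O((np)^{-k/2})$ for each fixed $k$.

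Next I would write $\mu_R(P)=\sum_J c_R\, p^e \prod_{a\in V_R}\theta_{J(a)}^{d_a}$, with $J$ ranging over injective maps $V_R\to[n]$ modulo $\mathrm{Aut}(R)$ and $d_a$ the degree of vertex $a$ in $R$. Setting $\widehat{\theta}_i=\theta_i(1+\delta_i)$ with $\delta_i=\epsilon_i/\theta_i$, the bias becomes
\[
\E_P[\widehat{\mu}_R]-\mu_R(P)=\sum_J c_R\, p^e \prod_{a}\theta_{J(a)}^{d_a}\,\E\!\left[\prod_{a\in V_R}(1+\delta_{J(a)})^{d_a}-1\right].
\]
All first-order terms vanish by $\E[\delta_i]=0$, so the leading contribution per $J$ is the ``diagonal quadratic'' $\sum_a \binom{d_a}{2}\E[\delta_{J(a)}^2]=\Theta(1/(np))$, to which every vertex of degree at least two contributes. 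Summing over the $\Theta(n^v)$ injective maps then gives $\Theta(n^{v-1}p^{e-1})$; the coefficient is strictly nonzero because any connected subgraph on $v\ge 3$ vertices contains a vertex of degree $\ge 2$ and all summands share the same sign. The rooted case follows verbatim with $J$ constrained by $J(\mathrm{root})=i$, shrinking the number of summands by a factor of $n$ and yielding $\Theta(n^{v-2}p^{e-1})$.

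The main technical obstacle is the matching upper bound, i.e., showing that every other term of the expansion is $o(n^{v-1}p^{e-1})$. The off-diagonal quadratic $\sum_{a\ne b}d_a d_b\,\E[\delta_{J(a)}\delta_{J(b)}]$ is $\Theta(1/(n^2 p))$ per $J$ and sums to $O(n^{v-2}p^{e-1})$, losing a factor of $n$. For higher-order multinomial moments $\E[\prod_a \delta_{J(a)}^{k_a}]$ with total degree $K\ge 3$, the $\epsilon_i$'s share Bernoulli increments pairwise, so these moments do not factorize cleanly. I would expand each $\epsilon_{J(a)}$ into its independent summands $\widetilde A_{J(a),\cdot}$ and apply a pairing/Wick-type argument on the resulting edge variables: since $\E[\widetilde A_{ij}^\ell]=\Theta(p)$ for $\ell\ge 2$ and the $\widetilde A$'s are independent, only configurations in which each edge appears at least twice survive, and index sharing across the product can only decrease the order. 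This yields a bound $O(n^v p^e (np)^{-\lceil K/2\rceil})$ per term, which is $o(n^{v-1}p^{e-1})$ under $np\to\infty$. Carrying out this combinatorial bookkeeping carefully, and verifying that no hidden cancellation arises from the shared-edge structure, is where most of the effort lies.
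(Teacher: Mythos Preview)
Your approach is correct and aligns with the paper's: both reduce the bias to $\Theta((np)^{-1})$ per summand, driven by the second moment of $\widehat\theta_i$ at vertices of $R$ with degree at least two. One small correction: $\E[\epsilon_i]$ is not exactly zero but $-\theta_i(\theta_i-1)/(n-1)=O(n^{-1})$, since $\E[\widehat\theta_i]=\theta_i(n-\theta_i)/(n-1)$; this does not matter because $n^{-1}=o((np)^{-1})$ under $p=o(1)$, so the first-order contribution is still dominated by your diagonal quadratic term.

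The organizational difference is that the paper first invokes the near-independence $\Cov(\widehat\theta_i,\widehat\theta_j)=O(n^{-2}p^{-1})$ to replace $\E\bigl[\prod_s\widehat\theta_s^{\tilde d_s}\bigr]$ by $\prod_s\E[\widehat\theta_s^{\tilde d_s}]+O(n^{-2}p^{-1})$, and then computes each $\E[\widehat\theta_i^{c}]$ directly by expanding $\bigl((n-1)^{-1}p^{-1}\sum_j A_{ij}\bigr)^c$ and grouping by the number of distinct indices (Stirling numbers), showing $\E[\widehat\theta_i^{c}]=\theta_i^{c}+\Theta((np)^{-1})$. This sidesteps your Wick-type bookkeeping for the joint higher moments: once factorized, only marginal moments of a single $\widehat\theta_i$ are needed, and those are elementary. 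Your route works too, but the factorize-first step buys a cleaner upper bound with less combinatorics.
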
 

As a special case, suppose that the true model $P$ is the Erd\H{o}s--R\'enyi
model, that is, $\theta_i\equiv 1$. Then, for triangle counts, the bootstrap
biases simplify to
\begin{equation}   \mathrm{Bias}_P\left(\widehat{\mu}_{\Delta,1}^{(i)}\right)
    \asymp np^2,
    \qquad
\mathrm{Bias}_P(\widehat{\mu}_{\Delta})
    \asymp n^2p^2.
    \label{eq:bootstrap_bias_global_triangle_count_ER_estimated_by_DCSBM}
\end{equation}

To determine whether the bootstrap bias is negligible, we compare
$\mathrm{Bias}_P(\widehat{\mu}_R)$ with
$\left(\Var_P(\widehat{\mu}_R)\right)^{1/2}$. The following lemma gives the
order of this variance. Its proof is given in Appendix~\ref{sec:proofs of main results}.

\begin{lemma}[Variances of plug-in estimators of subgraph means] \label{lemma:variance_of_hat_mu_R}
Under the same setting as in
Proposition~\ref{prop:bias_general_subgraph_count_DCSBM}, for the rooted count
of $R$ with root $r$, we have
\begin{equation}
    \Var_P(\widehat{\mu}_{R,r}^{(i)})
    =
    \Theta(n^{2v-3}p^{2e-1}),
    \label{eq:variance_of_hat_mu_R_i}
\end{equation}
and, for the global count of $R$,
\begin{equation}
    \Var_P(\widehat{\mu}_R)
    =
    \Theta(n^{2v-2}p^{2e-1}).
    \label{eq:variance_of_hat_mu_R}
\end{equation}
\end{lemma}

Combining Lemma~\ref{lemma:variance_of_hat_mu_R} with
Proposition~\ref{prop:bias_general_subgraph_count_DCSBM}, we obtain, for the
global subgraph count,
\begin{equation}   \frac{\mathrm{Bias}_P(\widehat{\mu}_R)}
   {\sqrt{\Var_P(\widehat{\mu}_R)}}
   =
   \Theta(p^{-1/2}),
   \label{eq:Bias_Varmu_ratio_general_subgraph_count}
\end{equation}
which is clearly non-negligible. This verifies that the location shift observed
in Figure~\ref{fig:bootstrap_bias} is not due to random fluctuation.
On the other hand, for the rooted subgraph count,
\begin{equation}
   \frac{\mathrm{Bias}_P(\widehat{\mu}_{R,r}^{(i)})}
   {\sqrt{\Var_P(\widehat{\mu}_{R,r}^{(i)})}}
   =
   \Theta\left((np)^{-1/2}\right),
   \label{eq:Bias_Varmu_ratio_general_rooted_subgraph_count}
\end{equation}
which converges to zero, but possibly as slowly as $(np)^{-1/2}$. As a result,
for relatively sparse finite-size networks, we may still observe a non-negligible
amount of bootstrap bias.

So far, we have discussed the bias problem under the Chung--Lu model when the
maximum likelihood estimator $\widehat{P}_{\textrm{MLE}}$ defined in
\eqref{eq:Phat_MLE_DCSBM} is used. Recall that, in
Figure~\ref{fig:bias_triangle_counts}, the triangle count also exhibits
significant bootstrap bias under SVD estimation. We now consider a different
estimator $\widehat{P}$ and show that this problem is not specific to the MLE.

Let $\theta=(\theta_1,\ldots,\theta_n)^\top$ and
$\widetilde{P}=p{\theta}{\theta}^\top$. Then $\widetilde{P}$ has rank
one, and $P_{ij}=\widetilde{P}_{ij}$ for all $i\neq j$. Therefore, a rank-one
approximation of $A$ of the form
$\widehat{P}_{\textrm{SVD}}=\widehat{\lambda}_1\widehat{v}_1\widehat{v}_1^\top$,
where $\widehat{\lambda}_1=\lambda_1(A)$ is the largest eigenvalue of $A$ and
$\widehat{v}_1$ is the corresponding eigenvector, provides another natural
estimator of $P$.
We argue that similar bootstrap bias also arises for networks sampled from
$\widehat{P}_{\textrm{SVD}}$. To illustrate the magnitude of this bias, we
consider triangle counts under the Erd\H{o}s--R\'enyi model as a specific
example. The following proposition shows that the resulting bootstrap bias is
not negligible. Its proof is given in
Appendix~\ref{sec:proofs of main results}.

\begin{proposition}[Bootstrap bias of triangle count under SVD estimation]
\label{prop:bias_triangle_count_SVD}
Let \(A\sim G(n,p)\) be an Erd\H{o}s--R\'enyi graph with fixed edge probability 
\(p\in(0,\varepsilon)\) and \(\varepsilon<1/2\) is a sufficiently
small constant.  Let
$
\widehat P_{\mathrm{SVD}}
:=
\widehat\lambda_1\widehat v_1\widehat v_1^\top
$
be the rank-one truncated SVD estimator of the edge-probability matrix, where
\(\widehat\lambda_1=\lambda_1(A)\) is the largest eigenvalue of \(A\), and
\(\widehat v_1\) is a corresponding unit eigenvector.
Then the bootstrap bias of triangle count \eqref{eq:triangle_count_T} is non-negligible,
\begin{equation*}
    \frac{\mathrm{Bias}_P\left(\mu_{\Delta}(\widehat{P}_{\textrm{SVD}})\right)}{\sqrt{\Var_P\left(\mu_{\Delta}(\widehat{P}_{\textrm{SVD}})\right)}} \asymp p^{-1/2}.
\end{equation*}
\end{proposition}



The non-negligible bootstrap bias identified here is closely related to familiar
phenomena in high-dimensional inference. Most directly, it resembles the
incidental-parameter problem: when the number of nuisance parameters grows with
the sample size, estimating these nuisance parameters can induce bias in
downstream target estimators
\cite{neyman1948consistent,lancaster2000incidental}. In the Chung--Lu model,
the \(n\) node-specific parameters play an analogous role. Although the network contains $O(n^2)$ dyads, each node-specific parameter is
estimated primarily from its incident edges, so the node-parameter estimation
errors are of order $(np)^{-1/2}$.
When a nonlinear network statistic is evaluated at the estimated
edge-probability matrix, these errors enter through second-order terms and can
accumulate, producing a bias that is non-negligible relative to the standard
deviation of the statistic. This perspective also connects our results to
high-dimensional smooth functional estimation, where plug-in estimators of
nonlinear functionals of high-dimensional parameters often require explicit
bias correction because second-order terms need not be negligible
\cite{koltchinskii2022estimation}. Our contribution is different in that we
study how estimating these nuisance quantities affects bootstrap inference for
network statistics. The proposed two-level bootstrap acts as a
simulation-based bias correction, analogous in spirit to bias correction and
debiasing methods in high-dimensional inference
\cite{hahn2004jackknife,chernozhukov2018double}.

\section{Bias Reduction via the Double Bootstrap}
\label{section:bias_correction}

Given the existence of bootstrap bias, a natural question is how to correct it.
For each statistic $T$ of interest, one possible approach is to derive an
analytic formula for $\mathrm{Bias}_P(\widehat{\mu}_T)$ as a function of $P$ and
then estimate it by replacing $P$ with its estimator $\widehat{P}$. Although
this case-by-case approach works for small subgraphs $R$, such as edges or
triangles, it quickly becomes impractical for more complex statistics.

A more general and principled solution is to \textit{implicitly} estimate the
unknown bias $\mathrm{Bias}_P(\widehat{\mu})$ using a plug-in estimator
$\widehat{\mathrm{Bias}}_P(\widehat{\mu})$ obtained through an additional level
of bootstrap. Specifically, we independently sample $B$ networks
$\{\widehat{A}_1,\ldots,\widehat{A}_B\}$ from the estimated model $\widehat{P}$.
For each bootstrap network $\widehat{A}_b\sim\widehat{P}$, $1\le b\le B$, we
obtain an estimate $\widehat{\widehat{P}}_b$ of $\widehat{P}$ using the same
method used to estimate $P$ from $A$.

For each $\widehat{\widehat{P}}_b$, we similarly define
\begin{equation*}
    \widehat{\widehat{\mu}}_b
    \coloneqq \mu(\widehat{\widehat{P}}_b)
    \coloneqq \E_{\widehat{\widehat{P}}_b}[T]
    \coloneqq
    \E\left[
        T(\widehat{\widehat{A}})
        \mid
        \widehat{\widehat{A}}\sim \widehat{\widehat{P}}_b
    \right].
\end{equation*}
Evaluating $\widehat{\widehat{\mu}}_b$ by Monte Carlo requires another level of
bootstrap networks sampled from $\widehat{\widehat{P}}_b$. This two-level
bootstrap procedure is illustrated in Figure~\ref{fig:procedure}.
We then approximate the true bias in \eqref{eq:true_bias} by the estimated
bootstrap bias
 \begin{equation*}    \widehat{\mathrm{Bias}}_{\widehat{P}} \coloneqq
     \E_{\widehat{P}}[\widehat{\widehat{\mu}}] - \widehat{\mu} = \E_{\widehat{P}}[ \mu(\widehat{\widehat{P}})] - \mu(\widehat{P}) \label{eq:estimated_bias} 
     \approx \frac{1}{B}\sum_{b=1}^B \widehat{\widehat{\mu}}_b - \mu(\widehat{P}).
 \end{equation*}
Using the approximation
\begin{equation*} 
    \E_{\widehat{P}}[\mu(\widehat{\widehat{P}})] - \mu(\widehat{P})
    \approx
    \E_P[\mu(\widehat{P})] - \mu(P),
\end{equation*}
we can reduce the bias by using an additive correction term. In
particular, the bias-corrected point estimate is
\begin{equation*}
    \widehat{\mu} - \widehat{\mathrm{Bias}}_{\widehat{P}}
    =
    2\widehat{\mu} - \E_{\widehat{P}}[\widehat{\widehat{\mu}}].
\end{equation*}
This bias-reduction method is inspired by the idea of iterating the bootstrap
principle \cite{hall1988bootstrap, hall2013bootstrap}. The idea was first
introduced by \cite{efron1983estimating} and later developed and unified in
Hall's textbook \cite{hall1988bootstrap}. In the classical i.i.d. setting, each bootstrap
iteration reduces the error by a factor of at least $n^{-1/2}$.

Note that the remaining bias after correction is given by
\begin{equation*}
    \Delta \mathrm{Bias}
    \coloneqq
    \widehat{\mathrm{Bias}}_{\widehat{P}} - \mathrm{Bias}_P,
\end{equation*}
and remaining bias after correction for rooted subgraph counts $\Delta \mathrm{Bias}^{(i)}$ is defined analogously. 
The next proposition shows that the second-level bootstrap reduces the order of
the bias by a factor of $(np)^{-1/2}$ for rooted counts and by a factor of
$(np)^{-1}$ for global counts. This result essentially extends the
classical iterative bootstrap to the Chung--Lu setting, where the number of
parameters is not fixed but grows at rate $n$. The proof is given in
Appendix~\ref{sec:proofs of main results}.

\begin{figure}
\centering
\begin{minipage}{.49\textwidth}
\centering
\includegraphics[width=1.0\textwidth]{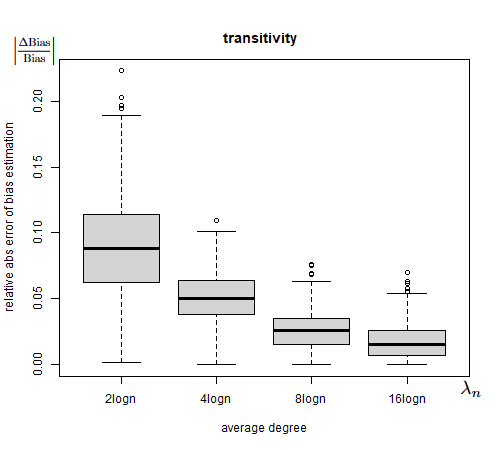}
\end{minipage}
\begin{minipage}{.49\textwidth}
\centering
\includegraphics[width=1.0\textwidth]{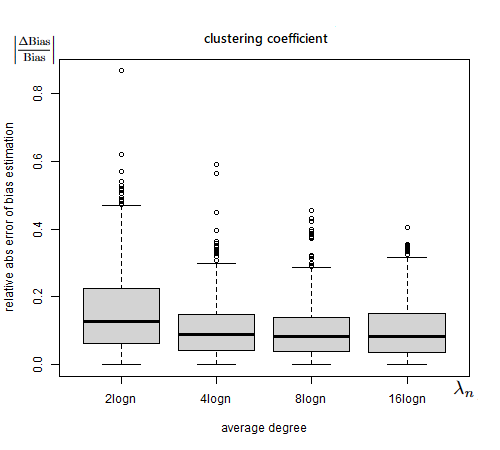}
\end{minipage}
\caption{Bias reduction for global transitivity (left) and the clustering coefficient
(right) as the average degree $\lambda_n$ increases. The $y$-axis shows
$\frac{|\Delta \mathrm{Bias}|}{\mathrm{Bias}}$. The model is estimated by the
DCSBM on the left and by truncated SVD on the right. See
Section~\ref{section:simulation_settings_for_example_figures} for the model
settings.
}
\label{fig:bias_reduction_plot}
\end{figure}

\begin{proposition}[Bias correction via the second-level bootstrap] 
\label{prop:bias_correction_general_subgraph}
Assume that the network is sampled from the Chung--Lu model described in
Section~\ref{section:Chung-Lu model}, with $n^{-1}\prec p\prec 1$. For any
fixed connected subgraph $R$ with $v\ge 3$ vertices and $e\ge 2$ edges, suppose
that $P$ is estimated by $\widehat{P}_{\textrm{MLE}}$ in
\eqref{eq:Phat_MLE_DCSBM}. Further suppose that
$\mathrm{Bias}_P(\widehat{\mu}_R)$ and
$\mathrm{Bias}_P(\widehat{\mu}_{R,r}^{(i)})$ are estimated by
$\widehat{\mathrm{Bias}}_{\widehat{P}}$ and
$\widehat{\mathrm{Bias}}_{\widehat{P}}^{(i)}$, respectively, and that the
additive bias correction is applied. Then, for the rooted count of $R$ at node
$i$ defined in \eqref{eq:rooted count},
\begin{equation*}
    \frac{\Delta \mathrm{Bias}_{R,r}^{(i)}}   {\mathrm{Bias}_P(\widehat{\mu}_{R,r}^{(i)})}
=O_{\Prob}\left((np)^{-1/2}\right),
\end{equation*}
whereas for the global count of $R$ defined in
\eqref{eq:global count},
\begin{equation*}
    \frac{\Delta \mathrm{Bias}_R}   {\mathrm{Bias}_P(\widehat{\mu}_R)}
=O_{\Prob}\left((np)^{-1}\right).
\end{equation*}
\end{proposition}

Combining Proposition~\ref{prop:bias_correction_general_subgraph} with
\eqref{eq:Bias_Varmu_ratio_general_subgraph_count} and
\eqref{eq:Bias_Varmu_ratio_general_rooted_subgraph_count}, the reduced bias is
negligible relative to the standard deviation:
\begin{equation*}
    \frac{\Delta \mathrm{Bias}_{R,r}^{(i)}}
    {\sqrt{\Var_P(\widehat{\mu}_{R,r}^{(i)})}}
    =
    O_{\Prob}\left((np)^{-1}\right),
    \qquad
    \frac{\Delta \mathrm{Bias}_R}
    {\sqrt{\Var_P(\widehat{\mu}_R)}}
    =
    O_{\Prob}\left(n^{-1}p^{-3/2}\right).
\end{equation*}
These ratios converge to zero faster than the corresponding ratios for the
original bootstrap bias.

Figure~\ref{fig:bias_reduction_plot} demonstrates the empirical effectiveness
of bias reduction via the second-level bootstrap. In particular, the relative
remaining bias $\frac{|\Delta \mathrm{Bias}|}{\mathrm{Bias}}$ is much smaller
than one and clearly decreases toward zero as the average degree $\lambda_n$
increases. Figure~\ref{fig:bootstrap_CI_example} further shows that the
estimated bias $\widehat{\mathrm{Bias}}_{\widehat{P}}$ provides a good
approximation to the unknown bias $\mathrm{Bias}_P$.

\section{Confidence Intervals for Expected Subgraph Counts}
\label{section:confidence_intervals}

As mentioned in Section~\ref{sec:setup}, one of our main goals is to construct
confidence intervals for expected network statistics, such as subgraph counts
$\mu(P)$. For this purpose, it is natural to use the plug-in estimate
$\mu(\widehat{P})$ and its distribution to make inference about $\mu(P)$.
However, Section~\ref{section:bootstrap_bias} shows that directly using the
distribution of $\mu(\widehat{P})$ can be problematic, as the bias may grow
faster than its standard deviation, resulting in invalid confidence intervals.
To address this problem, we showed in Section~\ref{section:bias_correction}
that the bootstrap bias can be corrected via the second-level bootstrap.
Equipped with these insights, we now proceed to construct confidence intervals
for $\mu(P)$, again focusing on subgraph-count statistics because they form a
canonical and rich class of network summaries and because the insights obtained
in the previous sections apply naturally to them.

\subsection{Confidence Intervals Based on the Distributions of Statistics}
\label{section:CI_based_on_T}

Before using $\mu_R(\widehat P)$ to construct confidence intervals for
$\mu_R(P)$, we first consider a simpler baseline approach. Since
\[
\mu_R(P)=\mathbb E_P[T_R(A)],
\]
one can instead base inference on the sampling distribution of the observed
subgraph count $T_R$. Although the finite-sample distribution of $T_R$ is
generally unknown, its asymptotic distribution is available from the following
result. As discussed in Sections~\ref{section:CI_based_on_mu} and \ref{section:variance_comparison}, while this direct count-based approach is
conceptually simpler, its accuracy depends on how the variance of $T_R$ compares
with the variance of the plug-in quantity $\mu_R(\widehat P)$.

\begin{proposition}[Asymptotic normality of subgraph counts]
\label{prop:asymptotic_normality_triangle_count}
Consider a network sampled from the inhomogeneous Erd\H{o}s--R\'enyi model with
independent edges and edge probabilities $P_{ij}$ satisfying
$cp\le P_{ij}\le Cp$ for some constants $c,C>0$. For any fixed subgraph $R$,
the following statements hold.

\medskip
\noindent
(i) If $n^{-1/m_R}\prec p\preccurlyeq 1$, where 
\[
    m_R=\max\left\{\frac{e(H)}{v(H)}: H\subseteq R, \ e(H)\ge 1, \ H \text{ has no isolated vertices}\right\},
\]
then
\[
    \frac{T_R-\E[T_R]}{\sqrt{\Var(T_R)}}
    \xrightarrow{D} N(0,1).
\]

\medskip
\noindent
(ii) For fixed vertices $r\in V(R)$ and $i\in[n]$, if $    n^{-1/m_{R,r}}\prec p\preccurlyeq 1$, where
\[
    m_{R,r}
    =
    \max\left\{\frac{e(H)}{v(H\setminus\{r\})}: H\subseteq R, \ e(H)\ge 1, \ H \text{ has no isolated vertices}\right\},
\]
then
\[
    \frac{T_{R,r}^{(i)}-\E[T_{R,r}^{(i)}]}
    {\sqrt{\Var(T_{R,r}^{(i)})}}
    \xrightarrow{D}
    N(0,1).
\]
\end{proposition}

Proposition~\ref{prop:asymptotic_normality_triangle_count} can be viewed as an extension of the result of
\cite{rucinski1988small}, which establishes asymptotic normality for subgraph
counts under the inhomogeneous Erd\H{o}s--R\'enyi model, a model that contains
the Chung--Lu model as a special case. The proof is given in
Appendix~\ref{sec:proofs of main results}.

It is worth mentioning that \cite{zhang2022edgeworth} also establishes
analogous asymptotic normality results for network moments; see also
\cite{ZhangXiaReducedCount2022} for results on reduced subgraph counts. While
their main focus is on the marginal distribution of subgraph counts under a
graphon model, their asymptotic normality result is also applicable to the
conditional-on-$P$ setting, as implied by their Lemma 3.1(b); see the discussion
section of \cite{zhang2022edgeworth}.

However, we highlight several distinctions between
Proposition~\ref{prop:asymptotic_normality_triangle_count} and their
Lemma 3.1(b). First, our result applies to the general inhomogeneous
Erd\H{o}s--R\'enyi model under the assumption that the edge probabilities are of
comparable magnitudes, whereas their result requires $P$ to satisfy certain
regularity conditions, which hold with high probability when $P$ is sampled from
a non-degenerate graphon model. Second, for many subgraphs, our sparsity
condition is weaker. For example, for triangles, Lemma 3.1(b) of
\cite{zhang2022edgeworth} requires $p\succ n^{-2/3}$, whereas
Proposition~\ref{prop:asymptotic_normality_triangle_count} only requires
$p\succ n^{-1}$ for the global count. Finally, while their result applies to
global counts, our result applies to both global and rooted counts.
 
We also note that several bootstrap results for subgraph counts under
exchangeable random graph or graphon models impose conditions that depend on
the cycle structure of the subgraph
\cite{bhattacharyya2015subsampling, green2022bootstrapping, zhang2022edgeworth}.
Proposition~\ref{prop:asymptotic_normality_triangle_count} imposes a condition
of a similar spirit: the required network density depends on the structure of
the subgraph \(R\). However, our condition is expressed through the densest subgraph of \(R\), and we do not impose separate assumptions
depending on whether \(R\) is cyclic or acyclic.
The reason is that our setting differs from the graphon setting. Under graphon
or exchangeable random graph models, randomness in the latent positions can
lead to degeneracy of the associated \(U\)-statistic kernels, and the limiting
behavior may depend on which Hoeffding projection is the first nonzero one. In
contrast, we work conditionally on a fixed edge-probability matrix \(P\). In
this conditional setting, the relevant dependence structure is generated by
shared edges among copies of \(R\). Therefore, the variance and higher
cumulants are governed by the finite collection of edge-overlap subgraphs of
\(R\), and the cumulant argument requires the corresponding overlap scale to
diverge. Thus, the condition is expressed through the dominant edge-overlap
structure of \(R\), rather than through a separate restriction related to
cycles.

To apply Proposition~\ref{prop:asymptotic_normality_triangle_count}, we need
consistent estimates of the variances of $T_R$ and $T_{R,r}^{(i)}$. Fortunately,
the following result shows that these variances can be estimated using the
second-level bootstrap. The proof is given in
Appendix~\ref{sec:proofs of main results}.

\begin{proposition}[Variance approximation for subgraph counts]
\label{prop:variance_approximation_general}
Assume that the network is sampled from the Chung--Lu model described in
Section~\ref{section:Chung-Lu model}, with $n^{-1}\prec p\prec 1$. For any
fixed connected subgraph $R$ with $v\ge 3$ vertices and $e\ge 2$ edges, suppose
that $P$ is estimated by $\widehat{P}_{\textrm{MLE}}$ in
\eqref{eq:Phat_MLE_DCSBM}. Further suppose that $\Var_P(T_R)$ is estimated by
$\Var_{\widehat{P}}(T_R)$. Then, for the rooted count of $R$ at node $i$ defined
in \eqref{eq:rooted count},
\begin{equation*}
    \frac{\Var_{\widehat{P}}(T_R^{(i)})-\Var_P(T_R^{(i)})}{\Var_P(T_R^{(i)})}
=O_{\Prob}\left((np)^{-1/2}\right),
\end{equation*}
whereas for the global count of $R$ defined in
\eqref{eq:global count},
\begin{equation*}
    \frac{\Var_{\widehat{P}}(T_R)-\Var_P(T_R)}{\Var_P(T_R)}
=O_{\Prob}\left((np)^{-1}\right).
\end{equation*}
\end{proposition}

We are now ready to construct confidence intervals for $\mu_R(P)$ based on the
single observation $T=T_R$, focusing on two-sided intervals with nominal coverage
$1-\alpha$; the confidence intervals for rooted counts are analogous and are
therefore omitted. Using Propositions~\ref{prop:asymptotic_normality_triangle_count}
and \ref{prop:variance_approximation_general}, we consider the following
symmetric normal confidence interval for statistics that are asymptotically
normal:
\begin{equation}
    T(A_{\textrm{obs}})
    \pm
    \Phi^{-1}\left(1-\frac{\alpha}{2}\right)
    \sqrt{\Var_{\widehat{P}}(T)},
    \label{eq:CI_obs_sym}
\end{equation}
where $\Var_P(T)$ is approximated by $\Var_{\widehat{P}}(T)$.

For network statistics that are not asymptotically normal, or whose convergence
to normality is slow, one may instead use the following asymmetric, nonnormal
confidence interval:
\begin{equation}
    \left[
    T(A_{\textrm{obs}})
    -
    \left\{
        F^{-1}_{\widehat{P},T}\left(1-\frac{\alpha}{2}\right)
        -
        \mu(\widehat{P})
    \right\},
    \ 
    T(A_{\textrm{obs}})
    -
    \left\{
        F^{-1}_{\widehat{P},T}\left(\frac{\alpha}{2}\right)
        -
        \mu(\widehat{P})
    \right\}
    \right],
    \label{eq:CI_obs_asym}
\end{equation}
where $F_{\widehat{P},T}$ denotes the distribution of $T(\widehat{A})$ under
$\widehat{P}$. Recall that $F_{P,T}$, the distribution of $T(A)$ under $P$, is
unknown. If $F_{\widehat{P},T}$ has approximately the same shape as $F_{P,T}$,
possibly up to a location shift, then
\begin{equation*}
\Prob \left\{
\begin{array}{c}
    \E_P[T(A)]
    +
    \left[
        F^{-1}_{\widehat{P},T}\left(\frac{\alpha}{2}\right)
        -
        \E_{\widehat{P}}[T(\widehat{A})]
    \right]
    \le T(A)
    \\
    \le
    \E_P[T(A)]
    +
    \left[
        F^{-1}_{\widehat{P},T}\left(1-\frac{\alpha}{2}\right)
        -
        \E_{\widehat{P}}[T(\widehat{A})]
    \right]
\end{array}
\right\}
\approx 1-\alpha.
\end{equation*}

Both the normal-type interval in \eqref{eq:CI_obs_sym} and the nonnormal-type
interval in \eqref{eq:CI_obs_asym} are centered at the single observation
$T(A_{\textrm{obs}})$, with their widths determined by the estimated variability
of its distribution. In some cases, these intervals can be quite wide due to the
large variance of a single observation.
Fortunately, in such cases, we can construct a more informative confidence
interval based on the distribution of $\mu(\widehat{P})$, which we discuss in
the next section.

We also note that there is a computational cost difference between using the normal approximation and using the empirical bootstrap. When an asymptotic normal approximation is available, the
bootstrap samples are used mainly to estimate the variance, and the quantiles
are obtained analytically from the normal distribution. In this case, a few
hundred bootstrap samples are often sufficient for a stable variance estimate.
By contrast, empirical bootstrap confidence intervals require estimating tail
quantiles directly from the bootstrap distribution, which typically requires
many more bootstrap samples, often on the order of thousands. This distinction
is especially important for the two-level bootstrap, whose computational cost
scales with the product of the numbers of first- and second-level bootstrap
replicates. Thus, when the normal approximation is accurate, it provides a much
more computationally efficient implementation of the bias-corrected bootstrap;
the empirical bootstrap remains useful when the sampling distribution is
noticeably non-Gaussian or asymmetric.

\begin{figure}
    \centering    \includegraphics[width=0.49\textwidth]{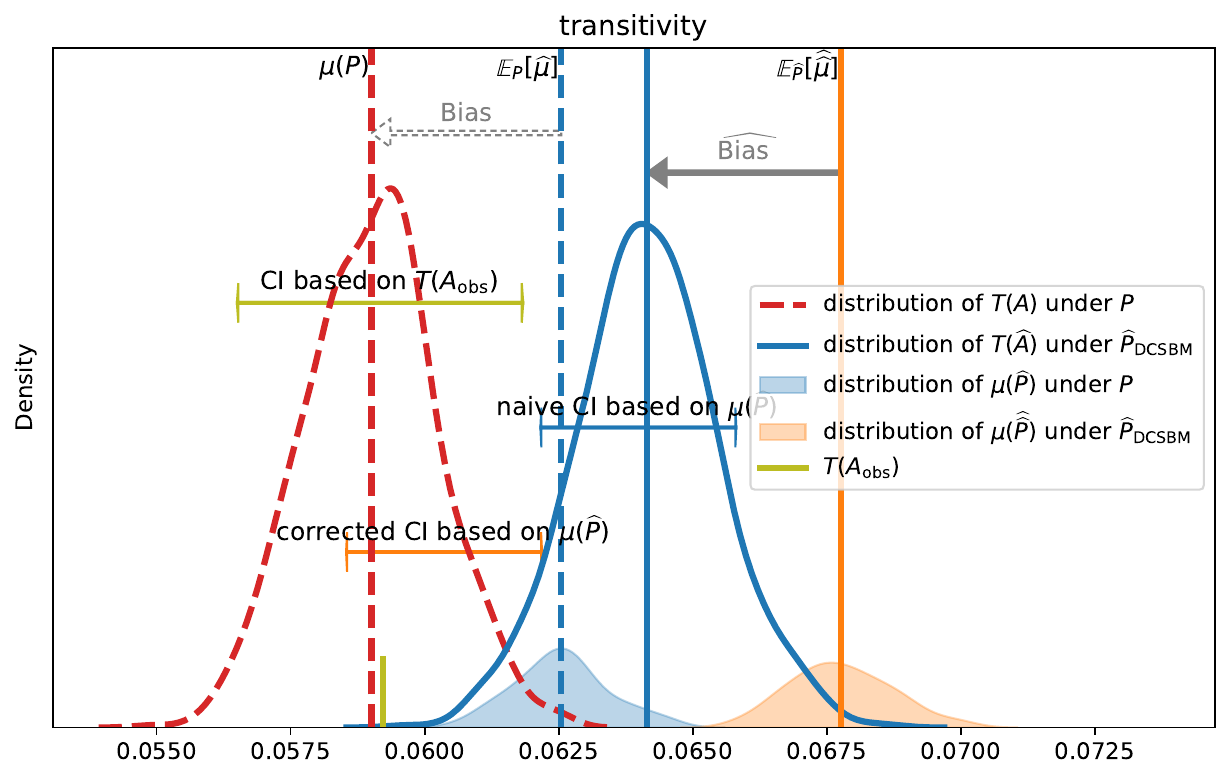}
    \includegraphics[width=0.49\textwidth]{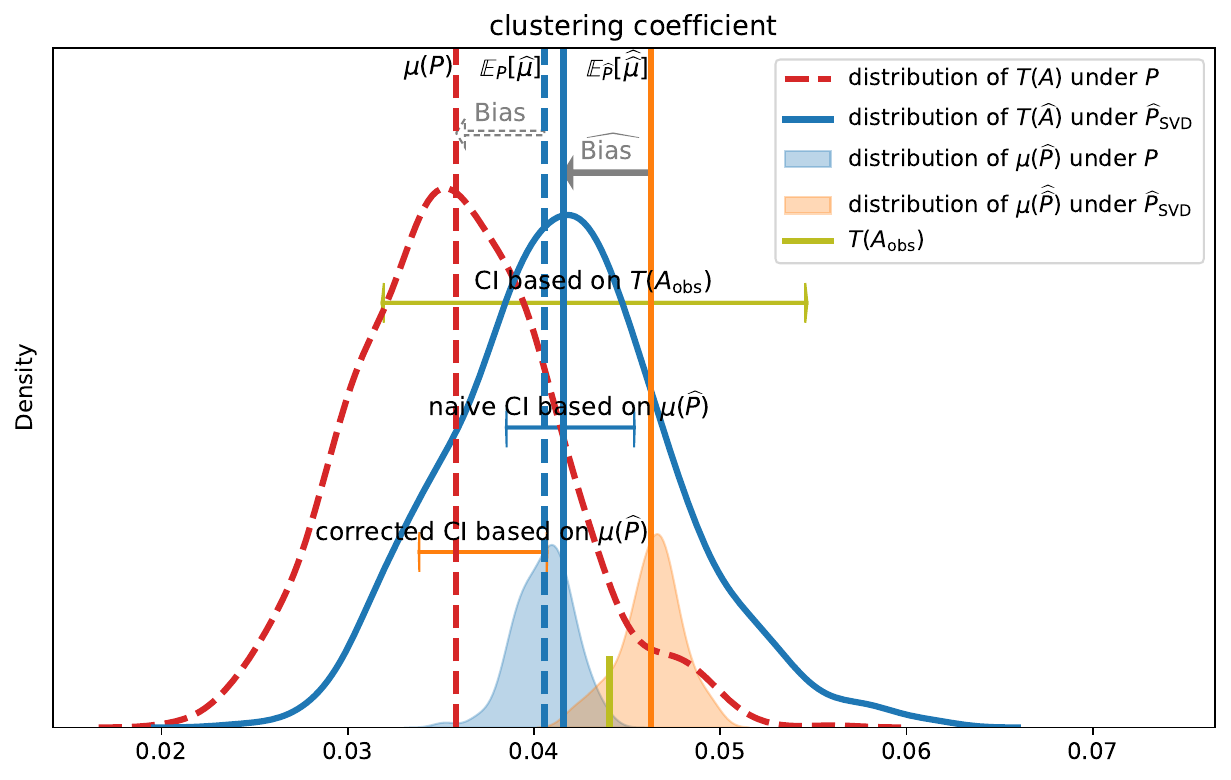}
    \caption{Three types of confidence intervals for global transitivity (left) and the
clustering coefficient of a given node (right), under the same settings as in
Figure~\ref{fig:bootstrap_bias}; see
Section~\ref{section:simulation_settings_for_example_figures} for details.
The orange shaded area shows the conditional distribution of
$\widehat{\widehat{\mu}}$ given $\widehat{P}$, treating $\widehat{P}$ as the
true model, and the orange vertical line marks
$\E_{\widehat{P}}[\widehat{\widehat{\mu}}]$. The distance between this orange
line and the blue line at $\mu(\widehat{P})$ is the estimated bias
$\widehat{\mathrm{Bias}}$, indicated by the gray arrow. The green, blue, and
brown intervals are intended to cover the red dashed line representing $\mu(P)$.
    }
    \label{fig:bootstrap_CI_example}
\end{figure}

\subsection{Confidence Intervals Based on Distributions of Bootstrap Means}
\label{section:CI_based_on_mu}

We now use the asymptotic distribution of $\widehat{\mu}_R=\mu_R(\widehat{P})$ to
construct confidence intervals for $\mu_R=\mu_R(P)$. This approach is justified by
the following results. We first show that $\widehat{\mu}_R$ is asymptotically
normal. The proof is given in Appendix~\ref{sec:proofs of main results}.

\begin{proposition}[Asymptotic normality of bootstrap means]
\label{prop:normal_convergence_expectation_of_subgraph_count}
Let $A$ be a network sampled from the Chung--Lu model with
$n^{-1}\prec p \prec 1$. Consider the conditional expectation
$\widehat{\mu}_R=\E_{\widehat{P}}[T_R]$ of the subgraph count of $R$ given
$\widehat{P}$. Then, as $n\to\infty$,
\begin{equation*}
    \frac{\widehat{\mu}_R-\E[\widehat{\mu}_R]}
    {\sqrt{\Var(\widehat{\mu}_R)}}
    \xrightarrow{D}
    \mathcal{N}(0,1).    
\end{equation*}
The same conclusion holds for rooted subgraph counts. 
\end{proposition}

To apply Proposition~\ref{prop:normal_convergence_expectation_of_subgraph_count},
we need a consistent estimates of the variances of $\widehat{\mu}_R$ and $\widehat{\mu}_{R,r}^{(i)}$. This estimate
is obtained from the second-level bootstrap, as shown in the following
proposition. The proof is given in Appendix~\ref{sec:proofs of main results}.

\begin{proposition}[Consistency of bootstrap variance estimation] 
\label{prop:variance_approximation_mu_hat}
Assume network $A$ is sampled from the Chung-Lu model in Section~\ref{section:Chung-Lu model} with $n^{-1} \prec p \prec 1$.
For any fixed connected subgraph $R$ with $v\ge 3$ nodes and $e \ge2$ edges, if we estimate $P$ using $\widehat{P}_{\textrm{MLE}}$ in \eqref{eq:Phat_MLE_DCSBM},
and estimate $\Var_P(\widehat{\mu}_R)$ and $\Var_P(\widehat{\mu}_R^{(i)})$ by $ \Var_{\widehat{P}} (\widehat{\widehat{\mu}}_R)$ and $\Var_{\widehat{P}}(\widehat{\widehat{\mu}}_R^{(i)})$, respectively, then we have
    \begin{equation*}
\frac{\Var_{\widehat{P}}(\widehat{\widehat{\mu}}_R^{(i)})-\Var_P(\widehat{\mu}_R^{(i)})}{\Var_P(\widehat{\mu}_R^{(i)})} =  O \left((np)^{-1/2}\right),
\qquad        \frac{\Var_{\widehat{P}}(\widehat{\widehat{\mu}}_R)-\Var_P(\widehat{\mu}_R)}{\Var_P(\widehat{\mu}_R)} = O \left((np)^{-1/2}\right).
    \end{equation*} 
\end{proposition}

Using Propositions~\ref{prop:normal_convergence_expectation_of_subgraph_count}
and \ref{prop:variance_approximation_mu_hat}, we are now ready to construct
confidence intervals for $\mu(P)$ based on
$\widehat{\mu}=\mu(\widehat{P})$. The confidence intervals for rooted counts are
analogous and are therefore omitted.

\subsubsection{Confidence Intervals without Bias Correction}
\label{section:CI_naive}

For comparison, we first construct intervals directly for
$\E_P[\widehat{\mu}]$, ignoring the difference between
$\E_P[\widehat{\mu}]$ and $\mu(P)$. This yields the following normal and
nonnormal confidence intervals:
\begin{equation}
    \widehat{\mu}
    \pm
    \Phi^{-1}\left(1-\frac{\alpha}{2}\right)
    \sqrt{\Var_{\widehat{P}}(\widehat{\widehat{\mu}})},
    \label{eq:CI_mu_naive_sym}
\end{equation}
and
\begin{equation}
    \left[
    \widehat{\mu}
    -
    \left\{
        F^{-1}_{\widehat{P},\widehat{\widehat{\mu}}}
        \left(1-\frac{\alpha}{2}\right)
        -
        \E_{\widehat{P}}[\widehat{\widehat{\mu}}]
    \right\},
    \quad
    \widehat{\mu}
    -
    \left\{
        F^{-1}_{\widehat{P},\widehat{\widehat{\mu}}}
        \left(\frac{\alpha}{2}\right)
        -
        \E_{\widehat{P}}[\widehat{\widehat{\mu}}]
    \right\}
    \right].
    \label{eq:CI_mu_naive_asym}
\end{equation}
In practice, we can replace $\E_{\widehat{P}}[\widehat{\widehat{\mu}}]$ by
$B^{-1}\sum_{b=1}^B \widehat{\widehat{\mu}}_b$. The corresponding Monte Carlo
approximation error can be made arbitrarily small by taking $B$ sufficiently
large.

\subsubsection{Bias-Corrected Confidence Intervals} 
\label{section:CI_corrected}

We shift the intervals in Section~\ref{section:CI_naive} by the additive
correction term
\[
    \E_{\widehat{P}}[\mu(\widehat{\widehat{P}})]-\mu(\widehat{P}),
\]
obtaining the following bias-corrected intervals:
\begin{equation}
    2\widehat{\mu}
    -
    \E_{\widehat{P}}[\widehat{\widehat{\mu}}]
    \pm
    \Phi^{-1}\left(1-\frac{\alpha}{2}\right)
    \sqrt{\Var_{\widehat{P}}(\widehat{\widehat{\mu}})},
    \label{eq:CI_mu_corrected_sym}
\end{equation}
and
\begin{equation}
    \left[
    2\widehat{\mu}
    -
    F^{-1}_{\widehat{P},\widehat{\widehat{\mu}}}
    \left(1-\frac{\alpha}{2}\right),
    \quad
    2\widehat{\mu}
    -
    F^{-1}_{\widehat{P},\widehat{\widehat{\mu}}}
    \left(\frac{\alpha}{2}\right)
    \right].
    \label{eq:CI_mu_corrected_asym}
\end{equation}
Since the bias correction only shifts the location, these intervals have the
same widths as the uncorrected intervals in Section~\ref{section:CI_naive}.

So far, we have proposed three types of confidence intervals: intervals based on
the distribution of $T$ in Section~\ref{section:CI_based_on_T}, intervals based
on the distribution of $\widehat{\mu}$ in Section~\ref{section:CI_naive}, and
their bias-corrected versions in Section~\ref{section:CI_corrected}. We
illustrate the construction and comparison of these intervals in
Figure~\ref{fig:bootstrap_CI_example}, using the same settings as in
Figure~\ref{fig:bootstrap_bias}.
Intervals based on the distribution of $T(A)$ in \eqref{eq:CI_obs_asym} are
shown in green. They are centered at the point estimate $T(A_{\textrm{obs}})$,
and their widths are determined by the variance of the bootstrap distribution of
$T$ (the blue solid curve). Intervals based on the distribution of
$\widehat{\mu}=\mu(\widehat{P})$ are shown in blue and brown, depending on whether bias
correction is applied. The naive intervals in \eqref{eq:CI_mu_naive_asym} are
centered at $\widehat{\mu}$, whereas the bias-corrected intervals in
\eqref{eq:CI_mu_corrected_asym} are centered at
$2\widehat{\mu}-\E_{\widehat{P}}[\widehat{\widehat{\mu}}]$.
These two types of intervals have the same width, characterized by the variance
of $\widehat{\widehat{\mu}}$ conditional on $\widehat{P}$ (the brown shaded
density). For both statistics, the green interval successfully covers the true
value $\mu(P)$ (the red dashed vertical line), but its width is unnecessarily
large compared with those of the other two intervals, especially for the local
clustering coefficient. The naive intervals in blue, on the other hand, miss the
target in both examples, whereas the bias-corrected intervals, with equally
narrow widths, are shifted to cover the true value correctly.

\subsection{Variance Comparison}
\label{section:variance_comparison}

For some network statistics, such as node degrees, we have
$\mu(\widehat{P})=\E_{\widehat{P}}[T(\widehat{A})]=T(A_{\mathrm{obs}})$.
In such cases, intervals based on $\widehat{\mu}$ are essentially the same as
those based on $T$.
However, for many other statistics, the expectation can be estimated more
accurately by using additional information from the network, resulting in
$\Var_P(\widehat{\mu})\ll \Var_P(T)$. For these statistics, the additional
computational cost of an extra bootstrap level may be worthwhile because it
leads to narrower confidence intervals, as discussed in
Section~\ref{section:CI_based_on_mu}.
Under the Chung--Lu model, we fully characterize this comparison for subgraph
count statistics, using the previously derived orders of
$\Var_P(\widehat{\mu}_R)$ and $\Var_P(T_R)$.

\begin{proposition}[Variance comparison for subgraph counts]
\label{prop:variance_comparison_general}
Assume that the network $A$ is sampled from the Chung--Lu model described in
Section~\ref{section:Chung-Lu model}. Let $R$ be a fixed connected subgraph, and
let $m_R$ and $m_{R,r}$ be as defined in
Proposition~\ref{prop:asymptotic_normality_triangle_count}. Then the following
statements hold.

\medskip
\noindent
(i) If $\max\{n^{-1}, n^{-1/m_R}\}\prec p \preccurlyeq 1$, then 
\begin{equation*}
    \Var_P(\widehat{\mu}_R) \preccurlyeq \Var_P(T_R).
\end{equation*}
Moreover, $\Var_P(\widehat{\mu}_R)\asymp \Var_P(T_R)$ holds precisely in the following
cases: either $e(R)=1$ and $v(R)=2$, or $e(R)\ge 2$ and
$$
p\gtrsim \max_{\substack{H\subseteq R, \ e(H)\ge 2\\
H \text{ has no isolated vertices} }} n^{-\frac{v(H)-2}{e(H)-1}}.
$$

\medskip
\noindent
(ii) For fixed vertices $r\in V(R)$ and $i\in[n]$, if $\max\{n^{-1}, n^{-1/m_{R,r}}\}\prec p \preccurlyeq 1$, then 
\begin{equation*}
    \Var_P(\widehat{\mu}_{R,r}^{(i)}) \preccurlyeq \Var_P(T_{R,r}^{(i)}).
\end{equation*}
Moreover, $\Var_P(\widehat{\mu}_{R,r}^{(i)}) \asymp
\Var_P(T_{R,r}^{(i)})$ holds precisely in the following cases: either
$e(R)=1$ and $v(R)=2$, or $e(R)\ge 2$ and
$$
p\gtrsim \max_{\substack{H\subseteq R, \ e(H)\ge 2\\
H \text{ has no isolated vertices} }} n^{-\frac{v(H\setminus\{r\})-1}{e(H)-1}}.
$$
\end{proposition}

We end this section with the following discussion. In practice, for an arbitrary
statistic $T$, how can one know \textit{a priori} that
$\Var_P(\widehat{\mu})\ll \Var_P(T)$, and hence that the additional
computational cost of an extra bootstrap level is worthwhile? After all, for a
general statistic $T$, a theoretical comparison between
$\Var_P(\widehat{\mu})$ and $\Var_P(T)$ may be tedious or even infeasible.
Moreover, asymptotic results such as
$\Var_P(\widehat{\mu})\prec \Var_P(T)$ in the examples above do not necessarily
imply that $\Var_P(\widehat{\mu})$ is substantially smaller than $\Var_P(T)$ at
a finite sample size $n$.

In most cases, there is no general answer beyond the intuition that
$\mu(\widehat{P})$ should be a better point estimate when $\widehat{P}$ exploits
more information from the network than is reflected in $T(A_{\mathrm{obs}})$.
Nevertheless, there is a practical solution: whenever the computational cost of
the two-level bootstrap is affordable, we can construct intervals based on
$\widehat{\mu}$ and empirically compare their widths with those of intervals
based on $T(A)$.

\subsection{Computational Cost}

The two-level bootstrap procedure in Figure~\ref{fig:procedure} can be
computationally intensive for large-scale networks. Drawing a network $A$ from
a given model $P$ requires $\Omega(n^2)$ operations, while the computational
cost of estimating $\widehat{P}$ from an observed network $A$ is typically
dominated by the SVD step, namely the computation of the first several
eigenvalues and eigenvectors. For sparse networks with average degree
$\lambda_n$, this step requires $\Omega(\lambda_n n^2)$ operations
\cite{cullum2002lanczos}.
Finally, the cost of evaluating $T(A)$ depends on the statistic $T$. For
example, evaluating the global triangle count $T_{\Delta}$ requires
$\Omega(\lambda_n^2 n)$ operations. Some other statistics, such as betweenness
centrality and closeness centrality, can be even more computationally expensive.
Therefore, the total computational cost is
$\Omega(B_1(n^2+\mathcal{T}))$ for the one-level bootstrap and
$\Omega(B_1B_2(n^2+\mathcal{T})+B_1n^2\lambda_n)$ for the two-level bootstrap,
where $\mathcal{T}$ denotes the cost of evaluating $T(A)$. The latter cost is at
least $\Omega(B_1B_2n^2)$, since one must generate a total of $B_1B_2$
replicate networks.

Nevertheless, some acceleration is possible. Most local statistics do not need
to be evaluated on the entire network; for the same reason, it may be
sufficient to resample subnetworks of much smaller sizes. More importantly, as
with other resampling methods, the bootstrap can be easily parallelized.

\section{Numerical Experiment}
\label{section:numerical_expriment}
\begin{figure}[h]
\centering
\begin{minipage}{0.49\textwidth}
\centering
\includegraphics[width=1.0\textwidth]{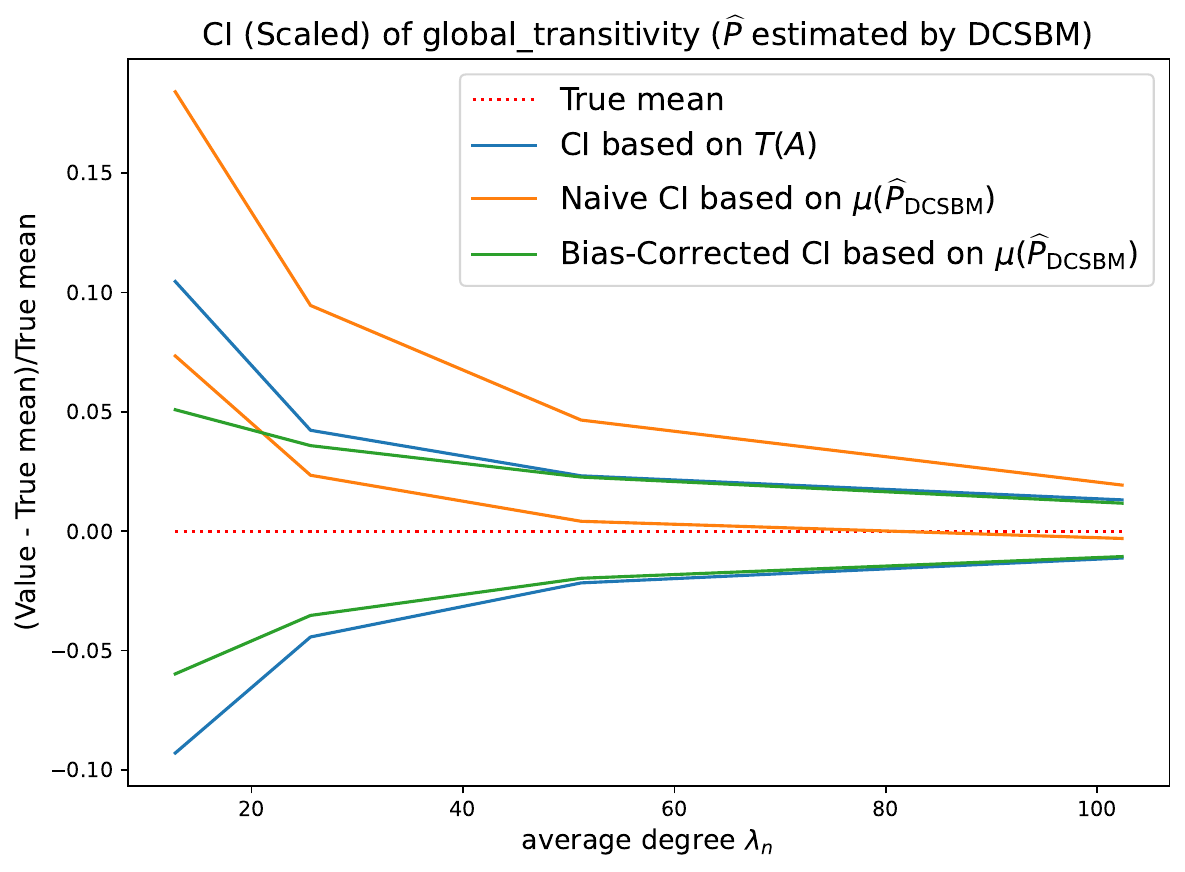}
\end{minipage}
\begin{minipage}{0.49\textwidth}
\centering
\includegraphics[width=1.0\textwidth]{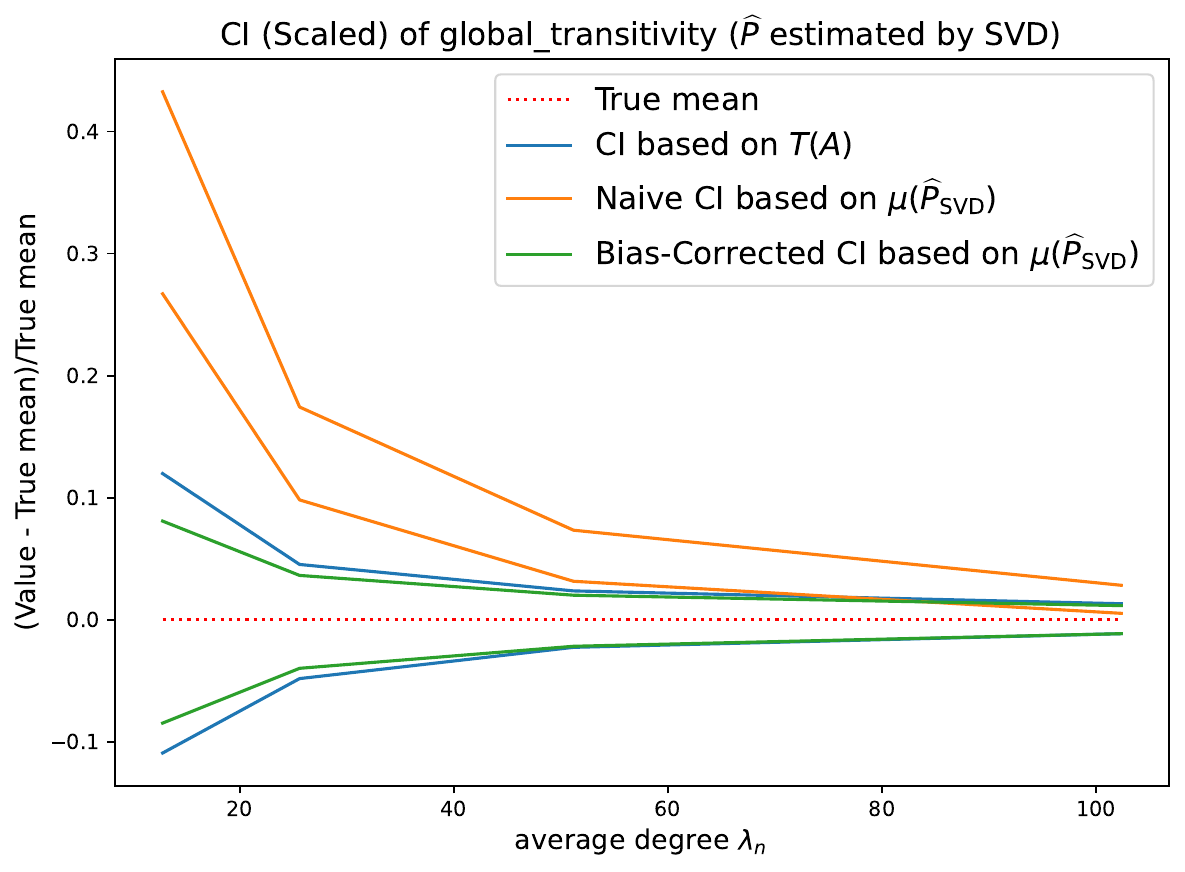}
\end{minipage}
\begin{minipage}{0.49\textwidth}
\centering
\includegraphics[width=1.0\textwidth]{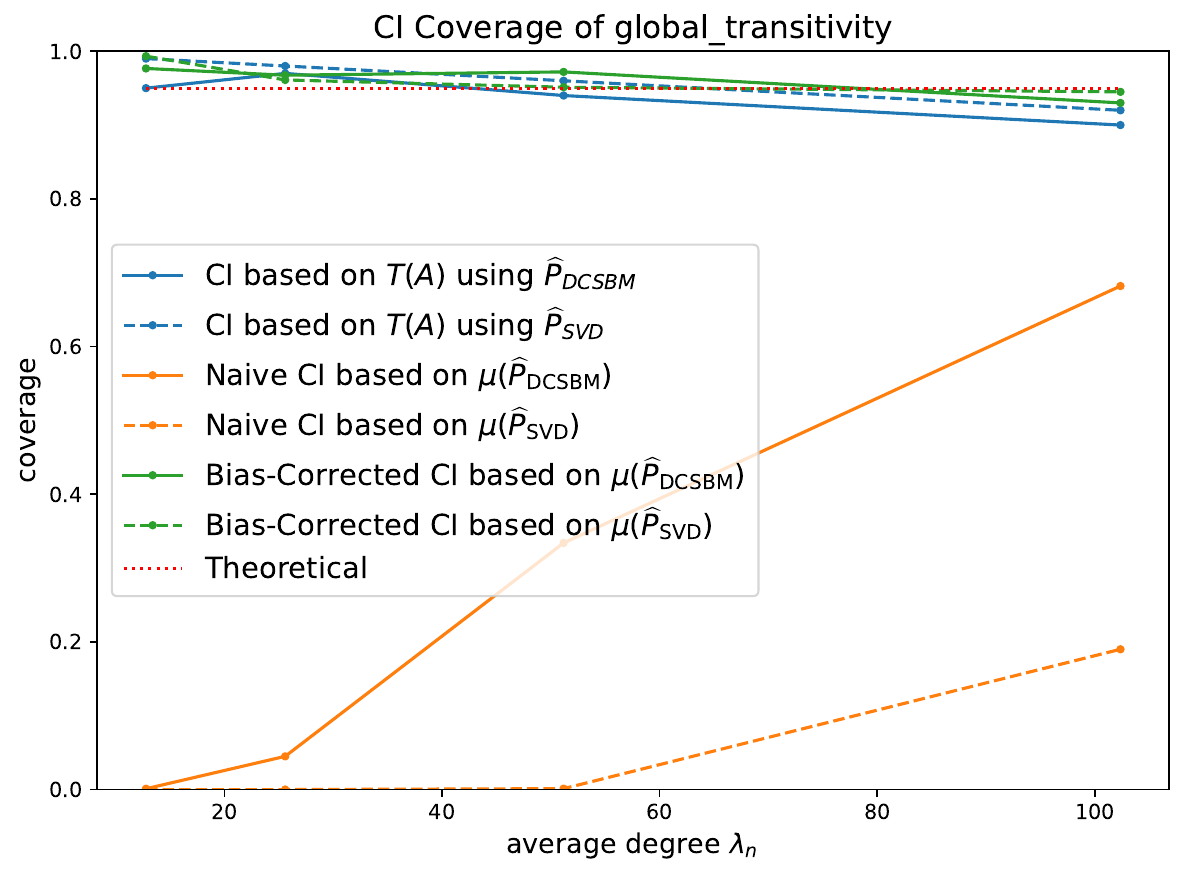}
\end{minipage}
\begin{minipage}{0.49\textwidth}
\centering
\includegraphics[width=1.0\textwidth]{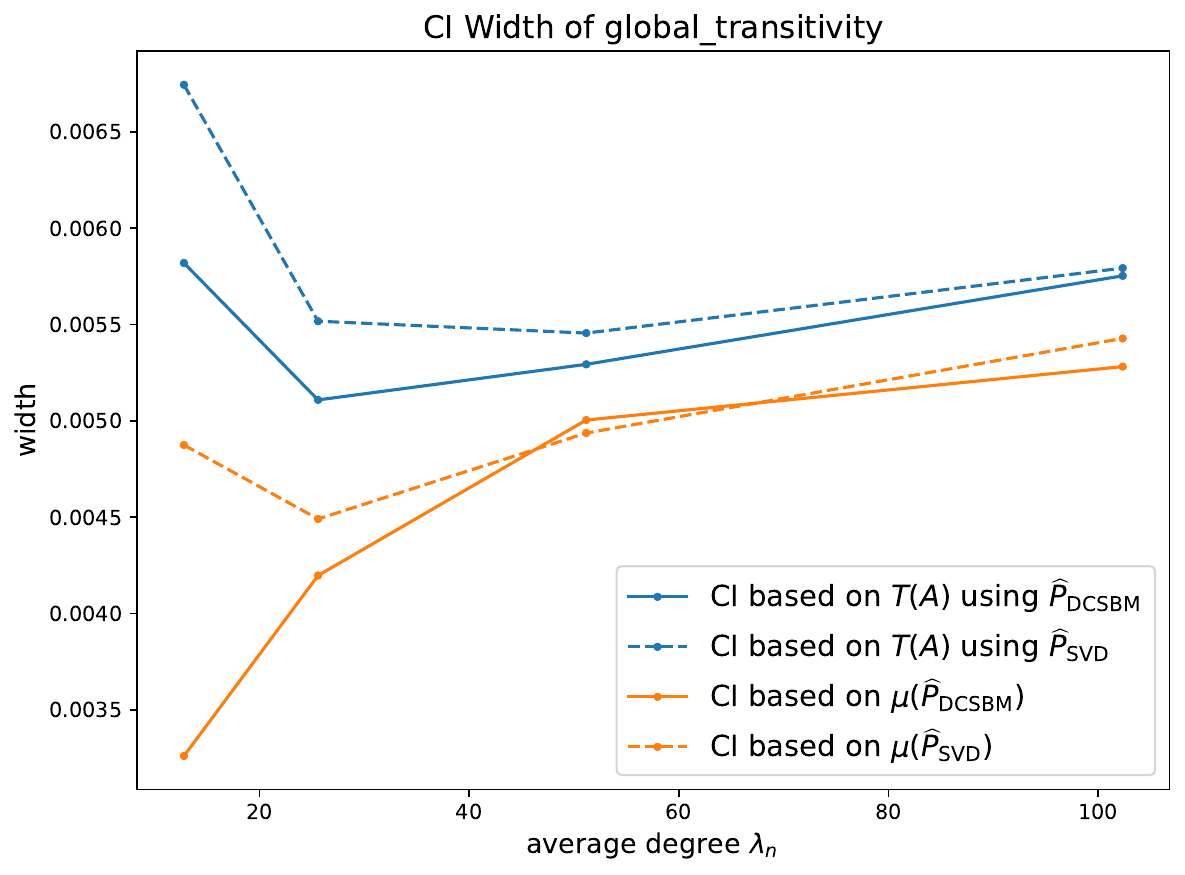}
\end{minipage}
\caption{Coverage and width results for global transitivity. The fixed true model is a
DCSBM with $\theta$ sampled from a uniform distribution. We fix $n=600$ and
adjust $\rho$ to obtain different expected degrees $\lambda_n$. The top two
panels show the average results for three types of confidence intervals,
computed from 1000 samples $\{A_b\}_{b=1}^{1000}\sim P$ and estimated using
$\widehat{P}_{\mathrm{DCSBM}}$ and $\widehat{P}_{\mathrm{SVD}}$, respectively.
For each displayed interval, the lower and upper endpoints are first averaged
over the 1000 repetitions and then centered and scaled by the true mean
$\mu(P)$ for clearer visualization. The bottom-left panel shows how the coverage
of each type of interval changes with $\lambda_n$, while the bottom-right panel
shows the original interval widths, without scaling.}
\label{fig:CI_global_transitivity}
\end{figure}

\begin{figure}[h]
\centering
\begin{minipage}{0.49\textwidth}
\centering
\includegraphics[width=1.0\textwidth]{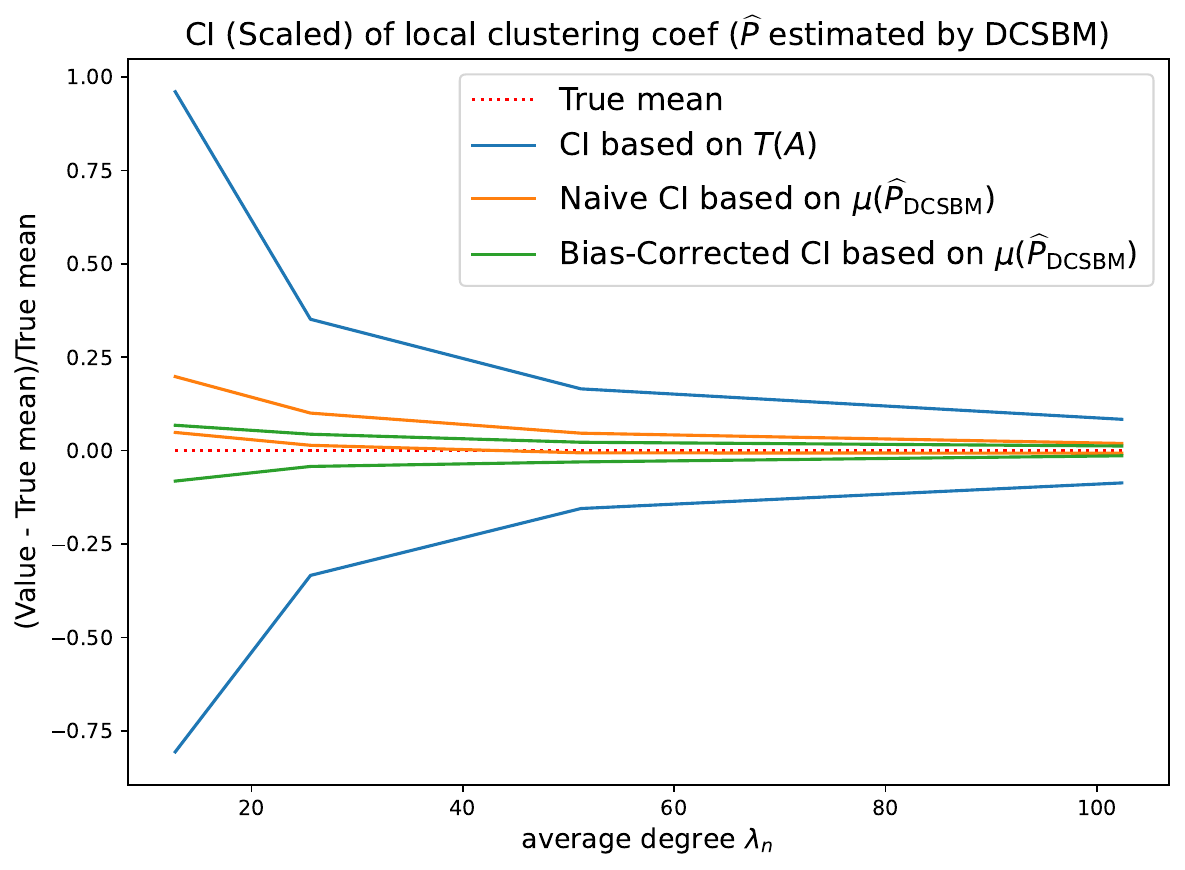}
\end{minipage}
\begin{minipage}{0.49\textwidth}
\centering
\includegraphics[width=1.0\textwidth]{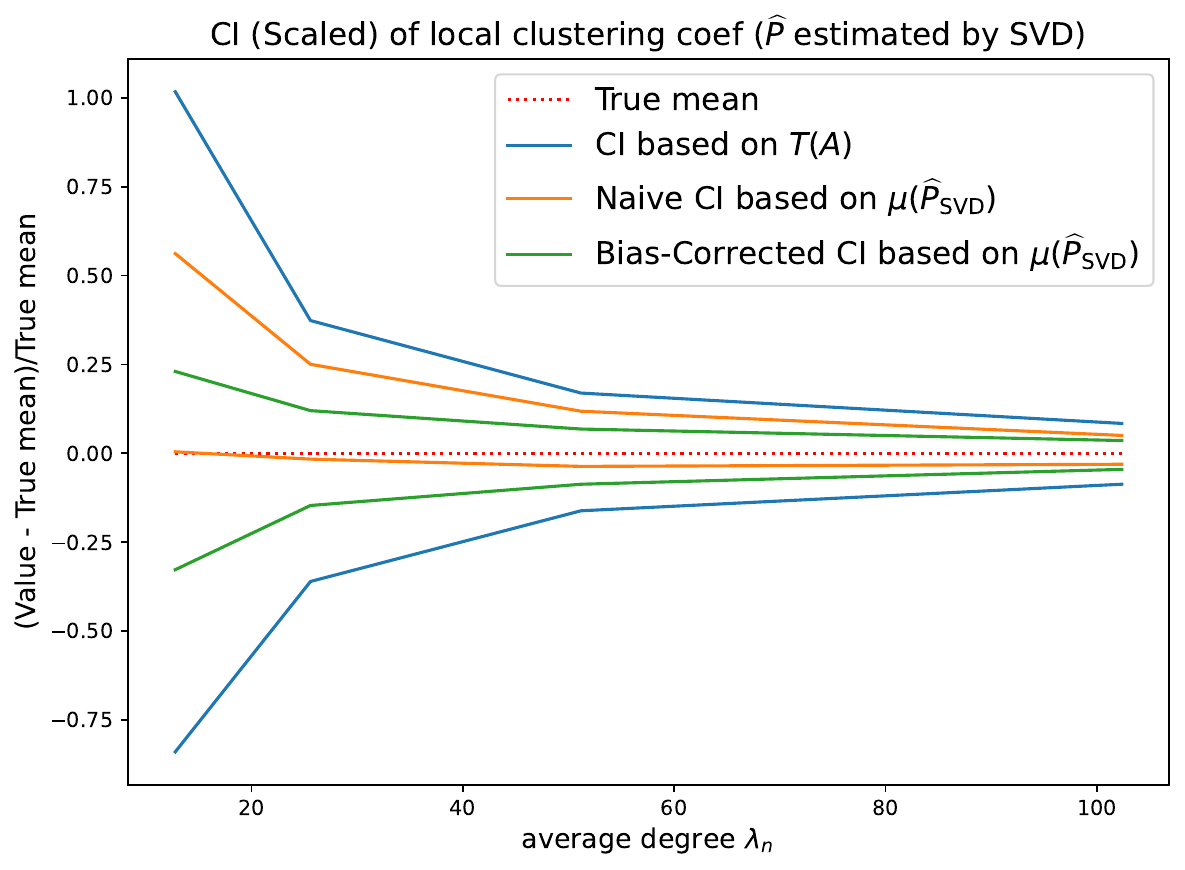}
\end{minipage}
\begin{minipage}{0.49\textwidth}
\centering
\includegraphics[width=1.0\textwidth]{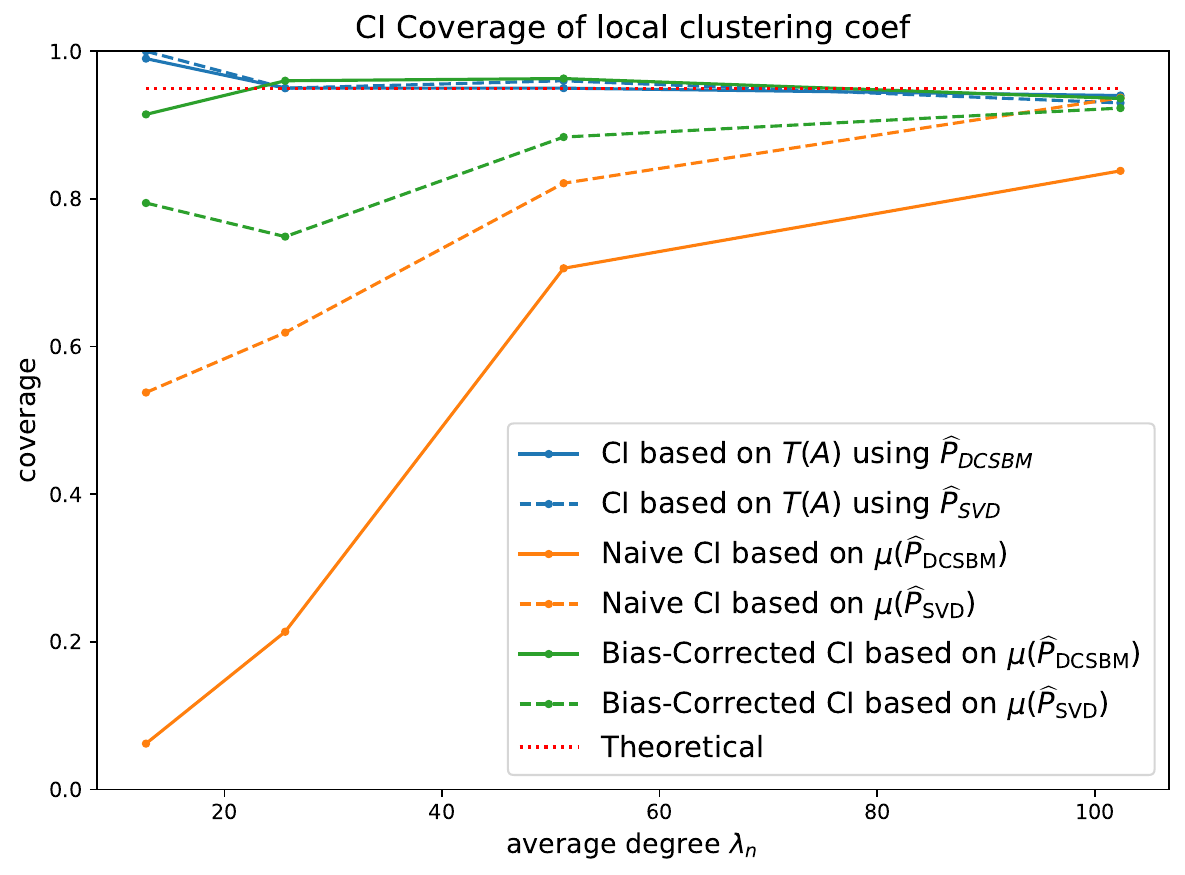}
\end{minipage}
\begin{minipage}{0.49\textwidth}
\centering
\includegraphics[width=1.0\textwidth]{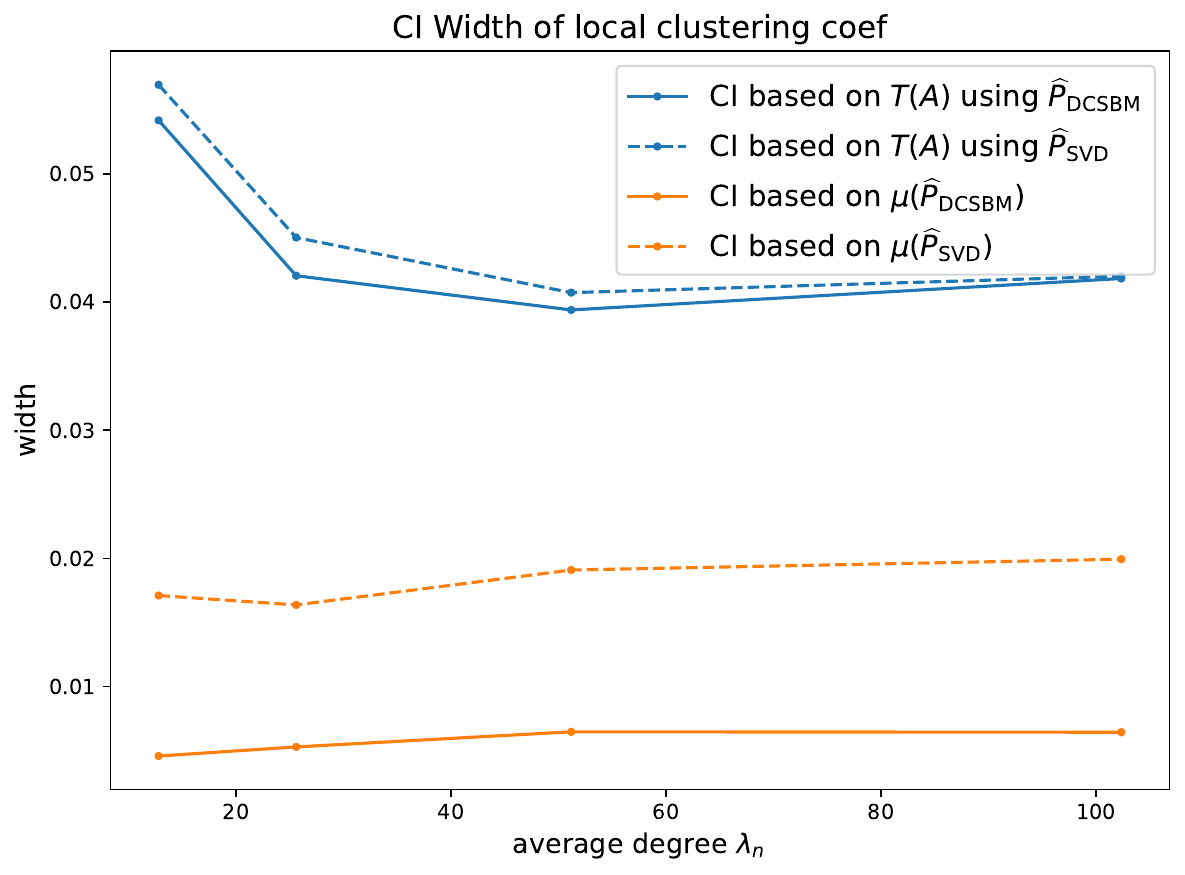}
\end{minipage}
\caption{Coverage and width results for the clustering coefficient of a certain node.
}
\label{fig:CI_local_transitivity_high}
\end{figure}
In this section, we first introduce the simulation settings for the examples in
Figures~\ref{fig:bootstrap_bias}, \ref{fig:bias_reduction_plot}, and
\ref{fig:bootstrap_CI_example}. We then study the performance of different
types of bootstrap confidence intervals in terms of coverage and width.

We construct the fixed edge-probability matrix $P$ as an instance of the DCSBM
with $K=3$ communities and $n=|V|=600$ nodes. Let
\begin{equation*}
    B =
    \left(\begin{array}{ccc}
       5 & 1 & 1 \\
       1 & 5 & 1 \\
       1 & 1 & 5
    \end{array}\right)
\end{equation*}
be the block-wise connectivity matrix.
For each node $i$, let $g_i\in\{1,2,3\}$ be its community label. We divide the
nodes evenly into three communities, so that
\begin{equation*}
    g_i =
    \left\{
    \begin{array}{ll}
    1, & 1\le i\le n/3, \\
    2, & n/3 < i\le 2n/3, \\
    3, & 2n/3 < i\le n.
    \end{array}
    \right.
\end{equation*}
Let $\{\theta_1,\ldots,\theta_n\}$ be an i.i.d. sample from either
(i) the uniform distribution $\mathrm{Unif}[0.2,1]$ or
(ii) a power-law distribution with density $p(x)\propto x^{-5}$, equivalently a
Pareto distribution with parameter $\alpha=4$. Once drawn, we fix
$\{\theta_i\}_{i=1}^n$ and view them as unknown parameters.
We construct the model $P$ as
\begin{equation}
    P_{ij} = \rho B_{g_ig_j} \theta_i \theta_j, \quad i \ne j, \quad \textrm { and } P_{ii} = 0 \textrm{ for all } i, \label{eq:simulation_model_DCSBM}
\end{equation} 
where we include an additional parameter $\rho$ to control the sparsity level.

Once the model $P$ is \textit{fixed} with the selected
$\{\theta_i\}_{i=1}^n$ and $\rho$, we draw the observed network
$A_{\textrm{obs}}$ from this model. We then construct confidence intervals for
the statistic or statistics of interest based on this single observed network
$A_{\textrm{obs}}$.
Throughout the study, the first-level bootstrap draws $B_1=1000$ replicate
networks $\widehat{A}$ from $\widehat{P}$, and the second-level bootstrap draws
$B_2=1000$ replicate networks $\widehat{\widehat{A}}$ from each
$\widehat{\widehat{P}}_b$. The coverage and width of the confidence intervals
are calculated by repeating this procedure 1000 times.

Two approaches are used to estimate $P$: the MLE and truncated SVD. The former
requires estimating the node labels $\{g_i\}_{i=1}^n$, for which we use
regularized spectral clustering \cite{amini2013pseudo}. All other parameters of
the DCSBM, namely $B$ and $\{\theta_i\}_{i=1}^n$, are then estimated by the
profile MLE. For the truncated SVD estimator, we use the true rank $K=3$ of
$P$. We denote the two estimators of $P$ by
$\widehat{P}_{\textrm{DCSBM}}$ and $\widehat{P}_{\textrm{SVD}}$, respectively.
In particular, we view $\widehat{P}_{\textrm{DCSBM}}$ as an extension of the
Chung--Lu estimator $\widehat{P}_{\textrm{MLE}}$ in
\eqref{eq:Phat_MLE_DCSBM} to the case of multiple communities. Based on our
theoretical results for the Chung--Lu model, we expect similar bias problems and
similar effects from bias correction.

Throughout the paper, we mainly use two network statistics as running examples.
The first is transitivity, a global statistic that measures the ratio of the
number of triangles to the number of connected triples in the graph. The second
is the clustering coefficient \eqref{eq:clustering_coef_Ti} of a fixed node
(the node with index $i=1$), which is a local statistic.
Results for several other common network statistics are reported in tables in
Appendix~\ref{section:Additional Simulation Results}.

\subsection{Running Examples in Figures~\ref{fig:bootstrap_bias},
\ref{fig:bias_reduction_plot}, and~\ref{fig:bootstrap_CI_example}.}
\label{section:simulation_settings_for_example_figures}

We first provide more details about the simulation settings used in
Figures~\ref{fig:bootstrap_bias}, \ref{fig:bias_reduction_plot}, and
\ref{fig:bootstrap_CI_example}, where transitivity and the clustering
coefficient serve as two running examples. For this purpose, we first recall
that global transitivity is defined as the ratio of the number of triangles to
the total number of connected triples in the graph,
\begin{equation*}
    T_{CL}(A) \coloneqq \frac{\sum_i \left(\sum_{j<k,\, j\neq i,\, k \neq i} A_{ij}A_{ik}A_{jk}\right)}{\sum_i \left(\sum_{j<k,\, j\neq i,\, k \neq i} A_{ij}A_{ik}\right)} = \frac{3 \times \textrm{\# triangles}}{\textrm{\# connected triples}}.
\end{equation*}
Meanwhile, the local clustering coefficient of a node $i$ with $d_i\ge 2$ is defined as
the ratio of the number of triangles containing node $i$ to the number of
connected triples centered at node $i$:
\begin{equation}
    T_{\mathrm{CL}}^{(i)}(A) \coloneqq \frac{\sum_{j<k,\, j\neq i,\, k \neq i} A_{ij}A_{ik}A_{jk}}{\sum_{j<k,\, j\neq i,\, k \neq i} A_{ij}A_{ik}} = \frac{ \textrm{\# triangles containing node } i}{ \textrm{\# connected triples centered around node } i} .\label{eq:clustering_coef_Ti}
\end{equation}

In both examples, we use the aforementioned DCSBM with $K=3$, where
$\{\theta_i\}_{i=1}^n$ is drawn from $\mathrm{Unif}[0.2,1]$. We use
$\widehat{P}_{\textrm{DCSBM}}$ for the transitivity example and
$\widehat{P}_{\textrm{SVD}}$ for the clustering-coefficient example.
In Figure~\ref{fig:bootstrap_bias} and its follow-up
Figure~\ref{fig:bootstrap_CI_example}, we choose $\rho$ so that
$\lambda_n=2\log n\approx 12.8$ for the transitivity example and
$\lambda_n=4\log n$ for the clustering-coefficient example. These sparsity
levels are chosen to make the main phenomena clearly visible in the figures.
In Figure~\ref{fig:bias_reduction_plot}, we vary $\rho$ so that the average
expected degree $\lambda_n$ takes the values $2\log n$ $(\approx 12.8)$,
$4\log n$, $8\log n$, and $16\log n$.

In Figure~\ref{fig:bootstrap_bias}, the bootstrap bias, represented by the gray
dashed arrow, appears to be non-negligible in both examples. Also, the shape and
width of the bootstrap distribution of $T$ under $\widehat{P}$ (blue solid
curve) resemble those of the true distribution of $T$ under $P$ (red dashed
curve), despite the location shift. In addition, the distribution of
$\widehat{\mu}$ (blue shaded area) is more concentrated than the distribution of
$T$, especially in the local-clustering-coefficient example, which is consistent
with our analysis in Section~\ref{section:variance_comparison}.

In Figure~\ref{fig:bias_reduction_plot}, the remaining bias after correction is
much smaller than the original bias, as the ratio
$|\Delta \mathrm{Bias}|/\mathrm{Bias}$ is typically much smaller than one. In
addition, this ratio decreases toward zero on average as $\lambda_n$ increases,
which empirically suggests that results such as
Proposition~\ref{prop:bias_correction_general_subgraph} may also hold for many
other subgraph-related statistics.

In Figure~\ref{fig:bootstrap_CI_example}, the estimated bias, represented by
the gray solid arrow, approximates the true bias well in both examples. The
conditional distribution of $\widehat{\widehat{\mu}}$ in the second-level
bootstrap (orange shaded area) also approximates the unknown distribution of
$\widehat{\mu}$ well. Both distributions are more concentrated than the
bootstrap distribution of $T$, especially in the local-clustering-coefficient
example.
As a result, the bias-corrected interval clearly outperforms the other two
types: it covers the true value $\mu(P)$ while also having a shorter width than
the interval based on the distribution of $T$.

Lastly, we reiterate that all previous propositions in our analysis are
restricted to subgraph counts under the Chung--Lu model, with $\widehat{P}$
given by the MLE in \eqref{eq:Phat_MLE_DCSBM}. Under these conditions, the
simulation results for subgraph-count statistics align well with the theory, and
we omit their figures here.
It is of greater practical interest to examine whether the same conclusions
extend to other network statistics, such as the two examples considered here.
Although the propositions may not be directly applicable to these examples---for
instance, because $\widehat{P}_{\mathrm{SVD}}$ is used instead of
$\widehat{P}_{\mathrm{MLE}}$ in one case---we nevertheless observe bootstrap
bias. Empirically, the proposed bootstrap framework and bias-correction approach
continue to perform reasonably well in these settings.

\subsection{Coverage and Width of Bootstrap Confidence Intervals}

Continuing with transitivity and the clustering coefficient, in
Figures~\ref{fig:CI_global_transitivity} and
\ref{fig:CI_local_transitivity_high}, we further examine how the coverage and
width of the confidence intervals evolve as the network density increases.
Again, we choose different values of $\rho$ so that the average expected degree
$\lambda_n$ takes the values $2\log n$ $(\approx 12.8)$, $4\log n$, $8\log n$,
and $16\log n$.

In terms of width, we clearly see that, for both statistics, intervals based on
the distribution of $\widehat{\mu}$ are narrower than those based on the
distribution of $T$. For global transitivity, this contrast is more visible when
the network is sparse and becomes less pronounced as the network grows denser.
This is intuitively explained by the strong connection between transitivity and
triangle counts, together with the fact that
$\Var_P(\widehat{\mu}_{\Delta}) \prec \Var_P(T_{\Delta})$ when
$p \prec n^{-1/2}$, whereas
$\Var_P(\widehat{\mu}_{\Delta}) \asymp \Var_P(T_{\Delta})$ when
$n^{-1/2}\prec p\prec 1$.
On the other hand, for the local clustering coefficient, the contrast in width
remains sharp across all sparsity levels, especially when the more accurate
$\widehat{P}_{\textrm{DCSBM}}$ is used.

In terms of coverage, intervals based on the distribution of $T$ provide
coverage close to the nominal level $1-\alpha$. The bias-corrected intervals
based on $\widehat{\mu}$ also perform reasonably well, except for the interval
based on $\mu(\widehat{P}_{\textrm{SVD}})$ for the local clustering coefficient.
This exception is likely due to less accurate estimation of the local parameters
around that node. The naive intervals without correction, however, severely
under-cover for both statistics.
Naturally, for the naive intervals, we would expect the coverage to approach the
nominal level $1-\alpha$ as $\lambda_n$ increases. We do observe increasing
coverage for the naive intervals, but it remains far from the nominal level.
Even at $\lambda_n=16\log n\approx 100$, some intervals still have coverage well
below $80\%$. This again suggests the necessity of bias correction, especially
when the network is sparse.

\subsubsection{Results for Additional Network Statistics}
Extensive coverage and width results for various commonly used network
statistics are reported in
Appendix~\ref{section:Additional Simulation Results}. These statistics include
global statistics, such as triangle density, transitivity, and the assortativity
coefficient by degree, as well as local statistics, such as rooted triangle
counts, clustering coefficients, betweenness scores, and closeness scores.

There are a few additional points worth noting. First, the bias-corrected
intervals almost always have better, or at least comparable, coverage than their
uncorrected counterparts. Second, intervals based on $\mu(\widehat{P})$ are
sometimes, though not always, much narrower than those based on $T(A)$, as
discussed in Section~\ref{section:variance_comparison}. Third, the DCSBM
estimator $\widehat{P}_{\textrm{DCSBM}}$ generally yields better confidence
intervals, with better coverage and shorter widths, than
$\widehat{P}_{\textrm{SVD}}$, naturally because it provides a more accurate
approximation to $P$.
Lastly, in Table~\ref{tab:SBM_2logn_SBM}, we carry out an additional experiment
to examine the behavior of bootstrap intervals when $\widehat{P}$ nearly
recovers $P$. We generate networks from a three-block SBM and estimate $P$ using
the SBM MLE, without estimating $\{\theta_i\}_{i=1}^n$. Thus, when the estimator
$\widehat{P}_{\textrm{SBM}}$ has very little error, all three types of intervals
have satisfactory coverage. In particular, no obvious bias issue is present, as
the uncorrected intervals behave similarly to the corrected ones. In addition,
because $\mu(\widehat{P}_{\textrm{SBM}})$ has small variance, the advantage of
using intervals based on $\widehat{\mu}$ becomes even more pronounced. For some
local statistics, these intervals can be only $1/10$ to $1/5$ as wide as those
based on $T$.

The results in Tables~\ref{tab:DCSBM_powerlaw_2logn_DCSBM}, \ref{tab:DCSBM_powerlaw_2logn_SVD}, and \ref{tab:DCSBM_uniform_2logn_DCSBM} also show that some statistics, especially average
path length and betweenness, are more difficult for all methods considered. This
is not surprising. Unlike subgraph counts or degree-based quantities,
shortest-path-based statistics are highly non-smooth functions of the adjacency
matrix. Adding or deleting a single edge can change many shortest paths at once,
alter which vertices lie on geodesics, or substantially change graph distances.
Consequently, their finite-sample sampling distributions can be more sensitive
to small perturbations of the graph and may be less well approximated by either
a normal approximation or an empirical bootstrap, particularly in sparse or
moderate-size networks. The two-level bootstrap reduces plug-in bias, but it
does not by itself eliminate the intrinsic instability of such path-based
functionals. We therefore view the results for average path length, closeness,
and betweenness as a useful diagnostic: they indicate that additional
regularity, larger network size, or statistic-specific methods may be needed
for reliable inference for shortest-path-based statistics.

\subsection{Real Data Application}

We first apply the bootstrap procedure to a relatively small network,
Zachary's karate club network \cite{zachary1977information}, shown in
Figure~\ref{fig:karate}. The network has \(|V|=34\) nodes and \(|E|=78\)
edges. The data consist of a single social network of individuals who later
split into two factions after a political conflict within the karate club. The
conflict arose between Mr. Hi, the instructor, and John A., the administrator.
The nodes represent individual karate club members, and the edges represent
interactions between members outside the club.

Our goal is to illustrate how the proposed bootstrap procedure can be used to
construct confidence intervals for expected network statistics and to assess
whether observed network features are typical under the DCSBM. We estimate
\(P\) under a DCSBM using the known community labels. For global structural
statistics, we consider the transitivity ratio, triangle density, and degree
assortativity coefficient. For local statistics, we consider vertex
betweenness, defined as the number of shortest paths passing through each
vertex.

\begin{figure}[tbh]
    \centering
    \includegraphics[
        width=0.5\textwidth,
        trim={60pt 60pt 60pt 60pt},
        clip
    ]{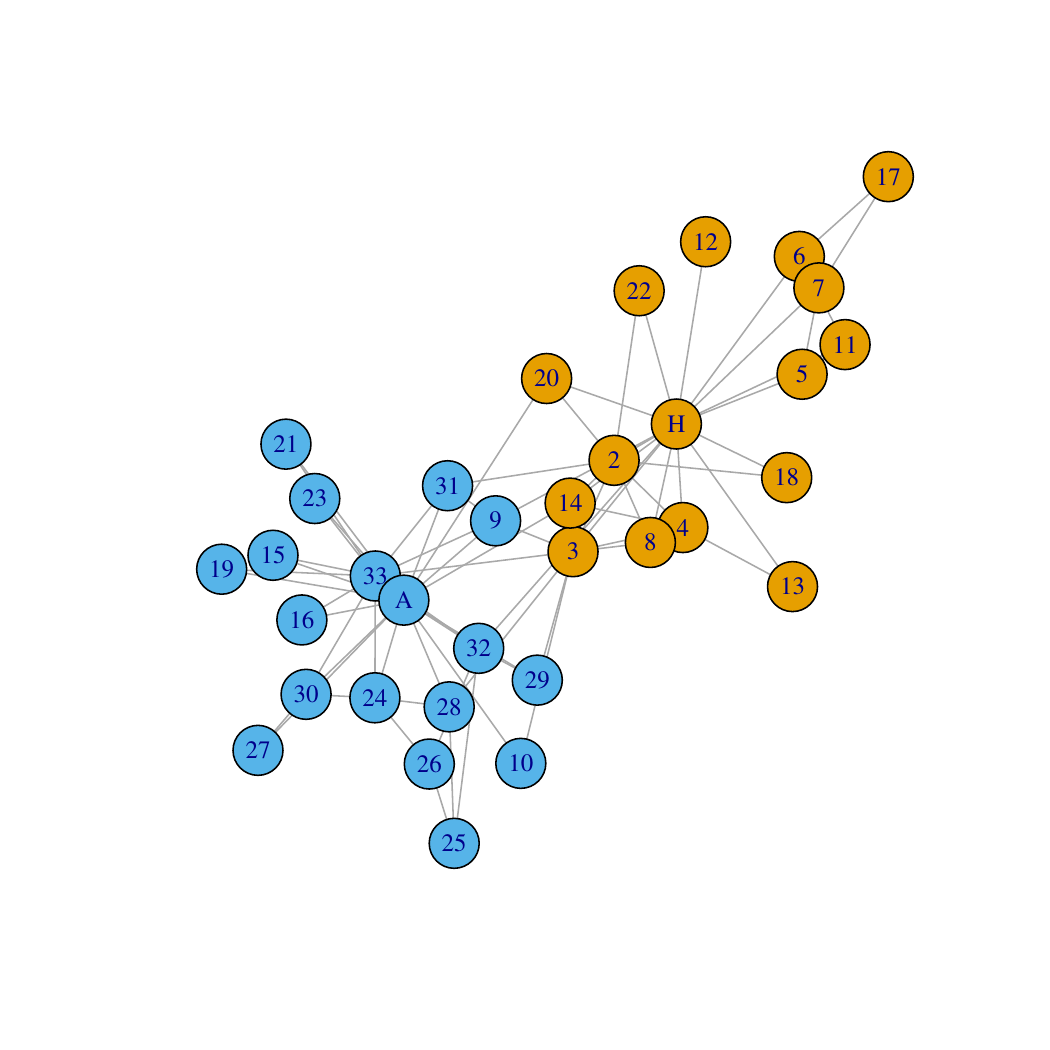}
    \caption{Karate club network with $|V|=34$ nodes and $|E|=78$ edges. Colors represent
factions.}
    \label{fig:karate}
\end{figure}

We first consider three global statistics. The number of local statistics is
large, and their nature is different, so we analyze them separately. For the
three global statistics, we construct simultaneous bias-corrected 95\%
confidence intervals for their population means using \eqref{eq:CI_mu_corrected_sym}, with Bonferroni correction. The resulting
confidence intervals are reported in Table~\ref{tab:karate_CI_global_DCSBM}.
If we assume that the network is sampled from a DCSBM, the observed degree
assortativity coefficient is lower than what is expected under the fitted
model, whereas the observed transitivity and triangle density are consistent
with their fitted population means. To further assess whether the observed
statistics themselves are typical under the fitted model, we also compute simultaneous predictive intervals for the realized network statistics
\(T(A_{\mathrm{obs}})\). These intervals represent the range of values expected for a newly sampled
network from the fitted model. They are centered at the bias-corrected estimated
means, with widths determined by the estimated standard deviation of the
realized statistic $T(A)$ under the fitted model.
As shown in Table~\ref{tab:karate_CI_global_DCSBM}, the observed values of
transitivity and triangle density lie well inside their predictive intervals.
Thus, neither statistic appears unusual under the fitted model. For degree
assortativity, however, the pointwise predictive interval does not contain the
observed value, indicating that the observed degree assortativity is unusually
low at the pointwise 95\% level. This deviation provides empirical evidence
that the fitted DCSBM does not fully capture the disassortative mixing structure
of the observed network. This deviation is consistent with Figure~\ref{fig:karate}, where the network exhibits a
star-like structure: Mr. Hi and John A. are hubs, and many of their neighbors are not connected
to each other. Since count statistics can be viewed as moments of network models, this type of
analysis may provide a useful tool for assessing model goodness of fit.
\begin{table}[ht]
\centering
\begin{tabular}{|l|c|c|c|}
\hline
Statistic
& Observed
& Mean CI
& Predictive interval \\
\hline
Transitivity
& \(0.2557\)
& \((0.1769,\ 0.3454)\)
& \((0.1904,\ 0.3320)\) \\
Triangle density
& \(0.0075\)
& \((0.0029, \ 0.0108)\)
& \((0.0031,\ 0.0106)\) \\
Degree assortativity
& \(-0.4756\)
& \((-0.4231,\ -0.2178)\)
& \((-0.4712,\ -0.1697)\) \\
\hline
\end{tabular}
\caption{Bonferroni-adjusted confidence intervals and predictive intervals for
the three global statistics on the karate club
network, estimated by the DCSBM. Values in the ``Observed'' column represent the
statistics computed from the observed network.}
\label{tab:karate_CI_global_DCSBM}
\end{table}

We perform a similar analysis for the local participation coefficient, which
measures how evenly the edges of a vertex are distributed across the two
factions and therefore reflects cross-faction mixing in the karate club. For
the two-faction case considered here, the participation coefficient of vertex
\(i\) is defined as
$
1-(d_{i,1}^2+d_{i,2}^2)/d_i^2
$, 
where \(d_i\) is the degree of vertex \(i\), and \(d_{i,1}\) and \(d_{i,2}\)
are the numbers of edges from \(i\) to the two communities. If \(d_i=0\), we
set \(\operatorname{Part}(i)=0\). The participation coefficient lies between
\(0\) and \(1/2\): it is \(0\) when all edges of \(i\) are connected to a
single community, and it is \(1/2\) when the edges of \(i\) are evenly split
between the two communities.
The local participation results suggest that the fitted DCSBM captures most of
the observed cross-faction mixing patterns. In particular, neither Mr. Hi nor
John A. appears unusual in terms of participation coefficient: their observed
values are close to the fitted-model predictive distribution. Node 10 has the
largest standardized discrepancy, with observed participation equal to \(0.5\),
compared with a corrected fitted-model mean of approximately \(0.093\). This
vertex is unusual at the pointwise level, suggesting unusually balanced ties
across factions. However, it is not flagged after Bonferroni correction over
all 34 vertices. Thus, the participation-coefficient analysis provides mild
exploratory evidence of local cross-faction mixing, but no local participation
statistic is flagged as unusual after simultaneous correction. As an
illustration, we report the values for Mr. Hi, John A., and node 10 in
Table~\ref{tab:karate_selected_participation}.

\begin{table}[ht]
\centering
\begin{tabular}{|l|c|c|c|}
\hline
ID
& Observed
& Mean CI
& Predictive interval \\
\hline
Mr. Hi
& \(0.2188\)
& \((0.0292,\ 0.5215)\)
& \((-0.0272,\ 0.5778)\) \\
Node 10
& \(0.5000\)
& \((-0.1461,\ 0.3315)\)
& \((-0.4715,\ 0.6570)\) \\
John A.
& \(0.2076\)
& \((0.0727,\ 0.5022)\)
& \((-0.0427,\ 0.6177)\) \\
\hline
\end{tabular}
\caption{Selected participation coefficients in the karate club network. The
mean confidence intervals and predictive intervals are Bonferroni-adjusted over
the \(34\) local participation statistics.}
\label{tab:karate_selected_participation}
\end{table}

To investigate the scalability and computational cost of the two-level bootstrap
procedure, we turn to a larger network. The political blogs dataset
\cite{adamic2005political} records hyperlinks between political blogs shortly
before the 2004 U.S. presidential election. Following common practice, we
consider its giant component, which contains 1222 nodes and 16714 edges. We
consider five global statistics and construct confidence intervals for their
population means; see Table~\ref{tab:polotical_blog_CI_global_DCSBM}. The model
is estimated using $\widehat{P}_{\mathrm{SVD}}$. Using eight multiprocessing
threads, the two-level bootstrap takes about 12 minutes on our server
(Intel(R) Xeon(R) CPU E5-2699 v3 @ 2.30GHz).

\begin{table}[!tbh]
    \centering
    \begin{tabular}{|c|c|c|}
    \hline
        Statistic & Observed & Mean CI \\
    \hline
        Transitivity & 0.2260 & $(0.2128,\ 0.2313)$ \\
    \hline
        Triangle density & $0.0003$ & $(0.00026,\ 0.00028)$ \\
    \hline
        Assortativity by degree & $-0.2213$ & $(-0.0547,\ -0.04657)$ \\
    \hline
        Average path length & 2.738 & $(2.710,\ 2.765)$ \\
    \hline
        Diameter & $8.00$ & $(5.70,\ 5.82)$ \\
    \hline
    \end{tabular}
        \caption{Simultaneous confidence intervals for global statistics on the political blog network.}
\label{tab:polotical_blog_CI_global_DCSBM}
\end{table}


\newpage

\begin{appendix}
\section{Additional Simulation Results}
\label{section:Additional Simulation Results}

\begin{table}[!tbh]
    \centering
    \begin{tabular}{|c|c|c|c|c|c|c|}
    \hline 
    statistic & \makecell[c]{Based on\\$T(A_{\mathrm{obs}})$\\asymmetric\\\eqref{eq:CI_obs_asym}} & \makecell[c]{Based on\\$T(A_{\mathrm{obs}})$\\symmetric\\\eqref{eq:CI_obs_sym}} & \makecell[c]{Based on\\$\mu(\widehat{P})$\\naive\\asymmetric\\\eqref{eq:CI_mu_naive_asym}} & \makecell[c]{Based on\\$\mu(\widehat{P})$\\naive\\symmetric\\\eqref{eq:CI_mu_naive_sym}} & \makecell[c]{Based on\\$\mu(\widehat{P})$\\corrected\\asymmetric\\\eqref{eq:CI_mu_corrected_asym}} & \makecell[c]{Based on\\$\mu(\widehat{P})$\\corrected\\symmetric\\\eqref{eq:CI_mu_corrected_sym}} \\\hline
    \makecell[c]{average\\degree} & \makecell[c]{94.9\% \\ (8.42e-01)} & \makecell[c]{95.7\% \\ (8.50e-01)} & \makecell[c]{96.5\% \\ (8.22e-01)} & \makecell[c]{96.8\% \\ (8.29e-01)} & \makecell[c]{93.9\% \\ (8.22e-01)} & \makecell[c]{94.4\% \\ (8.29e-01)} \\\hline
    \makecell[c]{triangle\\density} & \makecell[c]{96.3\% \\ (6.35e-06)} & \makecell[c]{96.6\% \\ (6.18e-06)} & \makecell[c]{26.7\% \\ (5.93e-06)} & \makecell[c]{34.5\% \\ (7.02e-06)} & \makecell[c]{94.1\% \\ (5.93e-06)} & \makecell[c]{96.8\% \\ (7.02e-06)} \\\hline
    \makecell[c]{rooted\\triangle\\count} & \makecell[c]{97.5\% \\ (6.94e+01)} & \makecell[c]{97.3\% \\ (7.21e+01)} & \makecell[c]{88.9\% \\ (6.09e+01)} & \makecell[c]{90.2\% \\ (6.23e+01)} & \makecell[c]{96.0\% \\ (6.09e+01)} & \makecell[c]{96.9\% \\ (6.23e+01)} \\\hline
    \makecell[c]{transitivity} & \makecell[c]{93.3\% \\ (6.83e-03)} & \makecell[c]{93.0\% \\ (6.55e-03)} & \makecell[c]{21.2\% \\ (5.09e-03)} & \makecell[c]{28.5\% \\ (5.59e-03)} & \makecell[c]{96.3\% \\ (5.09e-03)} & \makecell[c]{96.9\% \\ (5.59e-03)} \\\hline
    \makecell[c]{clustering\\coefficient} & \makecell[c]{92.2\% \\ (2.44e-02)} & \makecell[c]{91.9\% \\ (2.34e-02)} & \makecell[c]{16.3\% \\ (4.49e-03)} & \makecell[c]{19.8\% \\ (4.76e-03)} & \makecell[c]{92.1\% \\ (4.49e-03)} & \makecell[c]{92.7\% \\ (4.76e-03)} \\\hline
    \end{tabular}
        \caption{Coverage and width of different types of intervals. The coverage rate is shown
as a percentage, and the average width is shown in parentheses below. The true
model is a Chung--Lu model, equivalently a one-block DCSBM, with $\theta_i$
generated from the uniform distribution. The average degree is
$\lambda_n=2\log n\approx 12.79$. The model is estimated by
$\widehat{P}_{\mathrm{MLE}}$ in \eqref{eq:Phat_MLE_DCSBM}.}
\label{tab:ChungLu_2logn_DCSBM_p_unknown}
\end{table}

\begin{table}[!t]
    \centering
    \begin{tabular}{|c|c|c|c|c|c|c|}
    \hline 
    statistic & \makecell[c]{Based on\\$T(A_{\mathrm{obs}})$\\asymmetric\\\eqref{eq:CI_obs_asym}} & \makecell[c]{Based on\\$T(A_{\mathrm{obs}})$\\symmetric\\\eqref{eq:CI_obs_sym}} & \makecell[c]{Based on\\$\mu(\widehat{P})$\\naive\\asymmetric\\\eqref{eq:CI_mu_naive_asym}} & \makecell[c]{Based on\\$\mu(\widehat{P})$\\naive\\symmetric\\\eqref{eq:CI_mu_naive_sym}} & \makecell[c]{Based on\\$\mu(\widehat{P})$\\corrected\\asymmetric\\\eqref{eq:CI_mu_corrected_asym}} & \makecell[c]{Based on\\$\mu(\widehat{P})$\\corrected\\symmetric\\\eqref{eq:CI_mu_corrected_sym}} \\\hline
    \makecell[c]{average\\degree} & \makecell[c]{92.2\% \\ (8.89e-01)} & \makecell[c]{85.8\% \\ (7.85e-01)} & \makecell[c]{89.0\% \\ (8.85e-01)} & \makecell[c]{90.6\% \\ (8.96e-01)} & \makecell[c]{92.2\% \\ (8.85e-01)} & \makecell[c]{89.0\% \\ (8.96e-01)} \\\hline
    \makecell[c]{triangle\\density} & \makecell[c]{96.8\% \\ (6.52e-06)} & \makecell[c]{98.4\% \\ (6.30e-06)} & \makecell[c]{38.6\% \\ (6.04e-06)} & \makecell[c]{62.4\% \\ (7.17e-06)} & \makecell[c]{94.8\% \\ (6.04e-06)} & \makecell[c]{96.4\% \\ (7.17e-06)} \\\hline
    \makecell[c]{rooted\\triangle\\count} & \makecell[c]{92.4\% \\ (7.40e+01)} & \makecell[c]{96.4\% \\ (7.95e+01)} & \makecell[c]{98.4\% \\ (7.11e+01)} & \makecell[c]{98.4\% \\ (7.70e+01)} & \makecell[c]{98.4\% \\ (7.11e+01)} & \makecell[c]{98.4\% \\ (7.70e+01)} \\\hline
    \makecell[c]{transitivity} & \makecell[c]{98.4\% \\ (8.22e-03)} & \makecell[c]{96.4\% \\ (7.72e-03)} & \makecell[c]{4.4\% \\ (5.05e-03)} & \makecell[c]{6.8\% \\ (5.89e-03)} & \makecell[c]{93.2\% \\ (5.05e-03)} & \makecell[c]{94.8\% \\ (5.89e-03)} \\\hline
    \makecell[c]{assortativity\\by degree} & \makecell[c]{91.4\% \\ (5.16e-02)} & \makecell[c]{93.0\% \\ (5.35e-02)} & \makecell[c]{88.8\% \\ (1.25e-02)} & \makecell[c]{89.8\% \\ (1.29e-02)} & \makecell[c]{89.0\% \\ (1.25e-02)} & \makecell[c]{91.0\% \\ (1.29e-02)} \\\hline
    \makecell[c]{average\\path\\length} & \makecell[c]{89.0\% \\ (6.38e-02)} & \makecell[c]{87.4\% \\ (5.66e-02)} & \makecell[c]{83.8\% \\ (5.98e-02)} & \makecell[c]{84.6\% \\ (6.01e-02)} & \makecell[c]{84.2\% \\ (5.98e-02)} & \makecell[c]{89.0\% \\ (6.01e-02)} \\\hline
    \makecell[c]{closeness\\score} & \makecell[c]{91.4\% \\ (4.26e-02)} & \makecell[c]{94.6\% \\ (4.33e-02)} & \makecell[c]{96.4\% \\ (3.47e-02)} & \makecell[c]{98.4\% \\ (3.81e-02)} & \makecell[c]{94.8\% \\ (3.47e-02)} & \makecell[c]{96.2\% \\ (3.81e-02)} \\\hline
    \makecell[c]{betweenness\\score} & \makecell[c]{92.8\% \\ (6.59e-02)} & \makecell[c]{94.8\% \\ (7.57e-02)} & \makecell[c]{90.8\% \\ (6.89e-02)} & \makecell[c]{98.0\% \\ (7.10e-02)} & \makecell[c]{90.8\% \\ (6.89e-02)} & \makecell[c]{98.0\% \\ (7.10e-02)} \\\hline
    \makecell[c]{clustering\\coefficient} & \makecell[c]{94.4\% \\ (2.56e-02)} & \makecell[c]{92.8\% \\ (2.45e-02)} & \makecell[c]{20.2\% \\ (6.98e-03)} & \makecell[c]{23.0\% \\ (7.40e-03)} & \makecell[c]{98.4\% \\ (6.98e-03)} & \makecell[c]{98.4\% \\ (7.40e-03)} \\\hline
    \end{tabular}
        \caption{Coverage and width of different types of intervals. The coverage rate is shown
as a percentage, and the average width is shown in parentheses below. The true
model is a three-block DCSBM with $\theta_i$ generated from the power-law
distribution. The average degree is $\lambda_n=2\log n\approx 12.79$. The model
is estimated by $\widehat{P}_{\mathrm{DCSBM}}(\widehat{K}=3)$.}
\label{tab:DCSBM_powerlaw_2logn_DCSBM}
\end{table}

\begin{table}[!t]
    \centering
\begin{tabular}{|c|c|c|c|}
\hline 
statistic & \makecell[c]{Based on\\$T(A_{\mathrm{obs}})$\\symmetric\\\eqref{eq:CI_obs_sym}} & \makecell[c]{Based on\\$\mu(\widehat{P})$\\naive\\symmetric\\\eqref{eq:CI_mu_naive_sym}} & \makecell[c]{Based on\\$\mu(\widehat{P})$\\corrected\\symmetric\\\eqref{eq:CI_mu_corrected_sym}} \\\hline
\makecell[c]{average\\degree} & \makecell[c]{90.0\% \\ (7.80e-01)} & \makecell[c]{96.2\% \\ (1.11e+00)} & \makecell[c]{96.2\% \\ (1.11e+00)} \\\hline
\makecell[c]{triangle\\density} & \makecell[c]{97.6\% \\ (7.53e-06)} & \makecell[c]{4.0\% \\ (1.11e-05)} & \makecell[c]{98.4\% \\ (1.11e-05)} \\\hline
\makecell[c]{rooted\\triangle\\count} & \makecell[c]{98.4\% \\ (9.70e+01)} & \makecell[c]{74.4\% \\ (9.77e+01)} & \makecell[c]{96.2\% \\ (9.74e+01)} \\\hline
\makecell[c]{transitivity} & \makecell[c]{95.8\% \\ (8.64e-03)} & \makecell[c]{0.0\% \\ (8.01e-03)} & \makecell[c]{98.4\% \\ (8.01e-03)} \\\hline
\makecell[c]{assortativity\\by degree} & \makecell[c]{92.0\% \\ (5.01e-02)} & \makecell[c]{44.2\% \\ (1.30e-02)} & \makecell[c]{72.4\% \\ (1.30e-02)} \\\hline
\makecell[c]{average\\path\\length} & \makecell[c]{94.9\% \\ (5.83e-02)} & \makecell[c]{77.6\% \\ (6.71e-02)} & \makecell[c]{80.9\% \\ (6.71e-02)} \\\hline
\makecell[c]{closeness\\score} & \makecell[c]{98.7\% \\ (6.32e-02)} & \makecell[c]{98.7\% \\ (6.50e-02)} & \makecell[c]{99.3\% \\ (6.50e-02)} \\\hline
\makecell[c]{betweenness\\score} & \makecell[c]{99.3\% \\ (7.60e-02)} & \makecell[c]{99.3\% \\ (9.33e-02)} & \makecell[c]{98.7\% \\ (9.33e-02)} \\\hline
\makecell[c]{clustering\\coefficient} & \makecell[c]{96.9\% \\ (2.64e-02)} & \makecell[c]{10.0\% \\ (1.50e-02)} & \makecell[c]{83.1\% \\ (1.50e-02)} \\\hline
\end{tabular}
    \caption{Coverage and width of different types of intervals. The true model is a
three-block DCSBM with $\theta_i$ generated from the power-law distribution. The
average degree is $\lambda_n=2\log n\approx 12.79$. The model is estimated by
$\widehat{P}_{\mathrm{SVD}}(\widehat{K}=3)$.}
\label{tab:DCSBM_powerlaw_2logn_SVD}
\end{table}

\begin{table}[!htbp]
    \centering
    \begin{tabular}{|c|c|c|c|c|c|c|}
    \hline 
    statistic & \makecell[c]{Based on\\$T(A_{\mathrm{obs}})$\\asymmetric\\\eqref{eq:CI_obs_asym}} & \makecell[c]{Based on\\$T(A_{\mathrm{obs}})$\\symmetric\\\eqref{eq:CI_obs_sym}} & \makecell[c]{Based on\\$\mu(\widehat{P})$\\naive\\asymmetric\\\eqref{eq:CI_mu_naive_asym}} & \makecell[c]{Based on\\$\mu(\widehat{P})$\\naive\\symmetric\\\eqref{eq:CI_mu_naive_sym}} & \makecell[c]{Based on\\$\mu(\widehat{P})$\\corrected\\asymmetric\\\eqref{eq:CI_mu_corrected_asym}} & \makecell[c]{Based on\\$\mu(\widehat{P})$\\corrected\\symmetric\\\eqref{eq:CI_mu_corrected_sym}} \\\hline
    \makecell[c]{average\\degree} & \makecell[c]{95.2\% \\ (8.60e-01)} & \makecell[c]{93.2\% \\ (7.92e-01)} & \makecell[c]{93.2\% \\ (8.58e-01)} & \makecell[c]{93.2\% \\ (8.74e-01)} & \makecell[c]{95.2\% \\ (8.58e-01)} & \makecell[c]{93.2\% \\ (8.74e-01)} \\\hline
    \makecell[c]{triangle\\density} & \makecell[c]{96.6\% \\ (4.74e-06)} & \makecell[c]{98.4\% \\ (4.94e-06)} & \makecell[c]{24.8\% \\ (4.58e-06)} & \makecell[c]{32.0\% \\ (5.43e-06)} & \makecell[c]{93.4\% \\ (4.58e-06)} & \makecell[c]{98.4\% \\ (5.43e-06)} \\\hline
    \makecell[c]{rooted\\triangle\\count} & \makecell[c]{88.0\% \\ (1.45e+01)} & \makecell[c]{95.0\% \\ (1.62e+01)} & \makecell[c]{94.4\% \\ (1.46e+01)} & \makecell[c]{98.0\% \\ (1.59e+01)} & \makecell[c]{86.8\% \\ (1.35e+01)} & \makecell[c]{94.4\% \\ (1.53e+01)} \\\hline
    \makecell[c]{transitivity} & \makecell[c]{95.2\% \\ (5.70e-03)} & \makecell[c]{98.6\% \\ (5.95e-03)} & \makecell[c]{0.0\% \\ (3.18e-03)} & \makecell[c]{0.0\% \\ (3.97e-03)} & \makecell[c]{95.0\% \\ (3.18e-03)} & \makecell[c]{98.4\% \\ (3.97e-03)} \\\hline
    \makecell[c]{assortativity\\by degree} & \makecell[c]{96.6\% \\ (5.85e-02)} & \makecell[c]{95.0\% \\ (6.17e-02)} & \makecell[c]{44.2\% \\ (5.91e-03)} & \makecell[c]{53.8\% \\ (6.17e-03)} & \makecell[c]{59.4\% \\ (5.91e-03)} & \makecell[c]{67.2\% \\ (6.17e-03)} \\\hline
    \makecell[c]{average\\path\\length} & \makecell[c]{95.2\% \\ (6.03e-02)} & \makecell[c]{93.2\% \\ (5.83e-02)} & \makecell[c]{88.2\% \\ (5.55e-02)} & \makecell[c]{88.8\% \\ (5.67e-02)} & \makecell[c]{82.0\% \\ (5.55e-02)} & \makecell[c]{85.8\% \\ (5.67e-02)} \\\hline
    \makecell[c]{closeness\\score} & \makecell[c]{98.4\% \\ (4.63e-02)} & \makecell[c]{100.0\% \\ (5.13e-02)} & \makecell[c]{99.4\% \\ (4.50e-02)} & \makecell[c]{100.0\% \\ (4.72e-02)} & \makecell[c]{98.0\% \\ (4.50e-02)} & \makecell[c]{99.4\% \\ (4.72e-02)} \\\hline
    \makecell[c]{betweenness\\score} & \makecell[c]{89.2\% \\ (1.65e-01)} & \makecell[c]{98.4\% \\ (2.06e-01)} & \makecell[c]{92.6\% \\ (2.03e-01)} & \makecell[c]{96.8\% \\ (2.14e-01)} & \makecell[c]{86.4\% \\ (2.03e-01)} & \makecell[c]{94.8\% \\ (2.15e-01)} \\\hline
    \makecell[c]{clustering\\coefficient} & \makecell[c]{98.4\% \\ (5.17e-02)} & \makecell[c]{98.4\% \\ (5.27e-02)} & \makecell[c]{2.0\% \\ (4.35e-03)} & \makecell[c]{2.6\% \\ (5.00e-03)} & \makecell[c]{94.6\% \\ (4.35e-03)} & \makecell[c]{99.2\% \\ (5.00e-03)} \\\hline
    \end{tabular}
        \caption{Coverage and width of different types of intervals. True model is a 3-block DCSBM with $\theta_i$ generated from the uniform distribution. Average degree $\lambda_n = 2 \log n \approx 12.79$. Model estimated by $\widehat{P}_{\mathrm{DCSBM}}(\widehat{K}=3)$.}
\label{tab:DCSBM_uniform_2logn_DCSBM}
\end{table}

\begin{table}[!t]
    \centering
    \begin{tabular}{|c|c|c|c|c|c|c|}
    \hline 
    statistic & \makecell[c]{Based on\\$T(A_{\mathrm{obs}})$\\asymmetric\\\eqref{eq:CI_obs_asym}} & \makecell[c]{Based on\\$T(A_{\mathrm{obs}})$\\symmetric\\\eqref{eq:CI_obs_sym}} & \makecell[c]{Based on\\$\mu(\widehat{P})$\\naive\\asymmetric\\\eqref{eq:CI_mu_naive_asym}} & \makecell[c]{Based on\\$\mu(\widehat{P})$\\naive\\symmetric\\\eqref{eq:CI_mu_naive_sym}} & \makecell[c]{Based on\\$\mu(\widehat{P})$\\corrected\\asymmetric\\\eqref{eq:CI_mu_corrected_asym}} & \makecell[c]{Based on\\$\mu(\widehat{P})$\\corrected\\symmetric\\\eqref{eq:CI_mu_corrected_sym}} \\\hline
    \makecell[c]{average\\degree} & \makecell[c]{94.2\% \\ (8.02e-01)} & \makecell[c]{94.9\% \\ (7.94e-01)} & \makecell[c]{93.4\% \\ (8.01e-01)} & \makecell[c]{96.5\% \\ (8.57e-01)} & \makecell[c]{93.4\% \\ (8.01e-01)} & \makecell[c]{95.7\% \\ (8.57e-01)} \\\hline
    \makecell[c]{triangle\\density} & \makecell[c]{99.2\% \\ (4.98e-06)} & \makecell[c]{99.2\% \\ (4.85e-06)} & \makecell[c]{40.3\% \\ (4.76e-06)} & \makecell[c]{34.6\% \\ (5.01e-06)} & \makecell[c]{94.5\% \\ (4.76e-06)} & \makecell[c]{97.7\% \\ (5.01e-06)} \\\hline
    \makecell[c]{rooted\\triangle\\count} & \makecell[c]{86.9\% \\ (1.38e+01)} & \makecell[c]{94.2\% \\ (1.59e+01)} & \makecell[c]{94.8\% \\ (1.42e+01)} & \makecell[c]{96.8\% \\ (1.52e+01)} & \makecell[c]{89.7\% \\ (1.33e+01)} & \makecell[c]{96.0\% \\ (1.46e+01)} \\\hline
    \makecell[c]{transitivity} & \makecell[c]{96.8\% \\ (5.73e-03)} & \makecell[c]{98.8\% \\ (5.79e-03)} & \makecell[c]{4.8\% \\ (3.79e-03)} & \makecell[c]{3.5\% \\ (3.96e-03)} & \makecell[c]{85.8\% \\ (3.79e-03)} & \makecell[c]{86.9\% \\ (3.96e-03)} \\\hline
    \makecell[c]{assortativity\\by degree} & \makecell[c]{90.5\% \\ (5.98e-02)} & \makecell[c]{92.8\% \\ (6.23e-02)} & \makecell[c]{78.0\% \\ (4.70e-02)} & \makecell[c]{100.0\% \\ (5.24e-02)} & \makecell[c]{54.5\% \\ (4.70e-02)} & \makecell[c]{71.8\% \\ (5.24e-02)} \\\hline
    \makecell[c]{average\\path\\length} & \makecell[c]{94.9\% \\ (5.81e-02)} & \makecell[c]{95.7\% \\ (5.93e-02)} & \makecell[c]{89.4\% \\ (5.37e-02)} & \makecell[c]{93.7\% \\ (5.67e-02)} & \makecell[c]{84.3\% \\ (5.37e-02)} & \makecell[c]{88.6\% \\ (5.67e-02)} \\\hline
    \makecell[c]{closeness\\score} & \makecell[c]{86.5\% \\ (7.76e-02)} & \makecell[c]{90.9\% \\ (8.57e-02)} & \makecell[c]{79.4\% \\ (5.67e-02)} & \makecell[c]{89.7\% \\ (5.96e-02)} & \makecell[c]{79.4\% \\ (5.67e-02)} & \makecell[c]{84.5\% \\ (5.96e-02)} \\\hline
    \makecell[c]{betweenness\\score} & \makecell[c]{98.0\% \\ (5.35e-02)} & \makecell[c]{97.5\% \\ (5.22e-02)} & \makecell[c]{96.8\% \\ (4.95e-02)} & \makecell[c]{96.3\% \\ (5.22e-02)} & \makecell[c]{96.0\% \\ (4.95e-02)} & \makecell[c]{96.3\% \\ (5.22e-02)} \\\hline
    \makecell[c]{clustering\\coefficient} & \makecell[c]{95.2\% \\ (4.85e-02)} & \makecell[c]{95.2\% \\ (5.21e-02)} & \makecell[c]{58.3\% \\ (5.94e-03)} & \makecell[c]{73.4\% \\ (6.10e-03)} & \makecell[c]{79.7\% \\ (5.94e-03)} & \makecell[c]{79.2\% \\ (6.10e-03)} \\\hline
    \end{tabular}
        \caption{Coverage and width of different types of intervals. The true model is a
three-block DCSBM with $\theta_i$ generated from the uniform distribution. The
average degree is $\lambda_n=2\log n\approx 12.79$. The model is estimated by
$\widehat{P}_{\mathrm{DCSBM}}(\widehat{K}=2)$.}
\label{tab:DCSBM_uniform_2logn_DCSBM_K=2}
\end{table}

\begin{table}[!t]
    \centering
    \begin{tabular}{|c|c|c|c|c|c|c|}
    \hline 
    statistic & \makecell[c]{Based on\\$T(A_{\mathrm{obs}})$\\asymmetric\\\eqref{eq:CI_obs_asym}} & \makecell[c]{Based on\\$T(A_{\mathrm{obs}})$\\symmetric\\\eqref{eq:CI_obs_sym}} & \makecell[c]{Based on\\$\mu(\widehat{P})$\\naive\\asymmetric\\\eqref{eq:CI_mu_naive_asym}} & \makecell[c]{Based on\\$\mu(\widehat{P})$\\naive\\symmetric\\\eqref{eq:CI_mu_naive_sym}} & \makecell[c]{Based on\\$\mu(\widehat{P})$\\corrected\\asymmetric\\\eqref{eq:CI_mu_corrected_asym}} & \makecell[c]{Based on\\$\mu(\widehat{P})$\\corrected\\symmetric\\\eqref{eq:CI_mu_corrected_sym}} \\\hline
    \makecell[c]{average\\degree} & \makecell[c]{95.2\% \\ (8.06e-01)} & \makecell[c]{95.4\% \\ (7.94e-01)} & \makecell[c]{93.4\% \\ (8.04e-01)} & \makecell[c]{95.4\% \\ (8.28e-01)} & \makecell[c]{95.2\% \\ (8.04e-01)} & \makecell[c]{95.5\% \\ (8.28e-01)} \\\hline
    \makecell[c]{triangle\\density} & \makecell[c]{98.6\% \\ (4.75e-06)} & \makecell[c]{99.4\% \\ (4.77e-06)} & \makecell[c]{24.5\% \\ (4.45e-06)} & \makecell[c]{18.9\% \\ (4.63e-06)} & \makecell[c]{94.4\% \\ (4.45e-06)} & \makecell[c]{96.4\% \\ (4.63e-06)} \\\hline
    \makecell[c]{rooted\\triangle\\count} & \makecell[c]{88.5\% \\ (1.40e+01)} & \makecell[c]{94.0\% \\ (1.56e+01)} & \makecell[c]{94.1\% \\ (1.45e+01)} & \makecell[c]{96.4\% \\ (1.58e+01)} & \makecell[c]{90.2\% \\ (1.37e+01)} & \makecell[c]{95.1\% \\ (1.52e+01)} \\\hline
    \makecell[c]{transitivity} & \makecell[c]{96.0\% \\ (5.73e-03)} & \makecell[c]{98.5\% \\ (5.78e-03)} & \makecell[c]{0.6\% \\ (3.60e-03)} & \makecell[c]{0.9\% \\ (3.79e-03)} & \makecell[c]{81.9\% \\ (3.60e-03)} & \makecell[c]{86.6\% \\ (3.79e-03)} \\\hline
    \makecell[c]{assortativity\\by degree} & \makecell[c]{92.4\% \\ (6.04e-02)} & \makecell[c]{92.4\% \\ (6.21e-02)} & \makecell[c]{56.9\% \\ (4.60e-02)} & \makecell[c]{65.6\% \\ (4.98e-02)} & \makecell[c]{68.5\% \\ (4.60e-02)} & \makecell[c]{73.9\% \\ (4.98e-02)} \\\hline
    \makecell[c]{average\\path\\length} & \makecell[c]{96.1\% \\ (5.75e-02)} & \makecell[c]{96.1\% \\ (5.89e-02)} & \makecell[c]{88.1\% \\ (5.40e-02)} & \makecell[c]{90.0\% \\ (5.56e-02)} & \makecell[c]{86.1\% \\ (5.40e-02)} & \makecell[c]{88.6\% \\ (5.56e-02)} \\\hline
    \makecell[c]{closeness\\score} & \makecell[c]{87.5\% \\ (7.54e-02)} & \makecell[c]{92.1\% \\ (8.50e-02)} & \makecell[c]{83.4\% \\ (5.91e-02)} & \makecell[c]{91.1\% \\ (6.28e-02)} & \makecell[c]{82.5\% \\ (5.91e-02)} & \makecell[c]{86.2\% \\ (6.28e-02)} \\\hline
    \makecell[c]{betweenness\\score} & \makecell[c]{96.5\% \\ (5.20e-02)} & \makecell[c]{96.9\% \\ (5.12e-02)} & \makecell[c]{97.0\% \\ (5.11e-02)} & \makecell[c]{97.2\% \\ (5.29e-02)} & \makecell[c]{97.6\% \\ (5.11e-02)} & \makecell[c]{97.9\% \\ (5.29e-02)} \\\hline
    \makecell[c]{clustering\\coefficient} & \makecell[c]{96.0\% \\ (5.06e-02)} & \makecell[c]{96.2\% \\ (5.26e-02)} & \makecell[c]{73.8\% \\ (7.01e-03)} & \makecell[c]{73.0\% \\ (7.26e-03)} & \makecell[c]{69.4\% \\ (7.01e-03)} & \makecell[c]{71.5\% \\ (7.26e-03)} \\\hline
    \end{tabular}
        \caption{Coverage and width of different types of intervals. The true model is a
three-block DCSBM with $\theta_i$ generated from the uniform distribution. The
average degree is $\lambda_n=2\log n\approx 12.79$. The model is estimated by
$\widehat{P}_{\mathrm{DCSBM}}(\widehat{K}=5)$.}
\label{tab:DCSBM_uniform_2logn_DCSBM_K=5}
\end{table}

\begin{table}[!t]
    \centering
    \begin{tabular}{|c|c|c|c|c|c|c|}
    \hline 
    statistic & \makecell[c]{Based on\\$T(A_{\mathrm{obs}})$\\asymmetric\\\eqref{eq:CI_obs_asym}} & \makecell[c]{Based on\\$T(A_{\mathrm{obs}})$\\symmetric\\\eqref{eq:CI_obs_sym}} & \makecell[c]{Based on\\$\mu(\widehat{P})$\\naive\\asymmetric\\\eqref{eq:CI_mu_naive_asym}} & \makecell[c]{Based on\\$\mu(\widehat{P})$\\naive\\symmetric\\\eqref{eq:CI_mu_naive_sym}} & \makecell[c]{Based on\\$\mu(\widehat{P})$\\corrected\\asymmetric\\\eqref{eq:CI_mu_corrected_asym}} & \makecell[c]{Based on\\$\mu(\widehat{P})$\\corrected\\symmetric\\\eqref{eq:CI_mu_corrected_sym}} \\\hline
    \makecell[c]{average\\degree} & \makecell[c]{86.0\% \\ (8.01e-01)} & \makecell[c]{92.0\% \\ (7.93e-01)} & \makecell[c]{86.0\% \\ (8.00e-01)} & \makecell[c]{94.0\% \\ (8.60e-01)} & \makecell[c]{86.0\% \\ (8.00e-01)} & \makecell[c]{92.0\% \\ (8.60e-01)} \\\hline
    \makecell[c]{triangle\\density} & \makecell[c]{98.0\% \\ (4.97e-06)} & \makecell[c]{98.0\% \\ (4.91e-06)} & \makecell[c]{44.0\% \\ (4.72e-06)} & \makecell[c]{30.0\% \\ (5.12e-06)} & \makecell[c]{92.0\% \\ (4.72e-06)} & \makecell[c]{96.0\% \\ (5.12e-06)} \\\hline
    \makecell[c]{rooted\\triangle\\count} & \makecell[c]{90.0\% \\ (1.44e+01)} & \makecell[c]{94.0\% \\ (1.69e+01)} & \makecell[c]{92.0\% \\ (1.51e+01)} & \makecell[c]{96.0\% \\ (1.66e+01)} & \makecell[c]{92.0\% \\ (1.43e+01)} & \makecell[c]{96.0\% \\ (1.62e+01)} \\\hline
    \makecell[c]{transitivity} & \makecell[c]{98.0\% \\ (5.72e-03)} & \makecell[c]{100.0\% \\ (5.89e-03)} & \makecell[c]{2.0\% \\ (3.70e-03)} & \makecell[c]{6.0\% \\ (4.05e-03)} & \makecell[c]{78.0\% \\ (3.70e-03)} & \makecell[c]{84.0\% \\ (4.05e-03)} \\\hline
    \makecell[c]{assortativity\\by degree} & \makecell[c]{92.0\% \\ (6.05e-02)} & \makecell[c]{96.0\% \\ (6.22e-02)} & \makecell[c]{44.0\% \\ (4.19e-02)} & \makecell[c]{46.0\% \\ (4.32e-02)} & \makecell[c]{74.0\% \\ (4.19e-02)} & \makecell[c]{78.0\% \\ (4.32e-02)} \\\hline
    \makecell[c]{average\\path\\length} & \makecell[c]{92.0\% \\ (5.76e-02)} & \makecell[c]{92.0\% \\ (5.89e-02)} & \makecell[c]{82.0\% \\ (5.33e-02)} & \makecell[c]{82.0\% \\ (5.64e-02)} & \makecell[c]{80.0\% \\ (5.33e-02)} & \makecell[c]{82.0\% \\ (5.64e-02)} \\\hline
    \makecell[c]{closeness\\score} & \makecell[c]{92.0\% \\ (8.04e-02)} & \makecell[c]{92.0\% \\ (8.71e-02)} & \makecell[c]{88.0\% \\ (5.91e-02)} & \makecell[c]{94.0\% \\ (6.42e-02)} & \makecell[c]{88.0\% \\ (5.91e-02)} & \makecell[c]{92.0\% \\ (6.42e-02)} \\\hline
    \makecell[c]{betweenness\\score} & \makecell[c]{98.0\% \\ (5.06e-02)} & \makecell[c]{100.0\% \\ (5.28e-02)} & \makecell[c]{100.0\% \\ (4.89e-02)} & \makecell[c]{100.0\% \\ (5.13e-02)} & \makecell[c]{100.0\% \\ (4.89e-02)} & \makecell[c]{100.0\% \\ (5.13e-02)} \\\hline
    \makecell[c]{clustering\\coefficient\\high} & \makecell[c]{98.0\% \\ (4.78e-02)} & \makecell[c]{98.0\% \\ (4.99e-02)} & \makecell[c]{72.0\% \\ (7.55e-03)} & \makecell[c]{80.0\% \\ (7.87e-03)} & \makecell[c]{76.0\% \\ (7.55e-03)} & \makecell[c]{80.0\% \\ (7.87e-03)} \\\hline
    \end{tabular}
        \caption{Coverage and width of different types of intervals. The true model is a
three-block DCSBM with $\theta_i$ generated from the uniform distribution. The
average degree is $\lambda_n=2\log n\approx 12.79$. The model is estimated by
$\widehat{P}_{\mathrm{DCSBM}}(\widehat{K}=7)$.}
\label{tab:DCSBM_uniform_2logn_DCSBM_K=7}
\end{table}

\begin{table}[!htbp]
    \centering
\begin{tabular}{|c|c|c|c|}
\hline 
statistic & \makecell[c]{Based on\\$T(A_{\mathrm{obs}})$\\symmetric\\\eqref{eq:CI_obs_sym}} & \makecell[c]{Based on\\$\mu(\widehat{P})$\\naive\\symmetric\\\eqref{eq:CI_mu_naive_sym}} & \makecell[c]{Based on\\$\mu(\widehat{P})$\\corrected\\symmetric\\\eqref{eq:CI_mu_corrected_sym}} \\\hline
\makecell[c]{average\\degree} & \makecell[c]{92.0\% \\ (7.90e-01)} & \makecell[c]{96.0\% \\ (9.12e-01)} & \makecell[c]{94.0\% \\ (9.12e-01)} \\\hline
\makecell[c]{triangle\\density} & \makecell[c]{100.0\% \\ (6.23e-06)} & \makecell[c]{2.0\% \\ (8.82e-06)} & \makecell[c]{98.0\% \\ (8.82e-06)} \\\hline
\makecell[c]{rooted\\triangle\\count} & \makecell[c]{96.0\% \\ (1.95e+01)} & \makecell[c]{98.0\% \\ (2.84e+01)} & \makecell[c]{98.0\% \\ (2.26e+01)} \\\hline
\makecell[c]{transitivity} & \makecell[c]{100.0\% \\ (6.99e-03)} & \makecell[c]{0.0\% \\ (6.94e-03)} & \makecell[c]{100.0\% \\ (6.94e-03)} \\\hline
\makecell[c]{assortativity\\by degree} & \makecell[c]{92.0\% \\ (6.02e-02)} & \makecell[c]{8.6\% \\ (6.95e-03)} & \makecell[c]{24.4\% \\ (6.95e-03)} \\\hline
\makecell[c]{average\\path\\length} & \makecell[c]{92.0\% \\ (5.85e-02)} & \makecell[c]{90.0\% \\ (5.47e-02)} & \makecell[c]{72.0\% \\ (5.47e-02)} \\\hline
\makecell[c]{closeness\\score} & \makecell[c]{92.0\% \\ (8.75e-02)} & \makecell[c]{94.4\% \\ (6.92e-02)} & \makecell[c]{91.2\% \\ (6.92e-02)} \\\hline
\makecell[c]{betweenness\\score} & \makecell[c]{94.0\% \\ (2.14e-01)} & \makecell[c]{96.0\% \\ (2.40e-01)} & \makecell[c]{92.0\% \\ (2.41e-01)} \\\hline
\makecell[c]{clustering\\coefficient} & \makecell[c]{100.0\% \\ (5.40e-02)} & \makecell[c]{43.8\% \\ (1.88e-02)} & \makecell[c]{94.0\% \\ (1.88e-02)} \\\hline
\end{tabular}
    \caption{Coverage and width of different types of intervals. The true model is a
three-block DCSBM with $\theta_i$ generated from the uniform distribution. The
average degree is $\lambda_n=2\log n\approx 12.79$. The model is estimated by
$\widehat{P}_{\mathrm{SVD}}(\widehat{K}=3)$.}
\label{tab:DCSBM_uniform_2logn_SVD}
\end{table}

\begin{table}[!t]
    \centering
    \begin{tabular}{|c|c|c|c|c|c|c|}
    \hline 
    statistic & \makecell[c]{Based on\\$T(A_{\mathrm{obs}})$\\symmetric\\(\ref{eq:CI_obs_sym})} & \makecell[c]{Based on\\$\mu(\widehat{P})$\\naive\\symmetric\\(\ref{eq:CI_mu_naive_sym})} & \makecell[c]{Based on\\$\mu(\widehat{P})$\\corrected\\symmetric\\(\ref{eq:CI_mu_corrected_sym})} \\\hline
    \makecell[c]{average\\degree} & \makecell[c]{96.0\% \\ (7.86e-01)} & \makecell[c]{91.4\% \\ (9.30e-01)} & \makecell[c]{97.2\% \\ (9.30e-01)} \\\hline
    \makecell[c]{triangle\\density} & \makecell[c]{100.0\% \\ (5.80e-06)} & \makecell[c]{7.8\% \\ (7.70e-06)} & \makecell[c]{98.0\% \\ (7.70e-06)} \\\hline
    \makecell[c]{rooted\\triangle\\count} & \makecell[c]{96.6\% \\ (1.80e+01)} & \makecell[c]{99.1\% \\ (2.36e+01)} & \makecell[c]{98.5\% \\ (1.97e+01)} \\\hline
    \makecell[c]{transitivity} & \makecell[c]{100.0\% \\ (6.72e-03)} & \makecell[c]{0.0\% \\ (6.21e-03)} & \makecell[c]{97.5\% \\ (6.21e-03)} \\\hline
    \makecell[c]{assortativity\\by degree} & \makecell[c]{91.8\% \\ (6.14e-02)} & \makecell[c]{4.9\% \\ (5.32e-03)} & \makecell[c]{12.9\% \\ (5.32e-03)} \\\hline
    \makecell[c]{average\\path\\length} & \makecell[c]{96.6\% \\ (6.00e-02)} & \makecell[c]{84.2\% \\ (5.75e-02)} & \makecell[c]{80.6\% \\ (5.75e-02)} \\\hline
    \makecell[c]{closeness\\score} & \makecell[c]{92.9\% \\ (9.00e-02)} & \makecell[c]{92.0\% \\ (6.76e-02)} & \makecell[c]{88.6\% \\ (6.76e-02)} \\\hline
    \makecell[c]{betweenness\\score} & \makecell[c]{97.4\% \\ (5.02e-02)} & \makecell[c]{96.3\% \\ (5.27e-02)} & \makecell[c]{97.4\% \\ (5.27e-02)} \\\hline
    \makecell[c]{clustering\\coefficient\\high} & \makecell[c]{97.2\% \\ (5.51e-02)} & \makecell[c]{71.2\% \\ (1.74e-02)} & \makecell[c]{92.0\% \\ (1.74e-02)} \\\hline
    \end{tabular}
        \caption{Coverage and width of different types of intervals. The true model is a
three-block DCSBM with $\theta_i$ generated from the uniform distribution. The
average degree is $\lambda_n=2\log n\approx 12.79$. The model is estimated by
$\widehat{P}_{\mathrm{SVD}}(\widehat{K}=2)$.}
\label{tab:DCSBM_uniform_2logn_SVD_K=2}
\end{table}

\begin{table}[!htbp]
    \centering
    \begin{tabular}{|c|c|c|c|c|c|c|}
    \hline 
    statistic & \makecell[c]{Based on\\$T(A_{\mathrm{obs}})$\\symmetric\\(\ref{eq:CI_obs_sym})} & \makecell[c]{Based on\\$\mu(\widehat{P})$\\naive\\symmetric\\(\ref{eq:CI_mu_naive_sym})} & \makecell[c]{Based on\\$\mu(\widehat{P})$\\corrected\\symmetric\\(\ref{eq:CI_mu_corrected_sym})} \\\hline
    \makecell[c]{average\\degree} & \makecell[c]{94.2\% \\ (8.01e-01)} & \makecell[c]{37.0\% \\ (9.81e-01)} & \makecell[c]{92.0\% \\ (9.81e-01)} \\\hline
    \makecell[c]{triangle\\density} & \makecell[c]{100.0\% \\ (7.19e-06)} & \makecell[c]{0.0\% \\ (1.13e-05)} & \makecell[c]{97.8\% \\ (1.13e-05)} \\\hline
    \makecell[c]{rooted\\triangle\\count} & \makecell[c]{99.0\% \\ (2.06e+01)} & \makecell[c]{100.0\% \\ (3.24e+01)} & \makecell[c]{99.0\% \\ (2.45e+01)} \\\hline
    \makecell[c]{transitivity} & \makecell[c]{100.0\% \\ (7.29e-03)} & \makecell[c]{0.0\% \\ (7.32e-03)} & \makecell[c]{97.2\% \\ (7.32e-03)} \\\hline
    \makecell[c]{assortativity\\by degree} & \makecell[c]{89.5\% \\ (5.85e-02)} & \makecell[c]{4.8\% \\ (8.89e-03)} & \makecell[c]{30.8\% \\ (8.89e-03)} \\\hline
    \makecell[c]{average\\path\\length} & \makecell[c]{95.0\% \\ (5.55e-02)} & \makecell[c]{48.2\% \\ (5.13e-02)} & \makecell[c]{63.5\% \\ (5.13e-02)} \\\hline
    \makecell[c]{closeness\\score} & \makecell[c]{93.0\% \\ (8.37e-02)} & \makecell[c]{90.8\% \\ (7.06e-02)} & \makecell[c]{94.0\% \\ (7.06e-02)} \\\hline
    \makecell[c]{betweenness\\score} & \makecell[c]{97.8\% \\ (4.93e-02)} & \makecell[c]{86.2\% \\ (5.21e-02)} & \makecell[c]{97.5\% \\ (5.21e-02)} \\\hline
    \makecell[c]{clustering\\coefficient} & \makecell[c]{98.5\% \\ (5.71e-02)} & \makecell[c]{6.0\% \\ (1.88e-02)} & \makecell[c]{86.8\% \\ (1.88e-02)} \\\hline
    \end{tabular}
        \caption{Coverage and width of different types of intervals. The true model is a
three-block DCSBM with $\theta_i$ generated from the uniform distribution. The
average degree is $\lambda_n=2\log n\approx 12.79$. The model is estimated by
$\widehat{P}_{\mathrm{SVD}}(\widehat{K}=5)$.}
\label{tab:DCSBM_uniform_2logn_SVD_K=5}
\end{table}

\begin{table}[!htbp]
    \centering
    \begin{tabular}{|c|c|c|c|c|c|c|}
    \hline 
    statistic & \makecell[c]{Based on\\$T(A_{\mathrm{obs}})$\\symmetric\\(\ref{eq:CI_obs_sym})} & \makecell[c]{Based on\\$\mu(\widehat{P})$\\naive\\symmetric\\(\ref{eq:CI_mu_naive_sym})} & \makecell[c]{Based on\\$\mu(\widehat{P})$\\corrected\\symmetric\\(\ref{eq:CI_mu_corrected_sym})} \\\hline
    \makecell[c]{average\\degree} & \makecell[c]{95.0\% \\ (8.34e-01)} & \makecell[c]{0.0\% \\ (9.83e-01)} & \makecell[c]{95.0\% \\ (9.83e-01)} \\\hline
    \makecell[c]{triangle\\density} & \makecell[c]{100.0\% \\ (8.91e-06)} & \makecell[c]{0.0\% \\ (1.27e-05)} & \makecell[c]{13.0\% \\ (1.24e-05)} \\\hline
    \makecell[c]{rooted\\triangle\\count} & \makecell[c]{100.0\% \\ (2.38e+01)} & \makecell[c]{99.8\% \\ (4.75e+01)} & \makecell[c]{100.0\% \\ (3.15e+01)} \\\hline
    \makecell[c]{transitivity} & \makecell[c]{99.8\% \\ (7.26e-03)} & \makecell[c]{0.0\% \\ (5.66e-03)} & \makecell[c]{0.0\% \\ (5.66e-03)} \\\hline
    \makecell[c]{assortativity\\by degree} & \makecell[c]{91.2\% \\ (5.53e-02)} & \makecell[c]{1.8\% \\ (9.97e-03)} & \makecell[c]{14.8\% \\ (9.97e-03)} \\\hline
    \makecell[c]{average\\path\\length} & \makecell[c]{93.0\% \\ (4.88e-02)} & \makecell[c]{0.0\% \\ (4.10e-02)} & \makecell[c]{84.5\% \\ (4.10e-02)} \\\hline
    \makecell[c]{closeness\\score} & \makecell[c]{90.8\% \\ (7.93e-02)} & \makecell[c]{81.8\% \\ (7.20e-02)} & \makecell[c]{90.8\% \\ (7.20e-02)} \\\hline
    \makecell[c]{betweenness\\score} & \makecell[c]{95.2\% \\ (4.84e-02)} & \makecell[c]{66.5\% \\ (5.39e-02)} & \makecell[c]{95.8\% \\ (5.39e-02)} \\\hline
    \makecell[c]{clustering\\coefficient} & \makecell[c]{98.5\% \\ (5.47e-02)} & \makecell[c]{0.0\% \\ (1.65e-02)} & \makecell[c]{59.2\% \\ (1.65e-02)} \\\hline
    \end{tabular}
        \caption{Coverage and width of different types of intervals. The true model is a
three-block DCSBM with $\theta_i$ generated from the uniform distribution. The
average degree is $\lambda_n=2\log n\approx 12.79$. The model is estimated by
$\widehat{P}_{\mathrm{SVD}}(\widehat{K}=10)$.}
\label{tab:DCSBM_uniform_2logn_SVD_K=10}
\end{table}

\begin{table}[!htbp]
    \centering
    \begin{tabular}{|c|c|c|c|c|c|c|}
    \hline 
    statistic & \makecell[c]{Based on\\$T(A_{\mathrm{obs}})$\\asymmetric\\\eqref{eq:CI_obs_asym}} & \makecell[c]{Based on\\$T(A_{\mathrm{obs}})$\\symmetric\\\eqref{eq:CI_obs_sym}} & \makecell[c]{Based on\\$\mu(\widehat{P})$\\naive\\asymmetric\\\eqref{eq:CI_mu_naive_asym}} & \makecell[c]{Based on\\$\mu(\widehat{P})$\\naive\\symmetric\\\eqref{eq:CI_mu_naive_sym}} & \makecell[c]{Based on\\$\mu(\widehat{P})$\\corrected\\asymmetric\\\eqref{eq:CI_mu_corrected_asym}} & \makecell[c]{Based on\\$\mu(\widehat{P})$\\corrected\\symmetric\\\eqref{eq:CI_mu_corrected_sym}} \\\hline
    \makecell[c]{average\\degree} & \makecell[c]{94.5\% \\ (9.48e-01)} & \makecell[c]{86.2\% \\ (7.92e-01)} & \makecell[c]{92.5\% \\ (9.45e-01)} & \makecell[c]{92.5\% \\ (8.87e-01)} & \makecell[c]{94.5\% \\ (9.45e-01)} & \makecell[c]{90.2\% \\ (8.87e-01)} \\\hline
    \makecell[c]{triangle\\density} & \makecell[c]{93.5\% \\ (3.78e-06)} & \makecell[c]{93.5\% \\ (3.56e-06)} & \makecell[c]{88.5\% \\ (2.77e-06)} & \makecell[c]{90.5\% \\ (2.74e-06)} & \makecell[c]{92.5\% \\ (2.77e-06)} & \makecell[c]{90.2\% \\ (2.74e-06)} \\\hline
    \makecell[c]{rooted\\triangle\\count} & \makecell[c]{90.8\% \\ (5.06e+00)} & \makecell[c]{93.8\% \\ (6.51e+00)} & \makecell[c]{96.8\% \\ (7.74e-01)} & \makecell[c]{96.8\% \\ (7.94e-01)} & \makecell[c]{94.5\% \\ (7.74e-01)} & \makecell[c]{94.5\% \\ (7.94e-01)} \\\hline
    \makecell[c]{transitivity} & \makecell[c]{96.8\% \\ (5.83e-03)} & \makecell[c]{98.8\% \\ (5.87e-03)} & \makecell[c]{89.2\% \\ (2.39e-03)} & \makecell[c]{91.5\% \\ (2.51e-03)} & \makecell[c]{94.5\% \\ (2.39e-03)} & \makecell[c]{94.8\% \\ (2.51e-03)} \\\hline
    \makecell[c]{assortativity\\by degree} & \makecell[c]{95.0\% \\ (6.71e-02)} & \makecell[c]{100.0\% \\ (7.51e-02)} & \makecell[c]{93.0\% \\ (3.20e-02)} & \makecell[c]{91.8\% \\ (3.24e-02)} & \makecell[c]{89.0\% \\ (3.20e-02)} & \makecell[c]{91.0\% \\ (3.24e-02)} \\\hline
    \makecell[c]{average\\path\\length} & \makecell[c]{92.5\% \\ (6.75e-02)} & \makecell[c]{88.2\% \\ (5.72e-02)} & \makecell[c]{94.5\% \\ (6.75e-02)} & \makecell[c]{92.5\% \\ (6.32e-02)} & \makecell[c]{92.2\% \\ (6.75e-02)} & \makecell[c]{89.2\% \\ (6.32e-02)} \\\hline
    \makecell[c]{closeness\\score} & \makecell[c]{97.0\% \\ (5.60e-02)} & \makecell[c]{94.8\% \\ (5.58e-02)} & \makecell[c]{92.5\% \\ (8.92e-03)} & \makecell[c]{94.8\% \\ (8.99e-03)} & \makecell[c]{94.2\% \\ (8.92e-03)} & \makecell[c]{93.5\% \\ (8.99e-03)} \\\hline
    \makecell[c]{betweenness\\score} & \makecell[c]{96.5\% \\ (6.25e-02)} & \makecell[c]{98.8\% \\ (6.81e-02)} & \makecell[c]{92.2\% \\ (1.11e-02)} & \makecell[c]{89.0\% \\ (1.07e-02)} & \makecell[c]{92.0\% \\ (1.11e-02)} & \makecell[c]{89.0\% \\ (1.07e-02)} \\\hline
    \makecell[c]{clustering\\coefficient} & \makecell[c]{88.8\% \\ (5.36e-02)} & \makecell[c]{100.0\% \\ (7.16e-02)} & \makecell[c]{97.8\% \\ (4.36e-03)} & \makecell[c]{98.0\% \\ (4.51e-03)} & \makecell[c]{90.0\% \\ (4.36e-03)} & \makecell[c]{90.8\% \\ (4.51e-03)} \\\hline
    \end{tabular}
        \caption{Coverage and width of different types of intervals. The true model is a
three-block SBM. The average degree is $\lambda_n=2\log n\approx 12.79$. The
model is estimated by $\widehat{P}_{\mathrm{SBM}}(\widehat{K}=3)$.}
\label{tab:SBM_2logn_SBM}
\end{table}

\begin{table}[ht]
    \centering
    \begin{tabular}{|c|c|c|c|c|c|c|}
    \hline 
    statistic & \makecell[c]{Based on\\$T(A_{\mathrm{obs}})$\\asymmetric\\(\ref{eq:CI_obs_asym})} & \makecell[c]{Based on\\$T(A_{\mathrm{obs}})$\\symmetric\\(\ref{eq:CI_obs_sym})} & \makecell[c]{Based on\\$\mu(\widehat{P})$\\naive\\asymmetric\\(\ref{eq:CI_mu_naive_asym})} & \makecell[c]{Based on\\$\mu(\widehat{P})$\\naive\\symmetric\\(\ref{eq:CI_mu_naive_sym})} & \makecell[c]{Based on\\$\mu(\widehat{P})$\\corrected\\asymmetric\\(\ref{eq:CI_mu_corrected_asym})} & \makecell[c]{Based on\\$\mu(\widehat{P})$\\corrected\\symmetric\\(\ref{eq:CI_mu_corrected_sym})} \\\hline
    \makecell[c]{average\\degree} & \makecell[c]{85.7\% \\ (7.82e-01)} & \makecell[c]{89.8\% \\ (7.96e-01)} & \makecell[c]{85.7\% \\ (7.82e-01)} & \makecell[c]{91.8\% \\ (8.26e-01)} & \makecell[c]{85.7\% \\ (7.82e-01)} & \makecell[c]{91.8\% \\ (8.26e-01)} \\\hline
    \makecell[c]{triangle\\density} & \makecell[c]{84.5\% \\ (2.97e-06)} & \makecell[c]{87.3\% \\ (3.16e-06)} & \makecell[c]{17.8\% \\ (2.38e-06)} & \makecell[c]{24.5\% \\ (2.44e-06)} & \makecell[c]{20.0\% \\ (2.38e-06)} & \makecell[c]{25.8\% \\ (2.44e-06)} \\\hline
    \makecell[c]{rooted\\triangle\\count} & \makecell[c]{85.3\% \\ (5.04e+00)} & \makecell[c]{93.7\% \\ (6.66e+00)} & \makecell[c]{88.5\% \\ (1.63e+00)} & \makecell[c]{90.2\% \\ (1.44e+00)} & \makecell[c]{88.2\% \\ (1.63e+00)} & \makecell[c]{90.2\% \\ (1.44e+00)} \\\hline
    \makecell[c]{transitivity} & \makecell[c]{90.7\% \\ (5.04e-03)} & \makecell[c]{89.8\% \\ (5.28e-03)} & \makecell[c]{0.0\% \\ (2.34e-03)} & \makecell[c]{0.0\% \\ (2.39e-03)} & \makecell[c]{0.0\% \\ (2.34e-03)} & \makecell[c]{0.0\% \\ (2.39e-03)} \\\hline
    \makecell[c]{assortativity\\by degree} & \makecell[c]{98.2\% \\ (7.51e-02)} & \makecell[c]{98.2\% \\ (7.64e-02)} & \makecell[c]{92.7\% \\ (4.59e-02)} & \makecell[c]{94.7\% \\ (4.91e-02)} & \makecell[c]{92.8\% \\ (4.59e-02)} & \makecell[c]{94.8\% \\ (4.91e-02)} \\\hline
    \makecell[c]{average\\path\\length} & \makecell[c]{86.8\% \\ (5.11e-02)} & \makecell[c]{89.8\% \\ (5.53e-02)} & \makecell[c]{88.2\% \\ (5.14e-02)} & \makecell[c]{88.2\% \\ (5.53e-02)} & \makecell[c]{88.2\% \\ (5.14e-02)} & \makecell[c]{88.3\% \\ (5.53e-02)} \\\hline
    \makecell[c]{closeness\\score} & \makecell[c]{97.2\% \\ (5.85e-02)} & \makecell[c]{96.8\% \\ (5.78e-02)} & \makecell[c]{91.7\% \\ (1.11e-02)} & \makecell[c]{92.3\% \\ (1.07e-02)} & \makecell[c]{92.5\% \\ (1.11e-02)} & \makecell[c]{92.7\% \\ (1.07e-02)} \\\hline
    \makecell[c]{betweenness\\score} & \makecell[c]{91.3\% \\ (5.09e-02)} & \makecell[c]{92.2\% \\ (5.31e-02)} & \makecell[c]{88.2\% \\ (1.08e-02)} & \makecell[c]{94.5\% \\ (1.05e-02)} & \makecell[c]{90.5\% \\ (1.08e-02)} & \makecell[c]{95.0\% \\ (1.05e-02)} \\\hline
    \makecell[c]{clustering\\coefficient} & \makecell[c]{87.3\% \\ (5.41e-02)} & \makecell[c]{98.3\% \\ (6.62e-02)} & \makecell[c]{89.3\% \\ (9.87e-03)} & \makecell[c]{88.3\% \\ (8.78e-03)} & \makecell[c]{86.2\% \\ (9.87e-03)} & \makecell[c]{88.2\% \\ (8.78e-03)} \\\hline
    \end{tabular}
        \caption{Coverage and width of different types of intervals. The true model is a
three-block SBM. The average degree is $\lambda_n=2\log n\approx 12.79$. The
model is estimated by $\widehat{P}_{\mathrm{SBM}}(\widehat{K}=2)$.}
    \label{tab:SBM_2logn_SBM_K=2}
\end{table}

\begin{table}[ht]
    \centering
    \begin{tabular}{|c|c|c|c|c|c|c|}
    \hline 
    statistic & \makecell[c]{Based on\\$T(A_{\mathrm{obs}})$\\asymmetric\\(\ref{eq:CI_obs_asym})} & \makecell[c]{Based on\\$T(A_{\mathrm{obs}})$\\symmetric\\(\ref{eq:CI_obs_sym})} & \makecell[c]{Based on\\$\mu(\widehat{P})$\\naive\\asymmetric\\(\ref{eq:CI_mu_naive_asym})} & \makecell[c]{Based on\\$\mu(\widehat{P})$\\naive\\symmetric\\(\ref{eq:CI_mu_naive_sym})} & \makecell[c]{Based on\\$\mu(\widehat{P})$\\corrected\\asymmetric\\(\ref{eq:CI_mu_corrected_asym})} & \makecell[c]{Based on\\$\mu(\widehat{P})$\\corrected\\symmetric\\(\ref{eq:CI_mu_corrected_sym})} \\\hline
    \makecell[c]{average\\degree} & \makecell[c]{91.7\% \\ (8.13e-01)} & \makecell[c]{90.4\% \\ (7.93e-01)} & \makecell[c]{91.9\% \\ (8.19e-01)} & \makecell[c]{91.7\% \\ (8.44e-01)} & \makecell[c]{92.4\% \\ (8.19e-01)} & \makecell[c]{91.0\% \\ (8.44e-01)} \\\hline
    \makecell[c]{triangle\\density} & \makecell[c]{93.1\% \\ (3.60e-06)} & \makecell[c]{95.9\% \\ (3.85e-06)} & \makecell[c]{91.4\% \\ (3.76e-06)} & \makecell[c]{92.7\% \\ (4.15e-06)} & \makecell[c]{96.6\% \\ (3.76e-06)} & \makecell[c]{96.6\% \\ (4.15e-06)} \\\hline
    \makecell[c]{rooted\\triangle\\count} & \makecell[c]{83.6\% \\ (5.55e+00)} & \makecell[c]{95.1\% \\ (7.38e+00)} & \makecell[c]{86.7\% \\ (5.17e+00)} & \makecell[c]{95.1\% \\ (6.44e+00)} & \makecell[c]{84.7\% \\ (4.88e+00)} & \makecell[c]{90.7\% \\ (6.15e+00)} \\\hline
    \makecell[c]{transitivity} & \makecell[c]{90.6\% \\ (5.38e-03)} & \makecell[c]{96.7\% \\ (6.00e-03)} & \makecell[c]{90.4\% \\ (5.11e-03)} & \makecell[c]{95.3\% \\ (5.60e-03)} & \makecell[c]{95.3\% \\ (5.11e-03)} & \makecell[c]{96.7\% \\ (5.60e-03)} \\\hline
    \makecell[c]{assortativity\\by degree} & \makecell[c]{95.9\% \\ (6.91e-02)} & \makecell[c]{97.1\% \\ (7.19e-02)} & \makecell[c]{55.9\% \\ (6.11e-02)} & \makecell[c]{63.4\% \\ (6.37e-02)} & \makecell[c]{93.9\% \\ (6.11e-02)} & \makecell[c]{95.0\% \\ (6.37e-02)} \\\hline
    \makecell[c]{average\\path\\length} & \makecell[c]{92.0\% \\ (5.61e-02)} & \makecell[c]{89.9\% \\ (5.59e-02)} & \makecell[c]{91.7\% \\ (5.63e-02)} & \makecell[c]{92.4\% \\ (5.80e-02)} & \makecell[c]{91.7\% \\ (5.63e-02)} & \makecell[c]{92.4\% \\ (5.80e-02)} \\\hline
    \makecell[c]{closeness\\score} & \makecell[c]{93.1\% \\ (5.92e-02)} & \makecell[c]{96.7\% \\ (5.88e-02)} & \makecell[c]{85.0\% \\ (3.57e-02)} & \makecell[c]{100.0\% \\ (4.04e-02)} & \makecell[c]{74.0\% \\ (3.57e-02)} & \makecell[c]{95.9\% \\ (4.04e-02)} \\\hline
    \makecell[c]{betweenness\\score} & \makecell[c]{88.6\% \\ (5.23e-02)} & \makecell[c]{92.1\% \\ (5.46e-02)} & \makecell[c]{84.4\% \\ (3.93e-02)} & \makecell[c]{95.9\% \\ (4.27e-02)} & \makecell[c]{75.0\% \\ (3.93e-02)} & \makecell[c]{88.3\% \\ (4.27e-02)} \\\hline
    \makecell[c]{clustering\\coefficient} & \makecell[c]{91.3\% \\ (5.69e-02)} & \makecell[c]{97.1\% \\ (7.09e-02)} & \makecell[c]{87.1\% \\ (2.04e-02)} & \makecell[c]{89.9\% \\ (2.15e-02)} & \makecell[c]{87.1\% \\ (2.04e-02)} & \makecell[c]{88.6\% \\ (2.15e-02)} \\\hline
    \end{tabular}
        \caption{Coverage and width of different types of intervals. The true model is a
three-block SBM. The average degree is $\lambda_n=2\log n\approx 12.79$. The
model is estimated by $\widehat{P}_{\mathrm{SBM}}(\widehat{K}=5)$.}
    \label{tab:SBM_2logn_SBM_K=5}
\end{table}

\begin{table}[ht]
    \centering
    \begin{tabular}{|c|c|c|c|c|c|c|}
    \hline 
    statistic & \makecell[c]{Based on\\$T(A_{\mathrm{obs}})$\\asymmetric\\(\ref{eq:CI_obs_asym})} & \makecell[c]{Based on\\$T(A_{\mathrm{obs}})$\\symmetric\\(\ref{eq:CI_obs_sym})} & \makecell[c]{Based on\\$\mu(\widehat{P})$\\naive\\asymmetric\\(\ref{eq:CI_mu_naive_asym})} & \makecell[c]{Based on\\$\mu(\widehat{P})$\\naive\\symmetric\\(\ref{eq:CI_mu_naive_sym})} & \makecell[c]{Based on\\$\mu(\widehat{P})$\\corrected\\asymmetric\\(\ref{eq:CI_mu_corrected_asym})} & \makecell[c]{Based on\\$\mu(\widehat{P})$\\corrected\\symmetric\\(\ref{eq:CI_mu_corrected_sym})} \\\hline
    \makecell[c]{average\\degree} & \makecell[c]{89.7\% \\ (8.00e-01)} & \makecell[c]{90.4\% \\ (7.93e-01)} & \makecell[c]{97.3\% \\ (2.52e+00)} & \makecell[c]{97.3\% \\ (2.43e+00)} & \makecell[c]{93.9\% \\ (2.52e+00)} & \makecell[c]{98.0\% \\ (2.43e+00)} \\\hline
    \makecell[c]{triangle\\density} & \makecell[c]{97.1\% \\ (4.06e-06)} & \makecell[c]{96.6\% \\ (4.14e-06)} & \makecell[c]{98.7\% \\ (1.82e-05)} & \makecell[c]{100.0\% \\ (1.96e-05)} & \makecell[c]{97.9\% \\ (1.56e-05)} & \makecell[c]{100.0\% \\ (1.94e-05)} \\\hline
    \makecell[c]{rooted\\triangle\\count} & \makecell[c]{78.6\% \\ (5.80e+00)} & \makecell[c]{95.1\% \\ (7.57e+00)} & \makecell[c]{89.6\% \\ (6.56e+00)} & \makecell[c]{95.9\% \\ (8.80e+00)} & \makecell[c]{84.1\% \\ (5.95e+00)} & \makecell[c]{94.6\% \\ (8.18e+00)} \\\hline
    \makecell[c]{transitivity} & \makecell[c]{95.4\% \\ (5.98e-03)} & \makecell[c]{96.7\% \\ (6.18e-03)} & \makecell[c]{98.7\% \\ (1.61e-02)} & \makecell[c]{98.7\% \\ (1.55e-02)} & \makecell[c]{98.0\% \\ (1.61e-02)} & \makecell[c]{97.3\% \\ (1.55e-02)} \\\hline
    \makecell[c]{assortativity\\by degree} & \makecell[c]{94.3\% \\ (6.52e-02)} & \makecell[c]{97.1\% \\ (6.85e-02)} & \makecell[c]{14.4\% \\ (1.12e-01)} & \makecell[c]{88.0\% \\ (1.06e-01)} & \makecell[c]{93.1\% \\ (1.12e-01)} & \makecell[c]{96.6\% \\ (1.06e-01)} \\\hline
    \makecell[c]{average\\path\\length} & \makecell[c]{91.3\% \\ (5.59e-02)} & \makecell[c]{90.6\% \\ (5.63e-02)} & \makecell[c]{96.6\% \\ (1.52e-01)} & \makecell[c]{97.3\% \\ (1.44e-01)} & \makecell[c]{94.6\% \\ (1.52e-01)} & \makecell[c]{97.3\% \\ (1.44e-01)} \\\hline
    \makecell[c]{closeness\\score} & \makecell[c]{93.9\% \\ (5.77e-02)} & \makecell[c]{96.0\% \\ (6.04e-02)} & \makecell[c]{89.9\% \\ (5.20e-02)} & \makecell[c]{98.0\% \\ (5.51e-02)} & \makecell[c]{80.7\% \\ (5.20e-02)} & \makecell[c]{92.7\% \\ (5.51e-02)} \\\hline
    \makecell[c]{betweenness\\score} & \makecell[c]{91.9\% \\ (5.39e-02)} & \makecell[c]{92.9\% \\ (5.54e-02)} & \makecell[c]{91.9\% \\ (5.26e-02)} & \makecell[c]{95.3\% \\ (5.40e-02)} & \makecell[c]{87.6\% \\ (5.26e-02)} & \makecell[c]{92.4\% \\ (5.40e-02)} \\\hline
    \makecell[c]{clustering\\coefficient} & \makecell[c]{87.7\% \\ (5.72e-02)} & \makecell[c]{97.1\% \\ (7.53e-02)} & \makecell[c]{96.0\% \\ (2.84e-02)} & \makecell[c]{99.4\% \\ (2.87e-02)} & \makecell[c]{94.0\% \\ (2.83e-02)} & \makecell[c]{96.0\% \\ (2.87e-02)} \\\hline
    \end{tabular}
        \caption{Coverage and width of different types of intervals. The true model is a
three-block SBM. The average degree is $\lambda_n=2\log n\approx 12.79$. The
model is estimated by $\widehat{P}_{\mathrm{SBM}}(\widehat{K}=10)$.}
    \label{tab:SBM_2logn_SBM_K=10}
\end{table}

\clearpage

\section{Expectations of Subgraph Counts}

Let \(R\) be a fixed simple graph, or motif, with vertex set \(V(R)\), edge set
\(E(R)\), and
\[
v=|V(R)|,
\qquad
e=|E(R)|.
\]
For \(a\in V(R)\), we denote by $r_a$ the degree of $a$ in $R$. It follows that $\sum_{a\in V(R)} r_a = 2e$. The goal of this section is to compute the biases of $\mu_R(\widehat{P})$ and $\mu_{R,r}^{(i)}(\widehat P)$. 

\paragraph{Global subgraph counts.} 
Recall from \eqref{eq:global count} that the
number of non-automorphic copies of \(R\) in the observed graph with adjacency matrix \(A\) is 
\[
T_R(A)
=
\frac{1}{|\operatorname{Aut}(R)|}
\sum_{\phi:V(R)\hookrightarrow[n]}
\prod_{\{a,b\}\in E(R)}
A_{\phi(a)\phi(b)}.
\]
Taking expectation of $T_R(A)$ under the Chung--Lu model gives
\[
\mu_R(P)
:=
\mathbb E_P[T_R(A)]
=
\frac{1}{|\operatorname{Aut}(R)|}
\sum_{\phi:V(R)\hookrightarrow[n]}
\prod_{\{a,b\}\in E(R)}
P_{\phi(a)\phi(b)}.
\]
Substituting \(P_{ij}=p\theta_i\theta_j\), we obtain
\[
\prod_{\{a,b\}\in E(R)}
P_{\phi(a)\phi(b)}
=
\prod_{\{a,b\}\in E(R)}
p\theta_{\phi(a)}\theta_{\phi(b)}
=
p^e
\prod_{\{a,b\}\in E(R)}
\theta_{\phi(a)}^{r_a}.
\]
Indeed, each edge contributes one factor \(p\), giving \(p^e\), and each occurrence
of a motif vertex \(a\) as an endpoint contributes one factor
\(\theta_{\phi(a)}\). Since vertex \(a\) is incident to \(r_a\) edges in \(R\), the
factor \(\theta_{\phi(a)}\) appears \(r_a\) times.
Therefore
\begin{equation}
\label{eq:mu P}
\mu_R(P)
=
\frac{p^e}{|\operatorname{Aut}(R)|}
\sum_{\phi:V(R)\hookrightarrow[n]}
\prod_{a\in V(R)}
\theta_{\phi(a)}^{r_a}.
\end{equation}

When $p$ and $\theta_i$ are unknown, we estimate them by
\begin{equation}
\label{eq: main parameter estimators}
\hat{p} = \frac{S}{n(n-1)}, \quad \hat{\theta}_i = \frac{n D_i}{S}, \quad D_i=\sum_{j\neq i} A_{ij}, \quad S=\sum_{i=1}^n D_i.    
\end{equation}
When $S=0$, which occurs with exponentially small probability, both numerator and denominator of $\hat{\theta}_i$ are zero. In that case, we set $\hat{\theta}_i=1$ for all $i\in[n]$. 
Since \(\sum_{a\in V(R)}r_a=2e\), plugging these estimates in the formula of $\mu_R(P)$, we get 
\begin{equation}
\label{eq:mu Phat}
\mu_R(\widehat P)
=
\frac{1}{|\operatorname{Aut}(R)|}
\left(\frac{n}{n-1}\right)^e
S^{-e}
\sum_{\phi:V(R)\hookrightarrow[n]}
\prod_{a\in V(R)}
D_{\phi(a)}^{r_a},    
\end{equation}
if $S\ge 1$, and $\mu_R(\widehat{P})=0$ if $S=0$.

\paragraph{Rooted subgraph counts.}
Recall from \eqref{eq:rooted count} that, for a rooted subgraph $(R,r)$, the
number of rooted copies of $(R,r)$ at node $i\in[n]$ is
$$
T_{R,r}^{(i)}(A) =
\frac{1}{|\operatorname{Aut}(R,r)|}
\sum_{\substack{\phi:V(R)\hookrightarrow[n] \\ \phi(r)=i}} \ 
\prod_{\{a,b\}\in E(R)}
A_{\phi(a)\phi(b)}.    
$$
Similarly to the case of global subgraph counts, the expectation of
$T_{R,r}^{(i)}(A)$ is given by
\begin{equation}
\label{eq:rooted mu P}
\mu_{R,r}^{(i)}(P)
=
\frac{p^e}{|\operatorname{Aut}(R,r)|}
\sum_{\substack{\phi:V(R)\hookrightarrow[n] \\ \phi(r)=i}} \ 
\prod_{a\in V(R)}
\theta_{\phi(a)}^{r_a},
\end{equation}
and its plugin estimator is
\begin{equation}
\label{eq:rooted mu Phat}
\mu_{R,r}^{(i)}(\widehat P)
=
\frac{1}{|\operatorname{Aut}(R,r)|}
\left(\frac{n}{n-1}\right)^e
S^{-e}
\sum_{\substack{\phi:V(R)\hookrightarrow[n] \\ \phi(r)=i}} \
\prod_{a\in V(R)}
D_{\phi(a)}^{r_a}.    
\end{equation}

\paragraph{Falling factorials.}
The main obstacle is the dependence among the factors
in $\prod_{a\in V(R)}D_{\phi(a)}^{r_a}$. Instead of working directly with
$D_{\phi(a)}^{r_a}$, it is more convenient to work with their falling
factorials, which essentially allows us to bypass this dependence. 

For each $i$ and $r$, denote the falling factorial of $D_i$ of power $r$ by  
$$
(D_i)_r
:=
D_i(D_i-1)\cdots(D_i-r+1).
$$
Since $(D_i)_r$ counts the number of ordered $r$-tuples of distinct neighbors of node $i$, we have   
\[
(D_i)_r
=
\sum_{\substack{x_1,\ldots,x_r\neq i\\
x_1,\ldots,x_r\ \mathrm{distinct}}}
A_{ix_1}\cdots A_{ix_r},
\]
where the sum is over all distinct indices $x_1,\ldots,x_r\in[n]\setminus\{i\}$.
By a slight abuse of terminology, we refer to $x_1,\ldots,x_r$ as neighbors of
$i$. For distinct vertices $i_1,\ldots,i_m$ and nonnegative integers
$r_1,\ldots,r_m$, we have
\begin{equation}
\label{eq:falling degree product}  \prod_{\ell=1}^m ( D_{i_\ell})_{r_\ell}
=
\sum_{\substack{x_{\ell,s}\neq i_\ell\\
x_{\ell,1},\ldots,x_{\ell,r_\ell}\ \mathrm{distinct} \\  
1\leq \ell\leq m,\ 1\leq s\leq r_\ell}}
\prod_{\ell=1}^m
\prod_{s=1}^{r_\ell}
A_{i_\ell x_{\ell,s}},  
\end{equation}
where the sum is over neighbor assignments
\[
x_{\ell,s}\neq i_\ell,
\qquad
1\leq \ell\leq m,
\qquad
1\leq s\leq r_\ell,
\]
with the restriction that $x_{\ell,1},\ldots,x_{\ell,r_\ell}$ are distinct for each \(\ell\). This is the key formula for studying $\mu_R(\widehat{P})$.

\subsection{Mixed Falling Factorials of Node Degrees}
In this section, we fix distinct vertices $i_1,\ldots,i_m$, nonnegative
integers $r_1,\ldots,r_m$, and an integer $h$. The first lemma shows that
$(D_{i_\ell})_{r_\ell}$, $1\le \ell\le m$, are essentially independent for our
purposes.  

\begin{lemma}[Mixed falling factorial]
\label{lemma:mixed falling factorial bound}
Assume that the network is generated from the Chung-Lu model. Then
\begin{equation*}
        \mathbb E\left[
\prod_{\ell=1}^m
(D_{i_\ell})_{r_\ell}
\right]=
\prod_{\ell=1}^m
(\E D_{i_\ell})^{r_\ell}
\left[
1
-
\sum_{\ell=1}^m
\binom{r_\ell}{2}
d_{i_\ell}^{-2}\sum_{x\neq i_\ell} P_{i_\ell x}^2
+
O\left(\frac{1}{n^2p}\right)
\right],
\end{equation*}
with the convention that $\binom{r_\ell}{2}=0$ if $r_\ell<2$.
\end{lemma}

\begin{proof}[Proof of Lemma~\ref{lemma:mixed falling factorial bound}]
From \eqref{eq:falling degree product}, we have 
\[
\E\prod_{\ell=1}^m ( D_{i_\ell})_{r_\ell}
=
\sum_{\substack{x_{\ell,s}\neq i_\ell\\
x_{\ell,1},\ldots,x_{\ell,r_\ell}\ \mathrm{distinct} \\  
1\leq \ell\leq m,\ 1\leq s\leq r_\ell}}
\E\prod_{\ell=1}^m
\prod_{s=1}^{r_\ell}
A_{i_\ell x_{\ell,s}}.
\]
Since $\E D_i=\sum_{x\neq i}P_{ix}$, 
we have a similar expansion
\[
\prod_{\ell=1}^m (\E D_{i_\ell})^{r_\ell}
=
\sum_{\substack{x_{\ell,s}\neq i_\ell\\
1\leq \ell\leq m,\ 1\leq s\leq r_\ell}}
\prod_{\ell=1}^m
\prod_{s=1}^{r_\ell}
P_{i_\ell x_{\ell,s}}.
\]
This is simply the expansion of a product of unrestricted sums, which includes assignments
where the same node \(i_\ell\) chooses the same neighbor more than once.
By contrast, the falling factorial \((D_{i_\ell})_{r_\ell}\) only allows distinct
neighbors for node \(i_\ell\). 

We now compare $\E\prod_{\ell=1}^m ( D_{i_\ell})_{r_\ell}$ with 
$\prod_{\ell=1}^m (\E D_{i_\ell})^{r_\ell}$.
The difference between these two quantities is due to two factors:
(i) the summation is taken over different sets of indices, and
(ii) in general,
\[
\E\left[
\prod_{\ell=1}^m
\prod_{s=1}^{r_\ell}
A_{i_\ell x_{\ell,s}}
\right]
\neq
\prod_{\ell=1}^m
\prod_{s=1}^{r_\ell}
\E A_{i_\ell x_{\ell,s}}
=
\prod_{\ell=1}^m
\prod_{s=1}^{r_\ell}
P_{i_\ell x_{\ell,s}},
\]
because some factors $A_{i_\ell x_{\ell,s}}$ may appear more than once in $\prod_{\ell=1}^m
\prod_{s=1}^{r_\ell}
A_{i_\ell x_{\ell,s}}$.
It remains to show that the issues caused by (i) and (ii) are negligible. To this end, we denote
\[
\Psi :=
\sum_{\substack{x_{\ell,s}\neq i_\ell\\
x_{\ell,1},\ldots,x_{\ell,r_\ell}\ \mathrm{distinct}, \  
1\leq \ell\leq m,\ 1\leq s\leq r_\ell}}
\prod_{\ell=1}^m
\prod_{s=1}^{r_\ell}
P_{i_\ell x_{\ell,s}}.
\]
We will show that the effect of (ii) is negligible, so
\(
\Psi \approx \E\prod_{\ell=1}^m (D_{i_\ell})_{r_\ell},
\)
and that the effect of (i) is also negligible, so
\(
\Psi \approx \prod_{\ell=1}^m (\E D_{i_\ell})^{r_\ell}
\). 
The triangle inequality then implies the claim of the lemma.

\paragraph{Reciprocal-edge collisions.} We first address the difference caused by (ii). 
If all $\{i_\ell,x_{\ell,s}\}$, $1\le \ell\le m$ and $1\le s\le r_\ell$, are distinct sets then $\E\left[
\prod_{\ell=1}^m
\prod_{s=1}^{r_\ell}
A_{i_\ell x_{\ell,s}}
\right]
=
\prod_{\ell=1}^m
\prod_{s=1}^{r_\ell}
\E A_{i_\ell x_{\ell,s}}
$   
due to the independence of undirected edges. Since all indices $x_{\ell,s}$, $1\le s\le r_\ell$, are distinct for each fixed $\ell$, if
\[
\{i_\ell,x_{\ell,s}\}
=
\{i_k,x_{k,t}\}
\]
then
\[
x_{\ell,s}=i_k,
\qquad
x_{k,t}=i_\ell.
\]
Thus the only repeated-undirected-edge pattern is a reciprocal choice:
node \(i_\ell\) chooses \(i_k\), and node \(i_k\) chooses \(i_\ell\). In that case
the product contains $A_{i_\ell i_k}A_{i_k i_\ell}
=
A_{i_\ell i_k}^2
=
A_{i_\ell i_k}$.

Fix two pairs of indices \((\ell,s)\) and \((k,t)\), with \(\ell\neq k\). Let
\(\mathcal C_{\ell s,kt}\) denote the total contribution to $\E\prod_{\ell=1}^m ( D_{i_\ell})_{r_\ell}$ of all assignments satisfying $x_{\ell,s}=i_k$ and $x_{k,t}=i_\ell$. 
We claim that
\begin{equation}\label{eq:Clskt}
\mathcal C_{\ell s,kt}
\leq
C
P_{i_\ell i_k}
(\E D_{i_\ell})^{r_\ell-1}
(\E D_{i_k})^{r_k-1}
\prod_{a\neq \ell,k}
(\E D_{i_a})^{r_a}.    
\end{equation}
Indeed, after forcing the reciprocal pair, the repeated edge contributes one
factor $A_{i_\ell i_k}$, and
the remaining indices are bounded by ordinary degree powers:
\begin{eqnarray*}
\mathcal C_{\ell s,kt}
&=&\E\Bigg[A_{i_\ell i_k}\sum_{\substack{x_{\ell,s'}\neq i_\ell, \ x_{k,t'}\neq i_k\\
x_{\ell,1},\ldots,x_{\ell,r_\ell}\ \mathrm{distinct} \\
x_{k,1},\ldots,x_{k,r_k}\ \mathrm{distinct}}}
\prod_{s'\neq s}
A_{i_\ell x_{\ell,s'}}
\prod_{t'\neq t}
A_{i_k x_{k,t'}}
\prod_{a\neq \ell,k}
D_{i_a}^{r_a}
\Bigg],
\end{eqnarray*}
where the sum is over all distinct $x_{\ell,s'}\in [n]\setminus\{x_{\ell,s}=i_k\}$ with $ s'\in[m]\setminus \{s\}$ and distinct $x_{k,t'}\in [n]\setminus\{x_{k,t}=i_\ell\}$ with $t'\in[m]\setminus\{t\}$. Without the constraints that $x_{\ell,s'}\neq i_k$ and $x_{k,t'}\neq i_\ell$, we have
$$
A_{i_\ell i_k}\sum_{\substack{x_{\ell,s'}\neq i_\ell, \ x_{k,t'}\neq i_k\\
\mathrm{distinct} \ x_{\ell,s'}, \ s'\in[m]\setminus\{s\} \\
\mathrm{distinct} \ x_{\ell,s'}, \ t'\in[m]\setminus\{t\}}}
\prod_{s'\neq t}
A_{i_\ell x_{\ell,s'}}
\prod_{t'\neq t}
A_{i_k x_{k,t'}} = A_{i_\ell i_k}(D_{i_\ell})_{(r_\ell-1)}
(D_{i_k})_{(r_k-1)}.
$$
Due to the constraints, the right-hand side serves as an upper bound. Since $(D_i)_r< D_i^r$, we have 
\begin{eqnarray*}
\mathcal C_{\ell s,kt}
&\leq&
\mathbb E\left[
A_{i_\ell i_k}
D_{i_\ell}^{r_\ell-1}
D_{i_k}^{r_k-1}
\prod_{a\neq \ell,k}
D_{i_a}^{r_a}
\right].
\end{eqnarray*}
Let \(e_0=\{i_\ell,i_k\}\) and denote by \(D_j^{(-e_0)}\) the degree of vertex \(j\)
with the edge \(e_0\) removed. Conditional on \(A_{i_\ell i_k}=1\), the remaining
edges are independent and
\[
D_{i_\ell}=D_{i_\ell}^{(-e_0)}+1,
\qquad
D_{i_k}=D_{i_k}^{(-e_0)}+1.
\]
Therefore, the right-hand side of the bound above equals
\[
P_{i_\ell i_k}
\,
\mathbb E\left[
(D_{i_\ell}^{(-e_0)}+1)^{r_\ell-1}
(D_{i_k}^{(-e_0)}+1)^{r_k-1}
\prod_{a\neq \ell,k}
(D_{i_a}^{(-e_0)})^{r_a}
\right].
\]
By Lemma~\ref{lemma:degree moment bound}, for each integer \(r\), we have
\[
\mathbb E\left[(D_i^{(-e_0)}+1)^r\right]
\leq
C_r (\E D_i)^r
\]
for some constant $C_r$ only depending on $r$.
Applying the generalized Holder's inequality, that is,  $\E|\prod_{i=1}^a X_i|\le \prod_{i=1}^a(\E|X_i|^a)^{1/a}$, we obtain
\[
\mathbb E\left[
(D_{i_\ell}^{(-e_0)}+1)^{r_\ell-1}
(D_{i_k}^{(-e_0)}+1)^{r_k-1}
\prod_{a\neq \ell,k}
(D_{i_a}^{(-e_0)})^{r_a}
\right]
\leq
C
(\E D_{i_\ell})^{r_\ell-1}
(\E D_{i_k})^{r_k-1}
\prod_{a\neq \ell,k}
(\E D_{i_a})^{r_a},
\]
and the claim \eqref{eq:Clskt} is proved.

We now compare the bound in \eqref{eq:Clskt} with the leading term $\prod_{a=1}^m (\E D_{i_a})^{r_a}$. 
Since $P_{i_\ell i_k}=p\theta_{i_\ell}\theta_{i_k}\leq Cp
$ and $\E D_i=p\theta_i(n-\theta_i)\asymp np$, we have
\[
\frac{
P_{i_\ell i_k}
(\E D_{i_\ell})^{r_\ell-1}
(\E D_{i_k})^{r_k-1}
\prod_{a\neq \ell,k}(\E D_{i_a})^{r_a}
}{
\prod_{a=1}^m (\E D_{i_a})^{r_a}
}
=
\frac{P_{i_\ell i_k}}{\E D_{i_\ell} \E D_{i_k}}
\leq
\frac{C}{n^2p}.
\]
There are only $\sum_{\ell<k} r_\ell r_k=O(1)$
possible pairs of indices $(\ell,s)$ and $(k,t)$ for which $x_{\ell,s}=i_k$ and $x_{k,t}=i_\ell$, because \(r_\ell\) and $m$ are fixed. Therefore the total
contribution of all reciprocal-edge collisions to $\E\prod_{\ell=1}^m (D_{i_\ell})_{r_\ell}$ is bounded by
\[
\frac{C}{n^2p}
\prod_{\ell=1}^m(\E D_{i_a})^{r_\ell}.
\]

The same argument can be used to control the total contribution of all
reciprocal-edge collisions to $\Psi$. This case is slightly simpler because it
does not involve taking expectations. Moreover, unlike the previous case, where
the repeated edge contributes a factor $A_{i_\ell i_k}$ because
$A_{i_\ell i_k}^2=A_{i_\ell i_k}$, here the repeated edge contributes a factor
$P_{i_\ell i_k}^2$ to $\Psi$. Therefore, due to the extra factor of $p$ in
$P_{i_\ell i_k}^2$, the total contribution of all reciprocal-edge collisions to
$\Psi$ is bounded by
\[
\frac{C}{n^2}
\prod_{\ell=1}^m(\E D_{i_\ell})^{r_\ell}.
\]
Together, these bounds give
\begin{equation}
\label{eq:comparing expected product of falling degree}
    \E\prod_{\ell=1}^m (D_{i_\ell})_{r_\ell}
=
\Psi
+
O\left(\frac{1}{n^2p}\right)
\prod_{\ell=1}^m(\E D_{i_\ell})^{r_\ell}.
\end{equation}

\paragraph{Summation over different sets of indices.}
We now address the difference caused by (i), showing that \(
\Psi \approx \prod_{\ell=1}^m (\E D_{i_\ell})^{r_\ell}
\) as the total contribution of all terms with non-distinct indices $x_{\ell,1},\ldots,x_{\ell,r_\ell}$ is small.

For each \(r\ge 0\), define the ordered distinct-neighbor sum
\[
T_i(r)
:=
\sum_{\substack{x_1,\ldots,x_r\neq i\\
x_1,\ldots,x_r\ \mathrm{distinct}}}
\prod_{s=1}^r P_{ix_s}.
\]
With this notation, we have $\Psi
=
\prod_{\ell=1}^m
T_{i_\ell}(r_\ell)
$. To analyze $\Psi$, 
we compare $T_i(r)$ with its unrestricted counterpart
\[
d_i^r = \Bigg(\sum_{x\neq i}P_{ix}\Bigg)^r = \sum_{x_1,\ldots,x_r\neq i}
\prod_{s=1}^r P_{ix_s}.
\]
Thus \(T_i(r)\) is obtained from \(d_i^r\) by removing all tuples containing
repeated coordinates. Using inclusion-exclusion, we have
\[
T_i(r)
=
d_i^r
-
\mathbf{1}_{r\ge 2}\sum_{1\le a<b\le r} \ 
\sum_{\substack{x_1,\ldots,x_r\neq i\\
x_a=x_b}} \ 
\prod_{s=1}^r P_{ix_s}
+
\text{terms with at least two repeated constraints}.
\]
For each $r\ge 2$ and each pair $a<b$,
\[
\sum_{\substack{x_1,\ldots,x_r\neq i\\
x_a=x_b}} \ 
\prod_{s=1}^r P_{ix_s}
=
\sum_{x\neq i}P_{ix}^2
\Bigg(\sum_{y\neq i}w_y\Bigg)^{r-2}
=
d_i^{r-2}\sum_{x\neq i}P_{ix}^2.
\]
Since there are $\binom{r}{2}$ possible pairs $a<b$, the first correction term
in the formula for $T_i(r)$ is
\[
\sum_{1\le a<b\le r} \ 
\sum_{\substack{x_1,\ldots,x_r\neq i\\
x_a=x_b}} \ 
\prod_{s=1}^r P_{ix_s}
 = \binom r2 d_i^{r-2}\sum_{x\neq i}P_{ix}^2.
\]
The remaining terms come from intersections of at least two repeated-pair
events. There are two basic types. First, two repeated pairs may overlap; for
example, $x_a=x_b$ and $x_a=x_c$, resulting in a pattern with three repeated
coordinates. Since the contribution of the remaining coordinates is at most
$d_i^{r-3}\mathbf{1}_{\{r\ge 3\}}$, the total contribution of this type is at
most
\[
    \mathbf{1}_{\{r\ge 3\}}d_i^{r-3}
    \sum_{x\neq i}P_{ix}^3 = \mathbf{1}_{\{r\ge 3\}}O\left(\frac{d_i^{r}}{n^2}\right).
\]
Second, there may be two distinct repeated pairs; for example,
$x_a=x_b\neq x_c=x_d$. The contribution of this type is at most
\[
    \mathbf{1}_{\{r\ge 4\}}d_i^{r-4}
    \Bigg(\sum_{x\neq i}P_{ix}^2\Bigg)^2 =     \mathbf{1}_{\{r\ge 4\}}O\left(\frac{d_i^r}{n^2}\right).
\]
Therefore, we have the relative expansion
\begin{equation*}
T_i(r)
=
d_i^r
\left[
1
-
\mathbf{1}_{r\ge 2}\binom r2d_i^{-2}\sum_{x\neq i}P_{ix}^2
+
O\left(n^{-2}\right)
\right].
\end{equation*}
Applying this to each fixed root \(i_\ell\) and expanding the product, we obtain
\begin{equation}
\label{eq:comparing Psi and product of expected degree}    
\Psi =
\prod_{\ell=1}^m d_{i_\ell}^{r_\ell}
\left[
1
-
\sum_{\ell=1}^m
\mathbf{1}_{r_\ell\ge 2}\binom{r_\ell}{2}
d_{i_\ell}^{-2}\sum_{x\neq i_\ell} P_{i_\ell x}^2
+
O(n^{-2})
\right].
\end{equation}
Together with \eqref{eq:comparing expected product of falling degree}
, this implies the claim of the lemma. 
\end{proof}


Next, we show that multiplying $\prod_{\ell=1}^m
(D_{i_\ell})_{r_\ell}$ by a fixed power of the total degree $S^h$ changes the mixed falling factorial only through the deterministic factor $(\E S)^h$.
This is done in two steps. First, we show that multiplying
$\prod_{\ell=1}^m (D_{i_\ell})_{r_\ell}$ by a fixed power of the centered error
$S-\E S$ yields a lower-order term. We then use this result to prove the claim.

Denote $E = A-P$. For an undirected edge \(e=\{i,j\}\), write $A_e=A_{ij}$, $P_e=P_{ij}$, and $E_e=A_e-P_e$. 
Then
\[
S - \E S = 2\sum_e E_e,
\]
where the sum is over unordered edges \(e=\{i,j\}\), \(i<j\). We have the following bound.

\begin{lemma}[Mixed falling factorial and error moment]
\label{lemma:Z^kX bound}
For every fixed integer \(k\geq 1\),
\[
\left|
\mathbb E\left[
(S-\E S)^k \prod_{\ell=1}^m
(D_{i_\ell})_{r_\ell}
\right]
\right|
\leq
C_k
(\E S)^{\lfloor k/2\rfloor}
\prod_{\ell=1}^m
(\E D_{i_\ell})^{r_\ell}.
\]    
\end{lemma}

\begin{proof}[Proof of Lemma~\ref{lemma:Z^kX bound}]
Recall from \eqref{eq:falling degree product} that 
\[
\prod_{\ell=1}^m ( D_{i_\ell})_{r_\ell}
=
\sum_{\substack{x_{\ell,s}\neq i_\ell\\
x_{\ell,1},\ldots,x_{\ell,r_\ell}\ \mathrm{distinct} \\  
1\leq \ell\leq m,\ 1\leq s\leq r_\ell}}
\prod_{\ell=1}^m
\prod_{s=1}^{r_\ell}
A_{i_\ell x_{\ell,s}},
\]
where the sum is over all neighbor assignments $\eta$ satisfying the condition in the formula.   
For each such \(\eta\), let \(H(\eta)\) be the set of
distinct undirected edges $\{i_\ell,x_{\ell,s}\}$ selected by $\eta$. Since $A_e^t=A_e$ for any positive integer $t$, we have
$$
\prod_{\ell=1}^m ( D_{i_\ell})_{r_\ell}
=
\sum_{\eta}
\prod_{e\in H(\eta)}
A_e, \qquad
\E\prod_{\ell=1}^m ( D_{i_\ell})_{r_\ell}
=
\sum_{\eta}
\prod_{e\in H(\eta)}
P_e.
$$
We first prove the following bound for a fixed finite edge set \(H\):
\begin{equation}
\label{eq:Z^kX fix H}
\left|
\mathbb E\left[
(S-\E S)^k\left(\prod_{e\in H}A_e\right)
\right]
\right|
\leq
C_k
(\E S)^{\lfloor k/2\rfloor}
\prod_{e\in H}P_e.    
\end{equation}
Expanding \((S-\E S)^k\), we get
\[
(S-\E S)^k
=
2^k
\sum_{e_1,\ldots,e_k}
E_{e_1}\cdots E_{e_k}.
\]
Therefore
\[
\mathbb E\left[
(S-\E S)^k\left(\prod_{e\in H}A_e\right)
\right]
=
2^k
\sum_{e_1,\ldots,e_k}
\mathbb E\left[
E_{e_1}\cdots E_{e_k}
\left(\prod_{e\in H}A_e\right)
\right].
\]
For each tuple $(e_1,\ldots,e_k)$, if an edge $e\notin H$ appears exactly once
among $e_1,\ldots,e_k$, then the corresponding expectation is zero, because the
entries of $E$ above the diagonal are independent and mean-zero. Thus, for a
nonzero contribution, every edge outside $H$ that appears in the tuple
$(e_1,\ldots,e_k)$ must appear at least twice. For such a tuple,
let \(M=M(e_1,\ldots,e_k)\) be the set of distinct edges outside \(H\) that appear among
\(e_1,\ldots,e_k\) and $s(e)\ge 0$ be the multiplicity of $e$ in the tuple. Since every edge in \(M\) appears at least twice, $|M|\leq \lfloor k/2\rfloor$ and
$s(e)\ge 2$ for each $e\in M$. The independence of $\{E_e:e\in M\cup H\}$ implies  
$$
\E\left[
E_{e_1}\cdots E_{e_k}
\left(\prod_{e\in H}A_e\right)
\right] = \prod_{e\in M} \E \left[E_e^{s(e)}\right]
\prod_{e\in H} \E \left[E_e^{s(e)}A_e\right]\le
\prod_{e\in M}P_e\cdot\prod_{e\in H}P_e.
$$
For the last inequality above, we use the following elementary bounds: $\mathbb E[|E_e|^sA_e]\leq\mathbb E[A_e]=P_e$ if $s\ge 0$ and $\mathbb E[|E_e|^s]\leq \mathbb E[|E_e|^2] = P_e(1-P_e)\le P_e$ 
if \(s\geq2\), because $|E_e|\le 1$. Since $|M|\leq \lfloor k/2\rfloor$, this gives
\begin{eqnarray*}
\sum_{e_1,\ldots,e_k}
\mathbb E\left[
E_{e_1}\cdots E_{e_k}
\left(\prod_{e\in H}A_e\right)
\right]
&\le& \prod_{e\in H}P_e\cdot\sum_{e_1,\ldots,e_k}\prod_{e\in M(e_1,\ldots,e_k)}P_e \\
&\le& \prod_{e\in H}P_e \cdot\sum_{t=1}^{\lfloor k/2\rfloor} \sum_{e_1,\ldots,e_t}\prod_{i=1}^t P_{e_i}\\
&\le& C_k \prod_{e\in H}P_e \cdot(\E S)^{\lfloor k/2\rfloor},
\end{eqnarray*}
for some $C_k$ only depending on $k$, which proves \eqref{eq:Z^kX fix H}. It follows from \eqref{eq:Z^kX fix H}, \eqref{eq:comparing expected product of falling degree}, and \eqref{eq:comparing Psi and product of expected degree} that
\begin{eqnarray*}
\E \left[(S-\E S)^k \prod_{\ell=1}^m ( D_{i_\ell})_{r_\ell}\right]
\le 
C_k(\E S)^{\lfloor k/2\rfloor}\sum_{\eta}
\prod_{e\in H(\eta)}
P_e\le C_k'
(\E S)^{\lfloor k/2\rfloor} \prod_{\ell=1}^m
(\E D_{i_\ell})^{r_\ell}.
\end{eqnarray*}
The proof is complete. 
\end{proof}

Using Lemma~\ref{lemma:Z^kX bound}, we obtain the following result. 

\begin{lemma}[Mixed falling factorial and edge count]
\label{lemma:mixed falling factorial and edge count}
For every integer \(h\in\mathbb{Z}\), we have 
\[
\mathbb E\left[
S^h \prod_{\ell=1}^m
(D_{i_\ell})_{r_\ell}
\right]
=
(\E S)^h
\prod_{\ell=1}^m
(\E D_{i_\ell})^{r_\ell}
\left[
1
-
\sum_{\ell=1}^m
\binom{r_\ell}{2}
d_{i_\ell}^{-2}\sum_{x\neq i_\ell} P_{i_\ell x}^2
+
O\left(\frac{1}{n^2p}\right)
\right],
\]    
with the convention that $\binom{r_\ell}{2}=0$ when $r_\ell<2$.
\end{lemma}

\begin{proof}[Proof of Lemma~\ref{lemma:mixed falling factorial and edge count}]

We first suppose \(h\geq0\). Expanding $S^h = (\E S + S-\E S)^h$ gives
\begin{eqnarray*}
\mathbb E\left[
S^h\prod_{\ell=1}^m ( D_{i_\ell})_{r_\ell}
\right]
&=&
(\E S)^h\mathbb E \left[\prod_{\ell=1}^m ( D_{i_\ell})_{r_\ell}\right]
+
\sum_{k=1}^h
\binom hk
(\E S)^{h-k}
\mathbb E\left[
(S-\E S)^k \prod_{\ell=1}^m
(D_{i_\ell})_{r_\ell}
\right]
\end{eqnarray*}
Using Lemma~\ref{lemma:Z^kX bound}, the sum on the right-hand side is bounded by 
\begin{equation*}
    \sum_{k=1}^h C_k
(\E S)^{h-k+\lfloor k/2\rfloor}
\prod_{\ell=1}^m
(\E D_{i_\ell})^{r_\ell}
\le C_h (\E S)^{h-1}\prod_{\ell=1}^m
(\E D_{i_\ell})^{r_\ell}.
\end{equation*}
Thus, by Lemma~\ref{lemma:mixed falling factorial bound} and the fact that $\E S \asymp n^2p$, we have
\begin{eqnarray*}
\mathbb E\left[
S^h\prod_{\ell=1}^m ( D_{i_\ell})_{r_\ell}
\right]
=
(\E S)^h\prod_{\ell=1}^m ( \E D_{i_\ell})^{r_\ell}
\left[
1
-
\sum_{\ell=1}^m
\binom{r_\ell}{2}
d_{i_\ell}^{-2}\sum_{x\neq i_\ell} P_{i_\ell x}^2
+
O\left(\frac{1}{n^2p}\right)
\right].
\end{eqnarray*}

Now suppose \(h<0\). Consider the event
\[
\mathcal E
=
\{|S-\E S|\leq \E S/2\}.
\]
Since \(S/2\) is a sum of independent Bernoulli variables with mean \(\E S/2\), by
Bernstein's inequality, we have
$\mathbb P(\mathcal E)
\ge 1 - \exp(-c n^2p)$. 
On the event \(\mathcal E\), we have \(|S-\E S|/\E S\leq1/2\), so using the Taylor expansion of the function
\(x\mapsto(1+x)^h\), we get
\[
S^h
=
(\E S)^h
\left(1+\frac{S-\E S}{\E S}\right)^h
=
(\E S)^h
\left[
1
+
h\frac{S-\E S}{\E S}
+
O_h\left(\frac{(S-\E S)^2}{(\E S)^2}\right)
\right].
\]
Therefore,
\begin{eqnarray*}
\mathbb E\left[
S^h\prod_{\ell=1}^m ( D_{i_\ell})_{r_\ell}\mathbf{1}_{\mathcal{E}}
\right] 
&=& (\E S)^h \E\left[
\prod_{\ell=1}^m ( D_{i_\ell})_{r_\ell}\mathbf{1}_{\mathcal{E}}
\right]
+ h(\E S)^{h-1} \E\left[(S-\E S)
\prod_{\ell=1}^m ( D_{i_\ell})_{r_\ell}\mathbf{1}_{\mathcal{E}}
\right] \\
&+& O_h\left((\E S)^{h-2}\E\left[(S-\E S)^2
\prod_{\ell=1}^m ( D_{i_\ell})_{r_\ell}\mathbf{1}_{\mathcal{E}}
\right]\right).
\end{eqnarray*}
We now bound the three terms on the right-hand side above. Since $\mathcal{E}$ occurs with probability at least $1-\exp(-cn^2p)$ and all terms in the expectation sign are polynomial in $n$, it follows that removing $\mathbf{1}_{\mathcal{E}}$ only introduces an error of order $\exp(-cn^2p)$. Thus,
\[
E\left[
\prod_{\ell=1}^m ( D_{i_\ell})_{r_\ell}\mathbf{1}_{\mathcal{E}}
\right] = E\left[
\prod_{\ell=1}^m ( D_{i_\ell})_{r_\ell}
\right]  + \exp(-cn^2p).
\]
By Lemma~\ref{lemma:Z^kX bound}, 
\begin{eqnarray*}
\left|\E\Big[(S-\E S)
\prod_{\ell=1}^m ( D_{i_\ell})_{r_\ell}\mathbf{1}_{\mathcal{E}}\Big]\right| 
&=&
\left|\E\Big[(S-\E S)
\prod_{\ell=1}^m ( D_{i_\ell})_{r_\ell}\Big] + \exp(-cn^2p)\right| \\
&\le& 
C\E\left[\prod_{\ell=1}^m ( D_{i_\ell})_{r_\ell}\right] + \exp(-cn^2p).    
\end{eqnarray*}
Similarly, by Lemma~\ref{lemma:Z^kX bound}, 
\begin{eqnarray*}
\left|\E\Big[(S-\E S)^2
\prod_{\ell=1}^m ( D_{i_\ell})_{r_\ell}\mathbf{1}_{\mathcal{E}}\Big]\right| 
\le 
C(\E S)\E\left[\prod_{\ell=1}^m ( D_{i_\ell})_{r_\ell}\right] + \exp(-cn^2p).    
\end{eqnarray*}
Putting these together, we obtain
\begin{eqnarray*}
\mathbb E\left[
S^h\prod_{\ell=1}^m ( D_{i_\ell})_{r_\ell}
\right] 
&=& \mathbb E\left[S^h\prod_{\ell=1}^m ( D_{i_\ell})_{r_\ell}
\right] + O(\exp(-cn^2p))\\
&=&
(\E S)^h E\left[
\prod_{\ell=1}^m ( D_{i_\ell})_{r_\ell}
\right]\left\{1+O\left(\frac{1}{n^2p}\right)\right\}.
\end{eqnarray*}
The claim of the lemma then follows from Lemma~\ref{lemma:mixed falling factorial bound}.
\end{proof}

\begin{lemma}[Degree moment bound]
\label{lemma:degree moment bound}
    Let $A$ be the adjacency matrix of a random network generated from the Chung--Lu model. Then for each nonnegative integer $s$, we have 
    $$\mathbb E D_i^s \le C_s \left(\E D_i\right)^s$$ for some constant $C_s$ only depending on $s$.
\end{lemma}
\begin{proof}[Proof of Lemma~\ref{lemma:degree moment bound}]
Denote by $(D_i)_r = D_i(D_i-1)\cdots(D_i-r+1)$ the falling factorial. Since $(D_i)_r$ counts the ordered-$r$ tuples of distinct neighbors of $i$,
$$
(D_i)_r = \sum_{x_1,...,x_r \ \text{distinct}} A_{ix_1}\cdots A_{i x_r}.
$$
This implies
\begin{eqnarray*}
\E(D_i)_r &=& \sum_{x_1,...,x_r \ \text{distinct}} \E[A_{ix_1}\cdots A_{i x_r}] = \sum_{x_1,...,x_r \ \text{distinct}} P_{ix_1}\cdots P_{i x_r}\\
&\le& \sum_{x_1,...,x_r} P_{ix_1}\cdots P_{i x_r} = \left(\sum_x P_{ix}\right)^r = \left(\E D_i\right)^r.
\end{eqnarray*}
Now, we expand $D_i^s$ using the falling factorials and Stirling numbers of the second kind
\[
D_i^s=\sum_{r=0}^s \left\{ {s\atop r} \right\}(D_i)_r.
\]
By the bound for $\E(D_i)_r$ above,
this gives 
\[
\mathbb E D_i^s
\leq
\sum_{r=0}^s \left\{ {s\atop r} \right\}\left(\E D_i\right)^r
\leq
C_s \left(\E D_i\right)^s.
\]
The lemma is proved.
\end{proof}

\subsection{Mixed Moments of Node Degrees and Edge Count}

Equipped with Lemma~\ref{lemma:mixed falling factorial and edge count}, we are
now ready to compute the mixed moments of the node degrees and the edge count,
which are key quantities for understanding $\E[\mu_R(\widehat{P})]$ according
to \eqref{eq:mu Phat}.

\begin{lemma}[Mixed moments of  node degrees and edge count]
\label{lemma:mixed moment of node degrees and edge count}
For every integer \(h\in\mathbb{Z}\) and a vector $\mathbf{r}=(r_1,...,r_m)$ of positive integers with $s(\mathbf{r}):=\sum_{\ell=1}^m r_\ell$,  we have 
\[
\mathbb E\left[
S^h \prod_{\ell=1}^m
D_{i_\ell}^{r_\ell}
\right]
= (\E S)^h\prod_{\ell=1}^m
(\E D_{i_\ell})^{r_\ell}\left[1+O\left(\frac{1}{(np)^2}\right)\right] + \mathrm{Bias},
\]
where 
\begin{eqnarray*}
\mathrm{Bias} &=& 
(\E S)^h\prod_{k=1}^m
(\E D_{i_k})^{r_k}\sum_{\ell=1}^m
\binom{r_\ell}{2}(\E D_{i_\ell})^{-1}\left[1-
(\E D_{i_\ell})^{-1}\sum_{x\neq i_\ell} P_{i_\ell x}^2\right],
\end{eqnarray*}
with the convention that $\binom{r_\ell}{2}=0$ when $r_\ell<2$.
\end{lemma}

\begin{proof}[Proof of Lemma~\ref{lemma:mixed moment of node degrees and edge count}]
Expanding ordinary node degree powers into falling factorials gives
\[
D_i^r
=
\sum_{k=0}^r
\left\{ {r\atop k} \right\}
(D_i)_k,
\]
where \(\left\{ {r\atop k} \right\}\) is a Stirling number of the second kind. Let $\mathbf{r}=(r_1,\ldots,r_m)$ and $\mathbf{k}=(k_1,\ldots,k_m)$, and
write $\mathbf{k}\le \mathbf{r}$ to denote component-wise inequalities. Expanding $D_{i_\ell}^{r_\ell}$ for each $\ell\in[m]$ and multiplying them together give
\[
\mathbb E\left[
S^{h}
\prod_{\ell=1}^mD_{i_\ell}^{r_\ell}
\right]
=
\sum_{\mathbf k\leq \mathbf r}
C_{\mathbf r,\mathbf k}
\,
\mathbb 
E\left[
S^{h}
\prod_{\ell=1}^m
(D_{i_\ell})_{k_\ell}
\right], \qquad 
C_{\mathbf r,\mathbf k}
:=
\prod_{\ell=1}^m
\left\{ {r_\ell\atop k_\ell} \right\}.
\]
Applying the estimates for $\mathbb E\left[
S^{h}
\prod_{\ell=1}^m (D_{i_\ell})_{k_\ell}
\right]
$ from Lemma~\ref{lemma:mixed falling factorial and edge count}, we get
\[
\mathbb E\left[
S^{h}
\prod_{\ell=1}^m D_{i_\ell}^{r_\ell}
\right]
=
(\E S)^{h}\sum_{\mathbf k\leq\mathbf r}
C_{\mathbf r,\mathbf k}
\prod_{\ell=1}^m
(\E D_{i_\ell})^{k_\ell}
\left[
1
-
\sum_{\ell=1}^m
\binom{k_\ell}{2}
d_{i_\ell}^{-2}\sum_{x\neq i_\ell} P_{i_\ell x}^2
+
O\left(\frac{1}{n^2p}\right)
\right].
\]
The leading term in the sum corresponds to $\mathbf{k}=\mathbf{r}$ and is given by
$$
(\E S)^h\prod_{\ell=1}^m
(\E D_{i_\ell})^{r_\ell}\asymp n^{2h+s(\mathbf{r})}p^{h+s(\mathbf{r})}.
$$
The leading bias terms come from the second-order term in the case
$\mathbf{k}=\mathbf{r}$, namely
$$
-(\E S)^h\prod_{\ell=1}^m
(\E D_{i_\ell})^{r_\ell}\sum_{\ell=1}^m
\binom{r_\ell}{2}
d_{i_\ell}^{-2}\sum_{x\neq i_\ell} P_{i_\ell x}^2\asymp n^{2h+s(\mathbf{r})-1}p^{h+s(\mathbf{r})}, 
$$
together with the leading terms corresponding to those $\mathbf{k}$ satisfying
$s(\mathbf{k})=s(\mathbf{r})-1$. They occur when
exactly one exponent drops by one:
$k_\ell=r_\ell-1$ for some $\ell\in[m]$ and $k_k=r_k$ for all $k\neq \ell$.
Since
\[
\left\{ {r_\ell\atop r_\ell-1} \right\}
=
\binom{r_\ell}{2},
\]
the total contribution of the leading bias terms with $s(\mathbf{k})=s(\mathbf{r})-1$ is
\[
(\E S)^h\prod_{k=1}^m
(\E D_{i_k})^{r_k}\sum_{\ell=1}^m
\binom{r_\ell}{2}(\E D_{i_\ell})^{-1} \asymp n^{2h+s(\mathbf{r})-1}p^{h+s(\mathbf{r})-1}.
\]
The second-order terms corresponding to those $\mathbf{k}$ with
$s(\mathbf{k})=s(\mathbf{r})-1$, together with the terms satisfying
$s(\mathbf{k})\le s(\mathbf{r})-2$, are of relative order $O((np)^{-2})$
compared with the leading term.
\end{proof}

\subsection{Plug-in Estimators for Global Subgraph Means}
\label{sec:mu hat for global mean}
We start with the global counts. From \eqref{eq:mu Phat} we have
$$
\E \left[\mu_R(\widehat{P})\right] = C_R
\sum_{\phi:V(R)\hookrightarrow[n]}
\E\left[ S^{-e}\prod_{a\in V(R)} 
D_{\phi(a)}^{r_a}\right].
$$
We apply Lemma~\ref{lemma:mixed moment of node degrees and edge count} to calculate the sum on the right-hand side. To that end, we use the following notation:
\begin{eqnarray*}
d_i &=& \E D_i =  p\theta_i(n-\theta_i)=:p \eta_i(\theta),\\ 
s(d) &=&\sum_{i=1}^n d_i = p\Big(n^2-\sum_{i=1}^n\theta_i^2\Big)=:p M(\theta)=:n(n-1)p_{\theta}, \\ 
\sum_{j\neq i}P_{ij}^2 &=& p^2\theta_i^2\sum_{j\neq i}\theta_j^2=:p^2\theta_i^2\sigma_i(\theta).
\end{eqnarray*}
Recall $\mu_R(P)$ from \eqref{eq:mu P}.
With the notation above and Lemma~\ref{lemma:mixed moment of node degrees and edge count}, we obtain 
\begin{equation}
\label{eq:bias decomposition global}
\mathrm{Bias}(\mu_R(\widehat{P}))=p^{e}\left(\Psi_1-\Psi_2\right)+p^{e-1}\Psi_3+O(n^{v-2}p^{e-2}),    
\end{equation} 
where $\Psi_1,\Psi_2,\Psi_3$ are functions of $\theta=(\theta_1,...,\theta_n)$ given by
\begin{eqnarray*}
\Psi_1 &=& \frac{1}{|\mathrm{Aut}(R)|}
\sum_{\phi:V(R)\hookrightarrow[n]} \left[\left(\frac{n}{n-1}\right)^e M(\theta)^{-e}\prod_{a\in V(R)}
\eta_{\phi(a)}(\theta)^{r_{\phi(a)}}-\prod_{a\in V(R)}
\theta_{\phi(a)}^{r_{\phi(a)}}\right],\\
\Psi_2 &=& \frac{1}{|\mathrm{Aut}(R)|}
\sum_{\phi:V(R)\hookrightarrow[n]} \left(\frac{n}{n-1}\right)^e M(\theta)^{-e}
\sum_{a\in V(R)}
\binom{r_a}{2}\eta_{{\phi(a)}}^{r_a-2}(\theta)\theta_{\phi(a)}^2\sigma_{\phi(a)}(\theta)
\prod_{b\neq a}
\eta_{\phi(b)}^{r_b}(\theta), \\
\Psi_3 &=& \frac{1}{|\mathrm{Aut}(R)|}
\sum_{\phi:V(R)\hookrightarrow[n]} \left(\frac{n}{n-1}\right)^e M(\theta)^{-e}
\sum_{a\in V(R)}
\binom{r_a}{2}\eta_{\phi(a)}^{r_a-1}(\theta)
\prod_{b\neq a}
\eta_{\phi(b)}^{r_b}(\theta).
\end{eqnarray*}
The following lemma provides the derivative bounds of these terms, which are needed for the analysis of bias correction. 

\begin{lemma}[Derivative bounds]
\label{lemma:derivative bounds for Psi_k}
Assume that there exist constants $C,c>0$ such that $c\leq \theta_i\leq C$ for all $i$. 
Then uniformly over all $i,j\in [n]$ and $1\le k\le 3$, we have    
\begin{eqnarray*}
\Psi_k = O(n^{v-1}), \qquad
    \partial_i\Psi_k(\theta) = O(n^{v-2}), \qquad
    \partial_{ij}\Psi_k =  O(n^{v-2})\mathbf{1}_{\{i=j\}} + O(n^{v-3}). 
\end{eqnarray*}
\end{lemma}

\begin{proof}[Proof of Lemma~\ref{lemma:derivative bounds for Psi_k}]   
It follows from the assumption 
$c\leq \theta_i\leq C$ for all $i$ that
\[
\eta(\theta_i)\asymp n,
\qquad
M(\theta)\asymp n^2,
\qquad
\sigma_i(\theta)\asymp n.
\]
Also, $\partial_i \eta_i(\theta)=n-2\theta_i=O(n)$ and $\partial_{ii}\eta_i(\theta)=-2$. Thus, for any non-negative integer \(r\),
\[
\partial_i \eta_i(\theta)^r=O(n^r),
\qquad
\partial_{ii}\eta_i(\theta)^r=O(n^r).
\]
Furthermore,
\[
\partial_i M(\theta)^{-e}=O(n^{-2e-2}),
\qquad
\partial_{ij}M(\theta)^{-e}=O\left(n^{-2e-4}+\mathbf{1}_{\{i=j\}}n^{-2e-2}\right).
\]

\paragraph{Bounds for $\Psi_1$.}
For each embedding \(\phi\), write
\[
U_\phi(\theta)
:=
\left(\frac{n}{n-1}\right)^e M(\theta)^{-e}
\prod_{a\in V(R)}
\eta_{\phi(a)}(\theta)^{r_a},
\qquad
V_\phi(\theta)
:=
\prod_{a\in V(R)}
\theta_{\phi(a)}^{r_a}.
\]
Then $U_\phi(\theta)
=
V_\phi(\theta)L_\phi(\theta),$
where
\[
L_\phi(\theta)
=
\left(\frac{n}{n-1}\right)^e
\left(1-\frac1{n^2}\sum_{j=1}^n\theta_j^2\right)^{-e}
\prod_{a\in V(R)}
\left(1-\frac{\theta_{\phi(a)}}{n}\right)^{r_a}.
\]
Since \(R\) is fixed and \(\theta_i\asymp 1\),
\[
L_\phi(\theta)-1=O(n^{-1}).
\]
Moreover,
\[
\partial_i(L_\phi(\theta)-1)
=
O(n^{-1})\mathbf 1_{\{i\in\phi(V(R))\}}
+
O(n^{-2}),
\]
and for $i\neq j$,
\[
\partial_{ij}\{L_\phi(\theta)-1\}
=
O(n^{-2})\mathbf 1_{\{i,j\in\phi(V(R))\}}
+
O(n^{-3})
\left[
\mathbf 1_{\{i\in\phi(V(R))\}}
+
\mathbf 1_{\{j\in\phi(V(R))\}}
\right]
+
O(n^{-4}),
\]
with the analogous diagonal bound
\[
\partial_{ii}\{L_\phi(\theta)-1\}
=
O(n^{-2})\mathbf 1_{\{i\in\phi(V(R))\}}
+
O(n^{-2}).
\]
Since $U_\phi-V_\phi
=
V_\phi(L_\phi-1)$
and \(V_\phi=O(1)\), each embedding contributes \(O(n^{-1})\) to
\(\Psi_1\). Hence
\[
\Psi_1(\theta)=\frac{1}{|\mathrm{Aut}(R)|}
\sum_{\phi:V(R)\hookrightarrow[n]}[U_\theta - V_\theta]=O(n^{v-1}).
\]
For the first derivative,
\[
\partial_i (U_\phi-V_\phi)
=
\partial_i V_\phi(L_\phi-1)
+
V_\phi\partial_i(L_\phi-1),
\]
the local term $\partial_i V_\phi$ forces one vertex of the embedding to equal $i$, leaving $v-1$ free vertices, and carries the factor
$L_\phi-1=O(n^{-1})$. The global term $\partial_i(L_\phi-1)$ either forces one
index to be $i$ with a factor $O(n^{-1})$, or does not force any vertex but
contributes $O(n^{-2})$ per embedding. Therefore,
\[
\partial_i\Psi_1(\theta)=O(n^{v-2}).
\]
For second derivatives, if \(i\neq j\), the largest local contribution forces
two distinct embedding vertices and keeps the \(O(n^{-1})\) finite-\(n\) gain,
giving \(O(n^{v-3})\). Terms involving a global derivative are no larger. Thus
\[
\partial_{ij} \Psi_1(\theta)=O(n^{v-3}),
\qquad i\neq j.
\]
For the diagonal derivative, both derivatives may force only one vertex, so
\[
\partial_{ii} \Psi_1(\theta)=O(n^{v-2}).
\]

\paragraph{Bounds for $\Psi_3$.}
Each summand in $\Psi_3$ has size
\[
M(\theta)^{-e}
\eta_{\phi(a)}^{r_a-1}(\theta)
\prod_{b\neq a}
\eta_{\phi(b)}^{r_b}(\theta)
=
O(n^{-1}).
\]
There are \(O(n^v)\) embeddings, so
\[
\Psi_3(\theta)=O(n^{v-1}).
\]
For the first derivative, if the derivative hits a local factor
\(\eta_{\phi(b)}^{r_b}\), then one vertex of the embedding is forced and the
summand remains \(O(n^{-1})\). This gives \(O(n^{v-2})\). If the derivative
hits \(M^{-e}\), then the summand gains a relative factor \(O(n^{-2})\), giving
at most \(O(n^{v-3})\). Thus
\[
\partial_i \Psi_3(\theta)=O(n^{v-2}).
\]
For \(i\neq j\), two local derivatives force two distinct embedding vertices,
leaving \(v-2\) free vertices and keeping the \(O(n^{-1})\) summand size; hence
\[
\partial_{ij}\Psi_3(\theta)=O(n^{v-3}).
\]
For \(i=j\), the two derivatives may force only one vertex, and therefore
\[
\partial_{ii}\Psi_3(\theta)=O(n^{v-2}).
\]

\paragraph{Bounds for $\Psi_2$.}
For simplicity, denote 
$$K_i(\theta)
:=
\eta_i(\theta)^{-2}\theta_i^2\sigma_i(\theta)=\frac{\sigma_i(\theta)}{(n-\theta_i)^2}, \quad C_\phi(\theta)
:=
\left(\frac{n}{n-1}\right)^e M(\theta)^{-e}
\prod_{a\in V(R)}
\eta_{\phi(a)}^{r_a}(\theta).$$ 
A direct calculation shows that 
$$
K_i(\theta) = O(n^{-1}), \qquad C_\phi(\theta) = O(1).
$$
With this notation, 
\[
\Psi_2(\theta) = 
\frac{1}{|\operatorname{Aut}(R)|}
\sum_{\phi:V(R)\hookrightarrow[n]}
C_\phi(\theta)
\sum_{a\in V(R)}
\binom{r_a}{2}
K_{\phi(a)}(\theta).
\]
Since each embedding
again contributes \(O(n^{-1})\) to $\Psi_2$, we have
\[
\Psi_2(\theta)=O(n^{v-1}).
\]

We next record derivative bounds for \(K_i\). Using the formula for $K_i$, we see that uniformly in \(u\), we have
\[
\partial_u K_i(\theta)=O(n^{-2}).
\]
For second derivatives, if \(u=v=i\), then
\[
\partial_{ii}K_i(\theta)
=
6\sigma_i(\theta)(n-\theta_i)^{-4}
=
O(n^{-3}).
\]
If \(u=i\) and \(v\neq i\), then
\[
\partial_{iv}K_i(\theta)
=
4\theta_v(n-\theta_i)^{-3}
=
O(n^{-3}).
\]
If \(u=v\neq i\), then
\[
\partial_{uu}K_i(\theta)
=
2(n-\theta_i)^{-2}
=
O(n^{-2}).
\]
Finally, if \(u\neq v\) and neither index equals \(i\), then
\[
\partial_{uv}K_i(\theta)=0.
\]
In summary, we have
\[
\partial_{uv}K_i(\theta)
=
O(n^{-3})\mathbf 1_{\{u=v=i\}}
+
O(n^{-2})
\left[
\mathbf 1_{\{u=i,\ v\neq i\}}
+
\mathbf 1_{\{v=i,\ u\neq i\}}
+
\mathbf 1_{\{u=v\neq i\}}
\right].
\]

Next, we bound the derivatives of the weight factor
\(C_\phi\). Its first derivatives satisfy 
\[
\partial_u C_\phi(\theta)
=
O(1)\mathbf 1_{\{u\in \phi(V(R))\}}
+
O(n^{-2}).
\]
The first term comes from differentiating a local factor
\(\eta_{\phi(a)}^{r_a}(\theta)\). The second term comes from differentiating the global denominator \(M(\theta)^{-e}\).
Similarly, for \(u\neq v\),
\[
\partial_{uv}C_\phi(\theta)
=
O(1)\mathbf 1_{\{u,v\in \phi(V(R))\}}
+
O(n^{-2})
\left[
\mathbf 1_{\{u\in\phi(V(R))\}}
+
\mathbf 1_{\{v\in\phi(V(R))\}}
\right]
+
O(n^{-4}),
\]
and for the diagonal derivative,
\[
\partial_{uu}C_\phi(\theta)
=
O(1)\mathbf 1_{\{u\in\phi(V(R))\}}
+
O(n^{-2}).
\]

We now consider $\Psi_2$. 
For the first derivative,
\[
\partial_u\Psi_2(\theta)
=
\frac{1}{|\operatorname{Aut}(R)|}
\sum_{\phi}
\left[
\partial_u C_\phi(\theta)
\sum_{a\in V(R)}
\binom{r_a}{2}K_{\phi(a)}(\theta)
+
C_\phi(\theta)
\sum_{a\in V(R)}
\binom{r_a}{2}\partial_uK_{\phi(a)}(\theta)
\right].
\]
If \(\partial_u\) hits \(C_\phi\), then either \(u\in\phi(V(R))\), leaving
\(O(n^{v-1})\) embeddings, and \(K_{\phi(a)}=O(n^{-1})\), giving
\[
O(n^{v-1})O(n^{-1})=O(n^{v-2}),
\]
or the derivative hits only the global denominator, giving \(O(n^{-2})\) per
embedding and no forced vertex, contributing
\[
O(n^v)O(n^{-2})O(n^{-1})=O(n^{v-3}),
\]
which is smaller.
If \(\partial_u\) hits \(K_{\phi(a)}\), then
\(\partial_uK_{\phi(a)}=O(n^{-2})\), giving
\[
O(n^{v})O(n^{-2})=O(n^{v-2}).
\]
Therefore
\[
\partial_u\Psi_2(\theta)=O(n^{v-2}).
\]

For the off-diagonal second derivative, \(u\neq v\), we apply the product rule
to \(C_\phi K_{\phi(a)}\). There are three types of terms.
First, if both derivatives hit \(C_\phi\), then we get
$\partial_{uv}C_\phi(\theta)\,K_{\phi(a)}(\theta)$. 
The largest case is \(u,v\in\phi(V(R))\), leaving \(O(n^{v-2})\) embeddings.
Since \(K_{\phi(a)}=O(n^{-1})\), this contributes
\[
O(n^{v-2})O(n^{-1})=O(n^{v-3}).
\]
Second, if one derivative hits \(C_\phi\) and the other hits \(K_{\phi(a)}\), we
obtain 
$\partial_u C_\phi(\theta)\,\partial_vK_{\phi(a)}(\theta)$. 
If \(u\in\phi(V(R))\) then at least one vertex is forced and since
\(\partial_vK_{\phi(a)}=O(n^{-2})\), the total contribution is
\[
O(n^{v-1})O(n^{-2})=O(n^{v-3}).
\]
Other cases in which $u\in\phi(V(R))$ contribute lower-order terms. Third, if both derivatives hit \(K_{\phi(a)}\), then we get $C_\phi(\theta)\,\partial_{uv}K_{\phi(a)}(\theta)$. 
For \(u\neq v\), the only nonzero cases occur when one of \(u,v\) equals
\(\phi(a)\), in which case one vertex is forced and the derivative is
\(O(n^{-2})\) or smaller. This contributes
\[
O(n^{v-1})O(n^{-2})=O(n^{v-3}).
\]
If neither derivative equals \(\phi(a)\), then the mixed derivative is zero. 
Therefore
\[
\partial_{uv}\Psi_2(\theta)=O(n^{v-3}),
\qquad u\neq v.
\]

For the diagonal second derivative, \(\partial_{uu}\Psi_2\), two derivatives may
force only one embedding vertex. This is why the diagonal bound is larger. The
largest contributions are
$\partial_{uu}C_\phi(\theta)\,K_{\phi(a)}(\theta)$
with \(u\in\phi(V(R))\), giving
\[
O(n^{v-1})O(n^{-1})=O(n^{v-2}),
\]
and $C_\phi(\theta)\,\partial_{uu}K_{\phi(a)}(\theta)$, giving
\[
O(n^v)O(n^{-2})=O(n^{v-2}).
\]
Thus
\[
\partial_{uu}\Psi_2(\theta)=O(n^{v-2}).
\]
The proof is complete. 
\end{proof}

With the derivative bounds in Lemma~\ref{lemma:derivative bounds for Psi_k}, we
are ready to show that the functions $\Psi_k$ are stable under input
perturbations.

\begin{lemma}[Stability of $\Psi_k$]
\label{lemma:stability of Psi_k}
Let $\widehat{\theta}$ be the estimate of $\theta$ given by \eqref{eq: main parameter estimators}. Then
\begin{equation*}
\label{eq:H-alpha-stability-final}
\Psi_k(\widehat\theta)-\Psi_k(\theta)
=
O\left(n^{v-2}p^{-1}\right),
\qquad k=1,2,3.
\end{equation*}
\end{lemma}

\begin{proof}[Proof of Lemma~\ref{lemma:stability of Psi_k}]
Taylor expansion gives
\[
\Psi_k(\widehat\theta)-\Psi_k(\theta)
=
\sum_{i=1}^n
\partial_i\Psi_k(\theta)(\widehat\theta_i-\theta_i)
+
\frac12
\sum_{i,j=1}^n
\partial_{ij}\Psi_k(\bar\theta)
(\widehat\theta_i-\theta_i)(\widehat\theta_j-\theta_j),
\]
where \(\bar\theta\) lies between \(\theta\) and \(\widehat\theta\). For the linear terms, by Lemma~\ref{lemma:derivative bounds for Psi_k}, we have $\partial_i\Psi_k:=\partial_i\Psi_k(\theta)=O(n^{v-2})$. Regarding $\hat{\theta}$, recall from \eqref{eq: main parameter estimators} that uniformly in \(i\), 
\begin{equation}
\label{eq:theta hat decomposition} \widehat\theta_i-\theta_i
=
\left(
\frac{nD_i}{S}-\frac{nd_i}{s(d)}
\right)
+
\left(
\frac{nd_i}{s(d)}-\theta_i
\right) = \left(
\frac{nD_i}{S}-\frac{nd_i}{s(d)}
\right)
+ O(n^{-1}).   
\end{equation}
Note that
\[
\sum_{i=1}^n  
\left(
\frac{nD_i}{S}-\frac{nd_i}{s(d)}
\right)\partial_i\Psi_k
=
\frac{n}{S}
\sum_{i=1}^n (D_i-d_i)\partial_i\Psi_k
-
\frac{n}{S\,s(d)}
\left(\sum_{i=1}^n d_i\partial_i \Psi_k \right)
[S-s(d)],
\]
and
\[
\sum_{i=1}^n (D_i-d_i)\partial_i\Psi_k
=
\sum_{i<j}(\partial_i\Psi_k+\partial_j\Psi_k)(A_{ij}-P_{ij}).
\]
By Bernstein's inequality,
\[
\sum_{i=1}^n (D_i-d_i)\partial_i\Psi_k
=
O\left(n^{v-1}p^{1/2}\right), \qquad
S-s(d)
=
O(np^{1/2}).
\]
It follows that 
\(S\asymp s(d)\asymp n^2p\), and since $\sum_{i=1}^n  d_i\partial_i\Psi_k=O(n^vp)$,
\[
\sum_{i=1}^n (\widehat\theta_i-\theta_i)\partial_i\Psi_k
=
O\left(n^{v-2}p^{-1/2}\right)
+
O(n^{v-2})
=
O\left(n^{v-2}p^{-1/2}\right).
\]

We now address the the quadratic term. From \eqref{eq:theta hat decomposition}, we get
\[
\widehat\theta_i-\theta_i
=
O\left(\frac{D_i-d_i}{np}\right)
+
O\left(\frac{|S-s(d)|}{n^2p}\right)
+
O(n^{-1}),
\]
and standard degree concentration gives $\sum_{i=1}^n(D_i-d_i)^2
= O(n^2p)$. 
Therefore
\[
\sum_{i=1}^n(\widehat\theta_i-\theta_i)^2
=
O(p^{-1}).
\]
From Lemma~\ref{lemma:derivative bounds for Psi_k}, up to a constant factor, the quadratic term is bounded by
\[
n^{v-2}\sum_{i=1}^n(\widehat\theta_i-\theta_i)^2
+
n^{v-3}
\sum_{i\neq j}
|\widehat\theta_i-\theta_i|
|\widehat\theta_j-\theta_j|.
\]
By Cauchy's inequality,
\[
\sum_{i\neq j}
|\widehat\theta_i-\theta_i|
|\widehat\theta_j-\theta_j|
\leq
n\sum_{i=1}^n(\widehat\theta_i-\theta_i)^2.
\]
Thus the quadratic term is $O\left(n^{v-2}p^{-1}\right)$ and 
the proof is complete.    
\end{proof}

With Lemma~\ref{lemma:stability of Psi_k}, we are now ready to bound
the error of the bias estimate obtained from the second-level bootstrap.

\begin{lemma}[Bias correction by bootstrap]
\label{lemma:bias correction by bootstrap}
We have 
$$
\mathrm{Bias}(\mu_R(\widehat{\widehat{P}}))-\mathrm{Bias}(\mu_R(\widehat{P})) = O(n^{v-2}p^{e-2}).
$$
\end{lemma}

\begin{proof}[Proof of Lemma~\ref{lemma:bias correction by bootstrap}]
Recall from \eqref{eq:bias decomposition global} that the bias of $\mu_R(\widehat{P})$ in estimating $\mu_R(P)$ is given by 
$$
\mathrm{Bias}(\mu_R(\widehat{P}))=p^{e}\left(\Psi_1(\theta)-\Psi_2(\theta)\right)+p^{e-1}\Psi_3(\theta)+O(n^{v-2}p^{e-2}).    
$$
Conditioning on $\widehat{P}$,
our estimate of this bias using the second layer bootstrap is 
$$
\mathrm{Bias}(\mu_R(\widehat{\widehat{P}}))=\widehat{p}^{e}\left(\Psi_1(\widehat{\theta})-\Psi_2(\widehat{\theta})\right)+\widehat{p}^{e-1}\Psi_3(\widehat{\theta})+O(n^{v-2}\widehat{p}^{e-2}).    
$$
Therefore, 
\begin{eqnarray*}
\mathrm{Bias}(\mu_R(\widehat{\widehat{P}}))-\mathrm{Bias}(\mu_R(\widehat{P}))
&=&
(\widehat p^{\,e-1}-p^{e-1})\Psi_3(\theta)
+
\widehat p^{\,e-1}
\left[\Psi_3(\widehat\theta)-\Psi_3(\theta)\right]
\nonumber\\
&+&
\widehat p^{\,e}
\left[\Psi_1(\widehat{\theta})-\Psi_1(\theta)+\Psi_2(\theta)-\Psi_2(\widehat{\theta})\right]\nonumber\\
&+&
(\widehat p^{\,e}-p^e)\left[\Psi_1(\theta)-\Psi_2(\theta)\right] +O(n^{v-2}p^{e-2}).
\end{eqnarray*}
By Bernstein's inequality for the edge count, we have $\widehat p-p
= \left(n^{-1}p^{1/2}\right)$. 
Therefore,
\[
\widehat p^{\,e-1}-p^{e-1}
=
O\left(n^{-1}p^{e-3/2}\right),
\qquad
\widehat p^{\,e}-p^e
=
O\left(n^{-1}p^{e-1/2}\right).
\]
From Lemma~\ref{lemma:derivative bounds for Psi_k} and Lemma~\ref{lemma:stability of Psi_k}, we obtain
$$
\mathrm{Bias}(\mu_R(\widehat{\widehat{P}}))-\mathrm{Bias}(\mu_R(\widehat{P})) = O(n^{v-2}p^{e-2}).
$$
The proof is complete.     
\end{proof}


\subsection{Plugin Estimators for Rooted Subgraph Means}

We now repeat the same bias-correction argument in Section~\ref{sec:mu hat for global mean} for rooted subgraph counts.
Let \((R,o)\) be a fixed rooted graph, where \(o\in V(R)\) is the distinguished
root. Recall from \eqref{eq:rooted mu P} that
\begin{equation*}
\mu_{R,o}^{(i)}(P)
=
\frac{p^e}{|\operatorname{Aut}(R,o)|}
\sum_{\substack{\phi:V(R)\hookrightarrow[n] \\ \phi(o)=i}} \ 
\prod_{a\in V(R)}
\theta_{\phi(a)}^{r_a},
\end{equation*}
and its plugin estimator from \eqref{eq:rooted mu Phat} is
\begin{equation*}
\mu_{R,o}^{(i)}(\widehat P)
=
\frac{1}{|\operatorname{Aut}(R,o)|}
\left(\frac{n}{n-1}\right)^e
S^{-e}
\sum_{\substack{\phi:V(R)\hookrightarrow[n] \\ \phi(o)=i}} \
\prod_{a\in V(R)}
D_{\phi(a)}^{r_a}.    
\end{equation*}
We apply Lemma~\ref{lemma:mixed moment of node degrees and edge count} to
calculate the expectation of $\mu_{R,o}^{(i)}(\widehat P)$, using the notation
defined in Section~\ref{sec:mu hat for global mean}. Analogously to the bias
decomposition for global counts in \eqref{eq:bias decomposition global}, we have
\begin{equation}
\label{eq:bias decomposition rooted}
\mathrm{Bias}\left(\mu_{R,o}^{(i)}(\widehat{P})\right)=p^{e}\left(\Psi_1-\Psi_2\right)+p^{e-1}\Psi_3+O(n^{v-3}p^{e-2}),    
\end{equation} 
where $\Psi_{1i},\Psi_{2i},\Psi_{3i}$ are functions of $\theta=(\theta_1,...,\theta_n)$ given by
\begin{eqnarray*}
\Psi_{1i} &=& \frac{1}{|\mathrm{Aut}(R,o)|}
\sum_{\substack{\phi:V(R)\hookrightarrow[n] \\ \phi(o)=i}} \left[\left(\frac{n}{n-1}\right)^e M(\theta)^{-e}\prod_{a\in V(R)}
\eta_{\phi(a)}(\theta)^{r_{\phi(a)}}-\prod_{a\in V(R)}
\theta_{\phi(a)}^{r_{\phi(a)}}\right],\\
\Psi_{2i} &=& \frac{1}{|\mathrm{Aut}(R,o)|}
\sum_{\substack{\phi:V(R)\hookrightarrow[n] \\ \phi(o)=i}} \left(\frac{n}{n-1}\right)^e M(\theta)^{-e}
\sum_{a\in V(R)}
\binom{r_a}{2}\eta_{{\phi(a)}}^{r_a-2}(\theta)\theta_{\phi(a)}^2\sigma_{\phi(a)}(\theta)
\prod_{b\neq a}
\eta_{\phi(b)}^{r_b}(\theta), \\
\Psi_{3i} &=& \frac{1}{|\mathrm{Aut}(R,o)|}
\sum_{\substack{\phi:V(R)\hookrightarrow[n] \\ \phi(o)=i}} \left(\frac{n}{n-1}\right)^e M(\theta)^{-e}
\sum_{a\in V(R)}
\binom{r_a}{2}\eta_{\phi(a)}^{r_a-1}(\theta)
\prod_{b\neq a}
\eta_{\phi(b)}^{r_b}(\theta),
\end{eqnarray*}
with the convention that terms involving $\binom{r_a}{2}$ are set to zero when
$r_a<2$.

\paragraph{Magnitude bounds.} The argument for bounding $\Psi_{ki}$ is the same as that used to bound
$\Psi_k$ in the proof of Lemma~\ref{lemma:derivative bounds for Psi_k}, except
that the rooted embedding sum has only $O(n^{v-1})$ terms. Therefore,
\begin{equation}
\label{eq:rooted-Psi-size}
\Psi_{ki}(\theta)=O(n^{v-2}),
\qquad
k=1,2,3.
\end{equation}

\paragraph{Derivative bounds.}
The derivative bounds for $\Psi_{ki}$ differ from the global case because the
root is fixed at \(i\). If a derivative is taken with respect to the root
coordinate \(\theta_i\), it does not reduce the number of free embedding vertices.
If a derivative is taken with respect to a non-root coordinate \(\theta_u\),
\(u\neq i\), then a non-root vertex is forced to equal \(u\), reducing the
number of free vertices by one.
Consequently, uniformly on the regular set of $\theta$ with \(\theta_j\asymp 1\) for all $j\in[n]$,
\begin{equation}
\label{eq:rooted-Psi-first-derivative}
\partial_u\Psi_{ki}(\theta)
=
O(n^{v-2})\mathbf 1_{\{u=i\}}
+
O(n^{v-3})\mathbf 1_{\{u\neq i\}}.
\end{equation}
For second derivatives,
\begin{equation}
\label{eq:rooted-Psi-second-derivative}
\partial_{uv}\Psi_{ki}(\theta)
=
\begin{cases}
O(n^{v-2}), & u=v=i,\\
O(n^{v-3}), & u=v\neq i,\\
O(n^{v-3}), & u\neq v \text{ and } \{u,v\}\cap\{i\}\neq\emptyset,\\
O(n^{v-4}), & u\neq v \text{ and } u,v\neq i.
\end{cases}
\end{equation}
These bounds follow from the same vertex-product counting as in the global
case. The key point is that each \(\Psi_{ki}\) carries a finite-\(n\) factor
\(n^{-1}\). For \(\Psi_{1i}\), this factor is 
\[
\left(\frac{n}{n-1}\right)^eM(\theta)^{-e}\prod_{a\in V(R)} \eta_{\phi(a)}^{r_a} - \prod_{a\in V(R)} \theta_{\phi(a)}^{r_a}(\theta). 
\]
For \(\Psi_{2i}\), it is the distinct-neighbor correction $\eta_j(\theta)^{-2}\theta_j^2\sigma_j(\theta)=O(n^{-1})$, and for \(\Psi_{3i}\), it is the inverse expected-degree factor $\eta_j(\theta)^{-1}=O(n^{-1})$. Counting the number of free vertices in the node embedding yields
\eqref{eq:rooted-Psi-second-derivative}.

\paragraph{Stability of $\Psi_{ki}$.}
We claim that, for each \(k=1,2,3\),
\begin{equation}
\label{eq:rooted-Psi-stability}
\Psi_{ki}(\widehat\theta)
-
\Psi_{ki}(\theta)
= O
\left(n^{v-5/2}p^{-1/2}\right).
\end{equation}
To see this, recall from \eqref{eq:theta hat decomposition} that
\begin{equation*}
\widehat\theta_u-\theta_u
=
\left(
\frac{nD_u}{S}-\frac{nd_u}{s(d)}
\right)
+
\left(
\frac{nd_u}{s(d)}-\theta_u
\right) = \left(
\frac{nD_u}{S}-\frac{nd_u}{s(d)}
\right)
+ O(n^{-1}).   
\end{equation*}
Taylor expansion gives
\[
\Psi_{ki}(\widehat\theta)-\Psi_{ki}(\theta)
=
\sum_{u=1}^n
\partial_u\Psi_{ki}(\theta)(\widehat\theta_u-\theta_u)
+
\frac12
\sum_{u,v=1}^n
\partial_{uv}\Psi_{ki}(\bar\theta)(\widehat\theta_u-\theta_u)(\widehat\theta_v-\theta_v),
\]
where \(\bar\theta\) lies between \(\theta\) and \(\widehat\theta\).
We first bound the linear part. Using
\eqref{eq:rooted-Psi-first-derivative} and the fact that
$\widehat{\theta}_u-\theta_u=O((np)^{-1/2})$, the linear term corresponding to
$u=i$ is bounded by
\[
\partial_i\Psi_{ki}(\theta)(\widehat{\theta}_i-\theta_i)
=
O(n^{v-5/2}p^{-1/2}).
\]
For the terms with $u\neq i$, again by \eqref{eq:rooted-Psi-first-derivative}, we have  
\(
\partial_u\Psi_{ki}(\theta)=O(n^{v-3}).
\)
Therefore Bernstein's inequality yields 
\[
\sum_{u\neq i}
\partial_u\Psi_{ki}(\theta)(\widehat{\theta}_u-\theta_u)
=
O\left(n^{v-3}p^{-1/2}\right).
\]
Combining this bound with the term corresponding to $u=i$, we obtain
\[
\sum_{u=1}^n
\partial_u\Psi_{ki}(\theta)(\widehat\theta_u-\theta_u)
=O \left(n^{v-5/2}p^{-1/2}\right).
\]
For the quadratic part, we use \eqref{eq:rooted-Psi-second-derivative} and the following bounds
\[
(\widehat\theta_u-\theta_u)^2
=
O\left((np)^{-1}\right),
\qquad
\sum_{u=1}^n(\widehat\theta_u-\theta_u)^2
= O(p^{-1}).
\]
The root diagonal term satisfies
\[
\partial_{ii}\Psi_{ki}(\bar\theta)(\widehat\theta_u-\theta_u)^2
=
O(n^{v-2}(np)^{-1})
=O\left(n^{v-3}p^{-1}\right).
\]
The nonroot diagonal terms satisfy
\[
\sum_{u\neq i}^n
\partial_{uu}\Psi_{ki}(\bar\theta)(\widehat\theta_u-\theta_u)^2 = 
O\left(n^{v-3}p^{-1}\right).
\]
The root-nonroot off-diagonal terms satisfy, by Cauchy's inequality,
\begin{eqnarray*}
\sum_{u\neq i}^n
\partial_{iu}\Psi_{ki}(\bar\theta)(\widehat\theta_i-\theta_i)(\widehat\theta_u-\theta_u) \le O(n^{v-3})|\widehat\theta_i-\theta_i| \left(n\sum_{u=1}^n(\widehat\theta_u-\theta_u)^2\right)^{1/2} = O(n^{v-3}p^{-1}).    
\end{eqnarray*}
Finally, the nonroot--nonroot off-diagonal terms are bounded by
\[
O(n^{v-4})
\sum_{\substack{u\neq v\\ u,v\neq i}}
|\widehat\theta_u-\theta_u||\widehat\theta_v-\theta_v|
\leq
O(n^{v-4})
n\sum_{u=1}^n(\widehat\theta_u-\theta_u)^2
=O\left(n^{v-3}p^{-1}\right).
\]
Therefore, the full quadratic part is $O\left(n^{v-3}p^{-1}\right)=
O\left(n^{v-5/2}p^{-1/2}\right)$, and 
\eqref{eq:rooted-Psi-stability} holds.

\paragraph{Bias correction by second-layer bootstrap.}
Recall from \eqref{eq:bias decomposition rooted} that 
\begin{equation*}
\mathrm{Bias}\left(\mu_{R,o}^{(i)}(\widehat{P})\right)=p^{e}\left(\Psi_1(\theta)-\Psi_2(\theta)\right)+p^{e-1}\Psi_3(\theta)+O(n^{v-3}p^{e-2}).    
\end{equation*}
The second-layer bootstrap estimate of this bias is 
\begin{equation*}
\mathrm{Bias}\left(\mu_{R,o}^{(i)}(\widehat{\widehat{P}})\right)=\widehat{p}^{e}\left(\Psi_1(\widehat\theta)-\Psi_2(\widehat\theta)\right)+\widehat{p}^{e-1}\Psi_3(\widehat\theta)+O(n^{v-3}\widehat{p}^{e-2}),    
\end{equation*}
where $\hat{p}$ and $\hat{\theta}$ are given by \eqref{eq: main parameter estimators}. Since $\widehat{p}\asymp p$ with high probability, the error of this bias estimation is
\begin{eqnarray*}
\mathrm{Bias}\left(\mu_{R,o}^{(i)}(\widehat{\widehat{P}})\right) - \mathrm{Bias}\left(\mu_{R,o}^{(i)}(\widehat{P})\right)
&=& (\widehat p^{e-1}-p^{e-1})
\Psi_{3i}(\theta)
+
\widehat p^{e-1}
\left[
\Psi_{3i}(\widehat\theta)-\Psi_{3i}(\theta)
\right]\\
&+&
\widehat p^e
\left[
\Psi_{1i}(\widehat\theta)-\Psi_{1i}(\theta)
+
\Psi_{2i}(\theta)-\Psi_{2i}(\widehat\theta)
\right]\\
&+&(\widehat p^e-p^e)
\left[
\Psi_{1i}(\theta)
-
\Psi_{2i}(\theta)
\right] + O(n^{v-3}p^{e-2}).
\label{eq:rooted-G-stability-expansion}
\end{eqnarray*}
Since  
\(
\widehat p-p
=
\widetilde O\left(n^{-1}p^{1/2}\right)
\)
by Bernstein's inequality, 
\[
\widehat p^e-p^e
= O\left(n^{-1}p^{e-1/2}\right),
\qquad
\widehat p^{e-1}-p^{e-1}
=O\left(n^{-1}p^{e-3/2}\right).
\]
Using \eqref{eq:rooted-Psi-size} and \eqref{eq:rooted-Psi-stability}, we conclude
\begin{equation}
\label{eq:rooted-G-stability-final}
\mathrm{Bias}\left(\mu_{R,o}^{(i)}(\widehat{\widehat{P}})\right) - \mathrm{Bias}\left(\mu_{R,o}^{(i)}(\widehat{P})\right)
=
O\left(n^{v-5/2}p^{e-3/2}\right).
\end{equation}

\section{Variances of Plugin Estimators of Subgraph Means}
\label{sec:variance}

We now calculate the variance of $\mu_R(\widehat{P})$. To simplify the
presentation, we postpone the treatment of rooted subgraph counts until
Section~\ref{sec:rooted count variance}. For this purpose, it is convenient to
view $\mu_R(\widehat{P})$ in two equivalent ways: as a function of the
edge-error matrix $E=A-\E A=A-P$, and as a function of the degree vector $D$.
The former viewpoint allows us to apply the Hoeffding decomposition to
$\mu_R(\widehat{P})$ and approximate its variance using the first-order term.
The latter viewpoint is convenient for calculating the coefficients of this
approximation and for showing that the higher-order terms are negligible.

Recall from \eqref{eq:mu Phat} that
\begin{equation*}
\mu_R(\widehat P)
=
\frac{1}{|\operatorname{Aut}(R)|}
\left(\frac{n}{n-1}\right)^e
S^{-e}
\sum_{\phi:V(R)\hookrightarrow[n]}
\prod_{a\in V(R)}
D_{\phi(a)}^{r_a}.
\end{equation*}
To view $\mu_R(\widehat P)$ as a function of the vector of node degrees, denote
\[
D:=(D_1,\ldots,D_n), \qquad
d:=\E D=(d_1,\ldots,d_n), \qquad
X:=D-d.
\]
Consider the function $F:\mathbb{R}^n\to\mathbb{R}$ defined by
\begin{equation}
\label{eq:F definition}
    F(x)
    =
    \frac{1}{|\operatorname{Aut}(R)|}
    \left(\frac{n}{n-1}\right)^e
    s(x)^{-e}
    \sum_{\phi:V(R)\hookrightarrow[n]}
    \prod_{a\in V(R)}
    x_{\phi(a)}^{r_a},
\end{equation}
if $s(x):=\sum_{i=1}^n x_i\neq 0$ and $F(x)=0$ if $s(x)=0$. Thus, $\mu_R(\widehat{P})=F(D)$. However, note that $\mu_R(P)\neq F(d)$ in
general. For notational simplicity, we suppress the dependence of $F$ on the
subgraph $R$.

To calculate the variance of $\mu_R(\widehat{P})$, we use the Hoeffding
decomposition with respect to the independent edge components of
$E=A-\E A=A-P$. To connect this decomposition with $F$, let $K_n$ denote the
undirected complete graph on $n$ vertices, with unordered edge set
\[
E(K_n)=\{\{i,j\}:1\le i<j\le n\}. 
\]
For each $\alpha\in E(K_n)$, write 
$$
A_\alpha = A_{ij}, \qquad P_{\alpha} = P_{ij}, \qquad E_\alpha = E_\alpha = A_\alpha - P_\alpha. 
$$
Let \(B\in\mathbb R^{n\times |E(K_n)|}\) be the vertex-edge incidence
matrix, defined by
\[
B_{i\alpha}:=\mathbf 1\{i\in\alpha\}.
\]
Then $(BE)_i:=\sum_{\alpha} B_{i\alpha}E_\alpha = \sum_{j=1}^n E_{ij}=D_i-d_i$, so 
\[
D = d+  BE.
\]
Define the induced edge-error function 
\begin{equation}
\label{eq:untruncated G definition}
G(z):=F(d+Bz),
\qquad z\in\mathbb R^{E(K_n)}.
\end{equation}
Then
\[
\mu_R(\widehat P)=F(D)=G(E).
\]
For each $\alpha\in E(K_n)$, differentiating \(G(z)=F(d+Bz)\) with respect to the
edge-error coordinate \(z_\alpha\) gives
\begin{equation}
\label{eq:first derivarive connection}
\partial_\alpha G(z):=
    \frac{\partial G}{\partial z_\alpha}(z)
=
\nabla F(d+Bz)^\top Be_\alpha
=\sum_{i\in\alpha}F_i(d+Bz).
\end{equation}
In particular, 
\[
\partial_\alpha G(0)
=
\sum_{i\in\alpha}F_i(d).
\]
Similarly, for \(\alpha,\beta,\gamma\in E(K_n)\),
\begin{eqnarray}
\label{eq:second derivative connection}
\partial_\alpha\partial_\beta G(z)
&=&
\sum_{i\in\alpha}\sum_{j\in\beta}F_{ij}(d+Bz),\\
\partial_\alpha\partial_\beta\partial_\gamma G(z)
&=&
\label{eq:third derivative connection}
\sum_{i\in\alpha}
\sum_{j\in\beta}
\sum_{k\in\gamma}
F_{ijk}(d+Bz).
\end{eqnarray}

The main idea of the proof is to expand $G(E)$ in an orthonormal basis for the
$L^2$ space generated by the random variables in the entries of $E$ and to show
that the variance of $G(E)$ is well approximated by the first-order terms, which
admit simple formulas. To establish this approximation, we need to bound the
partial derivatives of $G$ using their relationship with the derivatives of $F$
in \eqref{eq:first derivarive connection},
\eqref{eq:second derivative connection}, and
\eqref{eq:third derivative connection}. Although these bounds hold for typical
degree vectors satisfying $D\approx \E D$, they may be too large for our
purposes when some $D_i$ are close to $n$ or when $s(D)$ is close to zero.
As a technical workaround, we truncate the degree vector and introduce truncated
versions of $F$ and $G$ next.

\subsection{Node Degree Truncation}
\label{sec:truncation global}

Assume the Chung--Lu model with $c_\theta\le \theta_i\le C_\theta$ for all
$1\le i\le n$ and $np/\log n\to\infty$. Then there exist constants
$C=C(c_\theta,C_\theta)>0$ and $c=c(c_\theta,C_\theta)>0$ such that the
regular-degree event
\begin{equation}
\label{eq:regular-degree-event}
\Omega_n
:=
\left\{
    cnp\le D_i\le Cnp
    \text{ for all } i=1,\ldots,n
\right\}
\end{equation}
holds with very high probability; that is, $\mathbb P(\Omega_n^c)$ is smaller
than any fixed negative power of $n$. This follows directly from Bernstein's
inequality and a union bound.

We now define the truncation. Choose a smooth map $\tau:\mathbb{R}\to\mathbb{R}$
such that
\[
\tau(u)=u
\quad\text{for } u\in[cnp,Cnp],
\]
while globally
\[
\frac{cnp}{2}\le \tau(u)\le 2Cnp,
\]
and
\[
|\tau'(u)|\lesssim 1,
\qquad
|\tau''(u)|\lesssim (np)^{-1},
\qquad
|\tau'''(u)|\lesssim (np)^{-2}.
\]
Such a function can be obtained by rescaling a fixed smooth cutoff function. The
bounds on the derivatives of $\tau$ are needed to control the derivatives of the
truncated versions of $F$ and $G$, defined by
\begin{equation}
\label{eq:truncated FG}
\mathcal F(x)
:= 
F\bigl(\tau(x_1),\ldots,\tau(x_n)\bigr),\qquad
\mathcal G(z)
= \mathcal{F}(d+Bz).        
\end{equation}
By definition, $F(D)=\mathcal{F}(D)$ and $G(E)=\mathcal{G}(E)$ conditional on $\Omega_n$. Also, similar to \eqref{eq:first derivarive connection},
\eqref{eq:second derivative connection}, and
\eqref{eq:third derivative connection}, for any $\alpha,\beta,\gamma\in E(K_n)$, we have 
\begin{eqnarray}
\label{eq:localized-first-derivative-connection}
\partial_\alpha \mathcal G(z)
&=&
\sum_{i\in\alpha}\mathcal F_i(d+Bz),\\
\label{eq:localized-second-derivative-connection}
\partial_\alpha\partial_\beta \mathcal G(z)
&=&
\sum_{i\in\alpha}\sum_{j\in\beta}
\mathcal F_{ij}(d+Bz),\\
\label{eq:localized-third-derivative-connection}
\partial_\alpha\partial_\beta\partial_\gamma \mathcal G(z)
&=&
\sum_{i\in\alpha}\sum_{j\in\beta}\sum_{k\in\gamma}
\mathcal F_{ijk}(d+Bz).
\end{eqnarray}
As expected, $G$ and $\mathcal{G}$ have asymptotically
equivalent variances.

\begin{lemma}[Equivalent variances]
\label{lemma:equivalent variances} For any $\kappa>0$, we have
$$
|\Var(F(D))-\Var(\mathcal{F}(D))|\lesssim n^{-\kappa}\big[1+\Var(\mathcal{F}(D))\big].
$$
\end{lemma}

\begin{proof}[Proof of Lemma~\ref{lemma:equivalent variances}]
Recall that $F(D)=\mu_R(\widehat{P})=0$ if $S=\sum_{i=1}^n D_i=0$. When $S\ge 1$, since \(0\le D_i\le n\) and
\(\sum_{a\in V(R)}r_a=2e\), we have the crude bound
\[
F(D)
=
\frac{1}{|\operatorname{Aut}(R)|}
\left(\frac{n}{n-1}\right)^e
S^{-e}
\sum_{\phi:V(R)\hookrightarrow[n]}
\prod_{a\in V(R)}
D_{\phi(a)}^{r_a}
\lesssim n^v n^{2e} = n^{v+2e}.
\]
Similarly, since \(\tau(D_i)\asymp np\) globally, we have 
\(
\mathcal F(D)\lesssim n^{v+2e}.
\)
Denote $Q = F(D)-\mathcal{F(D)}$. Since $F(D)=\mathcal{F}(D)$ conditional on $\Omega_n$, we have $Q = Q\mathbf{1}_{\Omega_n^c}$. Therefore,
$$
\Var(Q)\le \E[Q^2] = \E[Q^2\mathbf{1}_{\Omega_n^c}] \lesssim 2n^{v+2e-\kappa} 
$$
for any $\kappa>0$. 
By By Cauchy--Schwarz inequality,
\begin{eqnarray*}
|\Var(F(D))-\Var(\mathcal{F}(D))| &=& |2\Cov(\mathcal{F}(D),Q)+\Var(Q) \\
&\le& \sqrt{\Var(\mathcal{F}(D))\Var(Q)} + \Var(Q). 
\end{eqnarray*}
The lemma then follows from the bound on $\Var(Q)$ above.
\end{proof}

\subsection{Hoeffding Decomposition}
\label{sec:Hoeffding decomposition global}

We are now ready to apply Hoeffding decomposition to $\mathcal{G}(E)$. 
For each undirected edge $\alpha=\{i,j\}$, define
\[
    Z_\alpha:=\frac{E_\alpha}{\sigma_\alpha}.
\]
Then $\E Z_\alpha=0$ and $\E Z_\alpha^2=1$. Since the edge variables are
independent, the products
\[
    Z_I:=\prod_{\alpha\in I}Z_\alpha, \qquad I\subset E(K_n)
\]
form an orthonormal basis for
the $L^2$ space generated by the entries of $E = A - \E A$. Therefore,
$\mathcal{G}(E)$ admits the Hoeffding decomposition
\[
    \mathcal{G}(E)=\E \mathcal{G}(E)+\sum_{I\neq \emptyset} a_I Z_I,
    \qquad
    a_I:=\E[\mathcal{G}(E)Z_I].
\]
The first-order part is
\[
\sum_{\alpha}a_\alpha Z_\alpha = \sum_{\alpha}\frac{a_\alpha}{\sigma_\alpha} E_\alpha=:\sum_{\alpha}c_\alpha E_\alpha,
\]
where the coefficient $c_\alpha=a_\alpha/\sigma_\alpha = \E[\mathcal{G}(E) E_\alpha]/\sigma_\alpha^2$ satisfies 
\begin{eqnarray}
c_\alpha &=& 
\nonumber
\sigma_\alpha^{-2}\Big\{\E[\mathcal{G}(E)E_\alpha|A_\alpha=1]\cdot\Prob(A_\alpha=1)+\E[\mathcal{G}(E)E_\alpha|A_\alpha=0]\cdot\Prob(A_\alpha=0)\Big\}\\ 
&=& \nonumber
\sigma_\alpha^{-2}\Big\{\E[\mathcal{G}(E)(1-P_\alpha)|A_\alpha=1]\cdot P_\alpha+\E[\mathcal{G}(E)(-P_\alpha)|A_\alpha=0]\cdot (1-P_\alpha)\Big\}\\
&=&\E[\mathcal{G}(E)| A_\alpha=1]-\E[\mathcal{G}(E)| A_\alpha=0] 
\label{eq:c alpha def}
\end{eqnarray}
because $\sigma_\alpha^2 = P_\alpha(1-P_\alpha)$. 
Define the higher-order Hoeffding component by $\mathcal{G}^{(2+)}:=\sum_{|I|\geq2}a_IZ_I$. Then
\begin{equation}
\label{eq:Hoeffding decomposition of mathcal G}
\mathcal{G}(E)-\E \mathcal{G}(E)=\sum_\alpha c_\alpha E_\alpha+\mathcal{G}^{(2+)}
\end{equation}
It follows from the orthogonality of $Z_I$ that 
\begin{equation*}
\label{eq:variance decomposition of mathcal G} 
\mathrm{Var}(\mathcal{G}(E)) = \sum_\alpha c_\alpha^2 \sigma_\alpha^2 + \mathrm{Var}\left(\mathcal{G}^{(2+)}\right).
\end{equation*}
Our goal is to show that the second term is negligible and that, for the first
term, $c_\alpha$ can be approximated by
$\partial_\alpha \mathcal{G}(0)
    =
    \sum_{i\in\alpha}\mathcal{F}_i(d)$. 
To this end, we need to bound the derivatives of $\mathcal{F}$ at $d$, which we
do next.

\subsection{Derivative Bounds}
\label{sec:derivative bound global}

\begin{lemma}[Derivative bounds]
\label{lem:localized-derivative-bounds}
The function 
$\mathcal F$ defined in \eqref{eq:truncated FG} satisfies \(\mathcal F\in C^3(\mathbb R^n)\). Moreover, uniformly over all
\(x\in\mathbb R^n\) and $i,j,k\in [n]$, the following derivative bounds hold:
\begin{eqnarray*}
|\mathcal F_i(x)|
&\lesssim& n^{v-2}p^{e-1},\\
|\mathcal F_{ij}(x)|
&\lesssim& n^{v-2-\nu(i,j)}p^{e-2},\\
|\mathcal F_{ijk}(x)|
&\lesssim&
n^{v-3-\nu(i,j,k)}p^{e-3},
\end{eqnarray*}
where \(\nu(i_1,\ldots,i_m)\) denotes the number of distinct indices among
\(i_1,\ldots,i_m\).
\end{lemma}

\begin{proof}
Recall from \eqref{eq:F definition} that
$$
    F(x)
    =
    \frac{1}{|\operatorname{Aut}(R)|}
    \left(\frac{n}{n-1}\right)^e
    s(x)^{-e}
    \sum_{\phi:V(R)\hookrightarrow[n]}
    \prod_{a\in V(R)}
    x_{\phi(a)}^{r_a},
$$
where $s(x)=\sum_{i=1}^n x_i$. Denote the sum on the right-hand side by \(H\). Since \(H\) is a polynomial and \(s(x)^{-e}\) is smooth on
\(
    \{x\in\R^n:s(x)>0\},
\)
the function \(F\) is \(C^\infty\) on \(\{s\in\R^n: s(x)>0\}\).
By construction, $cnp\le \tau(u)\le Cnp $ for all $u\in\R$.
Therefore, for every \(x\in\mathbb R^n\) and $i\in[n]$,
\[
T_i(x):=\tau(x_i)\asymp np, \qquad s(T(x)):=\sum_{i=1}^n T_i(x)\asymp n^2p.
\]
In particular, \(T(x)=(T_1(x),...,T_n(x))\) always lies in the regular degree region and
\(s(T(x))>0\). Since \(T\in C^3(\mathbb R^n)\) and \(F\) is smooth on the
range of \(T\), we have $\mathcal F=F\circ T\in C^3(\mathbb R^n)$.

For simplicity, we first record the derivative bounds for \(F\) on the regular region: let
\(x\in\R^n\) be any vector satisfying $x_i\asymp np$ for all $i\in[n]$ and $s(x)\asymp np$.
For \(m\le 3\), let \(i_1,\ldots,i_m\in[n]\) and denote
$\nu=|\{i_1,\ldots,i_m\}|$. 
We claim that
\begin{equation}
\label{eq:F-regular-derivative-general}
    \left|
\partial_{i_1}\cdots\partial_{i_m}F(x)
    \right|
    \lesssim
    n^{v-m-\nu}p^{e-m}.
\end{equation}
To prove this, we first bound the derivatives of \(H\). Since \(R\) is fixed
and $\sum_{a\in V(R)}r_a=2e$, each summand in \(H(x)\) has size \(O((np)^{2e})\), and there are
\(O(n^v)\) injective maps. Hence
\[
    |H(x)|\lesssim n^{v+2e}p^{2e}.
\]
More generally, differentiating \(H\) \(m\) times with respect to
\(i_1,\ldots,i_m\) forces the embedding to use the \(\nu=|\{i_1,...,i_m\}|\) distinct vertices
among \(i_1,\ldots,i_m\). Thus \(\nu\) motif vertices are fixed, leaving
\(O(n^{v-\nu})\) choices for the remaining motif vertices. Since each derivative
removes one power of \(np\), we have
\begin{equation}
\label{eq:H-derivative-general}
    \left|
    \partial_{i_1}\cdots\partial_{i_m}H(x)
    \right|
    \lesssim
    n^{v-\nu}(np)^{2e-m} = n^{v+2e-\nu-m}p^{2e-m}.
\end{equation}
Now consider \(F(x)=c_Rs(x)^{-e}H(x)\), where $c_R$ denotes the constant.  A term in an \(m\)-th derivative
of \(F\) is obtained by putting \(r\) derivatives on \(s(x)^{-e}\) and
\(m-r\) derivatives on \(H(x)\), where \(0\le r\le m\). Derivatives of
\(s(x)^{-e}\) of order \(r\) are bounded by a constant multiple of
\[
    s(x)^{-e-r}\asymp (n^2p)^{-e-r} = n^{-2e-2r}p^{-e-r}.
\]
Suppose the \(m-r\) derivatives falling on \(H\) involve \(\nu_H\) distinct
coordinates.  By \eqref{eq:H-derivative-general}, the corresponding factor
from \(H\) is bounded by
\[
 n^{v+2e-\nu_H-m+r}p^{2e-m+r}.
\]
Thus this product-rule term is bounded by
\[
 n^{v-\nu_H-m-r}p^{e-m}.
\]
The total number \(\nu\) of distinct coordinates among \(i_1,\ldots,i_m\)
satisfies $\nu\le \nu_H+r$
because the \(r\) derivatives hitting \(s(y)^{-e}\) can introduce at most
\(r\) additional distinct coordinates. Hence
\[
    n^{v-\nu_H-m-r}p^{e-m}
    \le
    n^{v-m-\nu}p^{e-m}.
\]
This proves \eqref{eq:F-regular-derivative-general}.

We now transfer these bounds to \(\mathcal F(x)=F(T(x))\). Since
\(T(x)\) is always in the regular region, the bounds above apply to
\(F\) and its derivatives evaluated at \(T(x)\). It remains to handle the additional terms arising from the chain rule. 
For first derivatives, 
\[
    \mathcal F_i(x)
    =
    F_i(T(x))\tau'(x_i).
\]
Using $|F_i(T(x))|\lesssim n^{v-2}p^{e-1}$ from \eqref{eq:F-regular-derivative-general} and
$|\tau'(x_i)|\lesssim 1$, 
we obtain
\[
    |\mathcal F_i(x)|
    \lesssim n^{v-2}p^{e-1}.
\]
For second derivatives,
\[
    \mathcal F_{ij}(x)
    =
    F_{ij}(T(x))\tau'(x_i)\tau'(x_j)
    +
    \mathbf 1_{\{i=j\}}F_i(T(x))\tau''(x_i).
\]
If \(i\neq j\), the second term is absent, so from \eqref{eq:F-regular-derivative-general},
\[
    |\mathcal F_{ij}(x)|
    \lesssim
    |F_{ij}(T(x))|
    \lesssim
    n^{v-4}p^{e-2}.
\]
If \(i=j\), then using $|F_i(T(x))|\lesssim n^{v-2}p^{e-1}$ and $|F_{ii}(T(x))|\lesssim n^{v-3}p^{e-2}$ from \eqref{eq:F-regular-derivative-general}, and $|\tau''(x_i)|\lesssim (np)^{-1}$, we get 
\[
    |\mathcal F_{ii}(x)|
    \lesssim
    |F_{ii}(T(x))|
    +
    |F_i(T(x))|\,|\tau''(x_i)| \lesssim
    n^{v-3}p^{e-2}.
\]

It remains to bound the third derivatives.  The chain rule gives
\[
\begin{aligned}
    \mathcal F_{ijk}(x)
    &=
    F_{ijk}(T(x))\tau'(x_i)\tau'(x_j)\tau'(x_k) \\
    &\quad
    +
    \mathbf 1_{\{i=j\}}
    F_{ik}(T(x))\tau''(x_i)\tau'(x_k) \\
    &\quad
    +
    \mathbf 1_{\{i=k\}}
    F_{ij}(T(x))\tau''(x_i)\tau'(x_j) \\
    &\quad
    +
    \mathbf 1_{\{j=k\}}
    F_{ij}(T(x))\tau'(x_i)\tau''(x_j) \\
    &\quad
    +
    \mathbf 1_{\{i=j=k\}}
    F_i(T(x))\tau'''(x_i).
\end{aligned}
\]
We consider the possible index patterns.
If \(i,j,k\) are all distinct, then all indicator terms vanish. Hence by \eqref{eq:F-regular-derivative-general}, 
\[
    |\mathcal F_{ijk}(x)|
    \lesssim
    |F_{ijk}(T(x))|
    \lesssim
    n^{v-6}p^{e-3}.
\]
If exactly two of \(i,j,k\) are equal, say \(i=j\neq k\), then by \eqref{eq:F-regular-derivative-general} and $|\tau''(x_i)|\lesssim q^{-1}=(np)^{-1}$, 
\[
\begin{aligned}
    |\mathcal F_{iik}(x)|
    &\lesssim
    |F_{iik}(T(x))|
    +
    |F_{ik}(T(x))|\,|\tau''(x_i)|\lesssim n^{v-5}p^{e-3}.
\end{aligned}
\]
Finally, if \(i=j=k\), then
\[
    \mathcal F_{iii}(x)
    =
    F_{iii}(T(x))(\tau'(x_i))^3
    +
    3F_{ii}(T(x))\tau'(x_i)\tau''(x_i)
    +
    F_i(T(x))\tau'''(x_i).
\]
Using \eqref{eq:F-regular-derivative-general}, $|\tau'(x_i)|\lesssim 1$, $|\tau''(x_i)|\lesssim q^{-1}=(np)^{-1}$, and $|\tau'''(x_i)|\lesssim q^{-2}=(np)^{-2}$, we get
\[
|\mathcal F_{iii}(x)| \lesssim n^{v-4}p^{e-3}.
\]
The proof is complete.
\end{proof}

\subsection{Approximation of the Coefficients of the First-Order Terms}
\label{sec:first order coefficient approximation global}
Using the derivative bounds from Section~\ref{sec:derivative bound global}, we are now ready to
compare the exact first-order Hoeffding coefficient $c_\alpha$ with the partial
derivative
$\partial_\alpha \mathcal{G}(0)=\partial_\alpha G(0)$.

\begin{lemma}[Approximating linear coefficients of Hoeffding decomposition]\label{lem:first-projection-error}
For every unordered edge $\alpha=\{i,j\}\in E(K_n)$, we have 
\begin{equation}
\label{eq:general c approximation}
\left|c_\alpha- \partial_\alpha \mathcal{G}(0)\right| \le \sup_{z\in\R^{E(K_n)}}|\partial_\alpha^2\mathcal{G}(z)| + \sum_{\beta\neq\alpha}
      \sigma_\beta^2      \sup_{z\in\R^{E(K_n)}}|\partial_\alpha\partial_\beta^2\mathcal{G}(z)|.
\end{equation}
In particular, using the derivative bounds in Lemma~\ref{lem:localized-derivative-bounds}, we obtain
\[
    \left|c_\alpha- \partial_\alpha \mathcal{G}(0)\right|
    \lesssim
    n^{v-3}p^{e-2}.
\]
Consequently,
\[ \sum_\alpha\sigma_\alpha^2\left[c_\alpha- \partial_\alpha \mathcal{G}(0)\right]^2
    \lesssim
    n^{2v-4}p^{2e-3}.
\]
\end{lemma}

\begin{proof}
Let $E_{-\alpha}$ be the matrix obtained from $E$ by setting the $\alpha$ entry of $E$ to zero, and denote by $\E_{-\alpha}$ the expectation over the randomness of $E_{-\alpha}$. By \eqref{eq:c alpha def}, we have 
\[
\begin{aligned}
    c_\alpha
    =\E_{-\alpha}
      \left[
      \mathcal{G}(E_{-\alpha}+(1-P_\alpha)e_\alpha)
      -\mathcal{G}(E_{-\alpha}-P_\alpha e_\alpha)
      \right] 
    =\int_{-P_\alpha}^{1-P_\alpha}
      \E_{-\alpha}
      \left[\partial_\alpha \mathcal{G}(E^{(-\alpha)}+te_\alpha)\right]
      \,d t,
\end{aligned}
\]
where the last equality is the one-dimensional fundamental theorem of
calculus in the $\alpha$ coordinate. Therefore,
\[
    c_\alpha-\partial_\alpha \mathcal{G}(0)
    =\int_{-P_\alpha}^{1-P_\alpha}
      \left[
      \E_{-\alpha}\partial_\alpha \mathcal{G}(E^{(-\alpha)}+te_\alpha)
      -\partial_\alpha \mathcal{G}(0)
      \right]\,d t.
\]
The interval has length one.  It suffices to bound the integrand uniformly in
$t\in[-P_\alpha,1-P_\alpha]$.
We bound the the integrand by splitting it as 
\[
\begin{aligned}
      \Big[\E_{-\alpha}\partial_\alpha \mathcal{G}(E_{-\alpha}+te_\alpha)
      -\partial_\alpha \mathcal{G}(te_\alpha)\Big]      
      +
      \Big[\partial_\alpha \mathcal{G}(te_\alpha)-\partial_\alpha \mathcal{G}(0)\Big].
\end{aligned}
\]
The second term is deterministic.  By the mean-value theorem, 
\[
    |\partial_\alpha \mathcal{G}(te_\alpha)-\partial_\alpha \mathcal{G}(0)|
    \leq  |t|\sup_z|\partial_\alpha^2\mathcal{G}(z)|\le \sup_z|\partial_\alpha^2\mathcal{G}(z)|.
\]
It remains to bound the first term. We apply Lemma~\ref{lem:centered-expectation} to
\(
    \varphi(z):=\partial_\alpha \mathcal{G}(z+te_\alpha)
\),
where $z$ ranges over all coordinates $\beta\neq\alpha$. Since
\(
    \partial_{\beta\beta}\varphi(z)
    =\partial_\alpha\partial_\beta^2\mathcal{G}(z+te_\alpha),
\)
we have
$$
    \left|
      \E_{-\alpha}\partial_\alpha \mathcal{G}(E_{-\alpha}+te_\alpha)
      -\partial_\alpha \mathcal{G}(te_\alpha)
      \right| \leq
      \sum_{\beta\neq\alpha}
      \sigma_\beta^2      \sup_z|\partial_\alpha\partial_\beta^2\mathcal{G}(z)|.
$$
Putting these inequalities together, we obtain \eqref{eq:general c approximation}.

To obtain a concrete bound from \eqref{eq:general c approximation}, we use the
fact that derivatives of $\mathcal{G}$ can be written as sums of derivatives of
$\mathcal{F}$, and that these derivatives are uniformly bounded. By
\eqref{eq:localized-second-derivative-connection} and
Lemma~\ref{lem:localized-derivative-bounds}, the first term on the right-hand
side of \eqref{eq:general c approximation} satisfies
$$
\sup_z|\partial_\alpha^2\mathcal{G}(z)|\le \sup_x \sum_{i\in \alpha}\sum_{j\in \alpha} |\mathcal{F}_{ij}(x)|
    \lesssim n^{v-3}p^{e-2}.
$$
For the second term on the right-hand side of
\eqref{eq:general c approximation}, note that
$\partial_\alpha\partial_\beta^2\mathcal{G}(z)$ can be written as a sum of
third-order partial derivatives of $\mathcal{F}$, according to
\eqref{eq:localized-third-derivative-connection}. These derivatives are
uniformly bounded by Lemma~\ref{lem:localized-derivative-bounds}. Therefore, if
$\beta$ shares a vertex with $\alpha$, then
\[ |\partial_\alpha\partial_\beta^2\mathcal{G}(x)|
    \lesssim n^{v-4}p^{e-3},
\]
and there are $O(n)$ such $\beta$. If $\beta$ is disjoint from $\alpha$, the relevant degree third derivatives involve at least two
distinct vertex indices. Hence,
\[
|\partial_\alpha\partial_\beta^2\mathcal{G}(x)|
    \lesssim n^{v-5}p^{e-3},
\]
and there are $O(n^2)$ such $\beta$.  Since $\sigma_\beta^2\lesssim p$, it follows that 
$$
\sum_{\beta\neq\alpha}
\sigma_\beta^2
\sup_x|\partial_\alpha\partial_\beta^2\mathcal{G}(x)|
\lesssim
p\left[
    n\,n^{v-4}p^{e-3}
    +n^2\,n^{v-5}p^{e-3}
    \right] \\
    \lesssim n^{v-3}p^{e-2}.
$$
Combining the two bounds gives
\[
    |c_\alpha-\partial_\alpha \mathcal{G}(0)|\lesssim n^{v-3}p^{e-2}.
\]
Finally, there are $O(n^2)$ edges and $\sigma_\alpha^2\lesssim p$, so
$$
\sum_\alpha\sigma_\alpha^2(c_\alpha-h_\alpha)^2
    \lesssim
    n^2p\left(n^{v-3}p^{e-2}\right)^2 \\
    =n^{2v-4}p^{2e-3}.
$$
The proof is complete.
\end{proof}

\begin{lemma}[A centered expectation bound]\label{lem:centered-expectation}
Let $Y_1,\ldots,Y_m$ be independent centered random variables with
$\E Y_r^2=\tau_r^2$.  If $\varphi$ is twice continuously differentiable and
has bounded second partial derivatives, then
\[
    \left|\E\varphi(Y)-\varphi(0)\right|
    \leq
    \frac12\sum_{r=1}^m
    \tau_r^2
    \sup_x\left|\frac{\partial^2\varphi(x)}{\partial x_r^2}\right|.
\]
\end{lemma}

\begin{proof}
We use the telescope sum
\[
    \E\varphi(Y)-\varphi(0)
    =\sum_{r=1}^m
     \E\{\varphi(Y^{(r)})-\varphi(Y^{(r-1)})\},
\]
where $Y^{(r)}:=(Y_1,\ldots,Y_r,0,\ldots,0)$.
Condition on $Y_1,\ldots,Y_{r-1}$.  Taylor expanding in the $r$-th coordinate gives
\[
\begin{aligned}
    \varphi(Y^{(r)})-\varphi(Y^{(r-1)})
    =Y_r\partial_r\varphi(Y^{(r-1)})       +\int_0^{Y_r}(Y_r-u)
       \partial_{rr}\varphi(Y^{(r-1)}+ue_r)\,d u.
\end{aligned}
\]
The first term has conditional expectation zero because $Y_r$ is centered and
independent of $Y_1,\ldots,Y_{r-1}$.  The integral is bounded in absolute value
by
\[
    \frac12Y_r^2\sup_x|\partial_{rr}\varphi(x)|.
\]
Taking expectations and summing over $r$ proves the lemma.
\end{proof}

\subsection{Higher-order Variance Bound}
\label{sec:higher-order variance bound global}
We proceed to bound the variance of the higher order term $\mathcal{G}^{(2+)}$. To that end, for two distinct edges $\alpha$ and $\beta$, define the second discrete
difference of $\mathcal{G}=\mathcal{G}(E)$ by
\[
    \Delta_{\alpha\beta}\mathcal{G}
    :=
    \mathcal{G}_{\alpha\beta}^{11}-\mathcal{G}_{\alpha\beta}^{10}-\mathcal{G}_{\alpha\beta}^{01}+\mathcal{G}_{\alpha\beta}^{00},
\]
where $\mathcal{G}_{\alpha\beta}^{ab}$ denotes the value of $\mathcal{G}$ when $A_\alpha=a$ and $A_\beta=b$,
with all other entries of $A$ held fixed. 
The following lemma shows that the second discrete differences control the higher-order variance.

\begin{lemma}[Higher-order variance bound by discrete differences]\label{lem:anova-bound}
We have
\[
    \Var(\mathcal{G}^{(2+)})
    \leq \frac{1}{2}
    \sum_{\alpha\neq\beta}
    \sigma_\alpha^2\sigma_\beta^2
    \|\Delta_{\alpha\beta}\mathcal{G}\|_2^2    \leq
    \frac{1}{2}\sum_{\alpha\neq\beta}
    \sigma_\alpha^2\sigma_\beta^2
    \|\Delta_{\alpha\beta}\mathcal{G}\|_\infty^2.
\]
\end{lemma}

\begin{proof}
It is sufficient to show the first inequality because the second inequality is trivial. Since $\mathcal{G} = \E \mathcal{G} + \sum_{I\neq\emptyset} a_I Z_I$, we first observe how $\Delta_{\alpha\beta}$ acts on a basis vector $Z_I$. Recall that, for $\alpha\neq\beta$, we have
\[
    \Delta_{\alpha\beta}Z
    :=
    Z_{\alpha\beta}^{11}-Z_{\alpha\beta}^{10}-Z_{\alpha\beta}^{01}+Z_{\alpha\beta}^{00},
\]
where $Z_{\alpha\beta}^{ab}$ denotes the value of $Z$ when $A_\alpha=a$ and $A_\beta=b$,
with all other entries of $A$ held fixed.
If
$I$ does not contain both $\alpha$ and $\beta$, then $Z_I$ is independent of at
least one of the two coordinates being differenced, and
\[
    \Delta_{\alpha\beta}Z_I=0.
\]
If $I\supseteq\{\alpha,\beta\}$, then $Z_I=Z_\alpha Z_\beta Z_{I\setminus\{\alpha,\beta\}}$ and 
\[
    \Delta_{\alpha\beta}Z_I=Z_{I\setminus\{\alpha,\beta\}}\Delta_{\alpha\beta}(Z_\alpha Z_\beta).
\]
Since
\begin{eqnarray*}
\Delta_{\alpha\beta}(Z_\alpha Z_\beta) =  \frac{1-P_\alpha}{\sigma_\alpha}\cdot\frac{1-P_\beta}{\sigma_\beta} - \frac{1-P_\alpha}{\sigma_\alpha}\cdot\frac{-P_\beta}{\sigma_\beta} - \frac{-P_\alpha}{\sigma_\alpha}\cdot\frac{1-P_\beta}{\sigma_\beta} + \frac{-P_\alpha}{\sigma_\alpha}\cdot\frac{-P_\beta}{\sigma_\beta} = \frac{1}{\sigma_\alpha\sigma_\beta},   
\end{eqnarray*}
it follows that $\Delta_{\alpha\beta}Z_I
=(\sigma_\alpha\sigma_\beta)^{-1}Z_{I\setminus\{\alpha,\beta\}}$.
Applying this to the Hoeffding expansion gives
\[
    \Delta_{\alpha\beta}\mathcal{G}
    =\frac{1}{\sigma_\alpha\sigma_\beta}
      \sum_{I\supseteq\{\alpha,\beta\}}a_I
      Z_{I\setminus\{\alpha,\beta\}}.
\]
By orthonormality of $Z_I$,
\[
    \sigma_\alpha^2\sigma_\beta^2
    \|\Delta_{\alpha\beta}\mathcal{G}\|_2^2
    =\sum_{I\supseteq\{\alpha,\beta\}}a_I^2.
\]
Summing over all distinct pairs $\alpha\neq\beta$, we obtain
\[
\begin{aligned}
    \sum_{\alpha\neq\beta}
    \sigma_\alpha^2\sigma_\beta^2
    \|\Delta_{\alpha\beta}\mathcal{G}\|_2^2
    =\sum_{\alpha<\beta}
      \sum_{I\supseteq\{\alpha,\beta\}}a_I^2 
    =\sum_{|I|\geq2}|I|(|I|-1)a_I^2\ge 2\sum_{|I|\geq2}a_I^2.
\end{aligned}
\]
Since $\Var(\mathcal{G}^{(2+)})= \sum_{|S|\geq2}a_S^2$, the claim of the lemma follows.
\end{proof}

Next, we bound the discrete differences of $\mathcal{G}=\mathcal{G}(E)$ by the derivatives of $\mathcal{G}$.

\begin{lemma}[Discrete difference as an integrated edge derivative]\label{lem:discrete-integral}
For two distinct edges $\alpha$ and $\beta$, we have
\[
    \Delta_{\alpha\beta}\mathcal{G}
    =\int_0^1\int_0^1
\partial_\alpha\partial_\beta \mathcal{G}\bigl(E_{\alpha\beta}^{00}+se_\alpha+te_\beta\bigr)
      \, ds \, dt.
\]
Consequently,
\[
    |\Delta_{\alpha\beta}\mathcal{G}|
    \leq
    \sup_{z\in\R^{n\times n}}
\left|\partial_\alpha\partial_\beta \mathcal{G}\left(z\right)\right|\le\sup_{x\in\R^n} \sum_{i\in\alpha}\sum_{j\in\beta}|\mathcal{F}_{ij}(x)|.
\]
\end{lemma}

\begin{proof}
Denote $g(s,t):=\mathcal{G}(E_{\alpha\beta}^{00}+se_\alpha+te_\beta)$. Then
\[
    \Delta_{\alpha\beta}\mathcal{G}
    =g(1,1)-g(1,0)-g(0,1)+g(0,0).
\]
By applying the one-dimensional fundamental theorem of calculus first in
$t$ and then in $s$,
\[
\begin{aligned}
g(1,1)-g(1,0)-g(0,1)+g(0,0)
=\int_0^1
      \left\{\frac{\partial g}{\partial t}(1,t)
      -\frac{\partial g}{\partial t}(0,t)\right\}\,d t  
=\int_0^1\int_0^1
      \frac{\partial^2 g}{\partial s\partial t}(s,t)
      \,d s \,d t.
\end{aligned}
\]
Since
\[
    \frac{\partial^2 g}{\partial s\partial t}(s,t)
    =\partial_\alpha\partial_\beta \mathcal{G}(E_{\alpha\beta}^{00}+se_\alpha+te_\beta),
\]
the claim of the lemma follows from \eqref{eq:localized-second-derivative-connection}.
\end{proof}

Equipped with Lemma~\ref{lem:anova-bound}
, we are ready to bound the variance of $\mathcal{G}^{2+}$

\begin{lemma}[Higher-order variance bound]\label{lemma:Wge2}
Under the regular Chung--Lu assumptions,
\[
    \Var\left(\mathcal{G}^{(2+)}\right)\lesssim n^{2v-3}p^{2e-2}.
\]
\end{lemma}

\begin{proof}[Proof of Lemma~\ref{lemma:Wge2}]
By Lemma~\ref{lem:anova-bound} and the fact that $\sigma_\alpha^2=P_\alpha(1-P_\alpha)\lesssim p$, we have
\[
    \Var\left(\mathcal{G}^{(2+)}\right)
    \leq \frac{1}{2}
    \sum_{\alpha\neq\beta}
    \sigma_\alpha^2\sigma_\beta^2
    \|\Delta_{\alpha\beta}\mathcal{G}\|_\infty^2 \lesssim p^2\sum_{\alpha\neq\beta} \|\Delta_{\alpha\beta}\mathcal{G}\|_\infty^2 .
\]
By Lemma~\ref{lem:discrete-integral} and \eqref{eq:localized-second-derivative-connection}, we have 
$$
\|\Delta_{\alpha\beta}\mathcal{G}\|_\infty^2 \le \sup_{z\in\R^{n\times n}}
\left|\partial_\alpha\partial_\beta \mathcal{G}\left(z\right)\right| \le \sup_{x\in\R^n} \sum_{i\in\alpha}\sum_{j\in\beta}
|\mathcal F_{ij}(x)|.
$$
It then follows from 
Lemma~\ref{lem:localized-derivative-bounds} that
\[
\begin{aligned}
    \Var\left(\mathcal{G}^{(2+)}\right)
    &\lesssim
    p^2
    \sum_{\alpha\neq\beta}
    \left(
      n^{v-4}p^{e-2}
      +\mathbf 1_{\{\alpha\cap\beta\neq\varnothing\}}
       n^{v-3}p^{e-2}
    \right)^2.
\end{aligned}
\]
Since there are $O(n^4)$ unordered pairs of disjoint edges and $O(n^3)$ unordered
pairs of distinct edges that share a vertex, 
\[
\begin{aligned}    \Var\left(\mathcal{G}^{(2+)}\right)
    \lesssim
    p^2
    \left[
    n^4(n^{v-4}p^{e-2})^2
    +n^3(n^{v-3}p^{e-2})^2
    \right]
    \le 2n^{2v-3}p^{2e-2}.
\end{aligned}
\]
This proves the claim.
\end{proof}

\subsection{Leading Term of the Variance}

The leading term in the Hoeffding decomposition of $\mu_R(\widehat{P})$ is
$\sum_\alpha c_\alpha E_\alpha$. According to
Section~\ref{sec:first order coefficient approximation global}, the
coefficients $c_\alpha$ can be approximated by $\partial_\alpha \mathcal{G}(0)
    =
    \sum_{i\in\alpha} F_i(d)$.
Therefore, the leading term of the variance of $\mu_R(\widehat{P})$ is
\[
    V_R: = \sum_{i<j} P_{ij}(1-P_{ij})(F_i(d)+F_j(d))^2.
\]
The following lemma gives the order of this term.

\begin{lemma}[Scale of the leading variance]
\label{lemma:scale of V_R}
We have
\[
\sum_{i<j}P_{ij}(1-P_{ij})(F_i(d)+F_j(d))^2 \asymp n^{2v-2}p^{2e-1}.
\]
\end{lemma}

\begin{proof}[Proof of Lemma~\ref{lemma:scale of V_R}]
By Lemma~\ref{lem:localized-derivative-bounds}, we have $F_i(d)\lesssim n^{v-2}p^{e-1}$.  
Therefore,
\[
V_R=\sum_{i<j}P_{ij}(1-P_{ij})(F_i(d)+F_j(d))^2\lesssim
n^2p\left(n^{v-2}p^{e-1}\right)^2
=
n^{2v-2}p^{2e-1}.
\]
To obtain a lower bound for $V_R$, we first note that, since $P_{ij}$ is bounded
away from one, the Cauchy--Schwarz inequality gives
\[
V_{R}
\gtrsim 
\sum_{i<j}P_{ij}\left(F_i(d)+F_j(d)\right)^2\ge \Bigg(\sum_{i<j}P_{ij}\Bigg)^{-1}
\Bigg(\sum_{i<j}P_{ij}(F_i(d)+F_j(d))\Bigg)^2
\]
We use the homogeneity of $F$ to bound the second factor on the right-hand side.
Recall from \eqref{eq:F definition} that, for a fixed subgraph $R$,
\[
F(d)
=
C_R s(d)^{-e}H(x), \qquad C_R = \frac{1}{|\operatorname{Aut}(R)|}
\left(\frac{n}{n-1}\right)^e, 
\qquad H(d)
=
\sum_{\phi:V(R)\hookrightarrow[n]}
\prod_{a\in V(R)}d_{\phi(a)}^{r_a}.
\]
where $s(d)=\sum_{i=1}^n d_i$. 
This formula shows that $F$ is homogeneous of degree $e$, that is, $F(td)=t^eF(d)$ hold for every $t>0$. Differentiating both sides with respect to $t$ and setting $t=1$, we obtain the Euler's identity
\[
\sum_{i=1}^n d_iF_{i}(d)
=
eF(d).
\]
Since $F(d)\asymp n^vp^e$, 
this implies
$$
\sum_{i=1}^n d_iF_{i}(d) \asymp n^vp^e.  
$$
Therefore,
\[
\sum_{i<j}P_{ij}(F_i(d)+F_j(d))
=
\sum_i F_i(d)\sum_{j\neq i}P_{ij}
=
\sum_i d_i F_i(d) 
\asymp n^v p^e.
\]
Since $\sum_{i<j}P_{ij}\asymp n^2p$, it follows that 
\[
V_{R}
\gtrsim (n^2p)^{-1}(n^vp^e)^2
=
n^{2v-2}p^{2e-1}.
\]
This proves the claim of the lemma.
\end{proof}

By Lemma~\ref{lemma:scale of V_R} and the preceding discussion, we obtain
\begin{equation}
\label{eq:variace leading term global}
\Var\left(\mu_R(\widehat{P})\right)= V_R+ O\left(n^{2v-3}p^{2e-2}\right) = 
V_R \left[ 1+ O\left(\frac{1}{np}\right) \right].
\end{equation}
The following lemma shows that $V_R$ can be written
in terms of subgraph means. We will use this result to show that the second
bootstrap step can be used to estimate the variance of $\mu_R(\widehat{P})$.

\begin{lemma}[First derivative as a function of subgraph means]
\label{lemma:first derivarive in terms of rooted motif means} 
We have
$$
F_i(d) = \frac{1+O\left(n^{-1}\right)}{d_i}\sum_{a\in V(R)} \frac{r_a|\operatorname{Aut}(R,a)|}{|\operatorname{Aut}(R)|}  \mu_{R,a}^{(i)}(P) - \frac{e\left[1+O\left(n^{-1}\right)\right]}{s(d)}\mu_R(P). 
$$
\end{lemma}

\begin{proof}[Proof of Lemma~\ref{lemma:first derivarive in terms of rooted motif means}]
Recall from \eqref{eq:F definition} that for a fixed subgraph $R$, 
\[
F(d)
=
C_R s(d)^{-e}H(d), \qquad C_R = \frac{1}{|\operatorname{Aut}(R)|}
\left(\frac{n}{n-1}\right)^e, 
\qquad H(d)
=
\sum_{\phi:V(R)\hookrightarrow[n]}
\prod_{a\in V(R)}d_{\phi(a)}^{r_a}.
\]
where $s(d)=\sum_{i=1}^n d_i$. 
Differentiating $F(d)$ with respect to $d_i$ gives
$$
F_i(d) = C_R s(d)^{-e}\sum_{\phi:V(R)\hookrightarrow[n]}\left[
\sum_{a\in V(R)}
\frac{r_a\mathbf 1\{\phi(a)=i\}}{d_i}
\right]
\prod_{b\in V(R)}d_{\phi(b)}^{r_b}-eC_R s(d)^{-e-1}H(d).
$$
Note that the last term on the right-hand side is $-eF(d)/s(d)$. 
To express $F_i(d)$ in terms of subgraph means, we consider the following
degree-induced Chung--Lu model, whose edge-probability matrix $\widetilde{P}$ is given by
\[
\widetilde{P}_{ij}
:=
\frac{n}{n-1}\frac{d_id_j}{s(d)} = \frac{s(d)}{n(n-1)}\cdot\frac{nd_i}{s(d)}\cdot\frac{nd_j}{s(d)},
\qquad i\neq j.
\]
We claim that $F(d)=\mu_R(\widetilde{P})$.  
Indeed, under this model,  
$$
\widetilde{p}= \frac{s(d)}{n(n-1)}, \qquad \widetilde{\theta}_i = \frac{nd_i}{s(d)}, \qquad 
\widetilde{d}_i=\sum_{j=1}^n \widetilde{P}_{ij}=\frac{nd_i}{n-1}, \qquad s(\widetilde{d})=\sum_{i=1}^n\widetilde{d}_i=\frac{ns(d)}{n-1}.$$ 
Indeed, by \eqref{eq:mu P} and \eqref{eq:F definition}, we have
\begin{equation*}
\mu_{R}(\widetilde{P})
=
\frac{1}{|\operatorname{Aut}(R,r)|}\left(\frac{s(d)}{n(n-1)}\right)^{e}
\sum_{\substack{\phi:V(R)\hookrightarrow[n] \\ \phi(r)=i}} \left(\frac{n}{s(d)}\right)^{2e} 
\prod_{a\in V(R)}
d_{\phi(a)}^{r_a}=F(d).
\end{equation*}
In addition, for a motif vertex \(a\in V(R)\) and a graph vertex \(i\in[n]\), recall from \eqref{eq:rooted mu Phat} that the rooted motif mean under $\widetilde{P}$ is  
\[
\mu_{R,a}^{(i)}(\widetilde{P})
=
\frac{1}{|\operatorname{Aut}(R,a)|}
\left(\frac{n}{n-1}\right)^e s(d)^{-e}\sum_{\substack{\phi:V(R)\hookrightarrow[n] \\ \phi(r)=i}} \ 
\prod_{b\in V(R)}
d_{\phi(b)}^{r_b}.
\]
It follows that
$$
F_i(d) = \frac{1}{d_i}\sum_{a\in V(R)} \frac{r_a|\operatorname{Aut}(R,a)|}{|\operatorname{Aut}(R)|}  \mu_{R,a}^{(i)}(\widetilde{P}) - \frac{e}{s(d)}\mu_R(\widetilde{P}). 
$$
It remains to approximate $\widetilde{P}$ and its associated means by $P$ and
the corresponding means under $P$. Since $P_{ij}=p\theta_i\theta_j$ and $\sum_{i=1}^n \theta_i=n$, we have 
$$
d_i = np\theta_i\left[1+O(n^{-1})\right], \qquad s(d) = n^2p\left[1+O(n^{-1})\right]. 
$$
It follows that $\widetilde{P}_{ij} = P_{ij}(1+O(1/n))$, and hence, 
$$
\mu_{R,a}^{(i)}(\widetilde{P}) = \mu_{R,a}^{(i)}(P)\left[1+O(n^{-1})\right], \qquad \mu_{R}(\widetilde{P}) = \mu_{R}(P)\left[1+O(n^{-1})\right].
$$
The proof is complete. 
\end{proof}



\subsection{Variance Approximation via the Second-Level Bootstrap}
Since the variance of $\mu_R(\widehat{P})$ is not directly available, we estimate
it using the second-level bootstrap estimator
$\Var_{\widehat{P}}(\mu_R(\widehat{\widehat{P}}))$. The following lemma shows
the consistency of this estimator.

\begin{lemma}[Consistency of second-level bootstrap variance estimation]
\label{lemma:variance estimation via second layer boostrap global}  We have 
\begin{equation*}
\Var_{\widehat{P}}((\mu_R(\widehat{\widehat{P}})) =  \Var((\mu_R(\widehat P))\left[1+O\left((np)^{-1/2}\right)\right].    
\end{equation*}
\end{lemma}

\begin{proof}[Proof of Lemma~\ref{lemma:variance estimation via second layer boostrap global}]

Recall from \eqref{eq:variace leading term global} that 
\[
\Var((\mu_R(\widehat P)) = V_R\left[1+O\left(\frac{1}{np}\right)\right], \qquad
V_R
=
\sum_{i<j}
P_{ij}(1-P_{ij})
\left[F_i(d)+F_j(d)\right]^2.
\]
We now study the corresponding second-layer bootstrap estimate of $\Var((\mu_R(\widehat P))$.  Conditioning on the regular-degree event $\Omega_n$ defined in \eqref{eq:regular-degree-event}, we have   
\[
\Var_{\widehat{P}}((\mu_R(\widehat{\widehat{P}})) = \widehat{V}_R\left[1+O\left(\frac{1}{n\widehat{p}}\right)\right], \qquad
\widehat{V}_R
=
\sum_{i<j}
\widehat{P}_{ij}(1-\widehat{P}_{ij})
\left[F_i(\widehat{d}\,)+F_j(\widehat{d}\,)\right]^2,
\]
where $\widehat{d}=(\widehat{d}_1,...,\widehat{d}_n)$, and by \eqref{eq: main parameter estimators},
\begin{equation}
\label{eq:di hat approximation}
\widehat{d}_i = \sum_{j\neq i} \widehat{P}_{ij} =
\frac{n}{n-1}D_i
\left(1-\frac{D_i}{S}\right)
=D_i\left[1+O(n^{-1})\right] = d_i\left[1+O\left((np)^{-1/2}\right)\right].
\end{equation}
Note that, on the event $\Omega_n$, the degree vector is regular and hence
$F=\mathcal{F}$. By Taylor's expansion and the derivative bounds in
Lemma~\ref{lem:localized-derivative-bounds}, we have
\[
\left|F_i(\widehat{d}\,)-F_i(d)\right|
\le
\sum_{k=1}^n
\sup_x |\mathcal F_{ik}(x)|\,\left|\widehat{d}_k-d_k\right|\lesssim n^{v-3}p^{e-2}(np)^{1/2}=n^{v-5/2}p^{e-3/2}.
\]
Denote $h_{ij} := F_i(d)+F_j(d)$ and $\widehat{h}_{ij} := F_i(\widehat{d} \, )+F_j(\widehat{d} \, )$. Then $|h_{ij} - \widehat{h}_{ij}|\lesssim n^{v-5/2}p^{e-3/2}$, and therefore
$$
R_n := \sum_{i<j} P_{ij}(1-P_{ij})\left(h_{ij}-\widehat{h}_{ij}\right)^2 \lesssim pn^2 n^{2v-5}p^{2e-3} = n^{2v-3}p^{2e-2}.
$$
Since
\(
V_R\asymp n^{2v-2}p^{2e-1}
\)
by Lemma~\ref{lemma:scale of V_R}, 
this gives 
$R_n = O(V_R(np)^{-1})$.
By Cauchy--Schwarz inequality,
\[
\Bigg|
\sum_{i<j}
P_{ij}(1-P_{ij})
\left(\widehat{h}_{ij}^2-h_{ij}^2\right)
\Bigg|
\le
2V_R^{1/2}R_n^{1/2}+R_n.
\]
Using $R_n = O(V_R(np)^{-1})$, we get
\begin{equation*}
\label{eq:score-comparison}
\sum_{i<j}
P_{ij}(1-P_{ij})\widehat{h}_{ij}^2
=
V_R\left[1+O((np)^{-1/2})\right].
\end{equation*}
By \eqref{eq: main parameter estimators}, we have
\begin{equation}
\label{eq:Pij approximation}
\widehat P_{ij}(1-\widehat P_{ij})
=
P_{ij}(1-P_{ij})
\left[1+O((np)^{-1/2})\right].
\end{equation}
uniformly over \(i<j\).  Combining these estimates, we obtain
\begin{equation*}
\label{eq:VRstar-VR}
\widehat{V}_R
=
V_R\left[1+O\left((np)^{-1/2}\right)\right],
\end{equation*}
and the claim of the lemma follows.
\end{proof}

\subsection{Rooted Subgraph Counts}
\label{sec:rooted count variance}

In this section we calculate the variance of the plugin estimators of the rooted subgraph means. Recall from \eqref{eq:rooted mu Phat} that
$$
\mu_{R,\rho}^{(i)}(\widehat P)
=
\frac{1}{|\operatorname{Aut}(R,\rho)|}
\left(\frac{n}{n-1}\right)^e
S^{-e}
\sum_{\substack{\phi:V(R)\hookrightarrow[n] \\ \phi(\rho)=i}} \
\prod_{a\in V(R)}
D_{\phi(a)}^{r_a}.    
$$
Similar to \eqref{eq:F definition}, we denote
\begin{equation}
\label{eq:F definition rooted}
    F^{(i)}_\rho(x)
    =
    \frac{1}{|\operatorname{Aut}(R,\rho)|}
    \left(\frac{n}{n-1}\right)^e
    s(x)^{-e}
\sum_{\substack{\phi:V(R)\hookrightarrow[n] \\ \phi(\rho)=i}} \
    \prod_{a\in V(R)}
    x_{\phi(a)}^{r_a},
\end{equation}
In addition, let 
$$
G_\rho^{(i)}(z):=F_\rho^{(i)}(d+Bz),
\qquad z\in\mathbb R^{E(K_n)},
$$
where $B$ is the node-edge incident matrix. 
Similar to the case of global subgraph counts, we have
\begin{eqnarray}
\label{eq:rooted first derivarive connection}
\partial_\alpha G_\rho^{(i)}(z)
&=& \sum_{j\in\alpha}\partial_j F^{(i)}_\rho(d+Bz),\\
\label{eq:rooted second derivative connection}
\partial_\alpha\partial_\beta G_\rho^{(i)}(z)
&=&
\sum_{j\in\alpha}\sum_{k\in\beta}\partial_j\partial_k F_\rho^{(i)}(d+Bz),\\
\label{eq:rooted third derivative connection}
\partial_\alpha\partial_\beta\partial_\gamma G_\rho^{(i)}(z)
&=&
\sum_{j\in\alpha}
\sum_{k\in\beta}
\sum_{\ell\in\gamma}
\partial_j\partial_k\partial_\ell F_\rho^{(i)}(d+Bz).
\end{eqnarray}
To control the regularity of the degree vector, we use the same truncation function $\tau$ as defined in Section~\ref{sec:truncation global}, and define the 
degree-truncated versions of $F_\rho^{(i)}$ and $G_\rho^{(i)}$, as
\begin{equation}
\label{eq:truncated FG}
\mathcal F_\rho^{(i)}(x)
:= 
F_\rho^{(i)}\bigl(\tau(x_1),\ldots,\tau(x_n)\bigr),\qquad
\mathcal G_\rho^{(i)}(z)
= \mathcal{F}_\rho^{(i)}(d+Bz).    
\end{equation}
By definition, $F_\rho^{(i)}(D)=\mathcal{F}_\rho^{(i)}(D)$ and $G_\rho^{(i)}(E)=\mathcal{G}_\rho^{(i)}(E)$ conditional on $\Omega_n$ from \eqref{eq:regular-degree-event}. Also, similar to \eqref{eq:rooted first derivarive connection},
\eqref{eq:rooted second derivative connection}, and
\eqref{eq:rooted third derivative connection}, for any $\alpha,\beta,\gamma\in E(K_n)$, we have 
\begin{eqnarray}
\label{eq:rooted localized-first-derivative-connection}
\partial_\alpha \mathcal{G}_\rho^{(i)}(z)
&=&
\sum_{j\in\alpha}\partial_j\mathcal F_\rho^{(i)}(d+Bz),\\
\label{eq:rooted localized-second-derivative-connection}
\partial_\alpha\partial_\beta \mathcal G_\rho^{(i)}(z)
&=&
\sum_{j\in\alpha}\sum_{k\in\beta}
\partial_j\partial_k\mathcal F_\rho^{(i)}(d+Bz),\\
\label{eq:rooted localized-third-derivative-connection}
\partial_\alpha\partial_\beta\partial_\gamma \mathcal G_\rho^{(i)}(z)
&=&
\sum_{j\in\alpha}\sum_{k\in\beta}\sum_{\ell\in\gamma}
\mathcal \partial_j\partial_k\partial_\ell F_\rho^{(i)}(d+Bz).
\end{eqnarray}
We now summarize the results for rooted subgraph counts in parallel with those for global subgraph counts in the previous sections.

\paragraph{Equivalent variances.}
Since $\Prob(\Omega_n^c)$ decays faster than $n^{-\kappa}$ for every
$\kappa>0$, an analogue of Lemma~\ref{lemma:equivalent variances} holds: 
\begin{equation}
\label{eq:rooted equivalent variances}  
|\Var(F_\rho^{(i)}(D))-\Var(\mathcal{F}_\rho^{(i)}(D))|\lesssim n^{-\kappa}\big[1+\Var(\mathcal{F}_\rho^{(i)}(D))\big].
\end{equation}

\paragraph{Hoeffding decomposition.}
In the case of global counts, the variance calculation starts with the
Hoeffding decomposition in Section~\ref{sec:Hoeffding decomposition global}.
This decomposition holds for any function of $E$ and, in particular, applies to
$\mathcal{G}_\rho^{(i)}(E)$:
\begin{equation}
\label{eq:rooted Hoeffding decomposition}
\mathcal{G}_\rho^{(i)}(E)-\E \mathcal{G}_\rho^{(i)}(E) = \sum_{\alpha}c_\alpha E_\alpha + 
\left(\mathcal{G}_\rho^{(i)}\right)^{(2+)},
\end{equation}
where  
\begin{eqnarray}
c_\alpha = \sigma_\alpha^{-2} \E\left[\mathcal{G}_\rho^{(i)}(E) E_\alpha\right]
= \E\left[\mathcal{G}_\rho^{(i)}(E)| A_\alpha=1\right]-\E\left[\mathcal{G}_\rho^{(i)}(E)| A_\alpha=0\right], 
\label{eq:rooted c alpha def}
\end{eqnarray}
and $\big(\mathcal{G}_\rho^{(i)}\big)^{(2+)}$ denotes the higher-order terms in the
decomposition, which are orthogonal to $E_\alpha$. 
Our goal is to show that the second term is negligible and that, for the first
term, $c_\alpha$ can be approximated by
$\partial_\alpha \mathcal{G}_\rho^{(i)}(0)
    =
\sum_{j\in\alpha}\partial_j\mathcal{F}_\rho^{(i)}(d)$. 
To this end, we need to bound the derivatives of $\mathcal{F}_\rho^{(i)}$ at $d$, which we
do next.

\paragraph{Derivative bounds.} 
Analogously to Lemma~\ref{lem:localized-derivative-bounds} for global counts,
we have the following result for rooted subgraph counts.

\begin{lemma}[Derivative bounds for rooted counts]
\label{lem:rooted-derivative-envelope}
For \(m\le 3\) and \(k_1,\ldots,k_m\in[n]\), denote
\[
\nu_i(k_1,\ldots,k_m)
:=
\left|\{k_1,\ldots,k_m\}\setminus\{i\}\right|.
\]
Then, uniformly over all \(x\in\mathbb R^n\),
\[
\left|
\partial_{k_1}\cdots\partial_{k_m}
\mathcal F_{\rho}^{(i)}(x)
\right|
\lesssim
n^{v-1-m-\nu_i(k_1,\ldots,k_m)}p^{e-m}.
\]
\end{lemma}

\begin{proof}[Proof of Lemma~\ref{lem:rooted-derivative-envelope}]
We first prove the corresponding bound for 
\(F_{\rho}^{(i)}\) defined by \eqref{eq:F definition rooted} on the degree-regular region:  let \(x\in\R^n\) satisfying $x_\ell\asymp np$ for all $\ell\in[n]$ and
$s(x)\asymp n^2p$. 
The rooted polynomial in the formula of $F_\rho^{(i)}$ is
\[
H(x)
=
\sum_{\substack{\phi:V(R)\hookrightarrow[n]\\ \phi(\rho)=i}} \ 
\prod_{a\in V(R)}y_{\phi(a)}^{r_a}.
\]
Since the root \(\rho\) is already fixed to the graph vertex \(i\), there are
only \(v-1\) free motif vertices.  If we differentiate $H$ with respect to
\(k_1,\ldots,k_m\), the number of additional graph vertices that must be fixed
is
\[
\nu_i(k_1,\ldots,k_m)
=
\left|\{k_1,\ldots,k_m\}\setminus\{i\}\right|.
\]
Since each derivative removes one degree factor,
\[
\left|
\partial_{k_1}\cdots\partial_{k_m}
H(x)
\right|
\lesssim
n^{v-1-\nu_i(u_1,\ldots,u_m)}(np)^{2e-m}.
\]
For some constant $C_{R,\rho}$, we can write
\[
F_{\rho}^{(i)}(x)=C_{R,\rho}s(x)^{-e}H(x).
\]
A product-rule term in an \(m\)-th derivative places \(\ell\) derivatives on
\(s(x)^{-e}\) and \(m-\ell\) derivatives on \(H(x)\).  The
\(\ell\)-th derivative of \(s(x)^{-e}\) is \(O((n^2p)^{-e-\ell})\).  If the $m-\ell$
derivatives landing on \(H\) involve \(\nu_H\) distinct
coordinates different from \(i\), then the corresponding term is bounded by
\[
(n^2p)^{-e-\ell}
n^{v-1-\nu_H}(np)^{2e-(m-\ell)}=n^{v-1-m-\nu_H-\ell}p^{e-m}.
\]
Since the \(\ell\) derivatives landing on \(s(x)^{-e}\) can introduce at most
\(\ell\) additional non-root coordinates,
\[
\nu_i(k_1,\ldots,k_m)\le \nu_H+\ell.
\]
Therefore
\[
n^{v-1-m-\nu_H-\ell}p^{e-m}
\le
n^{v-1-m-\nu_i(u_1,\ldots,u_m)}p^{e-m}.
\]
This proves the claimed bound for \(F_{\rho}^{(i)}\) on the regular region.

For the degree-truncated version
\[
\mathcal F_{\rho}^{(i)}(x)
=
F_{\rho}^{(i)}(\tau(x_1),\ldots,\tau(x_n)),
\]
the vector $(\tau(x_1),\ldots,\tau(x_n))$
always lies in the regular region.  The chain rule introduces factors
\(\tau'\), \(\tau''\), and \(\tau'''\).  By construction,
\[
|\tau'|\lesssim 1,
\qquad
|\tau''|\lesssim (np)^{-1},
\qquad
|\tau'''|\lesssim (np)^{-2}.
\]
These factors have exactly the scale needed to preserve the claimed bounds. 
\end{proof}

\paragraph{Approximation of the coefficients of the first-order terms.} Since the coefficients $c_\alpha$ do not admit explicit formulas, we approximate
them by derivatives of $\mathcal{G}_\rho^{(i)}$. We use the bound \eqref{eq:general c approximation} in Lemma~\ref{lem:first-projection-error}, which holds for any smooth function, we have 
\begin{equation*}
\left|c_\alpha- \partial_\alpha \mathcal{G}_\rho^{(i)}(0)\right| \le \sup_{z\in\R^{E(K_n)}}|\partial_\alpha^2\mathcal{G}_\rho^{(i)}(z)| + \sum_{\beta\neq\alpha}
      \sigma_\beta^2      \sup_{z\in\R^{E(K_n)}}|\partial_\alpha\partial_\beta^2\mathcal{G}_\rho^{(i)}(z)|.
\end{equation*}
From \eqref{eq:rooted localized-second-derivative-connection} and Lemma~\ref{lem:rooted-derivative-envelope}, we get 
$$
\sup_{z\in\R^{E(K_n)}}|\partial_\alpha^2\mathcal{G}_\rho^{(i)}(z)| \lesssim n^{v-3}p^{e-2}\mathbf{1}_{\{i\in\alpha\}}+n^{v-4}p^{e-2}\mathbf{1}_{\{i\notin\alpha\}}.
$$
Similarly, from \eqref{eq:rooted localized-third-derivative-connection} and Lemma~\ref{lem:rooted-derivative-envelope}, and the fact that $\sigma_\beta^2\lesssim p$,  we have
$$
\sum_{\beta\neq\alpha}
      \sigma_\beta^2      \sup_{z\in\R^{E(K_n)}}|\partial_\alpha\partial_\beta^2\mathcal{G}_\rho^{(i)}(z)| \lesssim n^{v-3}p^{e-2} \mathbf{1}_{\{i\in\alpha\}} +  n^{v-4}p^{e-2} \mathbf{1}_{\{i\notin\alpha\}}. 
$$
Therefore,
$$
\left|c_\alpha- \partial_\alpha \mathcal{G}_\rho^{(i)}(0)\right| \lesssim n^{v-3}p^{e-2} \mathbf{1}_{\{i\in\alpha\}} +  n^{v-4}p^{e-2} \mathbf{1}_{\{i\notin\alpha\}}. 
$$
It then follows that
\begin{equation}
\label{eq:coeffcient approx rooted}
    \sum_\alpha \sigma_\alpha^2 \left(c_\alpha- \partial_\alpha \mathcal{G}_\rho^{(i)}(0)\right)^2\lesssim n^{2v-5}p^{2e-3}.
\end{equation}

\paragraph{Higher-order variance bound.}
Applying Lemmas~\ref{lem:anova-bound} and \ref{lem:discrete-integral} from
Section~\ref{sec:higher-order variance bound global}, which hold for arbitrary
smooth functions, we obtain
$$
\left(\mathcal{G}_\rho^{(i)}\right)^{(2+)} \le \sum_{\alpha\neq\beta}
    \sigma_\alpha^2\sigma_\beta^2
    \sup_{x\in\R^n} \sum_{k\in\alpha}\sum_{\ell\in\beta}|\partial_k\partial_\ell\mathcal{F}_\rho^{(i)}(x)|.
$$
Applying Lemma~\ref{lem:rooted-derivative-envelope} and splitting the analysis
into cases according to the value of $\nu_i(k,\ell)$, we obtain
$$
\sum_{k\in\alpha}\sum_{\ell\in\beta}|\partial_k\partial_\ell\mathcal{F}_\rho^{(i)}(x)|
\lesssim
p^{e-2}
\left[
n^{v-5}
+
n^{v-4}\mathbf 1\{\alpha\cap\beta\neq\varnothing
\text{ or } i\in\alpha\cup\beta\}
+
n^{v-3}\mathbf 1\{i\in\alpha\cap\beta\}
\right].
$$
Since \(\sigma_\alpha^2\lesssim p\), we count edge pairs as follows:
there are \(O(n^2)\) pairs of edges both incident to \(i\), \(O(n^3)\) pairs
with either exactly one root incidence or a shared non-root endpoint, and
\(O(n^4)\) remaining pairs.  Therefore,
\begin{equation}
\label{eq:higher order variance bound rooted}
\operatorname{Var}\left(\left(\mathcal G_{\rho}^{(i)}\right)^{(2+)}\right)
\lesssim
n^{2v-4}p^{2e-2}.    
\end{equation}

\paragraph{Leading term of the variance.} We first compute  $\sum_\alpha\sigma_\alpha^2(\partial_\alpha\mathcal{G}_\rho^{(i)}(0))^2 $ and its order. 
The leading term of the variance of $\mu_{R,\rho}^{(i)}(\widehat{P})$ will follow. 
To that end, we split $\sum_\alpha\sigma_\alpha^2(\partial_\alpha\mathcal{G}_\rho^{(i)}(0))^2 $ into two part depending on whether $\alpha$ contains $i$:
$$
\sum_\alpha\sigma_\alpha^2(\partial_\alpha\mathcal{G}_\rho^{(i)}(0))^2 = \sum_{\alpha\ni i} \sigma_\alpha^2(\partial_\alpha\mathcal{G}_\rho^{(i)}(0))^2 + \sum_{\alpha\not\ni i} \sigma_\alpha^2(\partial_\alpha\mathcal{G}_\rho^{(i)}(0))^2 =:Q_i+Q_{-i}.
$$
From \eqref{eq:rooted localized-first-derivative-connection}, we have 
$$
\partial_\alpha \mathcal{G}_\rho^{(i)}(0)
=
\sum_{k\in\alpha}\partial_k\mathcal F_\rho^{(i)}(d)=
\sum_{k\in\alpha}\partial_k F_\rho^{(i)}(d).$$
It follows from Lemma~\ref{lem:rooted-derivative-envelope} that, if
$i\notin \alpha$, then
$\partial_\alpha \mathcal{G}_\rho^{(i)}(0)\lesssim n^{v-3}p^{e-1}$. Hence,
\begin{eqnarray*}
Q_{-i} = \sum_{\alpha\not\ni i} \sigma_\alpha^2(\partial_\alpha\mathcal{G}_\rho^{(i)}(0))^2 \lesssim n^2p\left(n^{v-3}p^{e-1}\right)^2=n^{2v-4}p^{2e-1}.    
\end{eqnarray*}
Similarly, $\partial_k F_\rho^{(i)}(d) \lesssim n^{v-3}p^{e-1}$ if $k\ne i$ and $\partial_i F_\rho^{(i)}(d) \lesssim n^{v-2}p^{e-1}$. Therefore,  
\begin{eqnarray*}
Q_i &=& 
\sum_{k\neq i}
P_{ik}(1-P_{ik})\left(\partial_i F_\rho^{(i)}(d)+\partial_k F_\rho^{(i)}(d)\right)^2\\
&=& \left(\partial_i F_\rho^{(i)}(d)\right)^2\sum_{k\neq i}
P_{ik}(1-P_{ik})  + O(n^{2v-4}p^{2e-1}).  
\end{eqnarray*}
It remains to calculate $\partial_i F_\rho^{(i)}(d)$. Recall from \eqref{eq:F definition rooted} that  
\begin{equation*}
    F^{(i)}_\rho(d)
    = C_{R,\rho}
    s(d)^{-e} H(d), \quad 
    C_{R,\rho} =     \frac{1}{|\operatorname{Aut}(R,\rho)|}
    \left(\frac{n}{n-1}\right)^e, \quad H(d) = \sum_{\substack{\phi:V(R)\hookrightarrow[n] \\ \phi(\rho)=i}} \
    \prod_{a\in V(R)}
    d_{\phi(a)}^{r_a}.
\end{equation*}
By the chain rule,
\[
\partial_iF_{\rho}^{(i)}(d)
=
C_{R,\rho}s(d)^{-e}\partial_i H(d) -
\frac{e}{s(d)}F_{\rho}^{(i)}(d).
\]
Since $\phi$ is injective and  $\phi(\rho)=i$,
\[
\partial_i H(d) = \sum_{\substack{\phi:V(R)\hookrightarrow[n] \\ \phi(\rho)=i}} \partial_i
    \prod_{a\in V(R)}
    d_{\phi(a)}^{r_a} = \sum_{\substack{\phi:V(R)\hookrightarrow[n] \\ \phi(\rho)=i}} 
\frac{r_\rho}{d_i}
\prod_{a\in V(R)}d_{\phi(a)}^{r_a}=\frac{r_\rho}{d_i}H(d).
\]
Substituting this into the derivative formula and noting that
$F_\rho^{(i)}(d)\asymp n^{v-1}p^e$, we get
\[
\partial_i F_{\rho}^{(i)}(d)
=
\frac{r_\rho}{d_i}C_{R,\rho}s(d)^{-e}
H(d)
-
\frac{e}{s(d)}\Phi_{\rho}^{(i)}(d) = \left(\frac{r_\rho}{d_i}-\frac{e}{s(d)}\right)F_{\rho}^{(i)}(d)\asymp n^{v-2}p^{e-1}.
\]
Combining \eqref{eq:rooted equivalent variances}, \eqref{eq:rooted Hoeffding decomposition}, \eqref{eq:coeffcient approx rooted}, \eqref{eq:higher order variance bound rooted}, and the bounds for $Q_i$ and $Q_{-i}$, we obtain
\begin{equation}
\label{eq:rooted variance order}
\Var\left(\mu_{R,\rho}^{(i)}(\widehat{P})\right) = \left[\left(\frac{r_\rho}{d_i}-\frac{e}{s(d)}\right)F_{\rho}^{(i)}(d)\right]^2\sum_{k\neq i}P_{ik}(1-P_{ik}) + O(n^{2v-4}p^{2e-2})\asymp n^{2v-3}p^{2e-1}.    
\end{equation}

For the analysis of the second layer of bootstrap, we also express the rooted
degree functional \(F_r^{(i)}(d)\) as a rooted subgraph mean.  Consider the
degree-induced Chung--Lu model, whose edge-probability matrix $\widetilde{P}$ is given by
\[
\widetilde{P}_{ij}
:=
\frac{n}{n-1}\frac{d_id_j}{s(d)} = \frac{s(d)}{n(n-1)}\cdot\frac{nd_i}{s(d)}\cdot\frac{nd_j}{s(d)},
\qquad i\neq j.
\]
Under this model,  
$$
\widetilde{p}= \frac{s(d)}{n(n-1)}, \qquad \widetilde{\theta}_i = \frac{nd_i}{s(d)}, \qquad 
\widetilde{d}_i=\sum_{j=1}^n \widetilde{P}_{ij}=\frac{nd_i}{n-1}, \qquad s(\widetilde{d})=\sum_{i=1}^n\widetilde{d}_i=\frac{ns(d)}{n-1}.$$ 
We claim that $F_\rho^{(i)}(d)=\mu_{R,\rho}^{(i)}(\widetilde{P})$.  
Indeed, by \eqref{eq:rooted mu P} and \eqref{eq:F definition rooted}, we have
\begin{equation*}
\mu_{R,\rho}^{(i)}(\widetilde{P})
=
\frac{1}{|\operatorname{Aut}(R,\rho)|}\left(\frac{s(d)}{n(n-1)}\right)^{e}
\sum_{\substack{\phi:V(R)\hookrightarrow[n] \\ \phi(\rho)=i}} \left(\frac{n}{s(d)}\right)^{2e} 
\prod_{a\in V(R)}
d_{\phi(a)}^{r_a}=F_\rho^{(i)}(d).
\end{equation*}
It remains to approximate $\widetilde{P}$ and its associated means by $P$ and
the corresponding means under $P$. Since $P_{ij}=p\theta_i\theta_j$ and $\sum_{i=1}^n \theta_i=n$, we have 
$$
d_i = np\theta_i\left[1+O(n^{-1})\right], \qquad s(d) = n^2p\left[1+O(n^{-1})\right]. 
$$
It follows that $\widetilde{P}_{ij} = P_{ij}(1+O(1/n))$, and hence, 
$$F_\rho^{(i)}(d)=\mu_{R,\rho}^{(i)}(\widetilde{P}) = \mu_{R,\rho}^{(i)}(P)\left[1+O(n^{-1})\right].$$ 
From \eqref{eq:rooted variance order}, we obtain
\begin{equation}
\label{eq:rooted variance by rooted mean}
\Var\left(\mu_{R,\rho}^{(i)}(\widehat{P})\right) =\left(\frac{r_\rho\mu_{R,\rho}^{(i)}(P)}{d_i}\right)^2\sum_{k\neq i}P_{ik}(1-P_{ik}) + O(n^{2v-4}p^{2e-2}).    
\end{equation}

\paragraph{Variance approximation via the second-level bootstrap.}

Recall from \eqref{eq:rooted variance order} that
\[
\Var\left(\mu_{R,\rho}^{(i)}(\widehat P)\right)
=
V_{R,\rho}^{(i)}
+
O(n^{2v-4}p^{2e-2}),
\]
where
\[
V_{R,\rho}^{(i)}
:=
\left[
\left(
\frac{r_\rho}{d_i}
-
\frac{e}{s(d)}
\right)
F_{\rho}^{(i)}(d)
\right]^2
\sum_{k\neq i}P_{ik}(1-P_{ik}).
\]
Since $F_\rho^{(i)}(d)\asymp n^{v-1}p^e$ by \eqref{eq:F definition rooted}, it follows that $W_{R,\rho}^{(i)}
\asymp n^{2v-3}p^{2e-1}$, and therefore
\[
\Var\left(\mu_{R,\rho}^{(i)}(\widehat P)\right)
=
V_{R,\rho}^{(i)}
\left[
1+O\left(\frac1{np}\right)
\right].
\]
We now analyze the second-level bootstrap analogue. 
Conditioning on $\widehat{P}$ and the degree-regular event $\Omega_n$ defined in \eqref{eq:regular-degree-event}, we have
\[
\Var_{\widehat P}
\left(
\mu_{R,\rho}^{(i)}(\widehat{\widehat P})
\right)
=
\widehat V_{R,\rho}^{(i)}
\left[
1+O_p\left(\frac1{np}\right)
\right],
\]
where
\[
\widehat V_{R,\rho}^{(i)}
:=
\left[
\left(
\frac{r_\rho}{\widehat d_i}
-
\frac{e}{s(\widehat d \,)}
\right)
F_{\rho}^{(i)}(\widehat d \,)
\right]^2
\sum_{k\neq i}\widehat P_{ik}(1-\widehat P_{ik})\asymp n^{2v-3}p^{2e-1}.
\]
It remains to compare \(\widehat W_{R,\rho}^{(i)}\) with
\(W_{R,\rho}^{(i)}\).
From \eqref{eq: main parameter estimators}, \eqref{eq:Pij approximation}, and \eqref{eq:di hat approximation}, 
\begin{equation}
\label{eq:Phat dhat}
\widehat P_{k\ell}(1-\widehat P_{k\ell})
= P_{}(1-P_{k\ell})
\left[1+O((np)^{-1/2})\right], \qquad
\widehat{d}_k =  d_k\left[1+O\left((np)^{-1/2}\right)\right].
\end{equation}
In particular, $s(\widehat d) = \sum_{k=1}^n \widehat{d}_k
s(d)[1+O((np)^{-1/2})]$, and 
\begin{equation*}
\label{eq:rooted-coefficient-comparison}
\frac{r_\rho}{\widehat d_i}
-
\frac{e}{s(\widehat d \,)}
=
\left(
\frac{r_\rho}{d_i}
-
\frac{e}{s(d)}
\right)
\left[1+O((np)^{-1/2})\right].
\end{equation*}
Next, we compare \(F_{\rho}^{(i)}(\widehat d\,)\) and
\(F_{\rho}^{(i)}(d)\). Using Taylor's expansion, Lemma~\ref{lem:rooted-derivative-envelope}, and the bound $\widehat{d}_k-d_k\lesssim (np)^{1/2}$ from \eqref{eq:Phat dhat}, we get
\[
\left|
F_{\rho}^{(i)}(\widehat d\,)
-
F_{\rho}^{(i)}(d)
\right|
\le
\sum_{k=1}^n
\sup_x
\left|
\mathcal F_{\rho,k}^{(i)}(x)
\right|
\left|
\widehat d_k-d_k
\right|\lesssim n^{v-3/2}p^{e-1/2}.
\]
Therefore, 
\begin{equation}
\label{eq:degree and Fi}
\left(
\frac{r_\rho}{\widehat d_i}
-
\frac{e}{s(\widehat d\,)}
\right)
F_{\rho}^{(i)}(\widehat d\,)
=
\left(
\frac{r_\rho}{d_i}
-
\frac{e}{s(d)}
\right)
F_{\rho}^{(i)}(d)\left[1+O((np)^{-1/2})\right].
\end{equation}
Finally, we compare the edge-variance factor. From \eqref{eq:Phat dhat}, 
\begin{equation*}
\sum_{k\neq i}\widehat P_{ik}(1-\widehat P_{ik})
=
\sum_{k\neq i}P_{ik}(1-P_{ik})
\left[1+O((np)^{-1/2})\right].
\end{equation*}
Using \eqref{eq:degree and Fi}, we conclude that
\[
\widehat V_{R,\rho}^{(i)}
=
V_{R,\rho}^{(i)}
\left[1+O((np)^{-1/2})\right],
\]
and therefore
\begin{equation}
\label{eq:variance estimate via second layer bootstrap rooted}
\Var_{\widehat P}
\left(
\mu_{R,\rho}^{(i)}(\widehat{\widehat P})
\right)
=
\Var
\left(
\mu_{R,\rho}^{(i)}(\widehat P)
\right)
\left[
1+O_p((np)^{-1/2})
\right].    
\end{equation}


\subsection{Variance Approximation for Subgraph Counts}

In this section, we calculate the variances of the global count $T_R$ defined
in \eqref{eq:global count} and the rooted count defined in
\eqref{eq:rooted count}. We also show that these variances can be estimated by
the second-level bootstrap. The main observation is that the variance of a
subgraph count can be written as a finite sum of subgraph means, which can in
turn be estimated by the second-level bootstrap.

We start with the global count. Let $\mathcal I_R:=\{\phi:V(R)\hookrightarrow[n]\}$ be the set of injective embeddings of $R$. Recall from \eqref{eq:global count} that  
\[
T_R
=
\frac{1}{|\operatorname{Aut}(R)|}
\sum_{\phi\in\mathcal I_R}I_\phi, \qquad
I_\phi
:=
\prod_{\{a,b\}\in E(R)}
A_{\phi(a)\phi(b)}.
\]
Therefore, the variance of $T_R$ is
\[
\Var_P(T_R)
=
\frac{1}{|\operatorname{Aut}(R)|^2}
\sum_{\phi,\psi\in\mathcal I_R}
\Cov_P(I_\phi,I_\psi).
\]
For a fixed pair \((\phi,\psi)\), let \(E_\phi\) and \(E_\psi\) be the graph
edge sets used by the two embedded copies of $R$. If $E_\phi\cap E_\psi=\emptyset$
then \(I_\phi\) and \(I_\psi\) depend on disjoint sets of independent edges, so the covariance of $I_\phi$ and $I_\psi$ is zero.  
If \(E_\phi\cap E_\psi\neq\varnothing\), then
\[
\E_P[I_\phi I_\psi]
=
\prod_{e\in E_\phi\cup E_\psi}P_e, \qquad \E_P[I_\phi]\E_P[I_\psi]
=
\left(\prod_{e\in E_\phi\cup E_\psi}P_e\right)
\left(\prod_{e\in E_\phi\cap E_\psi}P_e\right).
\]
Therefore
\begin{equation}
\label{eq:covariance-overlap-form-careful}
\Cov_P(I_\phi,I_\psi)
=
\prod_{e\in E_\phi\cup E_\psi}P_e
\left[
1-\prod_{e\in E_\phi\cap E_\psi}P_e
\right].
\end{equation}
Because \(R\) is fixed, there are only finitely many possible isomorphism types
for the union of two copies of \(R\) sharing at least one edge.  Indeed, two
copies of \(R\) have at most \(2v\) vertices before identification, and there
are only finitely many ways to identify vertices of two fixed finite graphs
while preserving injectivity inside each copy.  Let
\[
\mathcal{O}_R := \{\text{overlap type}\}
\]
denote this finite collection of overlap types of two copies of $R$.  For
\(\omega\in\mathcal{O}_R\), let \(R_\omega^{+}\) be the simple union graph and let
\(R_\omega^-\) be the common edge subgraph.  Grouping the covariance sum by overlap type gives
\begin{equation}
\label{eq:global-variance-overlap-decomposition-careful}
\Var_P(T_R)
=
\sum_{\omega\in\mathcal{O}_R}
b_\omega\Psi_\omega(P),
\end{equation}
where \(b_\omega>0\) depends only on \(R\), and
\begin{eqnarray*}
\Psi_\omega(P)
&:=&
\sum_{\varphi:V(R_\omega^+)\hookrightarrow[n]}
\prod_{e\in E(R_\omega^+)}P_{\varphi(e)}
-\sum_{\varphi:V(R_\omega^+)\hookrightarrow[n]}
\prod_{e\in E(R_\omega^+)}P_{\varphi(e)}
\prod_{e\in E(R_\omega^-)}P_{\varphi(e)}\\
&=:&\Psi_\omega^+(P) - \Psi_\omega^-(P).     
\end{eqnarray*}
Here, we use $\varphi(e) =\{\varphi(a),\varphi(b)\}$ if $a,b\in V(R_\omega^+)$. 
We note that $\Psi_\omega^+(P)$ is, up to a constant factor, the expected number of
copies of $R_\omega^+$ and therefore can be handled by
Lemma~\ref{lem:fixed-simple-motif-stability}. The second sum under the
Chung--Lu model can be written as 
$$
\Psi_\omega^-(P) = 
\sum_{\varphi:V(R_\omega^+)\hookrightarrow[n]} p^{e(R_\omega^+)+e(R_\omega^-)}
\prod_{a\in V(R_\omega^+)}\theta_{\varphi(a)}^{r(R_\omega^+,a)+r(R_\omega^-,a)},
$$
where $e(R_\omega^\pm)$ are the number of edges of $R_\omega^\pm$ and
$r(R_\omega^\pm,a)$ are the degrees of $a$ as a vertex of $R_\omega^\pm$. This formula
can be viewed as the expected number of copies of a multigraph $\mathcal{R}_\omega$
obtained from $R_\omega^+$ by adding one extra copy of every edge in $R_\omega^-$.
Lemma~\ref{lem:multigraph-to-simple-lifting} makes this interpretation rigorous
and connects it to the bound in Lemma~\ref{lem:fixed-simple-motif-stability}.

\begin{lemma}[Plug-in stability for fixed simple subgraphs]
\label{lem:fixed-simple-motif-stability}
Let $R$ be a fixed connected simple graph, and let $\mu_R(P)$ and
$\mu_R(\widehat{P})$ denote the global subgraph mean and its plug-in estimator,
defined in \eqref{eq:mu P} and \eqref{eq:mu Phat}, respectively. Then
\[
\mu_R(\widehat P)
=
\mu_R(P)
\left[
1+O\left((np)^{-1}\right)
\right].
\]
Similarly, for fixed vertices $o\in V(R)$ and $i\in[n]$, let
$\mu_{R,o}^{(i)}(P)$ and $\mu_{R,o}^{(i)}(\widehat{P})$ denote the rooted
subgraph mean and its plug-in estimator, defined in \eqref{eq:rooted mu P} and
\eqref{eq:rooted mu Phat}, respectively. Then
\[
\mu_{R,o}^{(i)}(\widehat P)
=
\mu_{R,o}^{(i)}(P)
\left[
1+O_{\mathbb P}\left((np)^{-1/2}\right)
\right].
\]
The bounds are uniform over any fixed finite collection of such simple graphs.
\end{lemma}

\begin{proof}[Proof of Lemma~\ref{lem:fixed-simple-motif-stability}]
We prove the global statement first. Recall from \eqref{eq:F definition}  that
\[
F(x)
=
\frac{1}{|\operatorname{Aut}(R)|}
\left(\frac{n}{n-1}\right)^{e}
s(x)^{-e}
\sum_{\phi:V(H)\hookrightarrow[n]}
\prod_{a\in V(H)}x_{\phi(a)}^{r_a},
\]
where $s(x):=\sum_{\ell=1}^n x_\ell$
and \(r_a\) denotes the degree of  vertex \(a\) in \(H\). Then $\mu_R(\widehat P)=F(D)$, where $D$ is the vector of node degrees. Conditioning on the regular-degree event $\Omega_n$ defined in \eqref{eq:regular-degree-event}, $F(D) = \mathcal{F}(D)$, where $\mathcal{F}$ is the degree-truncated function defined in \eqref{eq:truncated FG}. 

We first bound the random fluctuation \(F(D)-F(d)\). Taylor expansion
around \(d\) gives
\[
F(D)-F(d)=\sum_{i=1}^nF_{i}(d)(D_i-d_i) + 
\int_0^1(1-t)
\sum_{i,j=1}^n
F_{ij}(d+t(D-d))(D_i-d_i)(D_j-d_j)\,dt.
\]
For the linear term, since $D_i-d_i=\sum_{j\neq i}(A_{ij}-P_{ij})$, we have 
\[
\sum_{i=1}^nF_{i}(d)(D_i-d_i)
=
\sum_{i<j}
\left[F_i(d)+F_j(d)\right](A_{ij}-P_{ij}).
\]
Since $F_i(d) = \mathcal{F}_i(d) =O(n^{v-2}p^{e-1})$ by Lemma~\ref{lem:localized-derivative-bounds}, 
Bernstein's inequality for the sum of independent centered Bernoulli variables yields
\[
\sum_{i=1}^nF_{i}(d)(D_i-d_i)
=
O\left(n^{v-1}p^{e-1/2}\right).
\]
For the quadratic Taylor remainder, from Lemma~\ref{lem:localized-derivative-bounds}, we have 
\[
|F_{ii}(x)|
\lesssim n^{v-3}p^{e-2}, \qquad
|F_{ij}(x)|
\lesssim n^{v-4}p^{e-2}, \quad i\neq j.
\]
Therefore, the quadratic remainder is bounded by
\[
n^{v-3}p^{e-2}\sum_i (D_i-d_i)^2
+
n^{v-4}p^{e-2}
\sum_{i\neq j}|(D_i-d_i)(D_j-d_j)|,
\]
which is of order $O(n^{v-1}p^{e-1})$ because $D_i-d_i = O((np)^{1/2})$. Combining the bounds for the linear and quadratic terms, we obtain
$$
F(D) - F(d) = O(n^{v-1}p^{e-1}).
$$

We now bound \(F(d)-\mu_R(P)\). Let $\widetilde{P}$ be a Chung-Lu edge probability matrix with entries
\[
\widetilde{P}_{ij}
:=
\frac{n}{n-1}\frac{d_id_j}{s(d)}.
\]
It is straightforward to verify that $\widetilde{P}_{ij} = P_{ij}[1+O(n^{-1})]$ and $F(d)=\mu_R(\widetilde{P})$. Since $R$ is fixed, it follows that 
\[F(d) = 
\mu_R(\widetilde{P})
=
\mu_R(P)\left[1+O(n^{-1})\right].
\]
Since $\mu_R(\widehat P)=F(D)$ and  $\mu_R(P)\asymp n^{v}p^{e}$, we obtain
\[
\mu_R(\widehat P)
= \mu_R(P)
\left[
1+O\left((np)^{-1}\right)
\right].
\]
The bound for rooted version follows the same argument, using Lemma~\ref{lem:rooted-derivative-envelope}. 
\end{proof}

We next show that fixed multigraph motif means can be reduced to the preceding
simple-graph case.

\begin{lemma}[Reduction of fixed multigraph functionals to simple graph functionals]
\label{lem:multigraph-to-simple-lifting}
Let \(\mathcal H\) be a fixed connected multigraph.  For each unordered pair
\(\{a,b\}\subset V(\mathcal H)\), let \(m_{ab}\) denote the number of edges
connecting \(a\) and \(b\) in \(\mathcal H\). Assume that \(P\) is a
Chung--Lu edge-probability matrix satisfying
\[
P_{ij}=p\theta_i\theta_j,
\qquad
\sum_{i=1}^n\theta_i=n,
\qquad
0<c\leq \theta_i\leq C<\infty.
\]
Consider the multigraph embedding functional
\[
M_{\mathcal H}(P)
:=
\sum_{\varphi:V(\mathcal H)\hookrightarrow[n]}
\prod_{a<b}
P_{\varphi(a)\varphi(b)}^{m_{ab}}.
\]
Construct a lifted simple graph \(H\) from \(\mathcal H\) as follows.  First, include
all vertices of \(\mathcal H\) in \(H\), and include one simple edge
\(\{a,b\}\) in \(H\) for every unordered pair \(\{a,b\}\) with
\(m_{ab}\geq 1\).  Second, for each \(\{a,b\}\) with \(m_{ab}\geq 2\), add
\(m_{ab}-1\) new pendant leaves adjacent to \(a\), and \(m_{ab}-1\) new
pendant leaves adjacent to \(b\).  Denote
\[
\kappa
:=
\sum_{\{a,b\}:m_{ab}\geq 2}
(m_{ab}-1).
\] 
Then
\[
M_{\mathcal H}(P)
= (n^2p)^{-\kappa}
M_H(P)
\left[1+O(n^{-1})\right].
\]
The same conclusion holds if the sum in $M_\mathcal{H}(P)$ is taken over all
injective maps $\varphi$ satisfying $\varphi(o)=i$ for some fixed vertices
$o\in V(\mathcal{H})$ and $i\in[n]$.
\end{lemma}

\begin{proof}
The graph \(H\) contains the original vertices \(V(\mathcal{H})\) together with the new
pendant leaves added in the construction.  Let \(\mathcal L=V(H)\setminus V(\mathcal{H})\) denote the set of
all added leaves.  Since each surplus edge contributes two leaves, $|\mathcal L|=2\kappa$. For a leaf \(\ell\in\mathcal L\), let
\[
\pi(\ell)\in V(\mathcal{H})
\]
denote its parent vertex in the original graph $\mathcal{H}$.  Thus the corresponding
pendant edge in \(H\) is $\{\pi(\ell),\ell\}$. 
For a base vertex \(a\in V(\mathcal{H})\), denote the number of new leaves attached to \(a\) by
\[
\ell_a
:=
|\{\ell\in\mathcal L:\pi(\ell)=a\}| = \sum_{b\in V(\mathcal{H}):\, m_{ab}\geq 1}(m_{ab}-1).
\]
We now expand \(M_H(P)\) by first fixing the embedding of the original vertices $V(\mathcal{H})$.
Let $\varphi:V(\mathcal{H})\hookrightarrow[n]$
be an injective embedding.  Once \(\varphi\) is fixed,
the remaining sum is over injective images of the leaves into
\([n]\setminus\varphi(V_0)\).  Hence
\[
M_H(P)
=
\sum_{\varphi:V(\mathcal{H})\hookrightarrow[n]}
\left[
\prod_{\{a,b\}:m_{ab}\geq 1}
P_{\varphi(a)\varphi(b)}
\right]
S_\varphi,
\]
where
\[
S_\varphi
:=
\sum_{\substack{\psi:\mathcal L\hookrightarrow[n]\setminus\varphi(V(\mathcal{H}))}}
\prod_{\ell\in\mathcal L}
P_{\varphi(\pi(\ell))\psi(\ell)}.
\]
Here \(\psi:\mathcal L\hookrightarrow[n]\setminus\varphi(V(\mathcal{H}))\) means that the
leaf images are all distinct and none of them coincides with the images of original vertices.
Using the Chung--Lu edge probabilities $P_{ij}=p\theta_i\theta_j$ and the fact that $\ell_a$ new leaves are attached to $a$, 
we get
\[
\prod_{\ell\in\mathcal L}
P_{\varphi(\pi(\ell))\psi(\ell)}
=
p^{|\mathcal L|}
\prod_{\ell\in\mathcal L}
\theta_{\varphi(\pi(\ell))}
\prod_{\ell\in\mathcal L}
\theta_{\psi(\ell)}
=
p^{|\mathcal L|}
\prod_{a\in V(\mathcal{H})}
\theta_{\varphi(a)}^{\ell_a}
\prod_{\ell\in\mathcal L}
\theta_{\psi(\ell)}.
\]
Thus
\[
S_\varphi
=
p^{|\mathcal L|}
\prod_{a\in V(\mathcal{H})}
\theta_{\varphi(a)}^{\ell_a}
\sum_{\substack{\psi:\mathcal L\hookrightarrow[n]\setminus\varphi(V(\mathcal{H}))}}
\prod_{\ell\in\mathcal L}\theta_{\psi(\ell)}.
\]

We claim that, uniformly over the fixed base embedding \(\varphi\),
\[
\sum_{\substack{\psi:\mathcal L\hookrightarrow[n]\setminus\varphi(V(\mathcal{H}))}}
\prod_{\ell\in\mathcal L}\theta_{\psi(\ell)}=
n^{|\mathcal L|}\left[1+O(n^{-1})\right].
\]
Indeed, without the injectivity and exclusion restrictions, the sum would be $\left(\sum_{u=1}^n\theta_u\right)^{|\mathcal L|}
=
n^{|\mathcal L|}$.
The total weight of assignments excluded by the restrictions is
\(O(n^{|\mathcal L|-1})\).  To see this, first consider assignments in which a
specified leaf is mapped to one of the finitely many images of the original vertices
\(\varphi(V(\mathcal{H}))\).  Since the weights are bounded, $\sum_{u\in\varphi(V(\mathcal{H}))}\theta_u=O(1)$, so the total contribution of such assignments is
\[
O(1)\cdot n^{|\mathcal L|-1}.
\]
Next, consider assignments in which two specified leaves are mapped to the same
vertex.  Their contribution is bounded by
\[
\left(\sum_{u=1}^n\theta_u^2\right)n^{|\mathcal L|-2} = O(n)n^{|\mathcal L|-2}.
\]
Since
\(|\mathcal L|\) and \(|V(\mathcal{H})|\) are fixed, a union bound over the finitely many
possible violations gives the claimed estimate.
Since \(|\mathcal L|=2\kappa\), we obtain
\[
S_\varphi
=
(np)^{2\kappa}
\prod_{a\in V(\mathcal{H})}
\theta_{\varphi(a)}^{\ell_a}
\left[1+O(n^{-1})\right].
\]

Now consider the contribution corresponding to the fixed base embedding
$\varphi$. Let $G$ be the simple graph obtained from $\mathcal{H}$ by merging
all multiple edges into single edges; that is, $G$ has vertex set
$V(\mathcal{H})$ and edge set
\(
    E(G)=\{\{a,b\}: m_{ab}\ge 1\}
\).  Then
\[
\prod_{\{a,b\}:m_{ab}\geq 1}
P_{\varphi(a)\varphi(b)}
=
p^{e(G)}
\prod_{a\in V(\mathcal{H})}
\theta_{\varphi(a)}^{r_a(G)},
\]
where $e(G)$ is the number of edges of $G$ and $r_a(G)$ is the degree of vertex $a$ in $G$.
Multiplying by \(S_\varphi\), we get
\[
\left[
\prod_{\{a,b\}:m_{ab}\geq 1}
P_{\varphi(a)\varphi(b)}
\right]
S_\varphi
=
n^{2\kappa}p^{e(G)+2\kappa}
\prod_{a\in V(\mathcal{H})}
\theta_{\varphi(a)}^{r_a(G)+\ell_a}
\left[1+O(n^{-1})\right].
\]
By construction,
\[
r_a(G)+\ell_a
=
\sum_{b\in V(\mathcal{H})}m_{ab}
=
r_a(\mathcal H),
\]
the degree of \(a\) in the multigraph \(\mathcal H\), counting multiplicity.
Also, the total number of edges in \(\mathcal H\), counting multiplicity, is $e(G)+\kappa$. Therefore the multigraph contribution for the same base embedding \(\varphi\)
is
\[
\prod_{a<b}
P_{\varphi(a)\varphi(b)}^{m_{ab}}
=
p^{e(G)+\kappa}
\prod_{a\in V(\mathcal{H})}
\theta_{\varphi(a)}^{r_a(\mathcal H)}.
\]
Comparing the last two displays, we find
\[
\left[
\prod_{\{a,b\}:m_{ab}\geq 1}
P_{\varphi(a)\varphi(b)}
\right]
S_\varphi
=
(n^2p)^{\kappa}
\prod_{a<b}
P_{\varphi(a)\varphi(b)}^{m_{ab}}
\left[1+O(n^{-1})\right].
\]
The error term is uniform over \(\varphi\).  Summing over all injective
\(\varphi:V(\mathcal{H})\hookrightarrow[n]\) therefore gives
\[
M_H(P)
=
(n^2p)^{\kappa}
M_{\mathcal H}(P)
\left[1+O(n^{-1})\right].
\]
The proof is complete.
\end{proof}

\begin{corollary}[Plug-in stability for fixed multigraph means]
\label{cor:multigraph-stability}
Let $\mathcal{H}$ be any fixed connected multigraph and $M_\mathcal{H}(P)$ be the functional
defined in Lemma~\ref{lem:multigraph-to-simple-lifting}. Similarly, for fixed
vertices $o\in V(\mathcal{H})$ and $i\in[n]$, denote
\[
M_{\mathcal H,o}^{(i)}(P)
:=
\sum_{\substack{\varphi:V(\mathcal H)\hookrightarrow[n]\\
\varphi(o)=i}} \ 
\prod_{a<b}
P_{\varphi(a)\varphi(b)}^{m_{ab}}.
\]
Then, for the estimate $\widehat{P}$ defined in \eqref{eq: main parameter estimators}, we have 
\begin{eqnarray*}
M_{\mathcal H}(\widehat P)
&=&
M_{\mathcal H}(P)
\left[
1+O\left((np)^{-1}\right)
\right],\\
M_{\mathcal H,o}^{(i)}(\widehat P)
&=&
M_{\mathcal H,o}^{(i)}(P)
\left[
1+O\left((np)^{-1/2}\right)
\right].
\end{eqnarray*}
\end{corollary}

\begin{proof} Let $H$ be the lifted simple graph
constructed from $\mathcal{H}$ in Lemma~\ref{lem:multigraph-to-simple-lifting}. Then by Lemma~\ref{lem:multigraph-to-simple-lifting}, there exists a constant $\kappa$ depending only on $\mathcal{H}$ such that on the regular-degree event $\Omega_n$ defined in \eqref{eq:regular-degree-event}, we have
\begin{eqnarray*}
M_{\mathcal H}(P)
&=& (n^2p)^{-\kappa}
M_H(P)
\left[1+O(n^{-1})\right],\\
M_{\mathcal H}(\widehat{P})
&=& (n^2\widehat{p})^{-\kappa}
M_H(\widehat{P})
\left[1+O(n^{-1})\right].
\end{eqnarray*}
\[
M_{\mathcal H}(P)
= (n^2p)^{-\kappa}
M_H(P)
\left[1+O(n^{-1})\right].
\]
Since $\widehat{p} = p(1+O(np)^{-1})$ and $\mu_R(\widehat P)
=
\mu_R(P)
\left[
1+O\left((np)^{-1}\right)
\right]$ by Lemma~\ref{lem:fixed-simple-motif-stability}, it follows that 
$$
M_{\mathcal H}(\widehat P)
=
M_{\mathcal H}(P)
\left[
1+O\left((np)^{-1}\right)
\right].
$$
The proof for the rooted version follows the same argument. 
\end{proof}


\section{Proofs of the Main Results}
\label{sec:proofs of main results}

\begin{proof}[Proof of Proposition~\ref{prop:bias_general_subgraph_count_DCSBM}]
We first prove \eqref{eq:bias_order_general_subgraph_count}. Recall the bias
decomposition from \eqref{eq:bias decomposition global}:
\begin{equation*}
\mathrm{Bias}(\mu_R(\widehat{P}))=p^{e}\left(\Psi_1-\Psi_2\right)+p^{e-1}\Psi_3+O(n^{v-2}p^{e-2}).    
\end{equation*} 
By Lemma~\ref{lemma:derivative bounds for Psi_k}, we have
$|\Psi_1-\Psi_2|=O(n^{v-1})$. Regarding $\Psi_3$, since $R$ is connected and
has at least $v\ge 3$ vertices, it contains a vertex $a\in V(R)$ of degree at least two, or equivalently $\binom{r_a}{2}>0$.
Therefore, a direct calculation shows $\Psi_3\asymp n^{v-1}$. Under the assumption $p\prec 1$, we obtain
\begin{equation*}
\mathrm{Bias}(\mu_R(\widehat{P}))\asymp p^{e}n^{v-1}+p^{e-1}n^{v-1}+O(n^{v-2}p^{e-2})\asymp n^{v-1}p^{e-1}.   
\end{equation*} 
The proof of \eqref{eq:bias_order_general_rooted_subgraph_count} follows the
same argument, using the bias decomposition in
\eqref{eq:bias decomposition rooted} and the bound in
\eqref{eq:rooted-Psi-size}.
\end{proof}


\begin{proof}[Proof of Lemma~\ref{lemma:variance_of_hat_mu_R}]  
We first prove \eqref{eq:variance_of_hat_mu_R}. 
Note that $\mu_R(\widehat{P})=F(D)=G(E)$. From Lemma~\ref{lemma:equivalent variances}, for each $\kappa>0$, we have 
$$
|\Var(F(D))-\Var(\mathcal{F}(D))|\lesssim n^{-\kappa}\big[1+\Var(\mathcal{F}(D))\big].
$$
The Hoeffding decomposition of $\mathcal{F}(D)=\mathcal{G}(E)$ in \eqref{eq:Hoeffding decomposition of mathcal G} gives
$$
\mathcal{G}(E)-\E \mathcal{G}(E)=\sum_\alpha c_\alpha E_\alpha+\mathcal{G}^{(2+)}.
$$
Therefore by Lemma~\ref{lemma:Wge2},
$$
\Var(\mathcal{G}(E)) = \sum_\alpha c_\alpha^2\sigma_\alpha^2 + \Var(\mathcal{G}^{(2+)}) = \sum_\alpha c_\alpha^2\sigma_\alpha^2 + O(n^{2v-3}p^{2e-2}). 
$$
Using the approximation $(t+\varepsilon)^2 = t^2 + O(t\varepsilon+\varepsilon^2)$ and 
Lemma~\ref{lem:first-projection-error}, we get  
$$
\sum_\alpha c_\alpha^2\sigma_\alpha^2 = \sum_\alpha (\partial_\alpha\mathcal{G}(0))^2\sigma_\alpha^2 +O\left(n^{v-2}p^{e-3/2}\left[\sum_\alpha (\partial_\alpha\mathcal{G}(0))^2\sigma_\alpha^2\right]^{1/2}+
n^{2v-4}p^{2e-3}\right).
$$
Since $\partial_\alpha\mathcal{G}(0)=\sum_{i\in\alpha}\mathcal{F}_i(d) = \sum_{i\in\alpha}F_i(d)$ by \eqref{eq:localized-first-derivative-connection}, it follows from Lemma~\ref{lemma:scale of V_R} that  
\[
\sum_\alpha (\partial_\alpha\mathcal{G}(0))^2\sigma_\alpha^2 = \sum_{i<j}P_{ij}(1-P_{ij})(F_i(d)+F_j(d))^2\asymp n^{2v-2}p^{2e-1}.
\]
Therefore, 
$$
\Var(\mu_R(\widehat{P})) \asymp  n^{2v-2}p^{2e-1}+O(n^{2v-3}p^{2e-2}) + O(n^{-\kappa}) \asymp n^{2v-2}p^{2e-1},
$$
and \eqref{eq:variance_of_hat_mu_R} is proved. 

The proof of \eqref{eq:variance_of_hat_mu_R_i} follows from \eqref{eq:rooted variance order}. 
\end{proof}


\begin{proof}[Proof of Proposition~\ref{prop:bias_triangle_count_SVD}]
Let \(P\) be the loopless Erd\H{o}s--R\'enyi edge-probability matrix with entries 
$P_{ij}=p\mathbf 1_{\{i\neq j\}}$. 
Then $\lambda_1:=\lambda_1(P)=(n-1)p$. By Theorem 1 of \cite{furedi1981eigenvalues}, the leading eigenvalue $\widehat{\lambda}_1$ of adjacency matrix $A$ satisfies
\[
\widehat{\lambda}_1 - \lambda_1
\rightarrow N(1-p,2p(1-p)).
\]
Moreover, $\widehat{\lambda}_1$ has sub-Gaussian tail  around its median by \cite{alon2002concentration}, uniformly on $n$. Together with the weak convergence
above, it follows that   \(\widehat{\lambda}_1 - \lambda_1\) has bounded moments:
\[
\mathbb E\widehat{\lambda}_1 - \lambda_1=1-p+o(1), \quad
\operatorname{Var}(\widehat{\lambda}_1 - \lambda_1)=2p(1-p)+o(1), \quad
\mathbb E|\widehat{\lambda}_1 - \lambda_1|^q=O_q(1).
\]
We now introduce a closely related spectral statistic. For any symmetric
matrix \(Q\), define
\[
\widetilde\mu_\Delta(Q)
:=
\frac{1}{6}\operatorname{tr}(Q^3)
=
\frac{1}{3!}
\sum_{i,j,k=1}^n Q_{ij}Q_{jk}Q_{ik}.
\]
This statistic counts triangle-like closed walks of length three and therefore
allows repeated vertices. If \(Q\) has rank one with nonzero eigenvalue
\(\lambda_1(Q)\), then
$
\widetilde\mu_\Delta(Q)
=\lambda_1(Q)^3/6
$. 
In particular, for the rank-one matrices $\widehat{P}_{\mathrm{SVD}}$ and $\widetilde P:=p\mathbf 1_n\mathbf 1_n^\top$, we have 
\[
\widetilde\mu_\Delta(\widehat P_{\mathrm{SVD}})
=\frac{1}{6}\widehat\lambda_1^3, \qquad \widetilde\mu_\Delta(\widetilde P)
=
\frac{1}{6}(np)^3.
\]

We first estimate the mean and variance of $\widetilde\mu_\Delta(\widehat P_{\mathrm{SVD}})$. Expanding
\[
\frac{1}{6}\left[\widehat\lambda_1^3-\lambda_1^3\right]
=
\frac12\lambda_1^2(\widehat{\lambda}_1 - \lambda_1)
+
\frac12\lambda_1(\widehat{\lambda}_1 - \lambda_1)^2
+
\frac16(\widehat{\lambda}_1 - \lambda_1)^3,
\]
and using the boundedness of the moments of $\widehat{\lambda}_1 - \lambda_1$ give
\[
\mathbb E_P\left[
\widetilde\mu_\Delta(\widehat P_{\mathrm{SVD}})
\right]
-
\frac{1}{6}\lambda_1^3
=
\frac12\lambda_1^2\mathbb E\delta_n
+
\frac12\lambda_1\mathbb E\delta_n^2
+
\frac16\mathbb E\delta_n^3 = 
\frac12\lambda_1^2(1-p)[1+o(1)].
\]
Since $\mu_\Delta(P)=\binom n3 p^3$, it follows that 
$
\lambda_1^3/6
=
\mu_\Delta(P)+O(np^3)
$.
Therefore we obtain
\begin{equation}
\label{eq:mu tilde svd mean}
\mathbb E_P\left[
\widetilde\mu_\Delta(\widehat P_{\mathrm{SVD}})
\right]
-
\mu_\Delta(P)
\asymp
n^2p^2.    
\end{equation}
We now calculate the variance of $\widetilde\mu_\Delta(\widehat P_{\mathrm{SVD}})$. Using the expansion of $\widehat{\lambda}_1$ above with $\delta:=\widehat{\lambda}_1-\lambda_1$, we have
\[
\widetilde\mu_\Delta(\widehat P_{\mathrm{SVD}})
-
\mathbb E_P\widetilde\mu_\Delta(\widehat P_{\mathrm{SVD}})
=
\frac12\lambda_1^2(\delta-\mathbb E\delta)+
\frac12\lambda_1(\delta^2-\mathbb E\delta^2)
+
\frac16(\delta^3-\mathbb E\delta^3).
\]
Since all moments of $\delta$ are bounded, it follows that the variance of the last two terms on the right-hand side is at most $O(\lambda_1^2)$. Moreover,
\[
\operatorname{Var}\left(\lambda_1^2\delta
\right)
= \lambda_1^4\operatorname{Var}(\delta)
=2\lambda_1^4p(1-p)[1+o(1)].
\]
Therefore using $\lambda_1\asymp np$ and $p\in(0,\varepsilon)$, 
\begin{equation}
\label{eq:mu tilde svd var}
\operatorname{Var}_P\left(
\widetilde\mu_\Delta(\widehat P_{\mathrm{SVD}})
\right)
\asymp
n^4p^5.    
\end{equation}

It remains to transfer these estimates from the spectral statistic \(\widetilde\mu_\Delta\) to the loopless triangle count \(\mu_\Delta\).  Since $\widehat P_{\mathrm{SVD}}
=
\widehat\lambda_1\widehat v_1\widehat v_1^\top$, 
\[
\mu_\Delta(\widehat P_{\mathrm{SVD}})
=
\frac16\widehat\lambda_1^3
\sum_{\substack{i,j,k\\ i,j,k\ \mathrm{distinct}}}
\widehat v_{1,i}^2
\widehat v_{1,j}^2
\widehat v_{1,k}^2.
\]
Because \(\|\widehat v_1\|_2=1\),
\[
\sum_{i,j,k}
\widehat v_{1,i}^2
\widehat v_{1,j}^2
\widehat v_{1,k}^2
=
1.
\]
Therefore the difference between \(\widetilde\mu_\Delta\) and \(\mu_\Delta\)
comes only from triples with repeated indices.  By a union bound over the
cases \(i=j\), \(i=k\), and \(j=k\),
\begin{equation}
\label{eq:mu tilde svd}
\left|
\widetilde\mu_\Delta(\widehat P_{\mathrm{SVD}})
-
\mu_\Delta(\widehat P_{\mathrm{SVD}})
\right|
\lesssim
\widehat\lambda_1^3\|\widehat v_1\|_4^4.    
\end{equation}
It follows from \cite{furedi1981eigenvalues} and \cite{alon2002concentration} that for any $q\ge 1$, 
\[
\|A-P\|= O_{L^q}(\sqrt{np}),
\]
where the notation \(X_n=O_{L^q}(a_n)\) means $\left(\mathbb E|X_n|^q\right)^{1/q}
\lesssim a_n$.
By the Davis--Kahan theorem,
\[
\left\|\widehat v_1-u\right\|_2 = O_{L^q}((np)^{-1/2}), 
\]
where $u=(n)^{-1/2}\mathbf{1}_n$ is the constant leading eigenvector of $P$. 
By the triangle inequality,
\[
\|\widehat v_1\|_4
\le
\|u\|_4+\|\widehat v_1-u\|_4\le 
n^{-1/4} +\|\widehat v_1-u\|_2.
\]
Since $p$ is fixed, taking \(L^{16}\)-norms  gives
\[
\|\|\widehat v_1\|_4\|_{L^{16}}
\le
n^{-1/4}
+
\|\|\widehat v_1-u\|_2\|_{L^{16}}
=
O(n^{-1/4}).
\]
Therefore
\[
\left\|\|\widehat v_1\|_4^4\right\|_{L^4}
=
\left(\mathbb E\|\widehat v_1\|_4^{16}\right)^{1/4}
=
\left[
\left(\mathbb E\|\widehat v_1\|_4^{16}\right)^{1/16}
\right]^4
=
O(n^{-1}).
\]
We also need an \(L^{12}\)-bound for the top eigenvalue. Since
\[
\widehat\lambda_1
\le
\|P\|+\|A-P\|=(n-1)p + O_{L^{12}}(\sqrt{np}) =O_{L^{12}}(np),
\]
it follows that 
\[
\|\widehat\lambda_1^3\|_{L^4}
=
\|\widehat\lambda_1\|_{L^{12}}^3
=
O(n^3p^3).
\]
By Hölder's inequality,
\[
\left\|
\widehat\lambda_1^3\|\widehat v_1\|_4^4
\right\|_{L^2}
\le
\|\widehat\lambda_1^3\|_{L^4}
\left\|\|\widehat v_1\|_4^4\right\|_{L^4}
=
O(n^3p^3)\cdot O(n^{-1})
=
O(n^2p^3).
\]
Combining this with \eqref{eq:mu tilde svd}, we obtain
\[
\left\|
\widetilde\mu_\Delta(\widehat P_{\mathrm{SVD}})
-
\mu_\Delta(\widehat P_{\mathrm{SVD}})
\right\|_{L^2}
=
O(n^2p^3).
\]
Consequently,
\[
\mathbb E\left|
\widetilde\mu_\Delta(\widehat P_{\mathrm{SVD}})
-
\mu_\Delta(\widehat P_{\mathrm{SVD}})
\right|
=
O(n^2p^3),
\]
and
\[
\operatorname{Var}\left(
\widetilde\mu_\Delta(\widehat P_{\mathrm{SVD}})
-
\mu_\Delta(\widehat P_{\mathrm{SVD}})
\right)
=
O(n^4p^6).
\]
The claim of the proposition then follows from \eqref{eq:mu tilde svd mean} and \eqref{eq:mu tilde svd var}.  
\end{proof}

\begin{proof}[Proof of Proposition~\ref{prop:bias_correction_general_subgraph}]
For global subgraph counts, the bias-reduction result follows from
Lemma~\ref{lemma:bias correction by bootstrap} and
\eqref{eq:bias_order_general_subgraph_count}. For rooted subgraph counts, it
follows from \eqref{eq:rooted-G-stability-final} and
\eqref{eq:bias_order_general_rooted_subgraph_count}.
\end{proof}

\begin{proof}[Proof of Proposition~\ref{prop:asymptotic_normality_triangle_count}]
We prove the result for global subgraph counts using the method of cumulants;
the proof for rooted subgraph counts follows the same argument and is therefore
omitted.

Let $\mathcal I_R:=\{\phi:V(R)\hookrightarrow[n]\}$
be the set of injective embeddings of \(R\) into \([n]\). 
Then
\[
    T_R
    =
    \frac{1}{|\operatorname{Aut}(R)|}
    \sum_{\phi\in\mathcal I_R} I_\phi, \qquad I_\phi
    :=
    \prod_{\{a,b\}\in E(R)}
    A_{\phi(a)\phi(b)}.
\]
Since \(P_{ij}\asymp p\) uniformly,
it follows that $\mathbb E T_R \asymp n^v p^e$. 
For every nonempty subgraph \(H\subseteq R\) with at least one edge, denote
\[
    \Phi_H:=n^{v(H)}p^{e(H)}, \qquad \Phi_R :=
    \min_{\substack{H\subseteq R\\ e(H)\ge 1}}
    n^{v(H)}p^{e(H)}.
\]
The assumption \(p \succ n^{-1/m_R}\) implies $\Phi_R(n,p)\to\infty$. Indeed, for every \(H\subseteq R\) with \(e(H)\ge1\), we have $e(H)/v(H)\le m_R$, and therefore \(p\succ n^{-1/m_R}\) implies $n^{v(H)}p^{e(H)}\to\infty$. 

We first determine the variance scale.  Expanding the variance,
\[
    \operatorname{Var}(T_R)
    =
    \frac{1}{|\operatorname{Aut}(R)|^2}
    \sum_{\phi,\psi\in\mathcal I_R}
    \operatorname{Cov}(I_\phi,I_\psi).
\]
For two embeddings \(\phi,\psi\), let \(E_\phi\) and \(E_\psi\) denote the sets
of graph edges used by the two embedded copies.  If $E_\phi\cap E_\psi=\emptyset$
then \(I_\phi\) and \(I_\psi\) depend on disjoint sets of independent edge
variables, and hence $\operatorname{Cov}(I_\phi,I_\psi)=0$. 
Thus only pairs of copies sharing at least one edge contribute to the variance.
Let $H$ be the common edge subgraph (so $V(H)$ is the set of endpoints of the overlapping edges) of the two copies of $R$ with \(e(H)\ge1\).  Then the union of the two copies has $2v-v(H)$
vertices and $2e-e(H)$
edges.  Since every edge probability is comparable to \(p\), the total
contribution of overlap copies with the common edge subgraph isomorphic to $H$ is of order
\[
    n^{2v-v(H)}p^{2e-e(H)}
    =
    \frac{(n^v p^e)^2}{n^{v(H)}p^{e(H)}}\le     \frac{(n^v p^e)^2}{\Phi_R}\asymp \Var(T_R).
\]
Equivalently,
\begin{equation}
\label{eq:variance of T_R}
\operatorname{Var}(T_R)
    \asymp
    \max_{\substack{H\subseteq R\\ e(H)\ge1}}
    n^{2v-v(H)}p^{2e-e(H)}.
\end{equation}
We next bound higher cumulants.  For \(k\ge3\), let $\kappa_k(T_R)$
denote the \(k\)-th cumulant of \(T_R\).  Since \(T_R\) is a finite sum of the indicators \(I_\phi\),
\[
    \kappa_k(T_R)
    =
    \frac{1}{|\operatorname{Aut}(R)|^k}
    \sum_{\phi_1,\ldots,\phi_k\in\mathcal I_R}
    \kappa(I_{\phi_1},\ldots,I_{\phi_k}).
\]
If $I_{\phi_1},\ldots,I_{\phi_k}$ can be split into two groups depending on
disjoint sets of edges then
$$\kappa(I_{\phi_1},\ldots,I_{\phi_k})=0.$$ This is a standard property of joint
cumulants of functions of independent random variables.
Equivalently, the joint cumulant can be nonzero only if the copy-intersection
graph on \(\{1,\ldots,k\}\), where \(r\) and \(s\) are adjacent whenever
\(\phi_r(R)\) and \(\phi_s(R)\) share at least one edge, is connected.
Moreover, if \(G\) is the union graph of the \(k\) embedded copies, then by Lemma~\ref{lem:moment-cumulant-subgraph-bound}, 
\[
    \left|
    \kappa(I_{\phi_1},\ldots,I_{\phi_k})
    \right|
    \lesssim_k
    \prod_{e\in E(G)}P_e
    \lesssim_k
    p^{e(G)}.
\]
Hence, for a connected \(k\)-tuple with union graph \(G\), the total
contribution is at most
\[
    O_k\bigl(n^{v(G)}p^{e(G)}\bigr).
\]

We now bound \(n^{v(G)}p^{e(G)}\).  Since the copy-intersection graph is
connected, we may order the \(k\) copies so that the first copy is arbitrary
and each later copy shares at least one edge with the union of the preceding
copies.  When the \(t\)-th copy is added, let \(H_t\subseteq R\) denote the
edge subgraph through which this new copy overlaps the previous union.  Then
\(e(H_t)\ge1\).  Adding this copy contributes at most
\[
    n^{v-v(H_t)}p^{e-e(H_t)}
    =
    \frac{n^v p^e}{n^{v(H_t)}p^{e(H_t)}}
    \le
    \frac{n^v p^e}{\Phi_R}.
\]
The first copy contributes \(n^v p^e\).  Therefore, for every connected
\(k\)-tuple,
\[
    n^{v(G)}p^{e(G)}
    \lesssim_k
    n^v p^e
    \left(\frac{n^v p^e}{\Phi_R}\right)^{k-1}
    =
    \frac{(n^v p^e)^k}{\Phi_R^{k-1}}.
\]
Because \(R\) and \(k\) are fixed, there are only finitely many possible union
types.  Consequently,
\[
    |\kappa_k(T_R)|
    \lesssim_k
    \frac{(n^v p^e)^k}{\Phi_R^{k-1}}.
\]

Denote the normalized statistic by
\[
    Z_R
    :=
    \frac{T_R-\mathbb E T_R}
    {\sqrt{\operatorname{Var}(T_R)}}.
\]
Its first cumulant is zero and its second cumulant is one.  For \(k\ge3\), since $\Var(T_R)$ is of order $n^{2v}p^{2e}/\Phi_R$, the cumulant bound gives 
\[
    |\kappa_k(Z_R)|
    =
    \frac{|\kappa_k(T_R)|}
    {\operatorname{Var}(T_R)^{k/2}}
        \lesssim_k
    \frac{(n^v p^e)^k}{\Phi_R^{k-1}}    
    \frac{\Phi_R^{k/2}}
    {(n^vp^e)^k}
    =
    \Phi_R^{-(k-2)/2}.
\]
Because \(\Phi_R\to\infty\), it follows that $\kappa_k(Z_R)\to0$ for every $k\ge 3$
Thus, the cumulants of \(Z_R\) converge to the cumulants of a standard normal
random variable:
\[
    \kappa_1(Z_R)\to0,
    \qquad
    \kappa_2(Z_R)\to1,
    \qquad
    \kappa_k(Z_R)\to0
    \quad(k\ge3).
\]
By the method of cumulants, 
\[
    Z_R
    =
    \frac{T_R-\mathbb E T_R}
    {\sqrt{\operatorname{Var}(T_R)}}
    \Rightarrow N(0,1).
\]
The proof for rooted subgraph counts follows the same argument. 
\end{proof}

\begin{proof}[Proof of Proposition~\ref{prop:variance_approximation_general}]
Recall from \eqref{eq:global-variance-overlap-decomposition-careful} that 
\begin{equation*}
\Var_P(T_R)
=
\sum_{\omega\in\mathcal{O}_R}
b_\omega\left[\Psi_\omega^+(P)-\Psi_\omega^-(P)\right],
\end{equation*}
where $\mathcal{O}_R$ denotes the collection of overlap types of two copies of $R$, \(b_\omega>0\) depends only on \(R\), and 
\begin{eqnarray*}
\Psi_\omega^+(P)
&=&
\sum_{\varphi:V(R_\omega^+)\hookrightarrow[n]}
\prod_{e\in E(R_\omega^+)}P_{\varphi(e)},\\
\Psi_\omega^-(P)&=&\sum_{\varphi:V(R_\omega^+)\hookrightarrow[n]}
\prod_{e\in E(R_\omega^+)}P_{\varphi(e)}
\prod_{e\in E(R_\omega^-)}P_{\varphi(e)}.     
\end{eqnarray*}
By definition, $\Psi_\omega^+(P) = \mu_{R_\omega^+}(P)$. It then follows from   Lemma~\ref{lem:fixed-simple-motif-stability} that
\[
\Psi_\omega^+(\widehat P)
=
\Psi_\omega^+(P)
\left[
1+O\left((np)^{-1}\right)
\right].
\]
Let \(\mathcal{R}_\omega\) be the fixed multigraph obtained from \(R_\omega^+\) by adding one extra copy of every edge in
\(R_\omega^-\). By Corollary~\ref{cor:multigraph-stability},
\[
\Psi_{\omega}^-(\widehat P)
= M_{\mathcal{R}_\omega}(\widehat{P}) = 
M_{\mathcal{R}_\omega}(P)
\left[
1+O\left((np)^{-1}\right)
\right] = \Psi_{\omega}^-(P)\left[
1+O\left((np)^{-1}\right)
\right].
\]
Since \(R_\omega^-\) contains at least one edge and \(p\to0\),
\[
\Psi_\omega^-(P)
\lesssim
p^{|E(R_\omega^-)|}\Psi_\omega^+(P)
=
o\left(\Psi_{R_\omega^+}(P)\right).
\]
It follows that
\[
\Psi_\omega^+(\widehat P) - \Psi_\omega^-(\widehat P)
=\left[\Psi_\omega^+(P) - \Psi_\omega^-(P)\right]
\left[
1+O\left((np)^{-1}\right)
\right].
\]
Using
\eqref{eq:global-variance-overlap-decomposition-careful}, we obtain
\[
\Var_{\widehat P}(T_R)
=
\Var_P(T_R)
\left[
1+O\left((np)^{-1}\right)
\right].
\]
The proof for rooted variances follows the same argument.
\end{proof}

\begin{proof}[Proof of Proposition~\ref{prop:normal_convergence_expectation_of_subgraph_count}]
We first consider the plugin estimator for the global mean $\mu_R(\widehat{P})$. 
Recall from \eqref{eq:F definition} and \eqref{eq:untruncated G definition} that
$
\mu_R(\widehat{P}) = F(D) = G(E)
$,
where $D$ is the vector of node degrees and $E=A-\E A$. Also, recall from \eqref{eq:truncated FG} the degree-truncated versions of $F$ and $G$, denoted by $\mathcal{F}$ and $\mathcal{G}$. On the regular-degree event $\Omega_n$ defined in \eqref{eq:regular-degree-event}, we have 
$$
\mu_R(\widehat{P}) = F(D) = G(E) = \mathcal{F}(D) = \mathcal{G}(E).
$$
It is therefore enough to establish the central limit theorem first for \(\mathcal G(E)\), and then transfer the result back to
\(F(D)\).

By the Hoeffding decomposition in
\eqref{eq:Hoeffding decomposition of mathcal G}, we have
\begin{equation*}
\label{eq:CLT-Hoeffding-decomposition}
\mathcal G(E)-\mathbb E\mathcal G(E)
=
\sum_{\alpha}c_\alpha E_\alpha
+
\mathcal G^{(2+)},
\end{equation*}
where \(c_\alpha\) is the first-order Hoeffding coefficient and
\(\mathcal G^{(2+)}\) contains all Hoeffding components of order at least two. By Lemma~\ref{lem:first-projection-error},
\[
|c_\alpha-\partial_\alpha\mathcal G(0)|
\lesssim
n^{v-3}p^{e-2}, \qquad
\sum_\alpha
\sigma_\alpha^2(c_\alpha-h_\alpha)^2
\lesssim
n^{2v-4}p^{2e-3},
\]
where $\sigma_\alpha^2=P_\alpha(1-P_\alpha)$, and $\partial_\alpha \mathcal{G}(0)=\mathcal{F}_i(d)+\mathcal{F}_j(d)$ if $\alpha=\{i,j\}$. 
Consider the linearized statistic
\[
L_R
:=
\sum_{\alpha}h_\alpha E_\alpha
=
\sum_{i<j}
\{\mathcal{F}_i(d)+\mathcal{F}_j(d)\}(A_{ij}-P_{ij})=:\sum_{i<j} Y_{ij}.
\]
Its variance is
\[
\operatorname{Var}(L_R)
=
\sum_{i<j}
P_{ij}(1-P_{ij})
\{\mathcal{F}_i(d)+\mathcal{F}_j(d)\}^2
=:
V_R.
\]
From the Hoeffding decomposition, we have 
\[
\mathcal G(E)-\mathbb E\mathcal G(E)
= L_R + \sum_\alpha(c_\alpha-h_\alpha)E_\alpha
+
\mathcal G^{(2+)} =:
L_R+\Delta_R.
\]
By orthogonality of Hoeffding components, Lemma~\ref{lem:first-projection-error}, and Lemma~\ref{lemma:Wge2}, 
\[
\operatorname{Var}(\Delta_R)
=
\sum_\alpha
\sigma_\alpha^2(c_\alpha-h_\alpha)^2
+
\operatorname{Var}(\mathcal G^{(2+)})\lesssim n^{2v-4}p^{2e-3}
+
n^{2v-3}p^{2e-2}\lesssim n^{2v-3}p^{2e-2}.
\]
By Lemma~\ref{lemma:scale of V_R}, 
we have $V_R\asymp n^{2v-2}p^{2e-1}$, and therefore 
\[
\frac{\operatorname{Var}(\Delta_R)}{V_R}
\lesssim
\frac{1}{np}
\to 0.
\]
Therefore, $\Delta_R/V_R^{1/2}$ converges to zero in probability.
It remains to prove a central limit theorem for $L_R$ by verifying the
Lindeberg condition. Note that $L_R$ is a sum of independent, centered random
variables, each with magnitude at most $O(n^{v-2}p^{e-1})$ by
Lemma~\ref{lem:localized-derivative-bounds}. Since
\[
\frac{O(n^{v-2}p^{e-1})}{V_R^{1/2}}
\lesssim
\frac{1}{np^{1/2}}
\to 0,
\]
for every fixed \(\varepsilon>0\),
\[
\frac{1}{V_R}
\sum_{i<j}
\mathbb E
\left[
Y_{ij}^2
\mathbf 1\{|Y_{ij}|>\varepsilon V_R^{1/2}\}
\right]
\to 0.
\]
The Lindeberg condition holds, and the Lindeberg--Feller central limit theorem
yields 
\[
\frac{L_R}{V_R^{1/2}}
\Rightarrow
N(0,1).
\]
Combining this with $\Delta_R/V_R^{1/2}\to 0$
and applying Slutsky's theorem yields
\[
\frac{
\mathcal G(E)-\mathbb E\mathcal G(E)
}{
V_R^{1/2}
}
\Rightarrow
N(0,1).
\]

Finally, since \(\mathcal G(E)=G(E)=F(D)\) on the regular-degree event
\(\Omega_n\), and since the complement of \(\Omega_n\) has exponentially small
probability while \(F(D)\) and \(\mathcal F(D)\) are polynomially bounded, we have
\[
\frac{
\{F(D)-\mathbb EF(D)\}
-
\{\mathcal F(D)-\mathbb E\mathcal F(D)\}
}{
V_R^{1/2}
}
\to 0
\qquad\text{in probability}.
\]
A final application of Slutsky's theorem gives
\[
\frac{
F(D)-\mathbb EF(D)
}{
V_R^{1/2}
}
\Rightarrow
N(0,1).
\]
Since \(F(D)=\mu_R(\widehat P)\) and $\Var(\mu_R(\widehat{P}))/V_R\to 1$, we conclude that
\[
\frac{
\mu_R(\widehat P)-\mathbb E\mu_R(\widehat P)
}{
\Var(\mu_R(\widehat{P}))^{1/2}
}
\Rightarrow
N(0,1).
\]
The proof for the rooted statistic follows the same argument. 
\end{proof}

\begin{proof}[Proof of Proposition~\ref{prop:variance_approximation_mu_hat}]
The proof of the variance approximation via the second-level bootstrap follows
from Lemma~\ref{lemma:variance estimation via second layer boostrap global} for
global counts and from
\eqref{eq:variance estimate via second layer bootstrap rooted} for rooted
counts.      
\end{proof}

\begin{proof}[Proof of Proposition~\ref{prop:variance_comparison_general}]
We first consider the global counts. From \eqref{eq:variance of T_R} and
Lemma~\ref{lemma:scale of V_R},
\[
\frac{\Var(\mu_R(\widehat{P}))}{\Var(T_R)}
\asymp
\min_{\substack{H\subseteq R\\ e(H)\ge 1}}
n^{v(H)-2}p^{e(H)-1},
\]
where the minimum is taken over all subgraphs $H$ of $R$ with at least one edge;
in particular, $H$ contains no isolated vertices. Since $R$ is connected, it
contains at least one edge. Thus, by taking $H=K_2$, the right-hand side is at
most $O(1)$.

The two variances are of the same order exactly when
$n^{v(H)-2}p^{e(H)-1}\gtrsim 1$ for every subgraph $H\subseteq R$ with
$e(H)\ge 1$. This condition holds trivially for $H$ with $e(H)=1$, so it is
enough to consider subgraphs $H$ with $e(H)\ge 2$. Hence the condition is
equivalent to
\[
p
\gtrsim
\max_{\substack{H\subseteq R\\ e(H)\ge 2}}
n^{-\frac{v(H)-2}{e(H)-1}}.
\]

The proof for rooted counts follows the same argument. We use the bound
\eqref{eq:rooted variance order} for $\mu_{R,r}^{(i)}(\widehat{P})$ and the following bound
for rooted counts:
\[
\Var(T_{R,r}^{(i)})
\asymp
\max_{\substack{H\subseteq R, \ 
e(H)\ge 1\\
H \text{ has no isolated vertices}}}
n^{2(v-1)-v(H\setminus\{r\})}p^{2e-e(H)}.
\]
This bound can be obtained by following the argument used in the proof of
Proposition~\ref{prop:asymptotic_normality_triangle_count} for global counts.
We omit the details.
\end{proof}

\begin{lemma}[Cumulant bound]
\label{lem:moment-cumulant-subgraph-bound}
Let \(A\) be the adjacency matrix of an inhomogeneous Erd\H{o}s--R\'enyi graph with independent
edges and edge probabilities \(P_e\).  For embeddings $\phi_1,\ldots,\phi_k:V(R)\hookrightarrow[n]$, denote 
$I_{\phi_i}
:=
\prod_{e\in E_{\phi_i}}A_e,
$
where \(E_{\phi_j}\) is the set of graph edges used by the embedded copy \(\phi_i(R)\).  Let  $E=E_{\phi_1}\cup\cdots\cup E_{\phi_k}$. 
Then there exists a constant \(C_k\) depending only on \(k\) such that  

\[
\left|
\kappa(I_{\phi_1},\ldots,I_{\phi_k})
\right|
\le
C_k
\prod_{e\in E}P_e.
\]
\end{lemma}

\begin{proof}[Proof of Lemma~\ref{lem:moment-cumulant-subgraph-bound}]
We use the following moment--cumulant formula. Let $X_1,\ldots,X_k$ be random
variables. Then
\[
\kappa(X_1,\ldots,X_k)
=
\sum_{\pi\in\Pi_k}
(|\pi|-1)!(-1)^{|\pi|-1}
\prod_{B\in\pi}
\mathbb E\Bigg[\prod_{i\in B}X_i\Bigg],
\]
where $\Pi_k$ denotes the set of all partitions of $\{1,2,\ldots,k\}$. 
We apply this formula with $X_i=I_{\phi_i}$. 
Fix a partition \(\pi\in\Pi_k\).  For a block \(B\in\pi\), let $E_B
:=
\bigcup_{i\in B}E_{\phi_j}$ be the set of all edges appearing in $I_{\phi_i}$, $i\in B$.
Since each \(A_e\) is Bernoulli, \(A_e^m=A_e\) for every positive integer
\(m\).  Therefore
by independence of the graph edges,
\[
\prod_{B\in\pi}\mathbb E\Bigg[\prod_{i\in B}I_{\phi_i}\Bigg]
= \prod_{B\in\pi}\E\Bigg[\prod_{i\in B}\prod_{e\in E_{\phi_i}}A_e\Bigg]=
\prod_{B\in\pi}\mathbb E\left[\prod_{e\in E_B}A_e\right]
=
\prod_{B\in\pi}\prod_{e\in E_B}P_e.
\]
Now we compare this product with the product over the full union graph \(G\).  Each
edge \(e\in E(G)\) appears in at least one of the embedded copies
\(\phi_i(R)\).  Therefore, in the partition \(\pi\), the edge \(e\) appears in
at least one block \(B\).  Hence
\[
\prod_{B\in\pi}
\prod_{e\in E_B}P_e
=
\prod_{e\in E(G)}P_e^{m_e(\pi)}
\]
for some integers \(m_e(\pi)\ge 1\).  Since $P_e^{m_e(\pi)}\le P_e$, 
\[
\prod_{B\in\pi}
\mathbb E\Bigg[\prod_{j\in B}I_{\phi_j}\Bigg]
\le
\prod_{e\in E(G)}P_e.
\]
Using the moment--cumulant formula, we obtain
\[
\left|
\kappa(I_{\phi_1},\ldots,I_{\phi_k})
\right|
\le
\left[
\sum_{\pi\in\Pi_k}(|\pi|-1)!
\right]
\prod_{e\in E(G)}P_e = 
C_k\prod_{e\in E(G)}P_e.
\]
This proves the lemma.
\end{proof}


\end{appendix}

\bibliographystyle{alpha}
\bibliography{main}       

@article{neyman1948consistent,
  author  = {Neyman, J. and Scott, E. L.},
  title   = {Consistent estimates based on partially consistent observations},
  journal = {Econometrica},
  volume  = {16},
  number  = {1},
  pages   = {1--32},
  year    = {1948}
}

@article{lancaster2000incidental,
  author  = {Lancaster, T.},
  title   = {The incidental parameter problem since 1948},
  journal = {Journal of Econometrics},
  volume  = {95},
  number  = {2},
  pages   = {391--413},
  year    = {2000}
}

@article{hahn2004jackknife,
  author  = {Hahn, J. and Newey, W.},
  title   = {Jackknife and analytical bias reduction for nonlinear panel models},
  journal = {Econometrica},
  volume  = {72},
  number  = {4},
  pages   = {1295--1319},
  year    = {2004}
}

@article{koltchinskii2022estimation,
  author  = {Koltchinskii, V. and Zhilova, M.},
  title   = {Estimation of smooth functionals in high-dimensional models: Bootstrap chains and Gaussian approximation},
  journal = {The Annals of Statistics},
  volume  = {50},
  number  = {4},
  pages   = {2381--2405},
  year    = {2022}
}

@article{chernozhukov2018double,
  author  = {Chernozhukov, V. and Chetverikov, D. and Demirer, M. and Duflo, E. and Hansen, C. and Newey, W. and Robins, J.},
  title   = {Double/debiased machine learning for treatment and structural parameters},
  journal = {The Econometrics Journal},
  volume  = {21},
  number  = {1},
  pages   = {C1--C68},
  year    = {2018}
}

@article{alon2002concentration,
  author  = {Alon, Noga and Krivelevich, Michael and Vu, Van H.},
  title   = {On the concentration of eigenvalues of random symmetric matrices},
  journal = {Israel Journal of Mathematics},
  volume  = {131},
  pages   = {259--267},
  year    = {2002},
  doi     = {10.1007/BF02785860}
}

@article{ZhangXiaReducedCount2022,
  author  = {Shao, Meijia and Xia, Dong and Zhang, Yuan and Wu, Qiong and Chen, Shuo},
  title   = {Higher-order accurate two-sample network inference and network hashing},
  journal = {Journal of the American Statistical Association},
  pages   = {1--26},
  year    = {2025},
  doi     = {10.1080/01621459.2025.2520459}
}

@article{wasserman1994social,
  title={Social network analysis: Methods and applications},
  author={Wasserman, Stanley and Faust, Katherine and others},
  year={1994},
  publisher={Cambridge university press}
}

@article{bickel2011method,
  title={The method of moments and degree distributions for network models},
  author={Bickel, Peter J and Chen, Aiyou and Levina, Elizaveta and others},
  journal={The Annals of Statistics},
  volume={39},
  number={5},
  pages={2280--2301},
  year={2011},
  publisher={Institute of Mathematical Statistics}
}

@article{maugis2020central,
  author  = {Maugis, P. A.},
  title   = {Central limit theorems for local network statistics},
  journal = {Biometrika},
  volume  = {111},
  number  = {3},
  pages   = {743--754},
  year    = {2024},
  doi     = {10.1093/biomet/asad080}
}

@inproceedings{young2007random,
  title={Random dot product graph models for social networks},
  author={Young, Stephen J. and Scheinerman, Edward R.},
  booktitle={International Workshop on Algorithms and Models for the Web-Graph},
  pages={138--149},
  year={2007},
  organization={Springer}
}

@article{borgatti2005centrality,
  title={Centrality and network flow},
  author={Borgatti, Stephen P},
  journal={Social networks},
  volume={27},
  number={1},
  pages={55--71},
  year={2005},
  publisher={Elsevier}
}

@article{bonacich1987power,
  title={Power and centrality: A family of measures},
  author={Bonacich, Phillip},
  journal={American journal of sociology},
  volume={92},
  number={5},
  pages={1170--1182},
  year={1987},
  publisher={University of Chicago Press}
}

@article{milo2002network,
  title={Network motifs: simple building blocks of complex networks},
  author={Milo, Ron and Shen-Orr, Shai and Itzkovitz, Shalev and Kashtan, Nadav and Chklovskii, Dmitri and Alon, Uri},
  journal={Science},
  volume={298},
  number={5594},
  pages={824--827},
  year={2002},
  publisher={American Association for the Advancement of Science}
}

@article{alon2007network,
  title={Network motifs: theory and experimental approaches},
  author={Alon, Uri},
  journal={Nature Reviews Genetics},
  volume={8},
  number={6},
  pages={450--461},
  year={2007},
  publisher={Nature Publishing Group}
}

@book{newman2018networks,
  title={Networks},
  author={Newman, Mark},
  year={2018},
  publisher={Oxford University Press}
}

@article{maugis2020testing,
  title={Testing for equivalence of network distribution using subgraph counts},
  author={Maugis, P-AG and Olhede, S. C. and Priebe, C. E. and Wolfe, P. J.},
  journal={Journal of Computational and Graphical Statistics},
  volume={29},
  number={3},
  pages={455--465},
  year={2020},
  publisher={Taylor \& Francis}
}

@article{nowicki1988subgraph,
  title={Subgraph counts in random graphs using incomplete U-statistics methods},
  author={Nowicki, Krzysztof and Wierman, John C.},
  journal={Discrete Mathematics},
  volume={72},
  number={1-3},
  pages={299--310},
  year={1988},
  publisher={Elsevier}
}

@article{erdos1960evolution,
  title={On the evolution of random graphs},
  author={Erdos, Paul and R{\'e}nyi, Alfr{\'e}d and others},
  journal={Publ. Math. Inst. Hung. Acad. Sci},
  volume={5},
  number={1},
  pages={17--60},
  year={1960},
  publisher={Citeseer}
}

@inproceedings{przytycka2006important,
  title={An important connection between network motifs and parsimony models},
  author={Przytycka, Teresa M.},
  booktitle={Annual International Conference on Research in Computational Molecular Biology},
  pages={321--335},
  year={2006},
  organization={Springer}
}

@article{rucinski1988small,
  title={When are small subgraphs of a random graph normally distributed?},
  author={Ruci{\'n}ski, Andrzej},
  journal={Probability Theory and Related Fields},
  volume={78},
  number={1},
  pages={1--10},
  year={1988},
  publisher={Springer}
}

@article{bhattacharyya2015subsampling,
  title={Subsampling bootstrap of count features of networks},
  author={Bhattacharyya, Sharmodeep and Bickel, Peter J and others},
  journal={Annals of Statistics},
  volume={43},
  number={6},
  pages={2384--2411},
  year={2015},
  publisher={Institute of Mathematical Statistics}
}

@article{holland1983stochastic,
  title={Stochastic blockmodels: First steps},
  author={Holland, Paul W and Laskey, Kathryn Blackmond and Leinhardt, Samuel},
  journal={Social networks},
  volume={5},
  number={2},
  pages={109--137},
  year={1983},
  publisher={Elsevier}
}

@article{amini2013pseudo,
  title={Pseudo-likelihood methods for community detection in large sparse networks},
  author={Amini, Arash A and Chen, Aiyou and Bickel, Peter J and Levina, Elizaveta},
  journal={The Annals of Statistics},
  volume={41},
  number={4},
  pages={2097--2122},
  year={2013},
  publisher={Institute of Mathematical Statistics}
}

@article{borgs2010moments,
  title={Moments of two-variable functions and the uniqueness of graph limits},
  author={Borgs, Christian and Chayes, Jennifer and Lov{\'a}sz, L{\'a}szl{\'o}},
  journal={Geometric and functional analysis},
  volume={19},
  number={6},
  pages={1597--1619},
  year={2010},
  publisher={Springer}
}

@article{lunde2023subsampling,
  title={Subsampling sparse graphons under minimal assumptions},
  author={Lunde, Robert and Sarkar, Purnamrita},
  journal={Biometrika},
  volume={110},
  number={1},
  pages={15--32},
  year={2023},
  publisher={Oxford University Press}
}

@book{bose2018u,
  title={U-statistics, Mm-estimators and Resampling},
  author={Bose, Arup and Chatterjee, Snigdhansu},
  year={2018},
  publisher={Springer}
}

@book{hall2013bootstrap,
  title={The bootstrap and Edgeworth expansion},
  author={Hall, Peter},
  year={2013},
  publisher={Springer Science \& Business Media}
}

@article{efron1983estimating,
  title={Estimating the error rate of a prediction rule: improvement on cross-validation},
  author={Efron, Bradley},
  journal={Journal of the American statistical association},
  volume={78},
  number={382},
  pages={316--331},
  year={1983},
  publisher={Taylor \& Francis}
}

@article{hall1988bootstrap,
  title={On bootstrap resampling and iteration},
  author={Hall, Peter and Martin, Michael A},
  journal={Biometrika},
  volume={75},
  number={4},
  pages={661--671},
  year={1988},
  publisher={Oxford University Press}
}

@article{lin2020trading,
  title={Trading off Accuracy for Speedup: Multiplier Bootstraps for Subgraph Counts},
  author={Lin, Qiaohui and Lunde, Robert and Sarkar, Purnamrita},
  journal={arXiv preprint arXiv:2009.06170},
  year={2020}
}

@article{green2022bootstrapping,
  title={Bootstrapping exchangeable random graphs},
  author={Green, Alden and Shalizi, Cosma Rohilla},
  journal={Electronic Journal of Statistics},
  volume={16},
  number={1},
  pages={1058--1095},
  year={2022},
  publisher={The Institute of Mathematical Statistics and the Bernoulli Society}
}

@article{zhang2022edgeworth,
  title={Edgeworth expansions for network moments},
  author={Zhang, Yuan and Xia, Dong},
  journal={The Annals of Statistics},
  volume={50},
  number={2},
  pages={726--753},
  year={2022},
  publisher={Institute of Mathematical Statistics}
}

@article{sussman2012consistent,
  title={A consistent adjacency spectral embedding for stochastic blockmodel graphs},
  author={Sussman, Daniel L. and Tang, Minh and Fishkind, Donniell E. and Priebe, Carey E.},
  journal={Journal of the American Statistical Association},
  volume={107},
  number={499},
  pages={1119--1128},
  year={2012},
  publisher={Taylor \& Francis}
}

@article{zhao2012consistency,
  title={Consistency of community detection in networks under degree-corrected stochastic block models},
  author={Zhao, Yunpeng and Levina, Elizaveta and Zhu, Ji},
  journal={The Annals of Statistics},
  volume={40},
  number={4},
  pages={2266--2292},
  year={2012},
  publisher={Institute of Mathematical Statistics}
}

@article{bollobas2007phase,
  title={The phase transition in inhomogeneous random graphs},
  author={Bollob{\'a}s, B{\'e}la and Janson, Svante and Riordan, Oliver},
  journal={Random Structures \& Algorithms},
  volume={31},
  number={1},
  pages={3--122},
  year={2007},
  publisher={Wiley Online Library}
}

@article{bickel2009nonparametric,
  title={A nonparametric view of network models and Newman--Girvan and other modularities},
  author={Bickel, Peter J and Chen, Aiyou},
  journal={Proceedings of the National Academy of Sciences},
  volume={106},
  number={50},
  pages={21068--21073},
  year={2009},
  publisher={National Acad Sciences}
}

@inproceedings{lin2020theoretical,
  title={On the Theoretical Properties of the Network Jackknife},
  author={Lin, Qiaohui and Lunde, Robert and Sarkar, Purnamrita},
  booktitle={International Conference on Machine Learning},
  pages={6105--6115},
  year={2020},
  organization={PMLR}
}

@article{levin2019bootstrapping,
author  = {Levin, Keith and Levina, Elizaveta},
  title   = {Bootstrapping networks with latent space structure},
  journal = {Electronic Journal of Statistics},
  volume  = {19},
  number  = {1},
  pages   = {745--791},
  year    = {2025},
  doi     = {10.1214/25-EJS2347}
}

@article{karrer2011stochastic,
  title={Stochastic blockmodels and community structure in networks},
  author={Karrer, Brian and Newman, Mark EJ},
  journal={Physical review E},
  volume={83},
  number={1},
  pages={016107},
  year={2011},
  publisher={APS}
}

@article{zachary1977information,
  title={An information flow model for conflict and fission in small groups},
  author={Zachary, Wayne W.},
  journal={Journal of anthropological research},
  volume={33},
  number={4},
  pages={452--473},
  year={1977},
  publisher={University of New Mexico}
}

@article{furedi1981eigenvalues,
  title={The eigenvalues of random symmetric matrices},
  author={F{\"u}redi, Zolt{\'a}n and Koml{\'o}s, J{\'a}nos},
  journal={Combinatorica},
  volume={1},
  number={3},
  pages={233--241},
  year={1981},
  publisher={Springer}
}

@article{maugis2017statistical,
  title={Statistical inference for network samples using subgraph counts},
  author={Maugis, P. A. and Priebe, C. E. and Olhede, S. C. and Wolfe, P. J.},
  journal={arXiv preprint arXiv:1701.00505},
  year={2017}
}

@article{wolfe2013nonparametric,
  title={Nonparametric graphon estimation},
  author={Wolfe, Patrick J. and Olhede, Sofia C.},
  journal={arXiv preprint arXiv:1309.5936},
  year={2013}
}

@article{abbe2018community,
  title={Community detection and stochastic block models: recent developments},
  author={Abbe, Emmanuel},
  journal={Journal of Machine Learning Research},
  volume={18},
  number={177},
  pages={1--86},
  year={2018}
}

@article{aiello2001random,
  title={A random graph model for power law graphs},
  author={Aiello, William and Chung, Fan and Lu, Linyuan},
  journal={Experimental mathematics},
  volume={10},
  number={1},
  pages={53--66},
  year={2001},
  publisher={Taylor \& Francis}
}

@article{newman2006finding,
  title={Finding community structure in networks using the eigenvectors of matrices},
  author={Newman, Mark EJ},
  journal={Physical review E},
  volume={74},
  number={3},
  pages={036104},
  year={2006},
  publisher={APS}
}

@inproceedings{adamic2005political,
  title={The political blogosphere and the 2004 US election: divided they blog},
  author={Adamic, Lada A and Glance, Natalie},
  booktitle={Proceedings of the 3rd international workshop on Link discovery},
  pages={36--43},
  year={2005}
}

@book{cullum2002lanczos,
  title={Lanczos algorithms for large symmetric eigenvalue computations: Vol. I: Theory},
  author={Cullum, Jane K and Willoughby, Ralph A},
  year={2002},
  publisher={SIAM}
}

\end{document}